\newcommand{\half}{\tfrac12}
\newcommand{\cS}{S}
\newcommand{\cV}{U}
\newcommand{\fg}{\mathfrak{g}}
\newcommand{\fa}{\mathfrak{a}}
\newcommand{\fG}{\mathfrak{G}}
\newcommand{\fM}{\mathfrak{M}}
\newcommand{\fP}{\mathfrak{P}}
\newcommand{\fU}{\mathfrak{U}}
\newcommand{\fV}{\mathfrak{V}}
\newcommand{\fW}{\mathfrak{W}}
\newcommand{\fgl}{\mathfrak{gl}}
\newcommand{\fh}{\mathfrak{h}}
\newcommand{\fk}{\mathfrak{k}}
\newcommand{\fp}{\mathfrak{p}}
\newcommand{\fr}{\mathfrak{r}}
\newcommand{\fs}{\mathfrak{s}}
\newcommand{\fz}{\mathfrak{z}}
\newcommand{\fso}{\mathfrak{so}}
\newcommand{\fosp}{\mathfrak{osp}}
\newcommand{\fsl}{\mathfrak{sl}}
\newcommand{\fsp}{\mathfrak{sp}}
\newcommand{\fsu}{\mathfrak{su}}
\newcommand{\fpsu}{\mathfrak{psu}}
\newcommand{\fu}{\mathfrak{u}}
\newcommand{\Cl}{\mathrm{C}\ell}
\newcommand{\SL}{\mathrm{SL}}
\newcommand{\RR}{\mathbb{R}}
\newcommand{\CC}{\mathbb{C}}
\renewcommand{\dim}{\mathrm{dim}}
\DeclareMathOperator{\Hom}{Hom}
\DeclareMathOperator{\End}{End}
\DeclareMathOperator{\Der}{Der}
\DeclareMathOperator{\Mat}{Mat}
\DeclareMathOperator{\ad}{ad}
\DeclareMathOperator{\tr}{tr}
\numberwithin{equation}{section}
\newtheorem{thm}{Theorem}[section]
\newtheorem{prop}[thm]{Proposition}
\newtheorem{lem}[thm]{Lemma}
\newtheorem{cor}[thm]{Corollary}
\theoremstyle{definition}
\newtheorem{example}[thm]{Example}
\theoremstyle{remark}
\newtheorem{remark}[thm]{Remark}
\begin{document}

\title{Poincar\'{e} superalgebras and triple systems}
\author{Paul de Medeiros}
\address{Department of Mathematics and Physics, University of Stavanger, 4036 Stavanger, Norway}
\email{\href{mailto:paul.demedeiros@uis.no}{paul.demedeiros[at]uis.no}, ORCID: \href{https://orcid.org/0000-0002-2730-7251}{0000-0002-2730-7251}}
\date{\today}
\begin{abstract}
We consider a class of Poincar\'{e} superalgebras for which the nested bracket of three supercharges is necessarily zero only in dimensions greater than three. In lower dimensions, we give a precise characterisation of the data which encodes any such Poincar\'{e} superalgebra in terms of a more elementary embedding superalgebra. Up to isomorphism, we classify every classical embedding superalgebra that defines a Poincar\'{e} superalgebra. More generally, we show how to construct an embedding superalgebra in dimensions one, two and three from a certain type of triple system whose product structure encodes the nested bracket of three supercharges in the associated Poincar\'{e} superalgebra. 
\end{abstract}

\maketitle
\vspace*{-1cm}
\tableofcontents


\section{Introduction and summary}
\label{sec:IntroductionAndSummary} 

The Coleman-Mandula theorem \cite{Coleman:1967ad} famously implies that the Lie group of symmetries of a non-trivial S-matrix for an interacting relativistic quantum field theory with massive states in four-dimensional Minkowski space must (under very mild assumptions) have a Lie algebra that is isomorphic to the direct sum of the Poincar\'{e} algebra and the Lie algebra of a compact internal symmetry group. 

For supersymmetries described by a Lie superalgebra whose even part is of the form above, the Haag-\L{}opusza\'{n}ski-Sohnius theorem \cite{Haag:1974qh} (see also section 2.6 in \cite{Sohnius:1985qm}) further implies that the odd part of the Lie superalgebra must be a direct sum of tensor products of a spinor representation of the Lorentz algebra and a representation of the internal algebra. Translations in the Poincar\'{e} algebra must commute with supercharges in the odd part of the Lie superalgebra. Moreover, the image of the odd-odd bracket for the Lie superalgebra must contain only translations in the Poincar\'{e} algebra and central charges in the abelian part of the internal algebra, meaning that the nested bracket of three supercharges is always zero. The R-symmetry is described by the quotient of the internal algebra by the ideal spanned by the central charges. Any such Lie superalgebra is generically referred to as a {\emph{Poincar\'{e} superalgebra}}.

The Coleman-Mandula theorem remains valid in any dimension $n>2$, but not in lower dimensions.
\!\footnote{This is because the theorem assumes that the S-matrix is an analytic function of the scattering angles, and there is only forward and backward scattering in dimensions one and two.}
The generalisation of the Haag-\L{}opusza\'{n}ski-Sohnius theorem for $n>2$ was addressed by Nahm in \cite{Nahm:1977tg} (see also \cite{Strathdee:1986jr}). Making similar assumptions to those employed in \cite{Haag:1974qh}, Nahm proved that any Poincar\'{e} superalgebra in $n>3$ has the same structural properties as those outlined in the paragraph above (see Propositions~2.4 and 2.5  in \cite{Nahm:1977tg}). The same is true in $n=3$, but with one curious exception where the odd-odd bracket contains terms in the internal algebra that are not central charges, and so the nested bracket of three supercharges is not zero!
\!\footnote{This property is novel for Poincar\'{e} superalgebras, but necessarily ubiquitous for conformal superalgebras \cite{Nahm:1977tg}.}
In this case, the Poincar\'{e} superalgebra is encoded by (the direct sum of copies of) the simple Lie superalgebra $\fpsu (2|2)$. The even part of $\fpsu (2|2)$ describes the internal algebra $\fsu (2) \oplus \fsu (2)$ which acts non-trivially on the supercharges that are valued in the odd part $({\bm 2},{\bm 2}) \oplus ({\bm 2},{\bm 2})$. Despite being somewhat exotic, this Poincar\'{e} superalgebra is known to have a number of interesting realisations in string and M-theory \cite{Blau:2001ne,Itzhaki:2005tu,Lin:2005nh,Hosomichi:2008qk,Bergshoeff:2008ta,Cordova:2016xhm}. 
\!\footnote{It is important to emphasise that there exist even more exotic Lie superalgebras in diverse dimensions which contain the Poincar\'{e} algebra and so-called \lq brane charges' which transform in non-trivial representations of the Lorentz algebra \cite{vanHolten:1982mx,Lugo:1997ac}. Despite flagrantly violating the Coleman-Mandula theorem, these Lie superalgebras are important since they describe supersymmetries of extended objects carrying topological charges in string and M-theory \cite{deAzcarraga:1989mza}. We will not refer to these Lie superalgebras as Poincar\'{e} superalgebras.}

In dimensions one and two, supersymmetry is typically realised using a Poincar\'{e} superalgebra with the properties noted above, where the nested bracket of three supercharges is zero (see \cite{Strathdee:1986jr}). Unlike in higher dimensions where physical considerations can force this to be the case, here it is just a convenient choice. An exception to this paradigm in $n=1$ was pioneered by Smilga in \cite{Smilga:2003gu} who constructed a model of quantum mechanics with so-called \lq weak supersymmetry', where the Poincar\'{e} superalgebra is a one-dimensional central extension of the simple Lie superalgebra $\fsu (2|1)$. The extending element is identified with the Hamiltonian which spans the one-dimensional Poincar\'{e} algebra in $n=1$. The even part of $\fsu (2|1)$ describes the internal algebra $\fsu (2) \oplus \fu (1)$ which acts non-trivially on the supercharges that are valued in the odd part ${\bm 2}_{+1} \oplus {\bm 2}_{-1}$. Similar models have subsequently been constructed based on (central extensions of) the simple Lie superalgebras $\fpsu (2|2)$ \cite{Ivanov:2016hoe}, $\fsu (4|1)$ \cite{Ivanov:2019gxo} and $\fsu(N|1)$ \cite{Sidorov:2019rvo,Smilga:2022dcd}. 

These more exotic Poincar\'{e} superalgebras in low dimensions can also have interesting avatars in higher dimensions. It is well-known that the rigid supersymmetries of a supergravity background generate a Lie superalgebra called a {\emph{Killing superalgebra}}. For example, new minimal supergravity in $n=4$ has a maximally supersymmetric background $\RR \times S^3$ whose Killing superalgebra is the direct sum of $\fsu (2)$ and the  one-dimensional central extension of $\fsu (2|1)$ mentioned above (see section 6 in \cite{Festuccia:2011ws}). The even part of this Killing superalgebra is the direct sum of the Killing algebra $\RR \oplus \fsu (2) \oplus \fsu(2)$ of isometries of $\RR \times S^3$ and the R-symmetry algebra $\fu(1)$. Remarkably, this Killing superalgebra can therefore be thought of as a Poincar\'{e} superalgebra in $n=1$. 

An algebraic method based on Spencer cohomology was proposed by Figueroa-O'Farrill and Santi in \cite{Figueroa-OFarrill:2015rfh,Figueroa-OFarrill:2015tan} for systematically constructing Killing superalgebras in $n=11$. It has since been applied in a number of different contexts in lower dimensions \cite{deMedeiros:2016srz,deMedeiros:2018ooy,Beckett:2021cwx,Beckett:2024loh,Beckett:2024msc,Beckett:2024qxu,Pdm3dNext}. In all cases, the method recovers the Killing superalgebra associated with every supergravity background as a filtered subdeformation of a Poincar\'{e} superalgebra. In some cases \cite{deMedeiros:2018ooy,Beckett:2021cwx,Beckett:2024loh,Pdm3dNext}, it also produces Killing superalgebras of backgrounds whose supergravity origin (assuming it exists) is unclear. This typically occurs when the Poincar\'{e} superalgebra being deformed has a non-trivial R-symmetry. For example, by deforming the Poincar\'{e} superalgebra  in $n=6$ with $\fsp (1)$ R-symmetry, one can obtain a Killing superalgebra of $\RR \times S^5$ with maximal supersymmetry \cite{deMedeiros:2018ooy}, but $\RR \times S^5$ is not a maximally supersymmetric background of the expected minimal $(1,0)$ supergravity theory. 

In low dimensions, one can also obtain the more exotic Poincar\'{e} superalgebras mentioned above as filtered subdeformations of a conventional Poincar\'{e} superalgebra. For example, by deforming the conventional Poincar\'{e} superalgebra in $n=3$ with $\fso (4)$ R-symmetry, one can recover the exotic Poincar\'{e} superalgebra based on $\fpsu (2|2)$ as a Killing superalgebra of Minkowski space \cite{Pdm3dNext}. 

The goal of this paper is to provide a framework for describing Poincar\'{e} superalgebras that is rich enough to accommodate the examples and potential applications highlighted above. To this end, we shall consider Poincar\'{e} superalgebras which differ from their conventional counterparts only in that their internal symmetries will not be assumed to be compact and the image of their odd-odd brackets may contain terms in the internal algebra which need not be central charges. If the internal symmetry is compact then any such Poincar\'{e} superalgebra corresponds to a filtered deformation of a conventional Poincar\'{e} superalgebra. For convenience, we will work exclusively over the field of complex numbers. This is simply because a number of structural results that we will utilise are either invalid or, at best, significantly more complicated to state over the reals. Having said that, it is straightforward to take real forms of all the complex Poincar\'{e} superalgebras we shall consider. 

The proof of Proposition~2.5 in \cite{Nahm:1977tg} for $n \geq 4$ relies on the assumption that the internal symmetry group is compact, meaning that its unitary representations are completely reducible. However, by explicitly solving the Jacobi identity for Poincar\'{e} superalgebras in dimension $n$, we will arrive at the same conclusion: that the nested bracket of three supercharges is necessarily zero if $n \geq 4$. For $n<4$, we identify precisely which conditions must be met in order to solve the Jacobi identity.

In order to characterise the solutions of these conditions, we first show that every Poincar\'{e} superalgebra can be expressed as a certain abelian extension (by the translation ideal) of a semidirect sum of the Lorentz algebra and a more elementary {\emph{embedding superalgebra}} which contains the internal algebra and the supercharges. Any embedding superalgebra which defines a Poincar\'{e} superalgebra in $n$ dimensions will be called {\emph{$n$-admissible}} and we give a precise characterisation of all $n$-admissible Lie superalgebras with $n<4$. We will also classify (up to isomorphism) every classical $n$-admissible Lie superalgebra.

More generally, we will show how to construct an $n$-admissible Lie superalgebra with $n<4$ from a certain type of triple system that is equipped with a Lie algebra of derivations and a nondegenerate derivation-invariant symmetric bilinear form. Provided the even part of the Lie superalgebra acts faithfully on the odd part, we prove that this construction can be strengthened to a bijective correspondence. For $n=1$, the construction involves an anti-Lie triple system \cite{FaulknerFerrar} and we give examples wherein the anti-Lie triple system is itself encoded by a more elementary anti-Jordan triple system \cite{Kamiya}. For $n>1$, the anti-Lie triple system must be of a particular type which we prove is equivalent to either a polarised anti-Jordan triple system \cite{Kamiya} (if $n=2$) or a Filippov triple system \cite{Filippov} (if $n=3$). Explicit examples are given in these cases too.

The organisation of this paper is as follows. In Section~\ref{sec:NotationAndConventions} we outline our notation and conventions. In Section~\ref{sec:TheCliffordAlgebraAndSpinorRepresentations} we summarise those aspects of Clifford algebras, spinor representations and spinorial bilinear forms that are needed to construct a Poincar\'{e} superalgebra. In Section~\ref{sec:ExtendedPoincareSuperalgebras} we define the Poincar\'{e} superalgebra and prove that it is a Lie superalgebra if and only if the conditions in Theorem~\ref{thm:111Jacobi} are met. In Section~\ref{sec:CentralChargesAndRSymmetry} we briefly recap some details about central charges and R-symmetry in conventional Poincar\'{e} superalgebras in order to contextualise the unconventional examples we shall go on to construct. In Section~\ref{sec:DeconstructionAndReconstructionViaLieSuperalgebraExtensions} we show how to encode a Poincar\'{e} superalgebra in terms of an embedding superalgebra and define a notion of equivalence between Poincar\'{e} superalgebras based on this construction. We then characterise $n$-admissible Lie superalgebras with $n<4$ and classify (up to isomorphism) all classical $n$-admissible Lie superalgebras which define inequivalent Poincar\'{e} superalgebras. In Section~\ref{sec:EmbeddingLSAFromTripleSystems} we describe the construction of $n$-admissible Lie superalgebras with $n<4$ in terms of triple systems. Appendix~\ref{sec:InvariantBilinearFormsOnRepresentationsOFSLW} contains some technical results that are needed in the proof of Theorem~\ref{thm:ClassicalLSA1Admissible}.

\newpage


\section{Notation and conventions}
\label{sec:NotationAndConventions}

Unless stated otherwise, it should be assumed that everything is finite-dimensional and defined over $\CC$.

If $V$ is a vector space then $V^* = \Hom ( V , \CC )$ denotes its {\emph{dual space}}. If $v \in V$ and $\alpha \in V^*$ then $v \otimes \alpha$ defines an endomorphism of $V$ such that
\begin{equation}\label{eq:VVStar}
( v \otimes \alpha ) w = \alpha ( w ) v~, 
\end{equation}
for all $w \in V$. If $X \in \End V$ then $X^* \in \End V^*$ is defined such that
\begin{equation}\label{eq:EndStar}
( X^* \alpha ) (v) = \alpha ( Xv)~, 
\end{equation}
for all $\alpha \in V^*$ and $v \in V$. 

If $V$ is equipped with a nondegenerate bilinear form $b : V \times V \rightarrow \CC$ then the {\emph{canonical isomorphisms}} $\flat : V \rightarrow V^*$ and $\sharp : V^* \rightarrow V$ are defined such that
\begin{equation}\label{eq:FlatSharp}
v^\flat (w)= b (v,w)~, \quad \alpha (v) = b ( \alpha^\sharp ,v)~, 
\end{equation}
for all $v,w \in V$ and $\alpha \in V^*$. If $X \in \End V$ then \eqref{eq:EndStar} and \eqref{eq:FlatSharp} imply
\begin{equation}\label{eq:EndStarb}
( X^* v^\flat ) (w) = b (v,Xw)~, 
\end{equation}
for all $v,w \in V$. 

The commutator of endomorphisms of $V$ defines a Lie bracket on $\End V$ and the resulting Lie algebra is denoted by $\fgl (V)$. The Lie subalgebra of traceless endomorphisms of $V$ is denoted by $\fsl (V)$. The Lie algebra of endomorphisms which leave $b$ invariant, i.e. 
\begin{equation}\label{eq:SoSpV}
\{ X \in \fgl (V)~|~b(Xv,w) + b(v,Xw) = 0,~v,w \in V \}~,
\end{equation}
is denoted by $\fso(V)$ if $b$ is symmetric and $\fsp (V)$ if $b$ is skewsymmetric.

As vector spaces, $\fso (V) \cong \Lambda^2 V$ and $\fsp (V) \cong S^2 V$. If $b$ is symmetric then any $v \otimes w - w \otimes v \in \Lambda^2 V$ corresponds to $v \otimes w^\flat - w \otimes v^\flat \in \fso (V)$. If $b$ is skewsymmetric then any $v \otimes w + w \otimes v \in S^2 V$ corresponds to $v \otimes w^\flat + w \otimes v^\flat \in \fsp (V)$.

If $V$ is a representation of a Lie algebra $\fg$ then $V^*$ should be construed as its {\emph{dual representation}}, defined such that 
\begin{equation}\label{eq:DualRep}
( X \cdot \alpha ) (v) = - \alpha ( X \cdot v)~, 
\end{equation}
for all $\alpha \in V^*$ and $v \in V$, where $\cdot$ denotes the action of $\fg$ in a given representation. If $\fg$ is either $\fgl (V)$ or $\fsl (V)$ then \eqref{eq:DualRep} and \eqref{eq:EndStar} imply
\begin{equation}\label{eq:DualRepGL}
X \cdot \alpha = - X^* \alpha~, 
\end{equation}
for all $X \in \fg$ and $\alpha \in V^*$. If $\fg$ is either $\fso (V)$ or $\fsp (V)$ then \eqref{eq:DualRepGL} and \eqref{eq:EndStarb} imply
\begin{equation}\label{eq:DualRepb}
X \cdot v^\flat = ( Xv )^\flat~, 
\end{equation}
for all $X \in \fg$ and $v \in V$. It follows that $V \cong V^*$ as representations of $\fg$ (not just as vector spaces). This may be thought of as a consequence of the following deeper result.

\begin{prop} \label{prop:gVbExistence}
If $V$ is an irreducible representation of a semisimple Lie algebra $\fg$ then $V$ admits a (non-zero) $\fg$-invariant bilinear form $b : V \times V \rightarrow \CC$, i.e.
\begin{equation}\label{eq:gVbExistence}
b( X \cdot v , w ) + b( v , X \cdot w ) = 0~,
\end{equation}
for all $X \in \fg$ and $v,w \in V$, if and only if $V \cong V^*$ as representations of $\fg$. Furthermore, if $b$ exists, it is (up to scaling) unique, nondegenerate and either symmetric or skewsymmetric. 
\end{prop}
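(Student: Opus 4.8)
The plan is to prove Proposition~\ref{prop:gVbExistence} by reinterpreting an invariant bilinear form as an intertwiner $V \to V^*$ and then invoking Schur's lemma. First I would observe that a bilinear form $b : V \times V \to \CC$ is the same datum as a linear map $\beta : V \to V^*$ given by $\beta(v) = b(v,\cdot)$, and that the invariance condition \eqref{eq:gVbExistence} is precisely the statement that $\beta$ is a $\fg$-module homomorphism, i.e. $\beta(X\cdot v) = X \cdot \beta(v)$ for all $X \in \fg$. Indeed, unwinding the definition \eqref{eq:DualRep} of the dual representation, one checks that $\beta(X \cdot v)(w) = b(X \cdot v, w)$ while $(X \cdot \beta(v))(w) = -\beta(v)(X \cdot w) = -b(v, X \cdot w)$, so the two agree for all $w$ exactly when \eqref{eq:gVbExistence} holds. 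Hence nonzero invariant bilinear forms on $V$ correspond bijectively to nonzero elements of $\Hom_\fg(V, V^*)$.

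With this dictionary in hand, the existence half is immediate: a nonzero $\beta \in \Hom_\fg(V,V^*)$ exists if and only if $V$ and $V^*$ share an irreducible constituent, and since $V$ (hence $V^*$) is irreducible, this happens precisely when $V \cong V^*$. For the uniqueness and nondegeneracy claims, I would apply Schur's lemma. Since $V$ is irreducible and $V \cong V^*$ is also irreducible, the space $\Hom_\fg(V, V^*)$ is one-dimensional (working over the algebraically closed field $\CC$, so $\End_\fg(V) = \CC \cdot \id$ and composing with a fixed isomorphism $V \cong V^*$ gives the dimension count). This one-dimensionality is exactly the uniqueness of $b$ up to scaling. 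Nondegeneracy then follows because the kernel of $\beta$ is a $\fg$-submodule of the irreducible $V$; as $\beta \neq 0$ its kernel is not all of $V$, so it must be zero, whence $\beta$ is injective and, by equality of dimensions $\dim V = \dim V^*$, an isomorphism — which is precisely nondegeneracy of $b$.

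For the symmetry dichotomy, I would decompose $b$ into its symmetric and skewsymmetric parts, $b = b_+ + b_-$ where $b_\pm(v,w) = \half\bigl(b(v,w) \pm b(w,v)\bigr)$. A short computation shows that each of $b_+$ and $b_-$ is separately $\fg$-invariant (invariance is linear in $b$ and the transpose of an invariant form is invariant, since transposing \eqref{eq:gVbExistence} and relabelling preserves the condition). By the uniqueness already established, the two-dimensional-looking span of $\{b_+, b_-\}$ inside the one-dimensional space of invariant forms forces one of them to vanish. Therefore $b$ equals one of $b_+$ or $b_-$, i.e. $b$ is either symmetric or skewsymmetric, as claimed.

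The main obstacle is not conceptual but lies in getting the signs right when transporting \eqref{eq:gVbExistence} through the definition \eqref{eq:DualRep} of the dual action, so that the invariance of $b$ really does coincide with $\beta$ being an intertwiner rather than an anti-intertwiner; once that identification is pinned down correctly, the remaining steps are all standard applications of Schur's lemma over $\CC$. I would take care to state explicitly that algebraic closedness of $\CC$ is what guarantees $\End_\fg(V) = \CC \cdot \id$, since this is the crux of both the uniqueness and the one-dimensionality of $\Hom_\fg(V, V^*)$.
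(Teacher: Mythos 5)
Your argument is correct and complete. Note that the paper does not prove this proposition at all: its ``proof'' is a citation to Bourbaki (ch.~VIII \S 7 no.~5 prop.~12), so your Schur-lemma argument is precisely the standard proof of the cited result, supplied in full. The sign bookkeeping you flag as the main obstacle does work out: with the dual action $(X\cdot\alpha)(v)=-\alpha(X\cdot v)$ from \eqref{eq:DualRep}, the intertwiner condition $\beta(X\cdot v)=X\cdot\beta(v)$ unwinds exactly to \eqref{eq:gVbExistence}, and the final step of the symmetry dichotomy is clinched by observing that a form which is simultaneously symmetric and skewsymmetric vanishes, so $b_+$ and $b_-$ cannot both be nonzero elements of the one-dimensional space of invariant forms. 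One minor remark: semisimplicity of $\fg$ is never actually used in your argument (only irreducibility of $V$ and algebraic closedness of $\CC$), so your proof is in fact slightly more general than the statement.
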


\begin{proof}
See, for example, ch.~VIII \S7 no.~5 prop.~12 in \cite{Bourbaki}.
\end{proof}


\section{Clifford algebras and spinor representations}
\label{sec:TheCliffordAlgebraAndSpinorRepresentations}

Now let $\cV$ be an $n$-dimensional vector space equipped with a nondegenerate symmetric bilinear form $( -,- ) : \cV \times \cV \rightarrow \CC$. 

The {\emph{tensor algebra}} of $\cV$ is the infinite-dimensional vector space
\begin{equation}\label{eq:TensorAlgebra}
T \cV = \bigoplus_{k=0}^\infty \cV^{\otimes k}
\end{equation}
that is a unital associative algebra (with unit $1$) with respect to the tensor product. $T \cV$ contains a two-sided ideal $I$ whose elements are linear combinations of terms containing a factor of the form $x \otimes x - (x,x) 1$, for any $x \in \cV$.

The {\emph{Clifford algebra}} of $\cV$ is the unital associative quotient algebra
\begin{equation}\label{eq:CliffordAlgebra}
\Cl ( \cV ) = T \cV /I
\end{equation}
whose elements are subject to the relation 
\begin{equation}\label{eq:CliffordRelation}
x^2 = (x,x) 1~,
\end{equation}
for all $x \in \cV$. 

For any $A \in \fso( \cV )$, let $s_A \in \Lambda^2 \cV$ be defined such that 
\begin{equation}\label{eq:omegaA}
s_A ( x^\flat , y^\flat ) = \tfrac{1}{4} ( x , Ay )~,
\end{equation}
for all $x,y \in \cV$. If $x_1 , x_2 \in \cV$ and $A = x_1 \otimes x_2^\flat -  x_2 \otimes x_1^\flat$ then \eqref{eq:omegaA} implies $s_A = \tfrac{1}{4} ( x_1 \otimes x_2 - x_2 \otimes x_1 )$. Thought of as an element in $\Cl ( \cV )$, we write $s_A = \tfrac{1}{4} ( x_1 x_2 - x_2 x_1 )$.

\begin{prop} \label{prop:OmegaABCommutator}
With respect to multiplication in $\Cl ( \cV )$, the commutator 
\begin{equation}\label{eq:OmegaABCommutator}
[ s_A , s_B ] = s_{[A,B]}~, 
\end{equation}
for all $A,B \in \fso ( \cV )$, where $[A,B] = AB-BA$ is the commutator of endomorphisms.
\end{prop}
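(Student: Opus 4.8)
The plan is to show that the commutator action of $s_A$ inside $\Cl(\cV)$, when restricted to $\cV$, reproduces exactly the action of $A$ as an endomorphism, and then to bootstrap this to the full statement using the Jacobi identity for the commutator bracket. Since both sides of \eqref{eq:OmegaABCommutator} are linear in $A$ and in $B$, it suffices to treat the decomposable elements $A = x_1 \otimes x_2^\flat - x_2 \otimes x_1^\flat$ and $B = y_1 \otimes y_2^\flat - y_2 \otimes y_1^\flat$, which span $\fso ( \cV )$ and for which $s_A = \tfrac14 ( x_1 x_2 - x_2 x_1 )$ and $s_B = \tfrac14 ( y_1 y_2 - y_2 y_1 )$. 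The only computational input is the polarised form of the Clifford relation \eqref{eq:CliffordRelation}, namely $xy + yx = 2(x,y)1$ for all $x,y \in \cV$, obtained by expanding $(x+y)^2 = (x+y,x+y)1$.

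First I would establish the key identity $[ s_A , z ] = Az$ for all $z \in \cV$ and $A \in \fso ( \cV )$, the bracket being the commutator in $\Cl ( \cV )$. For decomposable $A$ this is short: rewriting $s_A = \tfrac12 x_1 x_2 - \tfrac12 (x_1,x_2)1$ so that the central term drops out of the commutator, and using the polarised relation twice to move $z$ past $x_1$ and $x_2$, one gets $[ x_1 x_2 , z ] = 2(x_2,z)x_1 - 2(x_1,z)x_2$, whence $[ s_A , z ] = (x_2,z)x_1 - (x_1,z)x_2 = Az$, the last equality being the definition of the action of $A = x_1 \otimes x_2^\flat - x_2 \otimes x_1^\flat$ via \eqref{eq:VVStar} and \eqref{eq:FlatSharp}. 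Linearity then extends this to all of $\fso ( \cV )$.

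With this in hand, the proposition follows from the fact that $[ s_A , - ]$ is a derivation of the associative product on $\Cl ( \cV )$, i.e. the Jacobi identity for the commutator. Applying it together with the key identity, for every $z \in \cV$ one finds
\begin{equation*}
[[ s_A , s_B ] , z ] = [ s_A , [ s_B , z ] ] - [ s_B , [ s_A , z ] ] = A(Bz) - B(Az) = [A,B]z = [ s_{[A,B]} , z ]~,
\end{equation*}
where the final step is the key identity once more, now for $[A,B] \in \fso ( \cV )$. Hence $[ s_A , s_B ] - s_{[A,B]}$ commutes with every element of $\cV$; since $\cV$ generates $\Cl ( \cV )$ as an algebra, this difference lies in the centre.

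The hard part, and the only genuinely subtle point, is the very last step: ruling out that the central difference is a nonzero multiple of $1$. This I would settle by observing that the difference is even (a commutator of even elements minus the even element $s_{[A,B]}$), and that every even central element of $\Cl ( \cV )$ is a scalar --- when $\dim \cV$ is even the centre is already $\CC 1$, and when $\dim \cV$ is odd the extra central volume element is odd. The difference is therefore $\lambda 1$ for some $\lambda \in \CC$, and its scalar component vanishes because the canonical trace (the coefficient of $1$) annihilates any commutator as well as any $s_C \in \Lambda^2 \cV$, forcing $\lambda = 0$. Alternatively one can sidestep the centre entirely and compare the two sides head-on for decomposable $A$ and $B$: expand $[ s_A , s_B ] = \tfrac14 [ x_1 x_2 , y_1 y_2 ]$ with the polarised relation, re-express $[A,B]$ as an explicit combination of decomposable elements of $\fso ( \cV )$, and verify that the degree-two parts coincide while both scalar parts vanish.
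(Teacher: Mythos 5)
Your proposal is correct, but it takes a genuinely different route from the paper. The paper also begins by computing $[ s_A , y ] = ( x_2 , y ) x_1 - ( x_1 , y ) x_2$ for decomposable $A$ (your key identity $[ s_A , z ] = Az$), but then finishes by brute force: it expands $[ s_A , s_B ]$ for decomposable $A$ and $B$ using that identity and the Leibniz rule, writes out $[A,B]$ explicitly as a linear combination of decomposable elements of $\fso ( \cV )$, and matches the two expressions term by term. You instead bootstrap via the Jacobi identity to show that $[ s_A , s_B ] - s_{[A,B]}$ commutes with the generating subspace $\cV$, hence is central, and then kill the remaining scalar ambiguity using the parity of the centre and the coefficient-of-$1$ functional. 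Your route buys conceptual economy --- no term-matching, and it makes transparent why the map $A \mapsto s_A$ is a Lie algebra homomorphism (it is forced once $[ s_A , - ]|_{\cV} = A$) --- at the cost of importing two structural facts about $\Cl ( \cV )$: that its even centre is $\CC 1$ in all dimensions, and that the scalar component is a trace functional vanishing on commutators. Both facts are standard and your justifications of them (the odd parity of the volume element in odd dimensions; the vanishing of the scalar part of $e_I e_J - e_J e_I$ on basis monomials) are sound, so the argument is complete. The paper's computation is more pedestrian but entirely self-contained, needing nothing beyond the relation $xy + yx = 2(x,y)1$.
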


\begin{proof}
If $x_1 , x_2 \in \cV$ and $A = x_1 \otimes x_2^\flat -  x_2 \otimes x_1^\flat$ then
\begin{equation}\label{eq:OmegaAzCommutator1}
[ s_A , y ] \underset{\eqref{eq:omegaA}}{=} \tfrac{1}{4} ( x_1 x_2 y - x_2 x_1 y - y x_1 x_2 + y x_2 x_1 )\underset{\eqref{eq:CliffordRelation}}{=}  ( x_2 , y ) x_1 - ( x_1 , y ) x_2~, 
\end{equation}
for all $y \in \cV$. If $y_1 , y_2 \in \cV$ and $B = y_1 \otimes y_2^\flat -  y_2 \otimes y_1^\flat$ then
\begin{align}
[ s_A , s_B ] &\underset{\eqref{eq:omegaA}}{=} \tfrac{1}{4} ( [ s_A , y_1 ] y_2 + y_1 [ s_A , y_2 ] - [ s_A , y_2 ] y_1 - y_2 [ s_A , y_1 ]  ) \nonumber \\
&\underset{\eqref{eq:OmegaAzCommutator1}}{=} \tfrac{1}{4} ( ( x_2 , y_1 ) ( x_1 y_2 - y_2 x_1 ) - ( x_1 , y_1 ) ( x_2 y_2 - y_2 x_2 ) + ( x_2 , y_2 ) ( y_1 x_1 - x_1 y_1 ) - ( x_1 , y_2 ) ( y_1 x_2 - x_2 y_1 ) ) \nonumber \\
&\underset{\eqref{eq:omegaA}}{=} s_{[A,B]}  \label{eq:OmegaABCommutator1}
\end{align}
since 
\begin{align}
[A,B] &= ( x_2 , y_1 ) ( x_1 \otimes y_2^\flat - y_2 \otimes x_1^\flat  ) - ( x_2 , y_2 ) ( x_1 \otimes y_1^\flat - y_1 \otimes x_1^\flat ) \nonumber \\
&\hspace*{.4cm} - ( x_1 , y_1 ) ( x_2 \otimes y_2^\flat - y_2 \otimes x_2^\flat ) + ( x_1 , y_2 ) ( x_2 \otimes y_1^\flat - y_1 \otimes x_2^\flat )~. \label{eq:ABCommutator}
\end{align}
The result follows since any $A, B \in \fso ( \cV )$ can be expressed as linear combinations of terms of the form used above. 
\end{proof}

\begin{cor} \label{cor:SpinorRep}
Any representation of $\Cl ( \cV )$ restricts to a representation of $\fso ( \cV )$ where the action of $A \in \fso ( \cV )$ is defined by the action of $s_A \in \Cl ( \cV )$.
\end{cor}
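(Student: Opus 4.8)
The plan is to recognise that Proposition~\ref{prop:OmegaABCommutator} has already done the essential work, so that the corollary follows by composing two maps that each respect the relevant brackets. First I would observe that the assignment $A \mapsto s_A$ is a \emph{linear} map $\fso ( \cV ) \to \Cl ( \cV )$: this is immediate from the defining relation \eqref{eq:omegaA}, since $s_A ( x^\flat , y^\flat ) = \tfrac14 ( x , Ay )$ depends linearly on $A$, and the identification $\fso ( \cV ) \cong \Lambda^2 \cV \subset \Cl ( \cV )$ is itself linear. Proposition~\ref{prop:OmegaABCommutator} then upgrades this to a homomorphism of Lie algebras, where $\Cl ( \cV )$ is regarded as a Lie algebra under its commutator bracket: indeed $[ s_A , s_B ] = s_{[A,B]}$ is precisely the statement that $A \mapsto s_A$ intertwines the Lie bracket on $\fso ( \cV )$ (the commutator of endomorphisms) with the commutator in $\Cl ( \cV )$.

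Next I would bring in an arbitrary representation $\rho \colon \Cl ( \cV ) \to \End S$ of the associative algebra $\Cl ( \cV )$ on a vector space $S$, and define the candidate $\fso ( \cV )$-action by $\tilde\rho (A) = \rho ( s_A )$ for all $A \in \fso ( \cV )$. The key point is that any homomorphism of associative algebras automatically respects commutators: from $\rho (ab) = \rho (a) \rho (b)$ and linearity one has $\rho ( [a,b] ) = [ \rho (a) , \rho (b) ]$ for all $a,b \in \Cl ( \cV )$, where the bracket on the right is the commutator of endomorphisms of $S$. Combining this with Proposition~\ref{prop:OmegaABCommutator} gives
\begin{equation*}
\tilde\rho ( [A,B] ) = \rho ( s_{[A,B]} ) = \rho ( [ s_A , s_B ] ) = [ \rho ( s_A ) , \rho ( s_B ) ] = [ \tilde\rho (A) , \tilde\rho (B) ]~,
\end{equation*}
while linearity of $\tilde\rho$ follows from the linearity of both $A \mapsto s_A$ and $\rho$. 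Hence $\tilde\rho$ is a representation of $\fso ( \cV )$ on $S$, which is the required restriction.

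There is no real obstacle here, as the only non-trivial input is the commutator formula of Proposition~\ref{prop:OmegaABCommutator}; everything else is the routine observation that associative representations are in particular Lie representations via the commutator. The one point worth stating carefully is that $\tilde\rho$ is genuinely well defined on all of $\fso ( \cV )$, and not merely on the decomposable bivectors $x_1 \otimes x_2^\flat - x_2 \otimes x_1^\flat$ used in the proof of Proposition~\ref{prop:OmegaABCommutator}; this is guaranteed by the linearity of $A \mapsto s_A$ together with the fact, noted at the end of that proof, that every element of $\fso ( \cV )$ is a linear combination of such decomposable terms.
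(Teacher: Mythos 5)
Your proposal is correct and is exactly the argument the paper intends: the corollary is stated without proof precisely because it follows immediately from Proposition~\ref{prop:OmegaABCommutator} together with the standard fact that an associative algebra representation induces a Lie algebra representation via commutators. Your write-up simply makes explicit the linearity of $A \mapsto s_A$ and the compatibility of $\rho$ with commutators, which is the same route.
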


If $n=2p$ then $\Cl ( \cV ) \cong \Mat_{2^p} ( \CC )$. Up to isomorphism, $\Mat_{2^p} ( \CC )$ has a unique irreducible representation defined by the standard left action of matrices on $\CC^{2^p}$. Let $C^{(2p)}$ denote the corresponding irreducible representation of $\Cl ( \cV )$ and let $\cS^{(2p)}$ denote its restriction to $\fso ( \cV )$. With respect to an orthonormal basis $e_1 ,..., e_{2p}$ for $\cV$, the element $\varpi = i^p e_1 ... e_{2p}$ squares to $1$ and commutes with $s_A$, for all $A \in \fso ( \cV )$. Let $\cS_\pm^{(2p)} \subset \cS^{(2p)}$ denote the inequivalent irreducible (spinor) representations of $\fso ( \cV )$ on which $\varpi = \pm 1$.

If $n=2p+1$ then $\Cl ( \cV ) \cong \Mat_{2^p} ( \CC) \oplus \Mat_{2^p} ( \CC)$. Up to isomorphism, $\Mat_{2^p} ( \CC) \oplus \Mat_{2^p} ( \CC)$ has  two inequivalent irreducible representations defined by the standard left action of matrices in each factor of $\Mat_{2^p} ( \CC)$ on $\CC^{2^p}$. Let $C_\pm^{(2p+1)}$ denote the two corresponding inequivalent irreducible representations of $\Cl ( \cV )$. With respect to an orthonormal basis $e_1 ,..., e_{2p+1}$ for $\cV$, the element $\varpi = i^p e_1 ... e_{2p+1}$ squares to $1$ and is central in $\Cl ( \cV )$. Let $C_\pm^{(2p+1)}$ correspond to the inequivalent irreducible representations of $\Cl ( \cV )$ on which $\varpi = \pm 1$. Up to isomorphism, $C_\pm^{(2p+1)}$ both restrict to the same irreducible (spinor) representation of $\fso ( \cV )$ that we shall denote by $\cS^{(2p+1)}$.  


\subsection{Spinorial bilinear forms}
\label{sec:SpinorialBilinearForms}

\begin{thm} \label{thm:AdmissibleSpinorialBilinearForm}
There exists a nondegenerate bilinear form $\langle -,- \rangle : \cS^{(n)} \times \cS^{(n)} \rightarrow \CC$ obeying
\begin{align}
\langle \psi , \chi \rangle &= \sigma \langle \chi , \psi \rangle~, \label{eq:AdmissibleSpinorialBilinearForm1} \\
\langle x \cdot \psi , \chi \rangle &= \tau \langle \psi , x \cdot \chi \rangle~, \label{eq:AdmissibleSpinorialBilinearForm2}
\end{align}
for all $\psi , \chi \in \cS^{(n)}$ and $x \in \cV$, for some choice of signs $\sigma$ and $\tau$.
\end{thm}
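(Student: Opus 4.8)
The plan is to construct $\langle -,- \rangle$ as an intertwiner between the spinor module and a suitable \emph{transposed} representation of $\Cl(\cV)$, and then to extract both signs from the algebraic data. Recall that $\Cl(\cV)$ carries two canonical anti-automorphisms: the main anti-automorphism $t$ fixing each $x\in\cV$ (so that $(x_1\cdots x_k)^t = x_k\cdots x_1$), and its composite $\beta = \alpha\circ t$ with the grade involution $\alpha$ (so that $\beta(x) = -x$ on $\cV$). Write $\epsilon$ for either one, so that $\epsilon(x) = \tau x$ with $\tau = +1$ or $\tau = -1$ respectively, and note both are involutions. Given the spinor module $\rho\colon \Cl(\cV)\to\End\cS^{(n)}$, define a transposed representation $\rho^\epsilon$ on $(\cS^{(n)})^*$ by $\rho^\epsilon(a) = \rho(\epsilon(a))^*$, using the transpose of \eqref{eq:EndStar}; since $\epsilon$ is an anti-homomorphism and $(XY)^* = Y^* X^*$, the map $\rho^\epsilon$ is a genuine representation.

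First I would treat $n = 2p$. Here $\Cl(\cV)\cong\Mat_{2^p}(\CC)$ has a unique irreducible module up to isomorphism, so $\rho$ and $\rho^\epsilon$ are both realisations of it and there is an isomorphism of representations $\Phi\colon\cS^{(n)}\to(\cS^{(n)})^*$, unique up to scale, with $\Phi\,\rho(a) = \rho(\epsilon(a))^*\,\Phi$ for all $a$. Setting $\langle\psi,\chi\rangle = (\Phi\psi)(\chi)$ gives a bilinear form that is nondegenerate because $\Phi$ is invertible. Specialising the intertwining identity to $a = x\in\cV$ gives $\Phi\,\rho(x) = \tau\,\rho(x)^*\Phi$, and unwinding the transpose via \eqref{eq:EndStar} yields precisely \eqref{eq:AdmissibleSpinorialBilinearForm2} with sign $\tau$.

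For $n = 2p+1$ I would restrict to the factor of $\Cl(\cV)\cong\Mat_{2^p}(\CC)\oplus\Mat_{2^p}(\CC)$ on which $\cS^{(n)}$ is realised, tracking the central element $\varpi$. A short computation reversing the $n$ orthonormal generators shows $t(\varpi) = (-1)^{n(n-1)/2}\varpi$ and $\beta(\varpi) = (-1)^{n(n+1)/2}\varpi$, and since the two exponents differ by the odd number $n$, exactly one of $t,\beta$ fixes $\varpi$. For that choice $\rho^\epsilon$ has the same $\varpi$-eigenvalue as $\rho$, hence is equivalent to it on the same matrix factor, and the uniqueness argument above applies verbatim to fix $\tau$. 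The step I expect to be most delicate is exactly this odd-dimensional bookkeeping: one must confirm that the chosen anti-automorphism preserves the relevant $\varpi$-eigenspace, equivalently that $\rho^\epsilon$ is equivalent to $\rho$ rather than to the opposite chirality, since for the other sign no form of that $\tau$-type exists.

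Finally, to determine $\sigma$ I would introduce the transpose $\Phi^T\colon\cS^{(n)}\to(\cS^{(n)})^*$ defined by $(\Phi^T\psi)(\chi) = (\Phi\chi)(\psi)$, so that $\langle\chi,\psi\rangle = (\Phi^T\psi)(\chi)$. The reversed form $B'(\psi,\chi) = \langle\chi,\psi\rangle$ is readily checked to satisfy the same $\tau$-intertwining property, using $\tau^2 = 1$, so $\Phi^T$ is again an intertwiner of $\rho$ with $\rho^\epsilon$. As $\rho$ is irreducible over $\CC$, Schur's lemma forces $\Phi^T = \lambda\Phi$ for a scalar $\lambda$; transposing once more gives $\lambda^2 = 1$, whence $\lambda = \pm 1$ and $\langle\psi,\chi\rangle = \lambda\langle\chi,\psi\rangle$. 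Setting $\sigma = \lambda$ delivers \eqref{eq:AdmissibleSpinorialBilinearForm1}, completing the construction for some choice of the signs $\sigma$ and $\tau$.
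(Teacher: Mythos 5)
Your proof is correct. The paper does not prove this theorem itself---its ``proof'' is just a citation to \cite{AlekCort1995math,Alekseevsky:2003vw}---and your argument is precisely the standard one from those references: realise the form as an intertwiner $\Phi:\cS^{(n)}\to(\cS^{(n)})^*$ between the spinor module and its dual twisted by one of the two canonical anti-automorphisms, use uniqueness of the irreducible module of $\Mat_{2^p}(\CC)$ (tracking the central element $\varpi$ in odd dimensions to see which anti-automorphism admits an intertwiner), and extract $\sigma$ from Schur's lemma applied to $\Phi^T$. In particular your odd-dimensional bookkeeping, which singles out exactly one value of $\tau$, is consistent with Table~\ref{tab:SigmaTauSigns}, so you have in effect supplied the argument the paper delegates to the literature.
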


\begin{proof}
See \cite{AlekCort1995math,Alekseevsky:2003vw}. 
\end{proof}

The possible values of these signs are displayed in Table~\ref{tab:SigmaTauSigns}.
\begin{table}
  \centering
  \caption{Signs for spinorial bilinear forms}
  \label{tab:SigmaTauSigns}
  \begin{tabular}{|c||c|c|c|c|c|c|c|c|}
   \hline
   &&&&&&&& \\ [-.35cm]  
    $n$ mod $8$ & 1 & 2 & 3 & 4 & 5 & 6 & 7 & 8 \\ [.05cm]
    \hline
    &&&&&&&& \\ [-.35cm] 
    $\sigma$ & $+$ & $\mp$ & $-$ & $-$ & $-$ & $\pm$ & $+$ & $+$ \\ [.05cm]
    \hline
    &&&&&&&& \\ [-.35cm] 
    $\tau$ & $+$ & $\mp$ & $-$ & $\pm$ & $+$ & $\mp$ & $-$ & $\pm$ \\ [.05cm]
    \hline 
  \end{tabular}
\end{table}
If $n$ is even and $\langle -,- \rangle$ has one of the two possible values of $( \sigma ,\tau )$ then the nondegenerate bilinear form $\langle -, \varpi \cdot - \rangle$ has the other value.

\begin{prop} \label{prop:SpinInvariance}
The bilinear form $\langle -,- \rangle$ is $\fso ( \cV )$-invariant, i.e. 
\begin{equation}\label{eq:SpinInvariance}
\langle s_A \cdot \psi , \chi \rangle + \langle \psi , s_A \cdot \chi \rangle = 0~,
\end{equation}
for all $A \in \fso ( \cV )$ and $\psi , \chi \in \cS^{(n)}$. 
\end{prop}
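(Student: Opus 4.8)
The plan is to reduce the statement to the defining symmetry \eqref{eq:AdmissibleSpinorialBilinearForm2} of the spinorial bilinear form, exploiting the fact that the $\fso ( \cV )$-action on spinors is realised by the quadratic Clifford elements $s_A$ (Corollary~\ref{cor:SpinorRep}). First I would observe that, since $\fso ( \cV ) \cong \Lambda^2 \cV$ as vector spaces, every $A \in \fso ( \cV )$ is a linear combination of elements of the form $A = x_1 \otimes x_2^\flat - x_2 \otimes x_1^\flat$ with $x_1 , x_2 \in \cV$. As both terms on the left-hand side of \eqref{eq:SpinInvariance} are linear in $A$, it therefore suffices to establish the identity for such a generator, for which $s_A = \tfrac{1}{4} ( x_1 x_2 - x_2 x_1 )$ as an element of $\Cl ( \cV )$.

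The key step is then to move the two Clifford factors from the first slot of $\langle -,- \rangle$ to the second using \eqref{eq:AdmissibleSpinorialBilinearForm2} twice. Concretely, for a single quadratic product I would write
\begin{equation}
\langle x_1 x_2 \cdot \psi , \chi \rangle = \tau \langle x_2 \cdot \psi , x_1 \cdot \chi \rangle = \tau^2 \langle \psi , x_2 x_1 \cdot \chi \rangle~,
\end{equation}
and similarly $\langle x_2 x_1 \cdot \psi , \chi \rangle = \tau^2 \langle \psi , x_1 x_2 \cdot \chi \rangle$. Since $\tau = \pm 1$ we have $\tau^2 = 1$, so that transposing the pair of Clifford elements across the form is sign-free. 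Subtracting the two relations and reassembling $s_A = \tfrac{1}{4} ( x_1 x_2 - x_2 x_1 )$ then yields $\langle s_A \cdot \psi , \chi \rangle = - \langle \psi , s_A \cdot \chi \rangle$, which is precisely \eqref{eq:SpinInvariance} on generators; the general case follows by linearity in $A$.

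The computation is short, so I do not expect a serious obstacle. The only point requiring care is the observation that it is the square $\tau^2 = 1$ — rather than $\tau$ itself — that governs the sign, which is why the (possibly negative) value of $\tau$ recorded in Table~\ref{tab:SigmaTauSigns} is irrelevant here and the invariance holds uniformly in $n$. A minor subtlety worth flagging is that \eqref{eq:AdmissibleSpinorialBilinearForm2} is stated only for the action of a single vector $x \in \cV$, so one must interpret the product $x_1 x_2 \cdot \psi$ as the iterated Clifford action $x_1 \cdot ( x_2 \cdot \psi )$ in order to apply \eqref{eq:AdmissibleSpinorialBilinearForm2} twice; this is legitimate because $\cS^{(n)}$ is by construction the restriction to $\fso ( \cV )$ of a genuine $\Cl ( \cV )$-module, on which such iterated products are well defined.
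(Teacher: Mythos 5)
Your proposal is correct and follows essentially the same route as the paper: write $A$ as a linear combination of generators $x_1 \otimes x_2^\flat - x_2 \otimes x_1^\flat$, apply \eqref{eq:AdmissibleSpinorialBilinearForm2} twice to move $s_A = \tfrac{1}{4}(x_1 x_2 - x_2 x_1)$ across the form, and use $\tau^2 = 1$. The point you flag — that the sign is governed by $\tau^2$ rather than $\tau$, so the invariance is uniform in $n$ — is exactly the mechanism in the paper's one-line computation \eqref{eq:SpinInvariance1}.
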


\begin{proof}
If $x_1 , x_2 \in \cV$ and $A = x_1 \otimes x_2^\flat -  x_2 \otimes x_1^\flat$ then
\begin{equation}\label{eq:SpinInvariance1}
\langle s_A \cdot \psi , \chi \rangle \underset{\eqref{eq:omegaA}}{=} \tfrac{1}{4} \langle ( x_1 x_2 - x_2 x_1 ) \cdot \psi , \chi \rangle \underset{\eqref{eq:AdmissibleSpinorialBilinearForm2}}{=} \tau^2 \tfrac{1}{4} \langle \psi , ( x_2 x_1 - x_1 x_2 ) \cdot \chi \rangle \underset{\eqref{eq:omegaA}}{=} - \langle \psi , s_A \cdot \chi \rangle~,
\end{equation}
for all $\psi , \chi \in \cS^{(n)}$. The result follows since any $A \in \fso ( \cV )$ can be expressed as a linear combination of terms of the form used above. 
\end{proof}

\begin{remark}
In fact, any $\fso ( \cV )$-invariant bilinear form on $\cS^{(n)}$ is a linear combination of those above.
\end{remark}

If $n=2p$ then 
\begin{equation}\label{eq:AdmissibleSpinorialBilinearForme2p}
\langle e_1 ... e_{2p} \cdot \psi , \chi \rangle  \underset{\eqref{eq:AdmissibleSpinorialBilinearForm2}}{=} \tau^{2p} \langle \psi , e_{2p} ... e_1 \cdot \chi \rangle  \underset{\eqref{eq:CliffordRelation}}{=} (-1)^p \langle \psi , e_1 ... e_{2p} \cdot \chi \rangle~,
\end{equation}
for all $\psi , \chi \in \cS^{(2p)}$. Since $\varpi = \pm 1$ on $\cS_\pm^{(2p)}$, it follows that
\begin{equation}\label{eq:AdmissibleSpinorialBilinearFormPlusMinus}
\langle \psi_\pm , \chi_\pm \rangle = (-1)^p \langle \psi_\pm , \chi_\pm \rangle \; , \quad \langle \psi_\pm , \chi_\mp \rangle = (-1)^{p+1} \langle \psi_\pm , \chi_\mp \rangle~,
\end{equation}
for all $\psi_\pm , \chi_\pm \in \cS_\pm^{(2p)}$. Therefore $\langle \psi_\pm , \chi_\pm \rangle = 0$ if $p$ is odd while $\langle \psi_\pm , \chi_\mp \rangle = 0$ if $p$ is even. 

The bilinear map  
\begin{equation}\label{eq:XiMap}
\xi  :\cS^{(n)} \times \cS^{(n)} \longrightarrow  \cV
\end{equation}
is defined such that
\begin{equation}\label{eq:XiDef}
(x, \xi ( \psi , \chi ) ) = \langle \psi , x \cdot \chi \rangle~,
\end{equation}
for all $x \in \cV$ and $\psi , \chi \in \cS^{(n)}$.

\begin{prop} \label{prop:SpinEquivariance}
The bilinear map $\xi$ is $\fso ( \cV )$-equivariant, i.e. 
\begin{equation}\label{eq:SpinEquivariance}
\xi ( s_A \cdot \psi , \chi ) + \xi ( \psi ,  s_A \cdot \chi ) = A \xi ( \psi , \chi )~,
\end{equation}
for all $A \in \fso ( \cV )$ and $\psi , \chi \in \cS^{(n)}$. 
\end{prop}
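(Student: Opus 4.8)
The plan is to test the claimed identity by pairing both sides with an arbitrary $x \in \cV$ using the nondegenerate form $(-,-)$ on $\cV$; since this form is nondegenerate, equality of the two paired scalars will force equality of the two vectors in $\cV$. First I would expand the left-hand side using the defining relation \eqref{eq:XiDef} for $\xi$, which gives
\[
(x, \xi ( s_A \cdot \psi , \chi )) + (x, \xi ( \psi , s_A \cdot \chi )) = \langle s_A \cdot \psi , x \cdot \chi \rangle + \langle \psi , x \cdot ( s_A \cdot \chi ) \rangle~.
\]
For the right-hand side I would use that $A \in \fso ( \cV )$ satisfies $(x , A v) = - (A x , v)$ by \eqref{eq:SoSpV}, so that $(x , A \xi ( \psi , \chi )) = - (A x , \xi ( \psi , \chi )) = - \langle \psi , (A x) \cdot \chi \rangle$, again by \eqref{eq:XiDef} applied with first argument $A x$. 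It then remains to verify
\[
\langle s_A \cdot \psi , x \cdot \chi \rangle + \langle \psi , x \cdot ( s_A \cdot \chi ) \rangle = - \langle \psi , (A x) \cdot \chi \rangle~.
\]

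Next I would invoke the $\fso ( \cV )$-invariance of $\langle -,- \rangle$ from Proposition~\ref{prop:SpinInvariance} to rewrite the first term as $- \langle \psi , s_A \cdot ( x \cdot \chi ) \rangle$, so that the left-hand side collapses to
\[
- \langle \psi , ( s_A \cdot ( x \cdot \chi ) - x \cdot ( s_A \cdot \chi ) ) \rangle = - \langle \psi , [ s_A , x ] \cdot \chi \rangle~,
\]
where $[ s_A , x ]$ is the commutator in $\Cl ( \cV )$. The crux is then the observation---already recorded in \eqref{eq:OmegaAzCommutator1}---that for $A = x_1 \otimes x_2^\flat - x_2 \otimes x_1^\flat$ one has $[ s_A , x ] = ( x_2 , x ) x_1 - ( x_1 , x ) x_2 = A x$ inside the Clifford algebra; that is, the adjoint action of $s_A$ on $\cV \subset \Cl ( \cV )$ reproduces the endomorphism $A$. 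Substituting this identifies the left-hand side with $- \langle \psi , (A x) \cdot \chi \rangle$, as required.

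Finally, since both sides of the proposed identity are linear in $A$ and every $A \in \fso ( \cV )$ is a linear combination of elements of the special form $x_1 \otimes x_2^\flat - x_2 \otimes x_1^\flat$, the general case follows by linearity. I do not anticipate any serious obstacle here: the whole argument rests on the single structural fact that $s_A$ acts on $\cV$ by $A$ through the commutator, which is exactly the content of \eqref{eq:OmegaAzCommutator1}, together with a routine application of nondegeneracy of $(-,-)$ and the invariance statement of Proposition~\ref{prop:SpinInvariance}.
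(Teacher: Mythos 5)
Your proposal is correct, and it takes a genuinely different route from the paper's proof. The paper works directly at the level of $\xi$: it first derives the exchange identity $\xi ( x \cdot \psi , \chi ) = - \tau \, \xi ( \psi , x \cdot \chi ) + 2\tau \langle \psi , \chi \rangle x$ from \eqref{eq:AdmissibleSpinorialBilinearForm1}--\eqref{eq:AdmissibleSpinorialBilinearForm2} and then grinds through the expansion of $\xi ( s_A \cdot \psi , \chi ) + \xi ( \psi , s_A \cdot \chi )$ for $A = x_1 \otimes x_2^\flat - x_2 \otimes x_1^\flat$, with the $\tau$'s cancelling along the way. You instead dualise: pairing with an arbitrary $x \in \cV$ via the nondegenerate form $(-,-)$ converts the claim, through \eqref{eq:XiDef} and the already-proven invariance of $\langle -,- \rangle$ (Proposition~\ref{prop:SpinInvariance}), into the single Clifford-algebra identity $[ s_A , x ] = Ax$, which is exactly \eqref{eq:OmegaAzCommutator1}. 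Your sign bookkeeping checks out: $(x, A\xi(\psi,\chi)) = -(Ax, \xi(\psi,\chi)) = -\langle \psi, (Ax)\cdot\chi\rangle$, while the left side collapses to $-\langle \psi, [s_A,x]\cdot\chi\rangle$, and both sides of $[s_A,x]=Ax$ are linear in $A$, so the restriction to rank-two $A$ costs nothing. What your route buys is conceptual economy --- it makes visible that the equivariance of $\xi$ is nothing but the statement that $\ad(s_A)$ acts on $\cV \subset \Cl(\cV)$ as $A$, and it avoids the $\sigma$, $\tau$ sign-chasing entirely. What the paper's route buys is the auxiliary identities \eqref{eq:XiSym2} and \eqref{eq:XiSym3}, which are reused later (e.g.\ in deriving \eqref{eq:bSym} and the symmetry of $\xi$ in $n=3$); if you adopted your proof you would still need to record those identities separately.
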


\begin{proof}
For any $\psi , \chi \in \cS^{(n)}$ and $x,y \in \cV$, we have 
\begin{equation}\label{eq:XiSym}
\langle \psi , x \cdot \chi \rangle \underset{\eqref{eq:AdmissibleSpinorialBilinearForm1}}{=} \sigma \langle x \cdot \chi , \psi \rangle \underset{\eqref{eq:AdmissibleSpinorialBilinearForm2}}{=} \sigma \tau \langle \chi , x \cdot \psi \rangle
\end{equation}
and
\begin{equation}\label{eq:XixSym}
\langle x \cdot \psi , y \cdot \chi \rangle \underset{\eqref{eq:AdmissibleSpinorialBilinearForm2}}{=} \tau \langle \psi , xy \cdot \chi \rangle \underset{\eqref{eq:CliffordRelation}}{=} - \tau \langle \psi , yx \cdot \chi \rangle + 2 \tau (x,y) \langle \psi , \chi \rangle~.
\end{equation}
Therefore
\begin{align}
\xi ( \psi , \chi ) &= \sigma \tau \xi ( \chi , \psi )~, \label{eq:XiSym2} \\
\xi ( x \cdot \psi , \chi ) &= - \tau \xi ( \psi , x \cdot \chi ) + 2\tau \langle \psi , \chi \rangle x \label{eq:XiSym3}~,
\end{align}
for all $\psi , \chi \in \cS^{(n)}$ and $x \in \cV$. 

If $x_1 , x_2 \in \cV$ and $A = x_1 \otimes x_2^\flat -  x_2 \otimes x_1^\flat$ then
\begin{align}\label{eq:SpinEquivariance1}
\xi ( s_A \cdot \psi , \chi ) + \xi ( \psi ,  s_A \cdot \chi ) &\;\underset{\eqref{eq:omegaA}}{=} \tfrac{1}{4} ( \xi ( ( x_1 x_2 - x_2 x_1 ) \cdot \psi , \chi ) + \xi ( \psi , ( x_1 x_2 - x_2 x_1 ) \cdot \chi ) ) \nonumber \\ 
&\underset{\eqref{eq:XiSym3}}{=}  \tfrac{1}{4}  ( - \tau \xi ( x_2 \cdot \psi , x_1 \cdot \chi ) + 2\tau \langle x_2 \cdot \psi , \chi \rangle x_1 + \xi ( \psi , x_1 x_2 \cdot \chi ) ) - ( 1 \leftrightarrow 2 )\nonumber \\
&\underset{\eqref{eq:XiSym3}}{=}  \tfrac{1}{4}  (\xi ( \psi , x_2 x_1  \cdot \chi ) - 2 \langle \psi , x_1 \cdot \chi \rangle x_2 + 2\tau \langle x_2 \cdot \psi , \chi \rangle x_1 + \xi ( \psi , x_1 x_2 \cdot \chi ) ) - ( 1 \leftrightarrow 2 ) \nonumber \\
&\underset{\eqref{eq:AdmissibleSpinorialBilinearForm2}}{=} - \langle \psi , x_1 \cdot \chi \rangle x_2 +  \langle \psi , x_2 \cdot \chi \rangle x_1  \nonumber \\
&\underset{\eqref{eq:XiDef}}{=} - ( x_1 , \xi ( \psi , \chi ) ) x_2 +  ( x_2 , \xi ( \psi , \chi ) ) x_1 \nonumber \\
&\;\; = ( x_1 \otimes x_2^\flat - x_2 \otimes x_1^\flat ) \xi ( \psi , \chi ) \nonumber \\
&\;\; = A \xi ( \psi , \chi )~, 
\end{align}
for all $\psi , \chi \in \cS^{(n)}$. The result follows since any $A \in \fso ( \cV )$ can be expressed as a linear combination of terms of the form used above. 
\end{proof}

If $n=2p$ then 
\begin{equation}\label{eq:Xie2p}
\langle e_1 ... e_{2p} \cdot \psi , x \cdot \chi \rangle \underset{\eqref{eq:AdmissibleSpinorialBilinearForme2p}}{=} (-1)^p \langle \psi , e_1 ... e_{2p} x \cdot \chi \rangle \underset{\eqref{eq:CliffordRelation}}{=} (-1)^{p+1} \langle \psi , x e_1 ... e_{2p} \cdot \chi \rangle ~,
\end{equation}
for all $\psi , \chi \in \cS^{(2p)}$ and $x \in \cV$. Therefore 
\begin{equation}\label{eq:Xie2p2}
\xi ( e_1 ... e_{2p} \cdot \psi , \chi ) = (-1)^{p+1} \xi ( \psi , e_1 ... e_{2p} \cdot \chi )~,
\end{equation}
for all $\psi , \chi \in \cS^{(2p)}$. Since $\varpi = \pm 1$ on $\cS_\pm^{(2p)}$, it follows that
\begin{equation}\label{eq:XiPlusMinus}
\xi ( \psi_\pm , \chi_\pm ) = (-1)^{p+1} \xi ( \psi_\pm , \chi_\pm ) \; , \quad \xi ( \psi_\pm , \chi_\mp ) = (-1)^p \xi ( \psi_\pm , \chi_\mp )~,
\end{equation}
for all $\psi_\pm , \chi_\pm \in \cS_\pm^{(2p)}$. Therefore $\xi ( \psi_\pm , \chi_\pm ) = 0$ if $p$ is even while $\xi ( \psi_\pm , \chi_\mp ) = 0$ if $p$ is odd.


\section{Poincar\'{e} superalgebras}
\label{sec:ExtendedPoincareSuperalgebras}

In this section we identify the data needed to define a Poincar\'{e} superalgebra and prove that the existence of this data ensures that a Poincar\'{e} superalgebra is a Lie superalgebra.


\subsection{Lie superalgebras}
\label{sec:LieSuperalgebras}

A {\emph{2-grading}} on a vector space $\fV$ is a way of expressing $\fV = \fV_{\bar 0} \oplus \fV_{\bar 1}$ as the direct sum of two subspaces $\fV_{\bar 0}$ and $\fV_{\bar 1}$. We will refer to $\fV_{\bar 0}$ and $\fV_{\bar 1}$ respectively as the {\emph{even}} and {\emph{odd}} parts of $\fV$. If ${\bar \alpha} \in \{ {\bar 0} , {\bar 1} \}$ then any $x \in \fV_{\bar \alpha}$ is said to be {\emph{homogeneous}} and the {\emph{parity}} of $x$ is $|x| = \alpha$. 

A 2-graded vector space is called a {\emph{vector superspace}}. Let $\fV$, $\fW$ and $\fU$ be vector superspaces. A linear map $f : \fV \rightarrow \fW$ will be called {\emph{even}} if $f ( \fV_{\bar \alpha} ) \subset \fW_{\bar \alpha}$, for all ${\bar \alpha} \in \{ {\bar 0} , {\bar 1} \}$. A bilinear map $f : \fV \times \fV \rightarrow \fW$ will be called {\emph{supersymmetric}} if 
\begin{equation}\label{eq:supersymmetry}
f(x,y)= (-1)^{|x||y|} f(y,x)~,
\end{equation}
for all homogeneous $x,y \in \fV$, and {\emph{superskewsymmetric}} if 
\begin{equation}\label{eq:superskewsymmetry}
f(x,y)= - (-1)^{|x||y|} f(y,x)~,
\end{equation}
for all homogeneous $x,y \in \fV$. If ${\bar \alpha} , {\bar \beta} \in \{ {\bar 0} , {\bar 1} \}$ then we define 
\begin{equation}\label{eq:Z2Addition}
{\bar \alpha} + {\bar \beta} = 
\begin{cases}
{\bar 0} \quad {\mbox{if ${\bar \alpha} = {\bar \beta}$}}~, \\
{\bar 1} \quad {\mbox{if ${\bar \alpha} \neq {\bar \beta}$}}~.
\end{cases}
\end{equation}
A bilinear map $f : \fV \times \fW \rightarrow \fU$ will be called {\emph{even}} if $f ( \fV_{\bar \alpha} , \fW_{\bar \beta} ) \subset \fU_{{\bar \alpha} + {\bar \beta}}$, for all ${\bar \alpha} , {\bar \beta} \in \{ {\bar 0} , {\bar 1} \}$.

A vector superspace $\fG$ is called a {\emph{Lie superalgebra}} if it is equipped with an even superskewsymmetric bilinear map (or {\emph{bracket}})
\begin{equation}\label{eq:LSABracket} 
[-,-] : \fG \times \fG \longrightarrow \fG
\end{equation}
which obeys the {\emph{Jacobi identity}}
\begin{equation}\label{eq:LSAJacobiG}
[ a , [ b , c ] ] = [ [ a , b ] , c ] + (-1)^{|a||b|} [ b , [ a , c ] ]~,
\end{equation}
for all homogeneous $a , b \in \fG$ and $c \in \fG$. 

If $\fG$ is a Lie superalgebra then the $[\bar{0}\bar{0}]$ bracket defines a Lie bracket on $\fG_{\bar 0}$. The $[\bar{0}\bar{1}]$ bracket defines an action of the Lie algebra $\fG_{\bar 0}$ on $\fG_{\bar 1}$ that makes $\fG_{\bar 1}$ into a representation of $\fG_{\bar 0}$. The $[\bar{1}\bar{1}]$ bracket defines a $\fG_{\bar 0}$-equivariant bilinear map $\fG_{\bar 1} \times \fG_{\bar 1} \rightarrow \fG_{\bar 0}$. It is easily verified that these three properties together with the condition
\begin{equation}\label{eq:LSAJacobi111}
[ a , [ b , c ] ] + [ b , [ c , a ] ] + [ c , [ a , b ] ] = 0~,
\end{equation}
for all $a,b,c \in \fG_{\bar 1}$, are in fact necessary and sufficient to guarantee that $\fG$ is a Lie superalgebra.


\subsection{The basic setup}
\label{sec:TheBasicSetup}

The {\emph{Poincar\'{e} algebra}} $\fp (n)$ in $n$ dimensions is the Lie algebra defined by the vector space $\fso ( \cV ) \oplus \cV$ equipped with the Lie brackets 
\begin{align}
[A,B] &= AB-BA~, \label{eq:PoincareBracketsAB} \\
[A,x] &= Ax~, \label{eq:PoincareBracketsAx} \\
[x,y] &= 0~, \label{eq:PoincareBracketsxy}
\end{align}
for all $A,B \in \fso( \cV )$ and $x,y \in \cV$. 

The {\emph{Poincar\'{e} superalgebra}} $\fP (n)$ in $n$ dimensions is the vector superspace with even part 
\begin{equation}\label{eq:SP0}
\fP (n)_{\bar 0} = \fp (n) \oplus \fg~,
\end{equation}
for some Lie algebra $\fg$, and odd part 
\begin{equation}\label{eq:SP1}
\fP (n)_{\bar 1} = 
\begin{cases}
\cS^{(n)} \otimes V \quad\quad\quad\quad\quad\quad\quad\quad {\mbox{if $n$ is odd}}~, \\
\cS_+^{(n)} \otimes V_+ \oplus \cS_-^{(n)} \otimes V_- \quad\quad\, {\mbox{if $n$ is even}}~,
\end{cases}
\end{equation}
for some representations $V$, $V_\pm$ of $\fg$. If $n$ is even then we define $V = V_+ \oplus V_-$.

We will follow the conventional assumption (motived by \cite{Coleman:1967ad}) that $[ \fp (n) , \fg ] = 0$, i.e.
\begin{align}
[A,X] &=0~, \label{eq:SPAX} \\
[x,X] &=0~, \label{eq:SPxX}
\end{align}
for all $A \in \fso ( \cV )$, $x \in \cV $ and $X \in \fg$. Therefore $\fP (n)_{\bar 0}$ is a Lie algebra since it is the direct sum of the Lie algebras $\fp (n)$ and $\fg$.  


\subsection{$[\bar{0}\bar{1}]$ bracket and $[\bar{0}\bar{0}\bar{1}]$ Jacobi identity}
\label{sec:01BracketAnd001JacobiIdentity}

The action of $\fP (n)_{\bar 0}$ on $\fP (n)_{\bar 1}$ is defined by the brackets
\begin{align} 
[ A , \psi \otimes v ] &= s_A \cdot \psi \otimes v~, \label{eq:SP01A} \\
[ x , \psi \otimes v ] &= 0~, \label{eq:SP01x} \\
\quad [ X , \psi \otimes v ] &= \psi \otimes X \cdot v~, \label{eq:SP01X}
\end{align}
for all $A \in \fso ( \cV )$, $x \in \cV$, $X \in \fg$ and $\psi \otimes v \in \fP (n)_{\bar 1}$. 

\begin{prop} \label{prop:001Jacobi}
The brackets above define $\fP (n)_{\bar 1}$ as a representation of $\fP (n)_{\bar 0}$.
\end{prop}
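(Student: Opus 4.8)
The plan is to recognise that, since both arguments are even, the $[\bar 0 \bar 0 \bar 1]$ instance of the Jacobi identity \eqref{eq:LSAJacobiG} reduces to the assertion that the map $\rho : \fP(n)_{\bar 0} \rightarrow \fgl(\fP(n)_{\bar 1})$, $\rho(a) = [a,-]$, is a homomorphism of Lie algebras, i.e. $\rho([a,b]) = \rho(a)\rho(b) - \rho(b)\rho(a)$ for all $a,b \in \fP(n)_{\bar 0}$. Establishing this is exactly what it means for the brackets \eqref{eq:SP01A}--\eqref{eq:SP01X} to make $\fP(n)_{\bar 1}$ into a representation of $\fP(n)_{\bar 0}$.

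Since $\fP(n)_{\bar 0} = \fso(\cV) \oplus \cV \oplus \fg$ and the bracket is bilinear, it suffices to verify the homomorphism property when $a$ and $b$ each range over these three summands, which gives six cases, four of which are immediate. The three cases in which at least one argument is a translation $x \in \cV$ all vanish on both sides: translations act as zero on $\fP(n)_{\bar 1}$ by \eqref{eq:SP01x}, while $[\cV,\cV] = 0$, $[\cV,\fg] = 0$ and $[\fso(\cV),\cV] \subset \cV$ by \eqref{eq:PoincareBracketsxy}, \eqref{eq:SPxX} and \eqref{eq:PoincareBracketsAx}, so the bracket produced on the right-hand side again acts as zero. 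For the mixed pair $(A,X)$ with $A \in \fso(\cV)$, $X \in \fg$, one has $[A,X] = 0$ by \eqref{eq:SPAX}, while the composites $\rho(A)\rho(X)$ and $\rho(X)\rho(A)$ coincide because $s_A$ acts on the spinor tensor factor and $X$ on the internal factor, so they commute.

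This leaves the two substantive cases. For $a = A$, $b = B$ in $\fso(\cV)$, applying \eqref{eq:SP01A} twice gives $(\rho(A)\rho(B) - \rho(B)\rho(A))(\psi \otimes v) = [s_A, s_B] \cdot \psi \otimes v$, which equals $s_{[A,B]} \cdot \psi \otimes v = \rho([A,B])(\psi \otimes v)$ precisely by Proposition~\ref{prop:OmegaABCommutator}; this is the main, and only non-trivial, input. For $a = X$, $b = Y$ in $\fg$, applying \eqref{eq:SP01X} twice reduces the claim to $[X,Y] \cdot v = X \cdot (Y \cdot v) - Y \cdot (X \cdot v)$, which holds because $V$, and hence each $V_\pm$, is by hypothesis a representation of $\fg$. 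I would also check at the outset that the brackets respect the even-$n$ decomposition \eqref{eq:SP1}: $s_A$ preserves each chirality $\cS_\pm^{(n)}$ since it commutes with $\varpi$, and $\fg$ acts block-diagonally as $V = V_+ \oplus V_-$, so both summands are preserved and the brackets are well defined.
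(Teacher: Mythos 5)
Your proposal is correct and follows essentially the same route as the paper: reduce the claim to the six cases of the homomorphism property over the decomposition $\fP(n)_{\bar 0} = \fso(\cV)\oplus\cV\oplus\fg$, dispose of the cases involving $\cV$ and the mixed $(A,X)$ case by the vanishing brackets and commuting actions, and invoke Proposition~\ref{prop:OmegaABCommutator} together with the representation property of $V$ for the two substantive cases. The added remark about compatibility with the chirality decomposition for even $n$ is a sensible extra check that the paper leaves implicit.
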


\begin{proof}
For any $A,B \in \fso ( \cV )$, $x,y \in \cV$, $X,Y \in \fg$ and $\psi \otimes v \in \fP (n)_{\bar 1}$,
\begin{align}\label{eq:SP001} 
[ [A,B] , \psi \otimes v ] &\underset{\eqref{eq:SP01A}}{=} s_{[A,B]} \cdot \psi \otimes v \underset{\eqref{eq:OmegaABCommutator}}{=} s_A \cdot ( s_B \cdot \psi ) \otimes v - s_B \cdot ( s_A \cdot \psi ) \otimes v \underset{\eqref{eq:SP01A}}{=}  [ A , [ B , \psi \otimes v ] ] - [ B , [ A , \psi \otimes v ] ]~, \nonumber \\ 
[ [A,x] , \psi \otimes v ] &\underset{\eqref{eq:PoincareBracketsAx}}{=} [ Ax , \psi \otimes v ] \underset{\eqref{eq:SP01x}}{=} 0 \overset{\eqref{eq:SP01A}}{\underset{\eqref{eq:SP01x}}{=}} [ A , [ x , \psi \otimes v ] ] - [ x , [ A , \psi \otimes v ] ]~, \nonumber \\
[ [x,y] , \psi \otimes v ] &\underset{\eqref{eq:PoincareBracketsxy}}{=} 0 \underset{\eqref{eq:SP01x}}{=}  [ x , [ y , \psi \otimes v ] ] - [ y , [ x , \psi \otimes v ] ]~, \nonumber \\
[ [A,X] , \psi \otimes v ] &\underset{\eqref{eq:SPAX}}{=} 0 = s_A \cdot \psi \otimes X \cdot v - s_A \cdot \psi \otimes X \cdot v \overset{\eqref{eq:SP01A}}{\underset{\eqref{eq:SP01X}}{=}} [ A , [ X , \psi \otimes v ] ] - [ X , [ A , \psi \otimes v ] ]~, \nonumber \\ 
[ [x,X] , \psi \otimes v ] &\underset{\eqref{eq:SPxX}}{=} 0 \overset{\eqref{eq:SP01x}}{\underset{\eqref{eq:SP01X}}{=}} [ x , [ X , \psi \otimes v ] ] - [ X , [ x , \psi \otimes v ] ]~, \nonumber \\
[ [X,Y] , \psi \otimes v ] &\underset{\eqref{eq:SP01X}}{=} \psi \otimes [X,Y] \cdot v = \psi \otimes X \cdot ( Y \cdot v ) - \psi \otimes Y \cdot ( X \cdot v ) \underset{\eqref{eq:SP01X}}{=} [ X , [ Y , \psi \otimes v ] ] - [ Y , [ X , \psi \otimes v ] ]~. 
\end{align}
\end{proof}


\subsection{$[\bar{1}\bar{1}]$ bracket and $[\bar{0}\bar{1}\bar{1}]$ Jacobi identity}
\label{sec:11BracketAnd011JacobiIdentity}

Now let 
\begin{equation}\label{eq:b}
b : V \times V \longrightarrow  \CC
\end{equation}
be a nondegenerate bilinear form obeying 
\begin{equation}\label{eq:gInvariantb}
b ( X \cdot v , w ) + b ( v , X \cdot w ) = 0~,
\end{equation}
for all $X \in \fg$ and $v,w \in V$, and let
\begin{equation}\label{eq:c}
c : V \times V \longrightarrow  \fg
\end{equation}
be a bilinear map obeying 
\begin{equation}\label{eq:gEquivariantc}
[ X , c(v,w) ] = c ( X \cdot v , w ) + c ( v , X \cdot w )~,
\end{equation}
for all $X \in \fg$ and $v,w \in V$. In other words, $b$ is $\fg$-invariant and $c$ is $\fg$-equivariant.

The remaining bracket for $\fP (n)$ is defined by 
\begin{equation}\label{eq:SP11}
[ \psi \otimes v , \chi \otimes w ] = \xi ( \psi , \chi ) b(v,w) + \langle \psi , \chi \rangle c(v,w)~,
\end{equation}
for all $\psi \otimes v , \chi \otimes w \in \fP (n)_{\bar 1}$.

To ensure that \eqref{eq:SP11} is symmetric, \eqref{eq:XiSym2} and \eqref{eq:AdmissibleSpinorialBilinearForm1} imply that we must have
\begin{align} 
b(v,w) &= \sigma\tau b(w,v)~, \label{eq:bSym} \\
c(v,w) &= \sigma c(w,v)~, \label{eq:cSym}
\end{align}
for all $v,w \in V$.

If $n=2p$ then \eqref{eq:XiPlusMinus} and \eqref{eq:AdmissibleSpinorialBilinearFormPlusMinus} imply that we can assume
\begin{equation}\label{eq:bcPlusMinuspOdd}
b( V_\pm , V_\mp ) = 0~, \quad c( V_\pm , V_\pm ) = 0~, 
\end{equation}
if $p$ is odd, and
\begin{equation}\label{eq:bcPlusMinuspEven}
b( V_\pm , V_\pm ) = 0~, \quad c( V_\pm , V_\mp ) = 0~, 
\end{equation}
if $p$ is even.

\begin{prop} \label{prop:011Jacobi}
The bilinear map $\fP (n)_{\bar 1} \times \fP (n)_{\bar 1} \rightarrow \fP (n)_{\bar 0}$ defined by \eqref{eq:SP11} is $\fP (n)_{\bar 0}$-equivariant.
\end{prop}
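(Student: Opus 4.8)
The plan is to verify the $[\bar 0\bar 1\bar 1]$ Jacobi identity directly, which is precisely the assertion that each even generator acts as a derivation on the bracket \eqref{eq:SP11}. Writing $a = \psi \otimes v$ and $b = \chi \otimes w$ for odd elements and $Z \in \fP (n)_{\bar 0}$, I would establish
\[
[ Z , [ a , b ] ] = [ [ Z , a ] , b ] + [ a , [ Z , b ] ]
\]
by running through the three types of generator $Z$ separately: $A \in \fso ( \cV )$, $x \in \cV$ and $X \in \fg$. The key structural observation that organises the computation is that the right-hand side of \eqref{eq:SP11} splits into a piece $\xi ( \psi , \chi ) b(v,w)$ valued in $\cV \subset \fp (n)$ and a piece $\langle \psi , \chi \rangle c(v,w)$ valued in $\fg$, and that each even generator acts diagonally on this decomposition.

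For $Z = A$, the action on the left-hand side uses \eqref{eq:PoincareBracketsAx} on the $\cV$-valued piece and \eqref{eq:SPAX} to annihilate the $\fg$-valued piece, leaving $A\, \xi ( \psi , \chi )\, b(v,w)$. On the right-hand side, the brackets $[A,a]$ and $[A,b]$ are computed from \eqref{eq:SP01A}, and the resulting four terms reorganise into $( \xi ( s_A \cdot \psi , \chi ) + \xi ( \psi , s_A \cdot \chi ) )\, b(v,w) + ( \langle s_A \cdot \psi , \chi \rangle + \langle \psi , s_A \cdot \chi \rangle )\, c(v,w)$; here the $\fso ( \cV )$-equivariance \eqref{eq:SpinEquivariance} of $\xi$ reproduces $A\, \xi ( \psi , \chi )$ while the $\fso ( \cV )$-invariance \eqref{eq:SpinInvariance} of $\langle -,- \rangle$ makes the $c$-term vanish, matching the two sides. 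The case $Z = x$ is immediate: \eqref{eq:SP01x} kills $[x,a]$ and $[x,b]$ on the right, while on the left \eqref{eq:PoincareBracketsxy} and \eqref{eq:SPxX} annihilate both the $\cV$- and $\fg$-valued pieces, so both sides vanish. For $Z = X$, \eqref{eq:SPxX} (together with antisymmetry of the even bracket) annihilates the $\cV$-valued piece on the left, leaving $\langle \psi , \chi \rangle\, [X, c(v,w)]$, and \eqref{eq:SP01X} turns the right-hand side into $\xi ( \psi , \chi ) ( b( X \cdot v , w ) + b( v , X \cdot w ) ) + \langle \psi , \chi \rangle ( c( X \cdot v , w ) + c( v , X \cdot w ) )$; the $\fg$-invariance \eqref{eq:gInvariantb} of $b$ kills the $\xi$-term and the $\fg$-equivariance \eqref{eq:gEquivariantc} of $c$ reproduces $[X, c(v,w)]$, matching the two sides.

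I do not anticipate a genuine obstacle: the content is entirely bookkeeping, and the proposition follows once the three cases are checked on generators of the form used above, since every element of $\fP (n)_{\bar 0}$ and $\fP (n)_{\bar 1}$ is a linear combination of such generators. The one point worth flagging is the clean division of labour between the two classes of identities: the Lorentz generators exercise only the spinorial identities \eqref{eq:SpinEquivariance} and \eqref{eq:SpinInvariance} (the $\cV$- and $\fg$-pieces decouple because $A$ commutes with $\fg$), whereas the internal generators exercise only the module identities \eqref{eq:gInvariantb} and \eqref{eq:gEquivariantc} (the $\cV$-piece decouples because $\fg$ commutes with translations). It is this decoupling, guaranteed by the assumptions \eqref{eq:SPAX} and \eqref{eq:SPxX}, that makes the verification painless; the non-trivial input is entirely isolated in Propositions~\ref{prop:SpinInvariance} and \ref{prop:SpinEquivariance} and in the hypotheses \eqref{eq:gInvariantb} and \eqref{eq:gEquivariantc} on $b$ and $c$.
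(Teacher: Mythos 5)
Your proposal is correct and follows essentially the same route as the paper's proof: a case-by-case verification over $A \in \fso(\cV)$, $x \in \cV$ and $X \in \fg$, with the Lorentz case resting on \eqref{eq:SpinEquivariance} and \eqref{eq:SpinInvariance} and the internal case on \eqref{eq:gInvariantb} and \eqref{eq:gEquivariantc}, exactly as in the displayed computation \eqref{eq:SP011}. Nothing is missing.
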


\begin{proof}
For any $A \in \fso ( \cV )$, $x \in \cV$, $X \in \fg$ and $\psi \otimes v , \chi \otimes w \in \fP (n)_{\bar 1}$,
\begin{align}\label{eq:SP011} 
[ A , [ \psi \otimes v , \chi \otimes w ] ] &\underset{\eqref{eq:SP11}}{=} [ A , \xi( \psi , \chi ) ] b(v,w) +  \langle \psi , \chi \rangle [ A , c(v,w) ] \nonumber \\
&\overset{\eqref{eq:PoincareBracketsAx}}{\underset{\eqref{eq:SPAX}}{=}} A \xi( \psi , \chi ) b(v,w) \nonumber \\ 
&\underset{\eqref{eq:SpinEquivariance}}{=} ( \xi( s_A \cdot \psi , \chi ) + \xi( \psi , s_A \cdot \chi ) ) b(v,w) \nonumber \\
&\underset{\eqref{eq:SpinInvariance}}{=} \xi( s_A \cdot \psi , \chi ) b(v,w) + \langle s_A \cdot \psi , \chi \rangle c(v,w) + \xi( \psi , s_A \cdot \chi ) b(v,w) + \langle \psi , s_A \cdot \chi \rangle c(v,w) \nonumber \\
&\underset{\eqref{eq:SP11}}{=} [ s_A \cdot \psi \otimes v , \chi \otimes w ] + [ \psi \otimes v , s_A \cdot  \chi \otimes w ] \nonumber \\ 
&\underset{\eqref{eq:SP01A}}{=} [ [ A , \psi \otimes v ] , \chi \otimes w ] + [ \psi \otimes v , [ A , \chi \otimes w ] ]~, \nonumber \\
[ x, [ \psi \otimes v , \chi \otimes w ] ] &\underset{\eqref{eq:SP11}}{=}  [ x , \xi( \psi , \chi ) ] b(v,w) +  \langle \psi , \chi \rangle [ x , c(v,w) ] \nonumber \\
&\overset{\eqref{eq:PoincareBracketsxy}}{\underset{\eqref{eq:SPxX}}{=}} 0 \nonumber \\ 
&\underset{\eqref{eq:SP01x}}{=} [ [ x , \psi \otimes v ] , \chi \otimes w ] + [ \psi \otimes v , [ x , \chi \otimes w ] ]~, \nonumber \\
[ X , [ \psi \otimes v , \chi \otimes w ] ] &\underset{\eqref{eq:SP11}}{=} [ X , \xi( \psi , \chi ) ] b(v,w) +  \langle \psi , \chi \rangle [ X , c(v,w) ] \nonumber \\
&\overset{\eqref{eq:SPxX}}{\underset{\eqref{eq:gEquivariantc}}{=}} \langle \psi , \chi \rangle ( c ( X \cdot v , w ) + c ( v , X \cdot w ) ) \nonumber \\ 
&\underset{\eqref{eq:gInvariantb}}{=} \xi ( \psi , \chi ) b ( X \cdot v , w ) + \langle \psi , \chi \rangle c ( X \cdot v , w ) + \xi ( \psi , \chi )  b ( v , X \cdot w ) + \langle \psi , \chi \rangle c ( v , X \cdot w ) \nonumber \\
&\underset{\eqref{eq:SP11}}{=} [ \psi \otimes X \cdot v , \chi \otimes w ] + [ \psi \otimes v , \chi \otimes X \cdot w ] \nonumber \\ 
&\underset{\eqref{eq:SP01X}}{=} [ [ X , \psi \otimes v ] , \chi \otimes w ] + [ \psi \otimes v , [ X , \chi \otimes w ] ]~.
\end{align}
\end{proof}


\subsection{$[\bar{1}\bar{1}\bar{1}]$ Jacobi identity}
\label{sec:111JacobiIdentity}

The final condition \eqref{eq:LSAJacobi111} needed for $\fP (n)$ to be a Lie superalgebra is
\begin{align}\label{eq:SP111} 
&[ [ \psi \otimes v , \chi \otimes w ] , \phi \otimes u ] + [ [ \chi \otimes w , \phi \otimes u ] , \psi \otimes v ] + [ [ \phi \otimes u , \psi \otimes v ] , \chi \otimes w ] \nonumber \\ 
&\overset{\eqref{eq:SP11}}{\underset{\eqref{eq:SP01x}}{=}} \langle \psi , \chi \rangle [ c(v,w) , \phi \otimes u ] + \langle \chi , \phi \rangle [ c(w,u) , \psi \otimes v ]  + \langle \phi , \psi \rangle [ c(u,v) , \chi \otimes w ]  \nonumber \\
&\underset{\eqref{eq:SP01X}}{=} \langle \psi , \chi \rangle \phi \otimes c(v,w) \cdot u + \langle \chi , \phi \rangle \psi \otimes c(w,u) \cdot v + \langle \phi , \psi \rangle \chi \otimes c(u,v) \cdot w \nonumber \\
&\hspace*{.2cm} =0~,
\end{align}
for all $\psi \otimes v , \chi \otimes w , \phi \otimes u \in \fP (n)_{\bar 1}$. 

If $c$ is such that
\begin{equation}\label{eq:cVVV0}
c( v , w ) \cdot u = 0~,
\end{equation}
for all $v,w,u \in V$, then clearly \eqref{eq:SP111} is solved. We shall refer to this as the {\emph{conventional solution}}. 

\begin{thm} \label{thm:111Jacobi}
If $n\geq 4$ then every solution of \eqref{eq:SP111} is conventional. If $n < 4$ then \eqref{eq:SP111} is solved if and only if
\begin{itemize}
\item $c( v , w ) \cdot u + c( u , w ) \cdot v = 0$, for all $v,w,u \in V$, if $n=3$.

\item $c( v_\pm , w_\mp ) \cdot u_\pm + c( u_\pm , w_\mp ) \cdot v_\pm = 0$, for all $v_\pm , w_\pm , u_\pm \in V_\pm$, if $n=2$.

\item $c( v , w ) \cdot u + c( w , u ) \cdot v + c( u , v ) \cdot w = 0$, for all $v,w,u \in V$, if $n=1$.
\end{itemize}
\end{thm}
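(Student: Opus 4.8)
The starting point is the reduced identity reached at the end of \eqref{eq:SP111}: once the $\xi$-contributions are annihilated by \eqref{eq:SP01x}, the Jacobi condition becomes the vanishing of
\[
\langle \psi , \chi \rangle\, \phi \otimes c(v,w)\cdot u + \langle \chi , \phi \rangle\, \psi \otimes c(w,u)\cdot v + \langle \phi , \psi \rangle\, \chi \otimes c(u,v)\cdot w
\]
for all odd elements $\psi \otimes v, \chi \otimes w, \phi \otimes u$. By multilinearity it suffices to test this on inputs of definite chirality when $n$ is even. The whole analysis is then controlled by two data: the dimension $d$ of the relevant spinor module ($\cS^{(n)}$ for $n$ odd, the half-spinor modules $\cS_\pm^{(n)}$ for $n$ even, so that $d = 2^{\lfloor (n-1)/2 \rfloor}$) and the symmetry sign $\sigma$ of $\langle -,- \rangle$ restricted to it, read off from Table~\ref{tab:SigmaTauSigns}.

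The first key step is separation of variables: if $\psi,\chi,\phi$ are linearly independent then the three summands of \eqref{eq:SP111} lie in independent summands of $\cS^{(n)} \otimes V$, so each $V$-valued coefficient must vanish separately, and choosing the spinor pairings non-zero (possible by nondegeneracy) forces the conventional relation $c(v,w)\cdot u = 0$. Three such independent spinors exist precisely when $d \geq 3$, which covers all odd $n \geq 5$ at a stroke. For even $n$ with $p$ odd I instead feed in inputs of chirality type $(+,-,+)$ and $(+,-,-)$; by \eqref{eq:bcPlusMinuspOdd}, \eqref{eq:XiPlusMinus} and \eqref{eq:AdmissibleSpinorialBilinearFormPlusMinus} only two summands survive in each, and splitting them requires merely two independent spinors of a single chirality, i.e. $\dim \cS_\pm^{(n)} \geq 2$. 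This settles every such $n \geq 6$, forcing $c$ to vanish there too.

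The delicate case — and the main obstacle — is even $n$ with $p$ even, in particular $n=4$, where $\dim \cS_\pm^{(4)} = 2$ is too small for three independent spinors. A naive same-chirality analysis, using the two-dimensional symplectic Schouten identity $\langle \psi,\chi \rangle \phi + \langle \chi,\phi \rangle \psi + \langle \phi,\psi \rangle \chi = 0$, yields only the cyclic relation $c(v,w)\cdot u = c(w,u)\cdot v = c(u,v)\cdot w$, which is non-conventional and formally identical to the $n=3$ output. The resolution, which is the crux of the theorem, is to invoke $b$. Feeding in chirality type $(+,+,-)$ collapses \eqref{eq:SP111} to a single term and forces $c(V_+,V_+)$ to annihilate $V_-$; then $\fg$-invariance \eqref{eq:gInvariantb} of $b$, together with $b(V_\pm,V_\pm)=0$ from \eqref{eq:bcPlusMinuspEven} (so that $b$ pairs $V_+$ nondegenerately with $V_-$), upgrades this to the vanishing of $c(V_+,V_+)$ on $V_+$ as well. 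Since $c(V_+,V_-)=0$, the solution is conventional. This coupling between the chirality decomposition and $b$, which has no analogue for odd $n$ where there is no chirality splitting, is precisely what makes $n=4$ conventional while $n=3$ is not.

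For $n < 4$ the same bookkeeping is run in reverse, now retaining the surviving relation rather than forcing it to vanish. For $n=3$ ($d=2$, $\sigma=-$) the Schouten identity gives the cyclic relation, which combined with $c(v,w)=\sigma c(w,v)$ from \eqref{eq:cSym} is equivalent to the stated $c(v,w)\cdot u + c(u,w)\cdot v = 0$. For $n=1$ ($d=1$, $\sigma=+$) all three spinor factors collapse onto a single basis vector and \eqref{eq:SP111} reads off $c(v,w)\cdot u + c(w,u)\cdot v + c(u,v)\cdot w = 0$ directly. For $n=2$ ($d=1$, $p$ odd) one tests mixed chirality; the one-dimensionality of $\cS_\pm^{(2)}$ collapses the two surviving terms onto a single spinor and, after \eqref{eq:cSym}, delivers $c(v_\pm,w_\mp)\cdot u_\pm + c(u_\pm,w_\mp)\cdot v_\pm = 0$. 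In each case the converse — that these relations solve \eqref{eq:SP111} — is immediate, which establishes the claimed equivalences.
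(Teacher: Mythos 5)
Your proposal is correct and reaches the same conclusions in every dimension, but in the generic cases it argues differently from the paper. Where you separate the three summands of \eqref{eq:SP111} by choosing $\psi,\chi,\phi$ linearly independent (possible once the relevant spinor module has dimension at least three, or at least two for the two-term mixed-chirality identities), the paper instead abstracts one spinor to obtain an $\End \cS^{(n)}$-valued identity and takes its trace, producing $2^p\, c(v,w)\cdot u + \sigma( c(w,u)\cdot v + c(u,v)\cdot w)=0$ and its even-dimensional analogue; cyclic symmetrisation and back-substitution then force the conventional solution except where the numerical coefficients $2^p+2\sigma$ or $2^p-\sigma$ (resp.\ $2^{p-1}-1$) vanish, which is exactly how $n=1,2,3$ are singled out. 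Your route is more elementary and makes the origin of the low-dimensional exceptions transparent (too few independent spinors); the paper's trace argument buys uniformity, treating all odd $n$ in a single computation with the exceptional dimensions emerging as zeros of explicit coefficients. On the decisive $p$-even case you and the paper coincide exactly: the chirality type $(+,+,-)$ isolates $c(v_\pm,w_\pm)\cdot u_\mp=0$, and \eqref{eq:gInvariantb} together with nondegeneracy of the $V_+$--$V_-$ pairing from \eqref{eq:bcPlusMinuspEven} upgrades this to $c(v_\pm,w_\pm)\cdot u_\pm=0$; your observation that this coupling to $b$ is what separates $n=4$ from $n=3$ is precisely the content of \eqref{eq:SP111n2peven2Final}--\eqref{eq:SP111n2peven2Final2}. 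One small point worth making explicit: your claim that the two-dimensional Schouten identity ``yields only the cyclic relation'' covers sufficiency, but the necessity of $c(v,w)\cdot u = c(w,u)\cdot v = c(u,v)\cdot w$ for $n=3$ requires evaluating \eqref{eq:SP111} at a degenerate spinor triple such as $\psi=\chi$ with $\langle\psi,\phi\rangle\neq 0$ — the same move the paper makes in passing from \eqref{eq:SP111n32} to \eqref{eq:SP111n33}. It is a one-line computation, so this is a compression rather than a gap, but it should be spelled out in a final write-up.
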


\begin{proof}
First let $n=2p+1$. Abstracting $\phi$ in \eqref{eq:SP111} gives the equivalent condition
\begin{equation}\label{eq:SP111n2p+1End}
\langle \psi , \chi \rangle \mathbb{1} \otimes c(v,w) \cdot u + \langle \chi , - \rangle \psi \otimes c(w,u) \cdot v + \langle - , \psi \rangle \chi \otimes c(u,v) \cdot w = 0~, 
\end{equation}
that is valued in $\End \cS^{(n)} \otimes V$, where $\mathbb{1}$ denotes the identity map in $\End \cS^{(n)}$. Taking the trace of the $\End \cS^{(n)}$ part of \eqref{eq:SP111n2p+1End} gives
\begin{align}\label{eq:SP111n2p+1EndTrace}
&\langle \psi , \chi \rangle \, \dim \, \cS^{(n)} c( v , w ) \cdot u + \langle \chi ,  \psi \rangle ( c( w , u ) \cdot v + c( u , v ) \cdot w ) \nonumber \\
&= \langle \psi , \chi \rangle ( 2^p c( v , w ) \cdot u + \sigma ( c( w , u ) \cdot v + c( u , v ) \cdot w )  ) \nonumber \\
&=0~,
\end{align}
using \eqref{eq:AdmissibleSpinorialBilinearForm1} and $\dim \, \cS^{(n)} = 2^p$. \eqref{eq:SP111n2p+1EndTrace} is equivalent to
\begin{equation}\label{eq:SP111n2p+1EndTraceg}
2^p c( v , w ) \cdot u + \sigma ( c( w , u ) \cdot v + c( u , v ) \cdot w ) = 0~, 
\end{equation}
for all $v , w , u \in V$.  If we momentarily denote   
\begin{equation}
\Phi ( v , w , u ) = 2^p c( v , w ) \cdot u + \sigma ( c( w , u ) \cdot v + c( u , v ) \cdot w ) ~, 
\end{equation}
then $\Phi ( v , w , u ) + \Phi ( w , u , v ) + \Phi ( u , v , w ) = 0$ (which follows from \eqref{eq:SP111n2p+1EndTraceg}) implies
\begin{equation}\label{eq:SP111n2p+1EndTraceg2}
c( v , w ) \cdot u + c( w , u ) \cdot v + c( u , v ) \cdot w = 0~, 
\end{equation}
for all $v , w , u \in V$, unless $2^p + 2\sigma =0$. But substituting \eqref{eq:SP111n2p+1EndTraceg2} back into \eqref{eq:SP111n2p+1EndTraceg} implies 
\begin{equation}\label{eq:SP111n2p+1EndTracegTrivial}
c( v , w ) \cdot u = 0~, 
\end{equation}
for all $v , w , u \in V$, unless $2^p - \sigma = 0$. \eqref{eq:SP111n2p+1EndTracegTrivial} is the conventional solution of \eqref{eq:SP111}. 

If $2^p - \sigma =0$ then we must have $p=0$ and $\sigma = 1$ (the only option when $n=1$, see Table~\ref{tab:SigmaTauSigns}). In this case, \eqref{eq:SP111n2p+1EndTraceg2} solves \eqref{eq:SP111} since $\dim \, \cS^{(1)} = 1$.

If $2^p + 2\sigma =0$ then we must have $p=1$ and $\sigma = -1$ (the only option when $n=3$, see Table~\ref{tab:SigmaTauSigns}). In this case, \eqref{eq:SP111n2p+1EndTraceg} is
\begin{equation}\label{eq:SP111n3}
2 c( v , w ) \cdot u - c( w , u ) \cdot v - c( u , v ) \cdot w = 0~, 
\end{equation}
for all $v , w , u \in V$. Substituting \eqref{eq:SP111n3} back into \eqref{eq:SP111} gives
\begin{equation}\label{eq:SP111n32}
( \half \langle \psi , \chi \rangle \phi + \langle \chi , \phi \rangle \psi ) \otimes c(w,u) \cdot v + ( \half \langle \psi , \chi \rangle \phi + \langle \phi , \psi \rangle \chi ) \otimes c(u,v) \cdot w = 0~, 
\end{equation}
for all $\psi \otimes v , \chi \otimes w , \phi \otimes u \in \cS^{(3)} \otimes V$. If $\psi = \chi$ and $\langle \psi , \phi \rangle \neq 0$ then \eqref{eq:SP111n32} implies
\begin{equation}\label{eq:SP111n33}
c( w , u ) \cdot v - c( u , v ) \cdot w = 0~, 
\end{equation}
for all $w , u , v \in V$, using $\sigma = -1$. Substituting \eqref{eq:SP111n33} back into \eqref{eq:SP111n32} gives
\begin{equation}\label{eq:SP111n34}
( \langle \psi , \chi \rangle \phi + \langle \chi , \phi \rangle \psi + \langle \phi , \psi \rangle \chi ) \otimes c( u , v ) \cdot w = 0~,
\end{equation}
for all $\psi \otimes v , \chi \otimes w , \phi \otimes u \in \cS^{(3)} \otimes V$. But the sum of terms in $\cS^{(3)}$ in \eqref{eq:SP111n34} is identically zero since $\dim \, \cS^{(3)} = 2$ and $\langle -,- \rangle$ is skewsymmetric. Therefore \eqref{eq:SP111n33} (which implies \eqref{eq:SP111n3}) solves \eqref{eq:SP111} in this case.

If $n=2p$ and $p$ is odd then \eqref{eq:SP111} is equivalent to 
\begin{equation}\label{eq:SP111n2podd}
\langle \psi_\pm , \chi_\mp \rangle \phi_\pm \otimes c( v_\pm , w_\mp ) \cdot u_\pm + \langle \chi_\mp , \phi_\pm \rangle \psi_\pm \otimes c( w_\mp , u_\pm ) \cdot v_\pm = 0~, 
\end{equation}
for all $\psi_\pm \otimes v_\pm , \chi_\pm \otimes w_\pm , \phi_\pm \otimes u_\pm \in \cS_\pm^{(n)} \otimes V_\pm$, since $\langle \cS_\pm^{(n)} , \cS_\pm^{(n)} \rangle = 0$. Abstracting $\phi_\pm$ in \eqref{eq:SP111n2podd} gives the equivalent condition
\begin{equation}\label{eq:SP111n2poddEnd}
\langle \psi_\pm , \chi_\mp \rangle \mathbb{1}_\pm \otimes c( v_\pm , w_\mp ) \cdot u_\pm + \langle \chi_\mp , - \rangle \psi_\pm \otimes c( w_\mp , u_\pm ) \cdot v_\pm = 0~, 
\end{equation}
that is valued in $\End \cS_\pm^{(n)} \otimes V_\pm$, where $\mathbb{1}_\pm$ denotes the identity map in $\End \cS_\pm^{(n)}$. Taking the trace of the $\End \cS_\pm^{(n)}$ part of \eqref{eq:SP111n2poddEnd} gives
\begin{align}\label{eq:SP111n2poddEndTrace}
&\langle \psi_\pm , \chi_\mp \rangle \, \dim \, \cS_\pm^{(n)} c( v_\pm , w_\mp ) \cdot u_\pm + \langle \chi_\mp ,  \psi_\pm \rangle c( w_\mp , u_\pm ) \cdot v_\pm \nonumber \\
&= \langle \psi_\pm , \chi_\mp \rangle ( 2^{p-1} c( v_\pm , w_\mp ) \cdot u_\pm +  c( u_\pm , w_\mp ) \cdot v_\pm ) \nonumber \\
&=0~,
\end{align}
using \eqref{eq:AdmissibleSpinorialBilinearForm1}, \eqref{eq:cSym} and $\dim \, \cS_\pm^{(n)} = 2^{p-1}$. \eqref{eq:SP111n2poddEndTrace} is equivalent to
\begin{equation}\label{eq:SP111n2poddEndTraceg}
2^{p-1} c( v_\pm , w_\mp ) \cdot u_\pm +  c( u_\pm , w_\mp ) \cdot v_\pm = 0~, 
\end{equation}
for all $v_\pm , w_\pm , u_\pm \in V_\pm$. If we momentarily denote   
\begin{equation}
\Phi ( v_\pm , w_\mp , u_\pm ) = 2^{p-1} c( v_\pm , w_\mp ) \cdot u_\pm +  c( u_\pm , w_\mp ) \cdot v_\pm
\end{equation}
then $\Phi ( v_\pm , w_\mp , u_\pm ) + \Phi ( u_\pm , w_\mp , v_\pm ) = 0$ (which follows from \eqref{eq:SP111n2poddEndTraceg}) implies
\begin{equation}\label{eq:SP111n2poddEndTraceg2}
c( v_\pm , w_\mp ) \cdot u_\pm +  c( u_\pm , w_\mp ) \cdot v_\pm = 0~, 
\end{equation}
for all $v_\pm , w_\pm , u_\pm \in V_\pm$. But substituting \eqref{eq:SP111n2poddEndTraceg2} back into \eqref{eq:SP111n2poddEndTraceg} implies 
\begin{equation}\label{eq:SP111n2poddEndTracegTrivial}
c( v_\pm , w_\mp ) \cdot u_\pm = 0~, 
\end{equation}
for all $v_\pm , w_\pm , u_\pm \in V_\pm$, if $p>1$. \eqref{eq:SP111n2poddEndTracegTrivial} is the conventional solution of \eqref{eq:SP111n2podd}. If $p=1$ then \eqref{eq:SP111n2poddEndTraceg2} also solves \eqref{eq:SP111n2podd} since $\dim \, \cS_\pm^{(2)} = 1$.

Finally, if $n=2p$ and $p$ is even then \eqref{eq:SP111} is equivalent to the conditions
\begin{equation}\label{eq:SP111n2peven1}
\langle \psi_\pm , \chi_\pm \rangle \phi_\pm \otimes c( v_\pm , w_\pm ) \cdot u_\pm + \langle \chi_\pm , \phi_\pm \rangle \psi_\pm \otimes c( w_\pm , u_\pm ) \cdot v_\pm + \langle \phi_\pm , \psi_\pm \rangle \chi_\pm \otimes c( u_\pm , v_\pm ) \cdot w_\pm = 0
\end{equation}
and
\begin{equation}\label{eq:SP111n2peven2}
\langle \phi_\pm , \psi_\pm \rangle \chi_\mp \otimes c( u_\pm , v_\pm ) \cdot w_\mp = 0~,  
\end{equation}
for all $\psi_\pm \otimes v_\pm , \chi_\pm \otimes w_\pm , \phi_\pm \otimes u_\pm \in \cS_\pm^{(n)} \otimes V_\pm$, since $\langle \cS_\pm^{(n)} , \cS_\mp^{(n)} \rangle = 0$. \eqref{eq:SP111n2peven2} is equivalent to 
\begin{equation}\label{eq:SP111n2peven2Final}
c( u_\pm , v_\pm ) \cdot w_\mp = 0~,  
\end{equation}
for all $ u_\pm , v_\pm , w_\pm \in V_\pm$. On the other hand, \eqref{eq:gInvariantb} implies
\begin{equation}\label{eq:SP111n4bInvariance}
b ( c ( u_\pm , v_\pm ) \cdot w_\pm , w_\mp ) + b ( w_\pm , c ( u_\pm , v_\pm ) \cdot  w_\mp ) =0~,
\end{equation}
for all $ u_\pm , v_\pm , w_\pm \in V_\pm$. Therefore \eqref{eq:SP111n2peven2Final} and \eqref{eq:SP111n4bInvariance} imply 
\begin{equation}\label{eq:SP111n2peven2Final2}
c( u_\pm , v_\pm ) \cdot w_\pm = 0~,  
\end{equation}
for all $ u_\pm , v_\pm , w_\pm \in V_\pm$, since $b$ is nondegenerate and $b ( V_\pm , V_\pm ) =0$. \eqref{eq:SP111n2peven2Final2} and \eqref{eq:SP111n2peven2Final} describe the conventional solution of \eqref{eq:SP111n2peven1} and \eqref{eq:SP111n2peven2}.
\end{proof}


\section{Central charges and R-symmetry}
\label{sec:CentralChargesAndRSymmetry}

In order to provide some context for our construction of Poincar\'{e} superalgebras based on unconventional solutions of \eqref{eq:SP111} in the forthcoming sections, in this section we will give a brief account of the general nature of Poincar\'{e} superalgebras based on conventional solutions.

Let $\fh$ denote the image of the map $c$ in \eqref{eq:c}. The condition \eqref{eq:gEquivariantc} implies that $\fh$ is an ideal of $\fg$. For the conventional solution, by definition, the action of $\fh$ on $V$ is trivial. Restricting $X$ in \eqref{eq:gEquivariantc} to $\fh$ therefore implies that $\fh$ is an abelian Lie algebra.

Motivated by \cite{Coleman:1967ad}, it is often assumed that $\fg$ is reductive. This means
\begin{equation}\label{eq:gReductive}
\fg = \fs \oplus \fa~,
\end{equation}
in terms of a semisimple Lie algebra $\fs$ and an abelian Lie algebra $\fa$ which commute with each other. Since $\fh$ is an abelian ideal of $\fg$, it must be a Lie subalgebra of $\fa$. Therefore $\fh$ is central in $\fP (n)_{\bar 0}$. Furthermore, restricting $X$ in \eqref{eq:SP01X} to $\fh$ implies that $\fh$ is central in $\fP (n)$. Because of these properties, elements in $\fh$ are traditionally referred to as {\emph{central charges}}. 

Since $\fh$ is central in $\fg$, \eqref{eq:gEquivariantc} implies
\begin{equation}\label{eq:shBracket}
c ( X \cdot v , w ) + c ( v , X \cdot w ) =0~,
\end{equation}
for all $X \in \fg / \fh$ and $v,w \in V$. We can assume without loss of generality that the action  of $\fg / \fh$ on $V$ is faithful (otherwise we just quotient $\fg / \fh$ by the kernel of the action to make it faithful). Conditions \eqref{eq:gInvariantb} and \eqref{eq:shBracket} then imply that $\fg / \fh$ is isomorphic to a Lie subalgebra of 
\begin{equation}\label{eq:Rmax}
\fr = 
\begin{cases}
 \{ R \in \fgl (V)~|~b(Rv,w)+b(v,Rw)=0,~c(Rv,w)+c(v,Rw)=0,~v,w \in V \} \quad\quad\quad\quad\quad\quad\quad\quad\;\, {\mbox{if $n$ is odd}}~, \\
 \{ R \in \fgl (V_+) \oplus \fgl (V_-)~|~b(Rv,w)+b(v,Rw)=0,~c(Rv,w)+c(v,Rw)=0,~v,w \in V_+ \oplus V_- \} \quad {\mbox{if $n$ is even}}~.
\end{cases}
\end{equation}
This Lie subalgebra describes the {\emph{R-symmetry}} and $\fr$ in \eqref{eq:Rmax} is the largest possible R-symmetry algebra.

For the conventional solution, a Poincar\'{e} superalgebra is therefore determined simply by choosing $\dim \, \fh$ (the number of central charges) and a Lie subalgebra of $\fr$ (the R-symmetry).

If there are no central charges (i.e. $\fh =0$) then \eqref{eq:Rmax} only requires that $b$ must be $\fr$-invariant. If $n=2p+1$ then \eqref{eq:bSym} and Table~\ref{tab:SigmaTauSigns} imply that $b$ is symmetric if $p=0,1 \!\!\mod 4$ (in which case $\fr = \fso (V)$) and skewsymmetric if $p=2,3\!\!\mod 4$ (in which case $\fr = \fsp (V)$). If $n=2p$ and $p$ is odd then \eqref{eq:bcPlusMinuspOdd} implies that $b$ defines a nondegenerate bilinear form on both $V_+$ and $V_-$. Moreover, \eqref{eq:bSym} and Table~\ref{tab:SigmaTauSigns} imply that both these bilinear forms are symmetric if $p=1 \!\!\mod 4$ (in which case $\fr = \fso (V_+) \oplus \fso(V_-)$) and skewsymmetric if $p=3\!\!\mod 4$ (in which case $\fr = \fsp (V_+) \oplus \fsp(V_-)$). If $n=2p$ and $p$ is even then \eqref{eq:bcPlusMinuspEven} implies $b ( V_\pm , V_\pm ) = 0$ and any $R \in \fgl (V_+) \oplus \fgl (V_-)$ is in $\fr$ if and only if 
\begin{equation}\label{eq:Rmaxn4p}
b(R_+ v_+ , v_- )+b( v_+ , R_- v_- )=0~,
\end{equation}
for all $v_\pm \in V_\pm$, where $R = ( R_+ , R_- )$ and $R_\pm \in \fgl (V_\pm)$. Since $b$ is nondegenerate, it defines a dual pairing between $V_+$ and $V_-$ with respect to which \eqref{eq:Rmaxn4p} implies $R_+^* = - R_-$. Therefore
\begin{equation}\label{eq:Rmaxn4p2}
\fr = \{ ( R_\pm , - R_\pm^* )  \in \fgl (V_\pm) \oplus \fgl (V_\mp) \} \cong \fgl (V_\pm)~.
\end{equation}
Table~\ref{tab:RSymNoCentralCharges} summarises these maximal R-symmetry algebras in the absence of central charges.
\begin{table}
  \centering
  \caption{Maximal R-symmetry algebras without central charges}
  \label{tab:RSymNoCentralCharges}
  \begin{tabular}{|c||c|c|c|c|c|c|c|c|}
   \hline
   &&&&&&&& \\ [-.35cm]
    $n$ mod $8$ & 1 & 2 & 3 & 4 & 5 & 6 & 7 & 8 \\ [.05cm] \hline 
    &&&&&&&& \\ [-.35cm]
    $\fr$ & $\fso(V)$ & $\fso (V_+) \oplus \fso(V_-)$ & $\fso(V)$ & $\fgl (V_\pm)$ & $\fsp(V)$ & $\fsp (V_+) \oplus \fsp(V_-)$ & $\fsp(V)$ & $\fgl (V_\pm)$ \\ [.05cm]
    \hline
  \end{tabular}
\end{table}


\section{Poincar\'{e} superalgebras as Lie superalgebra extensions}
\label{sec:DeconstructionAndReconstructionViaLieSuperalgebraExtensions}

In this section we will show how to express a Poincar\'{e} superalgebra in terms a more elementary {\emph{embedding superalgebra}} which encodes the map $c$ in \eqref{eq:c}. This will allow us to define a notion of equivalence between Poincar\'{e} superalgebras and we will classify (up to isomorphism) all classical embedding superalgebras which give rise to inequivalent Poincar\'{e} superalgebras.


\subsection{Semidirect sums and abelian extensions}
\label{sec:SemidirectSumsAndAbelianExtensions}

Let $\fk$ be a Lie algebra and let $\Der \fG$ denote the Lie algebra of even derivations of a Lie superalgebra $\fG$, i.e.
\begin{equation}\label{eq:DerG}
\Der \fG = \{ D \in \fgl ( \fG )~|~D \fG_{\bar \alpha}  \subset \fG_{\bar \alpha} ,~D[a,b] = [Da,b] + [a,Db] ,~ {\bar \alpha} \in \{ {\bar 0} , {\bar 1} \} ,~a,b \in \fG \}~.
\end{equation}
For any Lie algebra homomorphism $\theta : \fk \rightarrow \Der \fG$, let $\fk \oplus_\theta \fG$ denote the vector superspace $\fk \oplus \fG$ (thinking of $\fk$ as a vector superspace whose odd part is zero) equipped with the brackets 
\begin{align}
[X,Y]_\theta &= [X,Y]~, \label{eq:SemiDirectSumBracketsXY} \\
[X,a]_\theta &= \theta (X) a~, \label{eq:SemiDirectSumBracketsXa} \\
[a,b]_\theta &= [a,b]~, \label{eq:SemiDirectSumBracketsab}
\end{align}
for all $X,Y \in \fk$ and $a,b \in \fG$. 

\begin{prop} \label{prop:SemiDirectSum}
$\fk \oplus_\theta \fG$ is a Lie superalgebra.
\end{prop}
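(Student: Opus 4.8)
The plan is to verify directly that the bracket $[-,-]_\theta$ defined by \eqref{eq:SemiDirectSumBracketsXY}--\eqref{eq:SemiDirectSumBracketsab} satisfies the three defining properties of a Lie superalgebra bracket: that it is even, that it is superskewsymmetric, and that it obeys the Jacobi identity \eqref{eq:LSAJacobiG}. The grading on $\fk \oplus_\theta \fG$ is the obvious one, with even part $\fk \oplus \fG_{\bar 0}$ and odd part $\fG_{\bar 1}$. Evenness of the bracket is immediate from the three defining formulas: the bracket of two elements of $\fk$ lands in $\fk$ (even), the action $\theta(X)a$ preserves the parity of $a$ since each $\theta(X) \in \Der \fG$ is an even derivation, and the bracket on $\fG$ is already even because $\fG$ is a Lie superalgebra. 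Superskewsymmetry is checked componentwise: on $\fk \times \fk$ it reduces to skewsymmetry of the Lie bracket on $\fk$; on $\fk \times \fG$ it holds by the definition $[a,X]_\theta = -[X,a]_\theta = -\theta(X)a$, noting that elements of $\fk$ are even so the sign factor is trivial; and on $\fG \times \fG$ it is inherited from $\fG$.

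The main work is the Jacobi identity \eqref{eq:LSAJacobiG}, which I would dispose of by splitting into cases according to how many of the three arguments lie in $\fk$ versus $\fG$. When all three lie in $\fk$, it is the Jacobi identity for $\fk$; when all three lie in $\fG$, it is the Jacobi identity for $\fG$. The genuinely new cases are the mixed ones. With two arguments in $\fk$ and one in $\fG$, say $a,b \in \fk$ and $c \in \fG$, the identity becomes $\theta([a,b])c = \theta(a)\theta(b)c - \theta(b)\theta(a)c$, which is precisely the statement that $\theta : \fk \to \Der \fG$ is a Lie algebra homomorphism (so that $\theta([a,b]) = [\theta(a),\theta(b)]$ as operators on $\fG$). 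With one argument in $\fk$ and two in $\fG$, say $X \in \fk$ and $b,c \in \fG$, the identity reads $\theta(X)[b,c] = [\theta(X)b, c] + [b, \theta(X)c]$, which is exactly the derivation property built into the definition \eqref{eq:DerG} of $\Der \fG$.

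Since the arguments can appear in any of the three slots of \eqref{eq:LSAJacobiG}, I would either check the representative orderings above and invoke the superskewsymmetry already established to reduce the remaining orderings to them, or simply observe that \eqref{eq:LSAJacobiG} together with superskewsymmetry is equivalent to the cyclic form and check that form directly. The one point requiring a little care is bookkeeping of the Koszul signs $(-1)^{|a||b|}$ in \eqref{eq:LSAJacobiG}: because every element of $\fk$ is even, any sign involving an element of $\fk$ is trivial, so the only nontrivial signs arise in the all-$\fG$ case, where they are already accounted for by $\fG$ being a Lie superalgebra. I do not anticipate a real obstacle here; the whole proof is a routine verification, and the only mild subtlety is ensuring that the homomorphism property of $\theta$ and the derivation property of $\theta(X)$ are invoked in exactly the two mixed cases where they are needed.
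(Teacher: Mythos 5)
Your proposal is correct and follows essentially the same route as the paper: the all-$\fk$ and all-$\fG$ cases of the Jacobi identity are automatic, the $[\fk\fk\fG]$ case is exactly the homomorphism property of $\theta$, and the $[\fk\fG\fG]$ case is exactly the derivation property from \eqref{eq:DerG}. The paper's proof is the same case-by-case verification, so no further comment is needed.
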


\begin{proof}
Since $\fk$ is a Lie algebra and $\fG$ is a Lie superalgebra then \eqref{eq:SemiDirectSumBracketsXY} and \eqref{eq:SemiDirectSumBracketsab} imply that the $[ \fk \fk \fk ]$ and $[ \fG \fG \fG ]$ parts of the Jacobi identity for $\fk \oplus_\theta \fG$ are automatically satisfied. The remaining parts are satisfied since
\begin{align}\label{eq:SemiDirectSumLSA} 
[ X , [ Y , a ]_\theta ]_\theta &\underset{\eqref{eq:SemiDirectSumBracketsXa}}{=} \theta (X) \theta (Y) a \nonumber \\
&\hspace*{.15cm} = \theta ( [X,Y] ) a + \theta (Y) \theta (X) a \nonumber \\ 
&\underset{\eqref{eq:SemiDirectSumBracketsXa}}{=}  [ [ X , Y ] , a ]_\theta + [ Y , [ X , a ]_\theta ]_\theta \nonumber \\
&\underset{\eqref{eq:SemiDirectSumBracketsXY}}{=}  [ [ X , Y ]_\theta , a ]_\theta + [ Y , [ X , a ]_\theta ]_\theta~, \nonumber \\
[ X , [ a , b ]_\theta ]_\theta &\underset{\eqref{eq:SemiDirectSumBracketsab}}{=} [ X , [ a , b ] ]_\theta \nonumber \\
&\underset{\eqref{eq:SemiDirectSumBracketsXa}}{=} \theta ( X ) [ a , b ] \nonumber \\ 
&\underset{\eqref{eq:DerG}}{=}  [ \theta ( X ) a , b ] +  [ a , \theta ( X ) b ]  \nonumber \\
&\underset{\eqref{eq:SemiDirectSumBracketsXa}}{=}  [ [ X , a ]_\theta , b ] + [ a , [ X , b ]_\theta ]~, \nonumber \\
&\underset{\eqref{eq:SemiDirectSumBracketsab}}{=}  [ [ X , a ]_\theta , b ]_\theta + [ a , [ X , b ]_\theta ]_\theta~, 
\end{align}
for any $X,Y \in \fk$ and $a,b \in \fG$.
\end{proof}

The Lie superalgebra $\fk \oplus_\theta \fG$ is called a {\emph{semidirect sum}} of $\fk$ and $\fG$. If $\theta = 0$ then $\fk \oplus_0 \fG$ is the direct sum $\fk \oplus \fG$ of $\fk$ and $\fG$ as Lie superalgebras (not just as vector superspaces).

A vector superspace $\fM$ is called a {\emph{representation}} of a Lie superalgebra $\fG$ if it is equipped with an even bilinear map
\begin{equation}\label{eq:LSAAction} 
\fG \times \fM \longrightarrow \fM
\end{equation}
obeying 
\begin{equation}\label{eq:LSAActionBracket}
[ a , b ] \cdot x = a \cdot ( b \cdot x ) - (-1)^{|a||b|} b \cdot ( a \cdot x )~,
\end{equation}
for all homogeneous $a , b \in \fG$ and $x \in \fM$, where $\cdot$ denotes the action of $\fG$ on $\fM$ defined by \eqref{eq:LSAAction}.

\begin{remark} \label{rem:EvenSuperRepresentation}
If the odd part of $\fM$ is zero then $\fG_{\bar 1} \cdot \fM = 0$ and \eqref{eq:LSAActionBracket} says that $\fM$ is a representation of the Lie algebra $\fG_{\bar 0}$ with $[ \fG_{\bar 1} , \fG_{\bar 1} ] \cdot \fM = 0$.
\end{remark}

Let $C^1 ( \fG , \fM )$ denote the space of even linear maps $\fG \rightarrow \fM$ and let $C^2 ( \fG , \fM )$ denote the space of even superskewsymmetric bilinear maps $\fG \times \fG \rightarrow \fM$. The space $Z^2 ( \fG , \fM )$ is spanned by those $f \in C^2 ( \fG , \fM )$ which obey
\begin{equation}\label{eq:LSACocycles}
f(a,[b,c]) + a \cdot f(b,c) = f([a,b],c) - (-1)^{|c| ( |a| + |b| )} c \cdot f(a,b) + (-1)^{|a||b|} f(b,[a,c]) + (-1)^{|a||b|} b \cdot f(a,c)~,
\end{equation}
for all homogeneous $a,b,c \in \fG$. The space $B^2 ( \fG , \fM )$ is spanned by those $f \in C^2 ( \fG , \fM )$ of the form
\begin{equation}\label{eq:LSACoboundaries}
f(a,b) = a \cdot g(b) - (-1)^{|a||b|} b \cdot g(a) - g([a,b])~,
\end{equation}
for all homogeneous $a,b \in \fG$, in terms of some $g \in C^1 ( \fG , \fM )$. It is easily verified that $B^2 ( \fG , \fM ) \subset Z^2 ( \fG , \fM )$ and we define the quotient space 
\begin{equation}\label{eq:LSACohomology}
H^2 ( \fG , \fM ) =  Z^2 ( \fG , \fM ) / B^2 ( \fG , \fM )~.
\end{equation}

For any $f \in Z^2 ( \fG , \fM )$, let $\fG_f$ denote the vector superspace $\fG \oplus \fM$ equipped with the brackets 
\begin{align}
[a,b]_f &= [a,b] + f(a,b)~, \label{eq:AbelianExtensionBracketsab} \\
[a,x]_f &= a\cdot x~, \label{eq:AbelianExtensionBracketsax} \\
[x,y]_f &= 0~, \label{eq:AbelianExtensionBracketsxy}
\end{align}
for all $a,b \in \fG$ and $x,y \in \fM$. 

\begin{prop} \label{prop:AbelianExtension}
$\fG_f$ is a Lie superalgebra.
\end{prop}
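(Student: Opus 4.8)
The plan is to recognise $\fG_f$ as an abelian extension of $\fG$ by the module $\fM$ and to verify directly the three defining properties of a Lie superalgebra: that the bracket $[-,-]_f$ of \eqref{eq:AbelianExtensionBracketsab}--\eqref{eq:AbelianExtensionBracketsxy} is even, that it is superskewsymmetric, and that it obeys the Jacobi identity \eqref{eq:LSAJacobiG}. The first two are immediate. Evenness holds because $[-,-]$ is even on $\fG$, because $f \in C^2 ( \fG , \fM )$ is even, and because the action $\fG \times \fM \rightarrow \fM$ is even. Superskewsymmetry holds because $[-,-]$ and $f$ are both superskewsymmetric (the latter since $C^2 ( \fG , \fM )$ consists of superskewsymmetric maps), while for mixed arguments \eqref{eq:superskewsymmetry} forces $[x,a]_f = - (-1)^{|x||a|} a \cdot x$, which is consistent with \eqref{eq:AbelianExtensionBracketsax}, and $[x,y]_f = 0 = -(-1)^{|x||y|}[y,x]_f$ is consistent with \eqref{eq:AbelianExtensionBracketsxy}.

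The substance of the proof is the Jacobi identity, which I would organise by the number $k$ of the three homogeneous arguments that lie in $\fM$. For $k=0$, all three arguments lie in $\fG$, and I would expand both sides of \eqref{eq:LSAJacobiG} using \eqref{eq:AbelianExtensionBracketsab} and split the result into its $\fG$-component and its $\fM$-component. The $\fG$-component is precisely the Jacobi identity of $\fG$, which holds by hypothesis. The $\fM$-component, after using superskewsymmetry to rewrite brackets of the form $[f(a,b),c]_f$ as $-(-1)^{(|a|+|b|)|c|} c \cdot f(a,b)$, is exactly the cocycle condition \eqref{eq:LSACocycles} satisfied by $f \in Z^2 ( \fG , \fM )$. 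For $k \geq 2$ every bracket appearing on either side of \eqref{eq:LSAJacobiG} contains, as a sub-bracket, a bracket of two elements of $\fM$, which vanishes by \eqref{eq:AbelianExtensionBracketsxy}; hence both sides reduce to zero and the identity holds trivially.

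The remaining case $k=1$, with exactly one argument in $\fM$, is the only one requiring a short sign-tracking computation. Here I would check the three positions of the module element separately: in each, the $\fG$-part of the relevant brackets acts on the module element via $\fM$, the $f$-contributions drop out because they produce brackets of two $\fM$-elements, and, after applying superskewsymmetry to orient the mixed brackets, the identity collapses to the module-action relation \eqref{eq:LSAActionBracket} for $\fM$ as a representation of $\fG$. The main conceptual point of the whole argument is the $k=0$ case: it is exactly the requirement that $f$ be a $2$-cocycle that makes the $\fM$-component of the Jacobi identity close, so this step is both the crux and the only place where the hypothesis $f \in Z^2 ( \fG , \fM )$ is used. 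The $k=1$ bookkeeping of signs is the only genuinely tedious part, but it is routine once the cases are laid out.
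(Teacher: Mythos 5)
Your proposal is correct and follows essentially the same route as the paper: a case-by-case verification of the Jacobi identity in which the all-$\fG$ case splits into the Jacobi identity of $\fG$ plus the cocycle condition \eqref{eq:LSACocycles}, the case with one argument in $\fM$ reduces to the module relation \eqref{eq:LSAActionBracket} after the $f$-contributions vanish, and the cases with two or more arguments in $\fM$ are trivially zero. The only difference is that you also spell out evenness and superskewsymmetry of $[-,-]_f$, which the paper treats as immediate.
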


\begin{proof}
For any homogeneous $a,b,c \in \fG$ and $x,y,z \in \fM$,
\begin{align}\label{eq:AbelianExtensionLSA} 
[ a , [ b , c ]_f ]_f &\underset{\eqref{eq:AbelianExtensionBracketsab}}{=} [ a , [ b , c ] ]_f + [ a , f(b,c) ]_f \nonumber \\
&\overset{\eqref{eq:AbelianExtensionBracketsab}}{\underset{\eqref{eq:AbelianExtensionBracketsax}}{=}} [ a , [ b , c ] ] + f(a,[b,c]) + a \cdot f(b,c)  \nonumber \\ 
&\overset{\eqref{eq:LSAJacobiG}}{\underset{\eqref{eq:LSACocycles}}{=}} [ [ a , b ] , c ] + (-1)^{|a||b|} [ b , [ a , c ] ] + f([a,b],c) - (-1)^{|c| ( |a| + |b| )} c \cdot f(a,b)  \nonumber \\ 
&\hspace*{.7cm} + (-1)^{|a||b|} f(b,[a,c]) + (-1)^{|a||b|} b \cdot f(a,c) \nonumber \\
&\underset{\eqref{eq:AbelianExtensionBracketsab}}{=} [ [ a , b ] , c ]_f + (-1)^{|a||b|} [ b , [ a , c ] ]_f - (-1)^{|c| ( |a| + |b| )} c \cdot f(a,b) + (-1)^{|a||b|} b \cdot f(a,c) \nonumber \\
&\overset{\eqref{eq:AbelianExtensionBracketsab}}{\underset{\eqref{eq:AbelianExtensionBracketsax}}{=}}  [ [ a , b ]_f , c ]_f + (-1)^{|a||b|} [ b , [ a , c ]_f ]_f~, \nonumber \\
[ a , [ b , x ]_f ]_f &\underset{\eqref{eq:AbelianExtensionBracketsax}}{=} [ a , b \cdot x ]_f \nonumber \\
&\underset{\eqref{eq:AbelianExtensionBracketsax}}{=} a \cdot ( b \cdot x ) \nonumber \\
&\hspace*{.1cm}\underset{\eqref{eq:LSAActionBracket}}{=} [ a , b ] \cdot x + (-1)^{|a||b|} b \cdot ( a \cdot x ) \nonumber \\
&\underset{\eqref{eq:AbelianExtensionBracketsax}}{=} [ [ a , b ] , x ]_f + (-1)^{|a||b|} [ b , [ a , x ]_f ]_f \nonumber \\
&\overset{\eqref{eq:AbelianExtensionBracketsab}}{\underset{\eqref{eq:AbelianExtensionBracketsxy}}{=}} [ [ a , b ]_f , x ]_f + (-1)^{|a||b|} [ b , [ a , x ]_f ]_f~, \nonumber \\
[ a , [ x , y ]_f ]_f &\underset{\eqref{eq:AbelianExtensionBracketsxy}}{=} 0 \nonumber \\
&\underset{\eqref{eq:AbelianExtensionBracketsxy}}{=} [ a \cdot x , y ]_f + (-1)^{|a||x|} [ x , a \cdot y ]_f  \nonumber \\
&\underset{\eqref{eq:AbelianExtensionBracketsax}}{=} [ [ a , x ]_f , y ]_f + (-1)^{|a||x|} [ x , [ a , y ]_f  ]_f~,  \nonumber \\
[ x , [ y , z ]_f ]_f &\underset{\eqref{eq:AbelianExtensionBracketsxy}}{=} 0 \nonumber \\
&\underset{\eqref{eq:AbelianExtensionBracketsxy}}{=} [ [ x , y ]_f , z ]_f + (-1)^{|x||y|} [ y , [ x , z ]_f  ]_f~.  
\end{align}
\end{proof}

The Lie superalgebra $\fG_f$ is called an {\emph{abelian extension}} of $\fG$ by $\fM$ (where $\fM$ is thought of as an abelian Lie superalgebra). If $\fM$ is the trivial representation of $\fG$ then the abelian extension $\fG_f$ is called a {\emph{central extension}} (since $[ \fM , \fG_f ] =0$ in that case).

If $f,g \in Z^2 ( \fG , \fM )$ then the Lie superalgebras $\fG_f$ and $\fG_g$ are said to be {\emph{equivalent}} to each other (as abelian extensions of $\fG$ by $\fM$) if $f - g \in B^2 ( \fG , \fM )$. 
 
\begin{prop} \label{prop:AbelianExtensionIsomorphism}
If $\fG_f$ and $\fG_g$ are equivalent to each other then $\fG_f \cong \fG_g$ as Lie superalgebras.
\end{prop}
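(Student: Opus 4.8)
The plan is to exhibit an explicit isomorphism built directly from the $1$-cochain that witnesses the equivalence. Since $\fG_f$ and $\fG_g$ are equivalent, by definition $f - g \in B^2 ( \fG , \fM )$, so \eqref{eq:LSACoboundaries} provides an even linear map $h \in C^1 ( \fG , \fM )$ with
\begin{equation*}
(f-g)(a,b) = a \cdot h(b) - (-1)^{|a||b|} b \cdot h(a) - h([a,b])~,
\end{equation*}
for all homogeneous $a,b \in \fG$. Both $\fG_f$ and $\fG_g$ share the underlying vector superspace $\fG \oplus \fM$, so I would define $\Phi : \fG_f \rightarrow \fG_g$ by $\Phi (a+x) = a + x + h(a)$, for all $a \in \fG$ and $x \in \fM$. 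Because $h$ is even, $\Phi$ is an even linear map, and it is bijective since $a + x \mapsto a + x - h(a)$ is a two-sided inverse. It therefore only remains to check that $\Phi$ intertwines the brackets $[-,-]_f$ and $[-,-]_g$, which suffices to conclude $\fG_f \cong \fG_g$ as Lie superalgebras.

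The verification splits over the three types of homogeneous bracket. The $[ \fG , \fM ]$ and $[ \fM , \fM ]$ cases are immediate: on $\fM$ the map $\Phi$ is the identity, and $h(a) \in \fM$ brackets trivially against any element of $\fM$ in either Lie superalgebra by \eqref{eq:AbelianExtensionBracketsxy}, so both sides reduce to $a \cdot x$ and $0$ respectively using \eqref{eq:AbelianExtensionBracketsax}. The substantive case is the $[ \fG , \fG ]$ bracket, where I would expand $[ \Phi (a) , \Phi (b) ]_g = [ a + h(a) , b + h(b) ]_g$ by bilinearity using \eqref{eq:AbelianExtensionBracketsab}, \eqref{eq:AbelianExtensionBracketsax} and \eqref{eq:AbelianExtensionBracketsxy}. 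The $\fM$-valued cross terms collapse to $a \cdot h(b) - (-1)^{|a||b|} b \cdot h(a)$ (the second sign coming from superskewsymmetry \eqref{eq:superskewsymmetry} applied to $[ h(a) , b ]_g$), while the $\fM$-$\fM$ term vanishes, leaving $[a,b] + g(a,b) + a \cdot h(b) - (-1)^{|a||b|} b \cdot h(a)$. On the other hand $\Phi ( [a,b]_f ) = [a,b] + f(a,b) + h([a,b])$.

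The main obstacle, and really the only point where anything happens, is matching these two expressions, and this is precisely where the coboundary identity is used: their difference is $f(a,b) - g(a,b) - \big( a \cdot h(b) - (-1)^{|a||b|} b \cdot h(a) - h([a,b]) \big)$, which vanishes by the displayed formula for $(f-g)(a,b)$. Hence $\Phi ( [a,b]_f ) = [ \Phi (a) , \Phi (b) ]_g$, so $\Phi$ is a homomorphism; being an even bijective homomorphism, it is an isomorphism of Lie superalgebras. The only genuine care required is in bookkeeping the Koszul signs, so that the sign generated by superskewsymmetry in $[ h(a) , b ]_g$ agrees with the sign appearing in \eqref{eq:LSACoboundaries}; orienting the map as $\Phi : \fG_f \rightarrow \fG_g$ (rather than its inverse) is exactly what makes the signs line up without having to replace $h$ by $-h$.
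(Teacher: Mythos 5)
Your proposal is correct and follows essentially the same route as the paper: both define the map $a + x \mapsto a + x + h(a)$ using the $1$-cochain $h$ witnessing $f - g \in B^2(\fG,\fM)$, verify bijectivity via the obvious inverse, and reduce the homomorphism check for the $[\fG,\fG]$ bracket to the coboundary identity, with the other bracket types being immediate. Your sign bookkeeping (the superskewsymmetry sign in $[h(a),b]_g$ matching the $-(-1)^{|a||b|}$ in \eqref{eq:LSACoboundaries}) agrees with the paper's computation.
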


\begin{proof}
Since $f-g \in B^2 ( \fG , \fM )$ then \eqref{eq:LSACoboundaries} implies there must exist some $h \in C^1 ( \fG , \fM )$ such that 
\begin{equation}\label{eq:LSACoboundaryh}
f(a,b) - g(a,b)= a \cdot h(b) - (-1)^{|a||b|} b \cdot h(a) - h([a,b])~,
\end{equation}
for all homogeneous $a,b \in \fG$.

Let us define the even linear map
\begin{equation}\label{eq:AbelianExtensionIsomorphism}
\phi : \fG_f \longrightarrow \fG_g
\end{equation}
such that 
\begin{equation}\label{eq:AbelianExtensionIsomorphism2}
\phi (a) = a + h(a)~, \quad \phi (x) = x~,
\end{equation}
for all $a \in \fG$ and $x \in \fM$. Clearly $\phi$ is bijective with inverse $\phi^{-1}$ defined such that $\phi^{-1} (a) = a - h(a)$, $\phi^{-1} (x) = x$, for all $a \in \fG$ and $x \in \fM$.

Furthermore, $\phi$ is a Lie superalgebra homomorphism since
\begin{align}\label{eq:AbelianExtensionHomomorphism} 
\phi ( [ a , b ]_f ) &\underset{\eqref{eq:AbelianExtensionBracketsab}}{=} \phi ( [a,b] ) + \phi ( f(a,b) ) \nonumber \\
&\underset{\eqref{eq:AbelianExtensionIsomorphism2}}{=} [a,b] + h ([a,b]) + f(a,b)  \nonumber \\ 
&\underset{\eqref{eq:LSACoboundaryh}}{=} [a,b] + g(a,b) + a \cdot h(b) - (-1)^{|a||b|} b \cdot h(a)   \nonumber \\ 
&\overset{\eqref{eq:AbelianExtensionBracketsab}}{\underset{\eqref{eq:AbelianExtensionBracketsax}}{=}} [a,b]_g + [ a , h(b) ]_g + [ h(a) , b ]_g \nonumber \\
&\overset{\eqref{eq:AbelianExtensionIsomorphism2}}{\underset{\eqref{eq:AbelianExtensionBracketsxy}}{=}} [ \phi (a) , \phi (b) ]_g~, \nonumber \\
\phi ( [ a , x ]_f ) &\underset{\eqref{eq:AbelianExtensionBracketsax}}{=} \phi ( a \cdot x ) \nonumber \\
&\underset{\eqref{eq:AbelianExtensionIsomorphism2}}{=} a \cdot x \nonumber \\ 
&\underset{\eqref{eq:AbelianExtensionBracketsax}}{=} [ a , x ]_g  \nonumber \\ 
&\overset{\eqref{eq:AbelianExtensionIsomorphism2}}{\underset{\eqref{eq:AbelianExtensionBracketsxy}}{=}} [ \phi (a) , \phi (x) ]_g~, \nonumber \\
\phi ( [ x , y ]_f ) &\underset{\eqref{eq:AbelianExtensionBracketsxy}}{=} 0 \nonumber \\
&\underset{\eqref{eq:AbelianExtensionBracketsxy}}{=} [ \phi (x) , \phi (y) ]_g~, 
\end{align}
for all homogeneous $a,b \in \fG$ and $x,y \in \fM$.
\end{proof}

The Lie superalgebra $\fG_f$ will be called {\emph{non-trivial}} (as an abelian extension of $\fG$ by $\fM$) if it is not equivalent to $\fG_0$ (i.e. if $f \notin B^2 ( \fG , \fM )$).


\subsection{Embedding superalgebras}
\label{sec:EmbeddingLSA}

The {\emph{embedding superalgebra}} $\fG (n)$ in $n$ dimensions is defined in terms of the data for the Poincar\'{e} superalgebra $\fP (n)$ (see \eqref{eq:SP0} and  \eqref{eq:SP1}). 

As a vector superspace, $\fG (n)$ has even part 
\begin{equation}\label{eq:ELSA0}
\fG (n)_{\bar 0} = \fg
\end{equation}
and odd part 
\begin{equation}\label{eq:ELSA1}
\fG (n)_{\bar 1} = 
\begin{cases}
\cS^{(n)} \otimes V \quad\quad\quad\quad\quad\quad\quad\quad {\mbox{if $n$ is odd}}~, \\
\cS_+^{(n)} \otimes V_+ \oplus \cS_-^{(n)} \otimes V_- \quad\quad\, {\mbox{if $n$ is even}}~.
\end{cases}
\end{equation}

The $[ {\bar 0} {\bar 0} ]$ bracket for $\fG (n)$ is the Lie bracket for $\fg$. The remaining brackets are 
\begin{align} 
[ X , \psi \otimes v ] &= \psi \otimes X \cdot v~, \label{eq:ELSA01} \\
[ \psi \otimes v , \chi \otimes w ] &= \langle \psi , \chi \rangle c(v,w)~, \label{eq:ELSA11} 
\end{align}
for all $X \in \fg$ and $\psi \otimes v , \chi \otimes w \in \fG (n)_{\bar 1}$ (see \eqref{eq:SP01X} and  \eqref{eq:SP11}). 

If $\fP (n)$ is a Lie superalgebra then clearly so is $\fG (n)$ (see Propositions~\ref{prop:001Jacobi},~\ref{prop:011Jacobi} and Theorem~\ref{thm:111Jacobi}). However, as it stands, $\fG (n)$ is a Lie superalgebra if and only if 
\begin{itemize}
\item $\fg$ is a Lie algebra (the $[{\bar 0}{\bar 0}{\bar 0}]$ Jacobi identity),
 
\item $V$, $V_\pm$ are representations of $\fg$ (the $[{\bar 0}{\bar 0}{\bar 1}]$ Jacobi identity), 

\item $c$ is $\fg$-equivariant (the $[{\bar 0}{\bar 1}{\bar 1}]$ Jacobi identity), 

\item $c$ satsfies the conditions in Theorem~\ref{thm:111Jacobi} (the $[{\bar 1}{\bar 1}{\bar 1}]$ Jacobi identity). 
\end{itemize}
It follows that $\fP (n)$ is a Lie superalgebra if and only if $\fG (n)$ is a Lie superalgebra and $b$ is $\fg$-invariant. 

\begin{remark} \label{rem:GnLSA}
Any Lie superalgebra $\fG$ with odd part of the form \eqref{eq:ELSA1} (for some $n$) and $[ {\bar 1} {\bar 1} ]$ bracket of the form \eqref{eq:ELSA11} may be construed as an embedding superalgebra in $n$ dimensions. In this case, $\fG = \fG (n)$ will be called {\emph{$n$-admissible}} if $\fP (n)$ is a Lie superalgebra. A more precise characterisation of $n$-admissible Lie superalgebras with $n<4$ will be determined in Section~\ref{sec:CharacterisationEmbeddingLSA}.
\end{remark}

\begin{prop} \label{prop:Theta}
The linear map
\begin{equation}\label{eq:Theta}
\theta : \fso ( \cV ) \longrightarrow \Der \fG (n)
\end{equation}
defined such that 
\begin{equation}\label{eq:Theta2}
\theta (A) X = 0~, \quad \theta (A) ( \psi \otimes v ) = s_A \cdot \psi \otimes v~,
\end{equation}
for all $A \in \fso ( \cV )$, $X \in \fg$ and $\psi \otimes v \in \fG (n)_{\bar 1}$, is a Lie algebra homomorphism.
\end{prop}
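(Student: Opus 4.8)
The plan is to verify two things in turn: first that each $\theta(A)$ genuinely lies in $\Der \fG(n)$, and second that $A \mapsto \theta(A)$ intertwines the bracket on $\fso(\cV)$ with the commutator on $\End \fG(n)$. Since $\fso(\cV)$ is spanned by elements of the form $A = x_1 \otimes x_2^\flat - x_2 \otimes x_1^\flat$ and everything is linear, it suffices to argue on homogeneous elements of $\fG(n)$, of which there are only two types: $X \in \fg = \fG(n)_{\bar 0}$ and $\psi \otimes v \in \fG(n)_{\bar 1}$.

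For the first part, I would observe immediately that $\theta(A)$ is even, since it annihilates $\fg = \fG(n)_{\bar 0}$ and sends $\fG(n)_{\bar 1}$ into itself via $\psi \otimes v \mapsto s_A \cdot \psi \otimes v$. It then remains to check the Leibniz rule $\theta(A)[a,b] = [\theta(A)a, b] + [a, \theta(A)b]$ separately on the three homogeneous bracket types. On the $[{\bar 0}{\bar 0}]$ bracket both sides vanish trivially because $\theta(A)$ kills $\fg$. On the $[{\bar 0}{\bar 1}]$ bracket \eqref{eq:ELSA01}, the left side evaluates to $s_A \cdot \psi \otimes X \cdot v$, while the right side gives $[0 , \psi \otimes v] + [X , s_A \cdot \psi \otimes v]$, which reproduces the same expression since $\theta(A)X = 0$ and the $\fg$-action lands in the $V$-factor only. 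The genuinely substantive case is the $[{\bar 1}{\bar 1}]$ bracket \eqref{eq:ELSA11}: here the left side vanishes because $\theta(A)$ annihilates the image $c(v,w) \in \fg$, whereas the right side collapses to $(\langle s_A \cdot \psi , \chi \rangle + \langle \psi , s_A \cdot \chi \rangle) c(v,w)$, which is zero precisely by the $\fso(\cV)$-invariance of $\langle -,- \rangle$ established in Proposition~\ref{prop:SpinInvariance}.

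For the second part, I would check $\theta([A,B]) = [\theta(A), \theta(B)]$ by evaluating both sides on homogeneous elements. On $X \in \fg$ both sides vanish, since each $\theta(\cdot)$ kills $\fg$. On $\psi \otimes v$, the left side reads $s_{[A,B]} \cdot \psi \otimes v$, while the commutator on the right unpacks to $\theta(A)(s_B \cdot \psi \otimes v) - \theta(B)(s_A \cdot \psi \otimes v) = (s_A s_B - s_B s_A) \cdot \psi \otimes v = [s_A, s_B] \cdot \psi \otimes v$. The two agree by Proposition~\ref{prop:OmegaABCommutator}, which is exactly the identity $[s_A, s_B] = s_{[A,B]}$.

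The computations are all routine, so the main point to flag is that the derivation property on the $[{\bar 1}{\bar 1}]$ bracket is not automatic: it would fail without the compatibility between the spin action and the spinorial bilinear form, and this is the one place where Proposition~\ref{prop:SpinInvariance} does essential work. The homomorphism property, by contrast, rests entirely on the already-proven Clifford commutator identity of Proposition~\ref{prop:OmegaABCommutator}, with everything else reducing to the fact that $\theta(A)$ annihilates the even part $\fg$.
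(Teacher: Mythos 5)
Your proposal is correct and follows essentially the same route as the paper: checking the Leibniz rule bracket type by bracket type (with Proposition~\ref{prop:SpinInvariance} doing the work on the $[{\bar 1}{\bar 1}]$ bracket) and then deducing the homomorphism property from Proposition~\ref{prop:OmegaABCommutator}. No gaps.
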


\begin{proof}
If $A \in \fso ( \cV )$ then $\theta (A) \in \Der \fG (n)$ since
\begin{align}\label{eq:ThetaADerivation} 
\theta (A) [X,Y] &\underset{\eqref{eq:Theta2}}{=} 0 \nonumber \\
&\underset{\eqref{eq:Theta2}}{=} [ \theta (A) X,Y] + [ X , \theta (A) Y]~, \nonumber \\
\theta (A) [X, \psi \otimes v ] &\underset{\eqref{eq:ELSA01}}{=} \theta (A) ( \psi \otimes X \cdot v ) \nonumber \\
&\underset{\eqref{eq:Theta2}}{=} s_A \cdot \psi \otimes X \cdot v \nonumber \\
&\underset{\eqref{eq:ELSA01}}{=} [ X , s_A \cdot \psi \otimes v ]  \nonumber \\
&\underset{\eqref{eq:Theta2}}{=} [ \theta (A) X , \psi \otimes v ] + [ X , \theta (A) ( \psi \otimes v ) ]~, \nonumber \\
\theta (A) [ \psi \otimes v , \chi \otimes w ] &\underset{\eqref{eq:ELSA11}}{=} \langle \psi , \chi \rangle  \theta (A) c(v,w) \nonumber \\
&\underset{\eqref{eq:Theta2}}{=} 0 \nonumber \\
&\underset{\eqref{eq:SpinInvariance}}{=} ( \langle s_A \cdot \psi , \chi \rangle +  \langle \psi , s_A \cdot \chi \rangle ) c(v,w) \nonumber \\
&\underset{\eqref{eq:ELSA11}}{=} [ s_A \cdot \psi \otimes v , \chi \otimes w ] + [ \psi \otimes v , s_A \cdot \chi \otimes w ] \nonumber \\
&\underset{\eqref{eq:Theta2}}{=} [ \theta (A) ( \psi \otimes v ) , \chi \otimes w ] + [ \psi \otimes v , \theta (A) ( \chi \otimes w ) ]~, 
\end{align}
for all $X,Y \in \fg$ and $\psi \otimes v , \chi \otimes w \in \fG (n)_{\bar 1}$.

Furthermore, $\theta$ is a Lie algebra homomorphism since
\begin{align}\label{eq:ThetaHomomorphism} 
\theta ([A,B]) X &\underset{\eqref{eq:Theta2}}{=} 0 \nonumber \\
&\underset{\eqref{eq:Theta2}}{=} [ \theta (A) , \theta (B) ] X~, \nonumber \\
\theta ([A,B]) ( \psi \otimes v ) &\underset{\eqref{eq:Theta2}}{=} s_{[A,B]} \cdot \psi \otimes v \nonumber \\
&\hspace*{.05cm}\underset{\eqref{eq:OmegaABCommutator}}{=} [ s_A , s_B ] \cdot \psi \otimes v \nonumber \\
&\underset{\eqref{eq:Theta2}}{=} [ \theta (A) , \theta (B) ]  ( \psi \otimes v )~, 
\end{align}
for all $A,B \in \fso ( \cV )$, $X \in \fg$ and $\psi \otimes v \in \fG (n)_{\bar 1}$.
\end{proof}

\begin{cor} \label{cor:SemiDirectSumSOVGn}
By definition, the semidirect sum ${\bar \fP} (n) = \fso ( \cV ) \oplus_\theta \fG (n)$ is the Lie superalgebra with brackets 
\begin{align}
[A,B]_\theta &= [A,B]~, \label{eq:SemiDirectSOVGnSumBracketsAB} \\
[A,X]_\theta &= 0~, \label{eq:SemiDirectSumSOVGnBracketsAX} \\
[A, \psi \otimes v ]_\theta &= s_A \cdot \psi \otimes v~, \label{eq:SemiDirectSumSOVGnBracketsAv} \\
[X,Y]_\theta &= [X,Y]~, \label{eq:SemiDirectSumSOVGnBracketsXY} \\
[ X , \psi \otimes v ]_\theta &= \psi \otimes X \cdot v~, \label{eq:SemiDirectSumSOVGnBracketsXv} \\
[ \psi \otimes v , \chi \otimes w ]_\theta &= \langle \psi , \chi \rangle c(v,w)~, \label{eq:SemiDirectSumSOVGnBracketsvw} 
\end{align}
for all $A,B \in \fso ( \cV )$, $X,Y \in \fg$ and $\psi \otimes v , \chi \otimes w \in \fG (n)_{\bar 1}$.  
\end{cor}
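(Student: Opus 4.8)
The plan is to recognise this statement as an immediate specialisation of the general semidirect sum construction in Proposition~\ref{prop:SemiDirectSum}. Proposition~\ref{prop:Theta} has already established that the map $\theta : \fso ( \cV ) \to \Der \fG (n)$ of \eqref{eq:Theta2} is a Lie algebra homomorphism, and $\fG (n)$ is itself a Lie superalgebra whenever $\fP (n)$ is (via Propositions~\ref{prop:001Jacobi},~\ref{prop:011Jacobi} and Theorem~\ref{thm:111Jacobi}). Taking $\fk = \fso ( \cV )$ and $\fG = \fG (n)$ in Proposition~\ref{prop:SemiDirectSum} therefore yields at once that ${\bar \fP} (n) = \fso ( \cV ) \oplus_\theta \fG (n)$ is a Lie superalgebra, with no further Jacobi identity to verify.

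All that then remains is to record the explicit brackets by substituting the present data into the generic semidirect sum brackets \eqref{eq:SemiDirectSumBracketsXY}--\eqref{eq:SemiDirectSumBracketsab}. The $\fk$-part bracket \eqref{eq:SemiDirectSumBracketsXY} reproduces \eqref{eq:SemiDirectSOVGnSumBracketsAB}. The mixed bracket \eqref{eq:SemiDirectSumBracketsXa}, evaluated on the two homogeneous types of element of $\fG (n)$ using $\theta (A) X = 0$ and $\theta (A)( \psi \otimes v ) = s_A \cdot \psi \otimes v$ from \eqref{eq:Theta2}, reproduces \eqref{eq:SemiDirectSumSOVGnBracketsAX} and \eqref{eq:SemiDirectSumSOVGnBracketsAv}. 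Finally, the $\fG (n)$-part bracket \eqref{eq:SemiDirectSumBracketsab} reproduces the three brackets \eqref{eq:SemiDirectSumSOVGnBracketsXY}, \eqref{eq:SemiDirectSumSOVGnBracketsXv} and \eqref{eq:SemiDirectSumSOVGnBracketsvw}, the first being the Lie bracket of $\fg$ and the latter two being \eqref{eq:ELSA01} and \eqref{eq:ELSA11} respectively.

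Since the statement is purely a bookkeeping specialisation of previously proved propositions, there is essentially no obstacle: all instances of the Jacobi identity are subsumed by Proposition~\ref{prop:SemiDirectSum}. The only mild point to confirm is that the action of $\fso ( \cV )$ on the even part $\fg$ of $\fG (n)$ is trivial, which is exactly the first line of \eqref{eq:Theta2} and reflects the assumption \eqref{eq:SPAX} built into $\fP (n)$, so that \eqref{eq:SemiDirectSumSOVGnBracketsAX} holds as stated. Hence the corollary follows directly, as its wording (\emph{``by definition''}) already anticipates.
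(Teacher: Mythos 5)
Your proposal is correct and is essentially the paper's own argument: the corollary is stated ``by definition'' precisely because it is the specialisation of Proposition~\ref{prop:SemiDirectSum} to $\fk = \fso ( \cV )$ and $\fG = \fG (n)$ with the homomorphism $\theta$ of Proposition~\ref{prop:Theta}, followed by reading off the brackets \eqref{eq:SemiDirectSumBracketsXY}--\eqref{eq:SemiDirectSumBracketsab} using \eqref{eq:Theta2}, \eqref{eq:ELSA01} and \eqref{eq:ELSA11}. No further verification is needed, and your remark that the triviality of the $\fso ( \cV )$-action on $\fg$ accounts for \eqref{eq:SemiDirectSumSOVGnBracketsAX} is exactly the right observation.
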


Remark~\ref{rem:EvenSuperRepresentation} implies that $\cV$ (thought of as a vector superspace whose odd part is zero) is a representation of the Lie superalgebra ${\bar \fP} (n)$ if it is a representation of the Lie algebra ${\bar \fP} (n)_{\bar 0} = \fso ( \cV ) \oplus \fg$ with ${\bar \fP} (n)_{\bar 1} \cdot \cV = 0$ and $[ {\bar \fP} (n)_{\bar 1} , {\bar \fP} (n)_{\bar 1}  ] \cdot \cV = 0$. Since ${\bar \fP} (n)_{\bar 1} = \fG (n)_{\bar 1}$ and $[ \fG (n)_{\bar 1} , \fG (n)_{\bar 1} ] \subset \fg$, this is accomplished by defining 
\begin{align}
A \cdot x &= Ax~, \label{eq:PbarActionAx} \\
X \cdot x &= 0~, \label{eq:PbarActionXx} \\
( \psi \otimes v ) \cdot x &= 0~, \label{eq:PbarActionvx} 
\end{align}
for all $A \in \fso ( \cV )$, $x \in \cV$, $X \in \fg$ and $\psi \otimes v \in \fG (n)_{\bar 1}$. 

\begin{prop} \label{prop:PbarCocycle}
If $f \in C^2 ( {\bar \fP} (n) , \cV )$ is defined such that $f ( {\bar \fP} (n)_{\bar 0} , {\bar \fP} (n) ) =0$ and
\begin{equation}\label{eq:PbarCocycle}
f ( \psi \otimes v , \chi \otimes w ) = \xi ( \psi , \chi ) b (v,w)~,
\end{equation}
for all $\psi \otimes v , \chi \otimes w \in {\bar \fP} (n)_{\bar 1}$, then $f \in Z^2 ( {\bar \fP} (n) , \cV )$ if and only if $b$ is $\fg$-invariant.
\end{prop}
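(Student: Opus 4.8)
The plan is to verify the cocycle condition \eqref{eq:LSACocycles} directly on homogeneous triples $a,b,c \in {\bar \fP}(n)$, exploiting the fact that $f$, as defined in \eqref{eq:PbarCocycle}, vanishes as soon as one of its arguments lies in ${\bar \fP}(n)_{\bar 0}$. Since $f$ and the ${\bar \fP}(n)$-action on $\cV$ are both even and $\cV$ is purely even, every term in \eqref{eq:LSACocycles} carries parity $|a|+|b|+|c|$; hence any triple with an odd number of odd entries lands in $\cV_{\bar 1}=0$ and the identity holds trivially, so in particular there is no constraint coming from three odd arguments. The all-even case is also immediate, because each $f$ occurring then has at least one even argument and therefore vanishes. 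Thus the only genuine content sits in the triples with exactly two odd entries.

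First I would reduce these to a single representative. By the superskewsymmetry of $f$ and of the brackets of ${\bar \fP}(n)$, the three orderings (with the even entry in the first, second or third slot) reduce to the same equivariance identity
\begin{equation*}
a\cdot f(b,c) = f([a,b]_\theta, c) + f(b,[a,c]_\theta)~,
\end{equation*}
with $a \in {\bar \fP}(n)_{\bar 0}$ and $b,c$ odd. This simply expresses that $f$ restricted to the odd part is ${\bar \fP}(n)_{\bar 0}$-equivariant, so it remains to test it on the two summands $\fso(\cV)$ and $\fg$ of ${\bar \fP}(n)_{\bar 0}$ separately, using the explicit brackets in Corollary~\ref{cor:SemiDirectSumSOVGn}.

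For $a = A \in \fso(\cV)$, writing $b = \psi \otimes v$ and $c = \chi \otimes w$ and using \eqref{eq:SemiDirectSumSOVGnBracketsAv} together with the action $A \cdot \xi(\psi,\chi) = A\xi(\psi,\chi)$ from \eqref{eq:PbarActionAx}, the representative identity becomes $A\xi(\psi,\chi)\, b(v,w) = (\xi(s_A\cdot\psi,\chi) + \xi(\psi, s_A\cdot\chi))\, b(v,w)$, which holds identically by the $\fso(\cV)$-equivariance of $\xi$ in Proposition~\ref{prop:SpinEquivariance}, irrespective of $b$. For $a = X \in \fg$ the left-hand side vanishes because $\fg$ acts trivially on $\cV$ by \eqref{eq:PbarActionXx}, while \eqref{eq:SemiDirectSumSOVGnBracketsXv} turns the right-hand side into $\xi(\psi,\chi)\,(b(X\cdot v, w) + b(v, X\cdot w))$. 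Hence the cocycle condition is equivalent to
\begin{equation*}
\xi(\psi,\chi)\,(b(X\cdot v, w) + b(v, X\cdot w)) = 0~,
\end{equation*}
for all $X \in \fg$, $\psi,\chi \in \cS^{(n)}$ and $v,w \in V$.

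I expect the only real subtlety to be the final step of stripping $\xi$ off this identity. Because $\langle -,- \rangle$ is nondegenerate, the map $\xi$ defined by \eqref{eq:XiDef} is not identically zero, so one may choose $\psi,\chi$ with $\xi(\psi,\chi)\neq 0$; the displayed identity then forces $b(X\cdot v, w) + b(v, X\cdot w)=0$ for all $v,w$, i.e. the $\fg$-invariance \eqref{eq:gInvariantb} of $b$. Conversely, if $b$ is $\fg$-invariant the bracketed factor vanishes and $f$ is a cocycle. The main piece of work is thus the bookkeeping that confirms the other two-odd orderings genuinely collapse to this same pair of conditions and contribute nothing new, but this is routine given the explicit brackets and the automatic $\fso(\cV)$-equivariance supplied by Proposition~\ref{prop:SpinEquivariance}.
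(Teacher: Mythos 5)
Your proposal is correct and follows essentially the same route as the paper: reduce the cocycle condition to the $\bar{\fP}(n)_{\bar 0}$-equivariance of $f$ on two odd arguments, dispose of the $\fso(\cV)$ part via Proposition~\ref{prop:SpinEquivariance}, and identify the $\fg$ part with the $\fg$-invariance of $b$ after stripping off a non-vanishing $\xi(\psi,\chi)$. The only cosmetic difference is that you dismiss the all-odd triple by a parity count into $\cV_{\bar 1}=0$, where the paper instead invokes the trivial action \eqref{eq:PbarActionvx} of odd elements on $\cV$; both are valid.
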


\begin{proof}
Since $f ( {\bar \fP} (n)_{\bar 0} , {\bar \fP} (n) ) =0$, it follows that \eqref{eq:LSACocycles} is automatically satisfied unless at least two of the three homogeneous elements are in ${\bar \fP} (n)_{\bar 1}$. Furthermore, if all three elements are in ${\bar \fP} (n)_{\bar 1}$ then \eqref{eq:PbarActionvx} implies that \eqref{eq:LSACocycles} is satisfied. 

Therefore \eqref{eq:LSACocycles} is equivalent to 
\begin{equation}\label{eq:PbarCocycle011}
a \cdot f( \psi \otimes v , \chi \otimes w ) = f([a, \psi \otimes v ]_\theta , \chi \otimes w )  + f( \psi \otimes v , [a, \chi \otimes w ]_\theta )~,
\end{equation}
for all $a \in {\bar \fP} (n)_{\bar 0}$ and $\psi \otimes v , \chi \otimes w \in {\bar \fP} (n)_{\bar 1}$. For any $A \in \fso ( \cV )$, \eqref{eq:PbarCocycle011} is automatically satisfied since
\begin{align}\label{eq:PbarCocycle011A} 
A \cdot f ( \psi \otimes v , \chi \otimes w ) &\underset{\eqref{eq:PbarActionAx}}{=} A f ( \psi \otimes v , \chi \otimes w ) \nonumber \\
&\underset{\eqref{eq:PbarCocycle}}{=} A \xi ( \psi , \chi ) b (v,w) \nonumber \\
&\underset{\eqref{eq:SpinEquivariance}}{=} ( \xi ( s_A \cdot \psi , \chi ) +  \xi ( \psi , s_A \cdot \chi ) ) b (v,w) \nonumber \\
&\underset{\eqref{eq:PbarCocycle}}{=} f ( s_A \cdot \psi \otimes v , \chi \otimes w) + f ( \psi \otimes v , s_A \cdot \chi \otimes w ) \nonumber \\
&\underset{\eqref{eq:SemiDirectSumSOVGnBracketsAv}}{=} f ( [ A , \psi \otimes v ]_\theta , \chi \otimes w ) + f ( \psi \otimes v , [ A , \chi \otimes w ]_\theta )~.
\end{align}
For any $X \in \fg$, \eqref{eq:PbarActionXx} implies $X \cdot f ( \psi \otimes v , \chi \otimes w ) = 0$ while 
\begin{align}\label{eq:PbarCocycle011X} 
f ( [ X , \psi \otimes v ]_\theta , \chi \otimes w ) + f ( \psi \otimes v , [ X , \chi \otimes w ]_\theta ) &\underset{\eqref{eq:SemiDirectSumSOVGnBracketsXv}}{=} f ( \psi \otimes X \cdot v , \chi \otimes w ) + f ( \psi \otimes v ,  \chi \otimes X \cdot w ) \nonumber \\
&\underset{\eqref{eq:PbarCocycle}}{=}  \xi ( \psi , \chi ) ( b ( X \cdot v , w ) + b ( v , X \cdot w ) )~.
\end{align}

Therefore \eqref{eq:LSACocycles} is satisfied if and only if the right hand side of \eqref{eq:PbarCocycle011X} is zero, which is only the case if $b$ is $\fg$-invariant.
\end{proof}

\begin{cor} \label{cor:AbeilianExtensionPbar}
If $\fG (n)$ is a Lie superalgebra and $b$ is $\fg$-invariant then, by definition, the abelian extension ${\bar \fP} (n)_f$ of ${\bar \fP} (n)$ by $\cV$ is the Lie superalgebra with brackets 
\begin{align}
[A,B]_f &= [A,B]~, \label{eq:AbeilianExtensionPbarBracketsAB} \\
[A,X]_f &= 0~, \label{eq:AbeilianExtensionPbarBracketsAX} \\
[A, \psi \otimes v ]_f &= s_A \cdot \psi \otimes v~, \label{eq:AbeilianExtensionPbarBracketsAv} \\
[X,Y]_f &= [X,Y]~, \label{eq:AbeilianExtensionPbarBracketsXY} \\
[ X , \psi \otimes v ]_f &= \psi \otimes X \cdot v~, \label{eq:AbeilianExtensionPbarBracketsXv} \\
[ \psi \otimes v , \chi \otimes w ]_f &= \langle \psi , \chi \rangle c(v,w) + \xi ( \psi , \chi ) b (v,w)~, \label{eq:AbeilianExtensionPbarBracketsvw} \\
[ A , x ]_f &= Ax~, \label{eq:AbeilianExtensionPbarBracketsAx} \\
[ X , x ]_f &= 0~, \label{eq:AbeilianExtensionPbarBracketsXx} \\
[ \psi \otimes v , x ]_f &= 0~, \label{eq:AbeilianExtensionPbarBracketsvx} \\
[x,y]_f &= 0~, \label{eq:AbeilianExtensionPbarBracketsxy} 
\end{align}
for all $A,B \in \fso ( \cV )$, $X,Y \in \fg$, $\psi \otimes v , \chi \otimes w \in \fG (n)_{\bar 1}$ and $x,y \in \cV$.  
\end{cor}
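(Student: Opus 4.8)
The plan is to recognise this corollary as an immediate assembly of three results already in hand, so the proof amounts to checking that every hypothesis is secured and then reading the listed brackets off the general abelian-extension formulas.

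First I would record that ${\bar \fP}(n) = \fso(\cV) \oplus_\theta \fG(n)$ is a Lie superalgebra: this is Corollary~\ref{cor:SemiDirectSumSOVGn}, which rests on the fact that $\theta$ is a Lie algebra homomorphism (Proposition~\ref{prop:Theta}) together with the general statement that a semidirect sum of a Lie algebra and a Lie superalgebra is a Lie superalgebra (Proposition~\ref{prop:SemiDirectSum}), valid here because $\fG(n)$ is assumed to be a Lie superalgebra. Next I would confirm that $\cV$, with the action \eqref{eq:PbarActionAx}--\eqref{eq:PbarActionvx}, is a representation of ${\bar \fP}(n)$; this was already justified just after Corollary~\ref{cor:SemiDirectSumSOVGn} using Remark~\ref{rem:EvenSuperRepresentation}, the operative points being that $\fso(\cV)$ acts naturally, $\fg$ acts trivially, the odd part annihilates $\cV$, and $[\fG(n)_{\bar 1}, \fG(n)_{\bar 1}] \subset \fg$ again acts trivially.

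The decisive ingredient is that $f$ be a genuine $2$-cocycle. I would therefore invoke Proposition~\ref{prop:PbarCocycle}, which says precisely that the $f$ defined by \eqref{eq:PbarCocycle} lies in $Z^2({\bar \fP}(n), \cV)$ if and only if $b$ is $\fg$-invariant; since $\fg$-invariance is hypothesised, this delivers $f \in Z^2({\bar \fP}(n), \cV)$, whereupon Proposition~\ref{prop:AbelianExtension} certifies that ${\bar \fP}(n)_f$ is a Lie superalgebra. The remaining step is pure bookkeeping: substituting the ${\bar \fP}(n)$ brackets \eqref{eq:SemiDirectSOVGnSumBracketsAB}--\eqref{eq:SemiDirectSumSOVGnBracketsvw}, the action \eqref{eq:PbarActionAx}--\eqref{eq:PbarActionvx}, and the cocycle \eqref{eq:PbarCocycle} into the generic formulas $[a,b]_f = [a,b] + f(a,b)$, $[a,x]_f = a \cdot x$, $[x,y]_f = 0$. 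Because $f$ vanishes on ${\bar \fP}(n)_{\bar 0}$, only the odd-odd bracket is modified, gaining the term $\xi(\psi, \chi) b(v,w)$ to produce \eqref{eq:AbeilianExtensionPbarBracketsvw}, while every other bracket is inherited verbatim.

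I do not expect a genuine obstacle, as each hypothesis has already been established; the one point deserving a line of care is that the $f$ of \eqref{eq:PbarCocycle} really does belong to $C^2({\bar \fP}(n), \cV)$, i.e. that it is even and superskewsymmetric. Evenness is clear since $f$ maps the odd-odd part into the purely even space $\cV$, and superskewsymmetry on two odd arguments reduces to \emph{symmetry}, which holds because $\xi(\psi, \chi) = \sigma\tau\, \xi(\chi, \psi)$ by \eqref{eq:XiSym2} and $b(v,w) = \sigma\tau\, b(w,v)$ by \eqref{eq:bSym}, so the two factors of $\sigma\tau$ cancel.
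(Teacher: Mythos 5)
Your proposal is correct and follows exactly the route the paper intends: the corollary is a direct assembly of Corollary~\ref{cor:SemiDirectSumSOVGn}, the representation structure of $\cV$, Proposition~\ref{prop:PbarCocycle} and Proposition~\ref{prop:AbelianExtension}, with the brackets read off from \eqref{eq:AbelianExtensionBracketsab}--\eqref{eq:AbelianExtensionBracketsxy}. Your added check that $f$ is even and superskewsymmetric (i.e.\ symmetric on two odd arguments, via the cancelling factors of $\sigma\tau$ from \eqref{eq:XiSym2} and \eqref{eq:bSym}) is a worthwhile point of care that the paper leaves implicit in the hypothesis $f \in C^2({\bar \fP}(n), \cV)$ of Proposition~\ref{prop:PbarCocycle}.
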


Comparing the brackets in Corollary~\ref{cor:AbeilianExtensionPbar} with those for the Poincar\'{e} superalgebra in Section~\ref{sec:ExtendedPoincareSuperalgebras} shows that $\fP (n) = {\bar \fP} (n)_f$ as Lie superalgebras. 

\begin{lem} \label{lem:EquivalentAbeilianExtensionPbar}
If $n>1$ then any two Poincar\'{e} superalgebras ${\bar \fP} (n)_{f_1}$ and ${\bar \fP} (n)_{f_2}$ are equivalent to each other (as abelian extensions of ${\bar \fP} (n)$ by $\cV$) if and only if $f_1 = f_2$.
\end{lem}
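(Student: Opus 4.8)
The plan is to reduce the statement to showing that a $2$-cocycle of the special ``Poincar\'{e}'' form \eqref{eq:PbarCocycle} which happens to be a coboundary must vanish. The ``if'' direction is immediate, so suppose ${\bar \fP} (n)_{f_1}$ and ${\bar \fP} (n)_{f_2}$ are equivalent. By definition $f := f_1 - f_2 \in B^2 ( {\bar \fP} (n) , \cV )$. Since both $f_i$ are of the form \eqref{eq:PbarCocycle} for $\fg$-invariant forms $b_i$ on $V$, bilinearity of $\xi$ shows that $f$ vanishes on ${\bar \fP} (n)_{\bar 0}$ and obeys $f ( \psi \otimes v , \chi \otimes w ) = \xi ( \psi , \chi ) \, b(v,w)$ with $b := b_1 - b_2$. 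It therefore suffices to prove $b=0$. Writing $f = \partial h$ for some $h \in C^1 ( {\bar \fP} (n) , \cV )$ as in \eqref{eq:LSACoboundaries}, I first note that since $\cV$ is purely even and $h$ is even, $h$ must annihilate the odd part ${\bar \fP} (n)_{\bar 1}$.

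Next I would evaluate the coboundary on two odd elements. Using $h ( {\bar \fP} (n)_{\bar 1} ) = 0$, the vanishing action \eqref{eq:PbarActionvx}, and the $[ {\bar 1} {\bar 1} ]$ bracket \eqref{eq:SemiDirectSumSOVGnBracketsvw}, the only surviving term of \eqref{eq:LSACoboundaries} is $\partial h ( \psi \otimes v , \chi \otimes w ) = - \langle \psi , \chi \rangle \, h ( c(v,w) )$. Equating this with $f ( \psi \otimes v , \chi \otimes w ) = \xi ( \psi , \chi ) \, b(v,w)$ and pairing both sides with an arbitrary $x \in \cV$ via \eqref{eq:XiDef} yields the master relation
\[
\langle \psi , x \cdot \chi \rangle \, b(v,w) = - \langle \psi , \chi \rangle \, ( x , h ( c(v,w) ) )~,
\]
valid for all $x \in \cV$, all $\psi , \chi$ and all $v,w$ (with $\psi , \chi$ carrying the chirality matching $v,w$ when $n$ is even).

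The crux is to extract $b=0$ from this relation, and this is exactly where the hypothesis $n>1$ enters. Suppose $b(v,w) \neq 0$ for some $v,w$. When $n$ is even, the complementary support conditions \eqref{eq:bcPlusMinuspOdd} and \eqref{eq:bcPlusMinuspEven} force $c(v,w)=0$ on the block supporting $b$, so the right hand side vanishes and $\langle \psi , x \cdot \chi \rangle = 0$ for all admissible $\psi , \chi , x$; nondegeneracy of $\langle -,- \rangle$ on the relevant chirality block, together with invertibility of the Clifford action of a non-null $x$ (from \eqref{eq:CliffordRelation}), then yields a contradiction. When $n$ is odd (so $n \geq 3$), dividing by $b(v,w)$ gives $\langle \psi , x \cdot \chi \rangle = g(x) \langle \psi , \chi \rangle$ for the linear functional $g(x) = - ( x , h ( c(v,w) ) ) / b(v,w)$; nondegeneracy of $\langle -,- \rangle$ upgrades this to $x \cdot \chi = g(x) \chi$ for all $\chi$, whereupon the polarisation of \eqref{eq:CliffordRelation}, namely $x \cdot ( y \cdot \chi ) + y \cdot ( x \cdot \chi ) = 2 ( x , y ) \chi$, forces $( x , y ) = g(x) g(y)$ --- a symmetric form of rank at most one, contradicting nondegeneracy of $( -,- )$ for $n \geq 2$. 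In either case $b=0$, hence $f=0$ and $f_1 = f_2$. I expect the main obstacle to be precisely this final step: establishing that the Clifford action of $\cV$ on spinors cannot be scalar-valued once $\dim \cV > 1$ is what drives the two-sided case distinction, and it is exactly the feature that fails at $n=1$, where $e_1 \cdot = \pm \mathbb{1}$ is a scalar and non-trivial equivalences do arise.
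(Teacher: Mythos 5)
Your proof is correct, but it takes a genuinely different route from the paper's. The paper evaluates the coboundary identity on mixed pairs $(A,X) \in \fso(\cV) \times \fg$: since $f_1 - f_2$ vanishes there, one gets $A\,g(X) = 0$ for all $A$, and because $\cV$ carries no non-zero $\fso(\cV)$-invariant vector when $n>1$ this forces $g|_{\fg} = 0$; the odd--odd component of the coboundary is then $-\langle\psi,\chi\rangle\, g(c(v,w)) = 0$ and $f_1 = f_2$ follows immediately. You instead use only the odd--odd component, arriving at the master relation $\langle\psi, x\cdot\chi\rangle\, b(v,w) = -\langle\psi,\chi\rangle\,(x, h(c(v,w)))$ and ruling out $b \neq 0$ by Clifford-module arguments: in even dimensions the complementary support conditions on $b$ and $c$ kill the right-hand side and invertibility of the action of a non-null $x$ gives a contradiction, while in odd dimensions the relation would force $x\cdot$ to act as a scalar $g(x)$, making $(x,y) = g(x)g(y)$ a rank-one form and contradicting nondegeneracy for $n \geq 2$. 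All the individual steps check out (in particular the even-dimensional chirality bookkeeping and the nondegeneracy of the pairings between the relevant half-spinor blocks). Both arguments correctly locate the failure at $n=1$, but for different structural reasons: the paper because $\fso(\cV) = 0$ there, you because the Clifford action of a one-dimensional $\cV$ on a one-dimensional spinor module is scalar. The paper's argument is shorter and needs no case analysis; yours has the virtue of isolating exactly which feature of the spinor module obstructs non-trivial equivalences, which connects naturally to the genuinely non-trivial equivalences classified in Lemma~\ref{lem:EquivalentCentralExtensionPbar} for $n=1$.
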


\begin{proof}
By definition, ${\bar \fP} (n)_{f_1}$ and ${\bar \fP} (n)_{f_2}$ are equivalent to each other if $f_1 - f_2 \in B^2 ( {\bar \fP} (n) , \cV )$. This means that there must exist some $g \in C^1 ( {\bar \fP} (n) , \cV )$ for which 
\begin{equation}\label{eq:EquivalentAbeilianExtensionPbar}
f_1 (a,b) - f_2 (a,b) = a \cdot g(b) - (-1)^{|a||b|} b \cdot g(a) - g( [a,b]_\theta )~,
\end{equation}
for all homogeneous $a,b \in {\bar \fP} (n)$.

Substituting $a = A \in \fso ( \cV )$ and $b = X \in \fg$ into \eqref{eq:EquivalentAbeilianExtensionPbar} implies
\begin{equation}\label{eq:EquivalentAbeilianExtensionPbarAX}
0 = A g(X)~,
\end{equation}
using \eqref{eq:PbarActionAx}, \eqref{eq:PbarActionXx} and \eqref{eq:SemiDirectSumSOVGnBracketsAX}. But since \eqref{eq:EquivalentAbeilianExtensionPbarAX} must hold for all $A \in \fso ( \cV )$ and $X \in \fg$, it is equivalent to
\begin{equation}\label{eq:EquivalentAbeilianExtensionPbarAX2}
g(X) = 0~,
\end{equation}
for all $X \in \fg$, since $n>1$. Substituting $a = \psi \otimes v \in {\bar \fP} (n)_{\bar 1}$ and $b = \chi \otimes w \in {\bar \fP} (n)_{\bar 1}$ into \eqref{eq:EquivalentAbeilianExtensionPbar} implies
\begin{equation}\label{eq:EquivalentAbeilianExtensionPbarvw}
f_1 ( \psi \otimes v , \chi \otimes w ) - f_2 ( \psi \otimes v , \chi \otimes w ) = - \langle \psi , \chi \rangle g ( c(v,w) )~,
\end{equation}
using \eqref{eq:PbarActionvx} and \eqref{eq:SemiDirectSumSOVGnBracketsvw}. But $c(v,w) \in \fg$, so \eqref{eq:EquivalentAbeilianExtensionPbarAX2} implies the right hand side of \eqref{eq:EquivalentAbeilianExtensionPbarvw} is zero and $f_1 = f_2$. 
\end{proof}


\subsection{Characterisation of $n$-admissible Lie superalgebras with $n<4$}
\label{sec:CharacterisationEmbeddingLSA}

\subsubsection{$n=1$}
\label{sec:CharacterisationEmbeddingLSAn1}

In this case $\fso ( \cV ) = 0$ (so ${\bar \fP} (1) = \fG (1)$) and $\dim \, \cS^{(1)} = 1$. Let $e \in \cS^{(1)}$ be normalised such that $\langle e , e \rangle = 1$ and define $\xi = \xi (e,e)$ as a basis for $\cV$. We can then let $\fG (1)_{\bar 1} = V$ by identifying every $e \otimes v \in \cS^{(1)} \otimes V$ with $v \in V$. 

The brackets for $\fP (1) = \fG (1)_f$ now look like
\begin{align}
[X,Y]_f &= [X,Y]~, \label{eq:AbeilianExtensionPbarBracketsXYn1} \\
[ X , v ]_f &= X \cdot v~, \label{eq:AbeilianExtensionPbarBracketsXvn1} \\
[ v , w ]_f &= c(v,w) + b (v,w) \xi~, \label{eq:AbeilianExtensionPbarBracketsvwn1} \\
[ X , \xi ]_f &= 0~, \label{eq:AbeilianExtensionPbarBracketsXxn1} \\
[ v , \xi ]_f &= 0~, \label{eq:AbeilianExtensionPbarBracketsvxn1} 
\end{align}
for all $X,Y \in \fg$, $v , w \in V$. Clearly $\fG (1)_f$ is a central extension of $\fG (1)$ by $\cV = \CC \xi$. 

Any Lie superalgebra $\fG$ can play the role of $\fG (1)$ so long as we identify $\fG_{\bar 0} = \fg$ and $\fG_{\bar 1} = V$. Since $\fP (1)$ is a Lie superalgebra if and only if $b$ is $\fg$-invariant, it follows that a Lie superalgebra $\fG$ is $1$-admissible if and only if $\fG_{\bar 1}$ admits a nondegenerate $\fG_{\bar 0}$-invariant symmetric bilinear form. 

For any $a \in \{ 1,2 \}$, let $f_a  \in Z^2 ( \fG (1) , \cV )$ be defined as in Proposition~\ref{prop:PbarCocycle} with $f_a ( \fG (1)_{\bar 0} , \fG (1) ) =0$ and
\begin{equation}\label{eq:PbarCocyclen1}
f_a (v,w) = b_a (v,w) \xi~,
\end{equation}
for all $v , w \in V$, where $b_a$ is a nondegenerate $\fg$-invariant symmetric bilinear form on $V$. 

\begin{lem} \label{lem:EquivalentCentralExtensionPbar}
The Poincar\'{e} superalgebras $\fG (1)_{f_1}$ and $\fG (1)_{f_2}$ are equivalent to each other (as central extensions of $\fG (1)$ by $\cV$) if and only if there exists an element $\kappa \in \fg^*$ with $\kappa ( [ \fg , \fg ] ) =0$ and 
\begin{equation}\label{eq:KappaEquivalent}
b_1 (v,w) - b_2 (v,w) = \kappa ( c(v,w) )~,
\end{equation}
for all $v , w \in V$.
\end{lem}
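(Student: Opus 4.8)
The plan is to apply directly the definition of equivalence of abelian extensions from Section~\ref{sec:SemidirectSumsAndAbelianExtensions}: the central extensions $\fG (1)_{f_1}$ and $\fG (1)_{f_2}$ are equivalent precisely when $f_1 - f_2 \in B^2 ( \fG (1) , \cV )$, so the whole content is to characterise when the difference cocycle is a coboundary. The decisive simplification is that, by \eqref{eq:AbeilianExtensionPbarBracketsXxn1} and \eqref{eq:AbeilianExtensionPbarBracketsvxn1}, $\cV = \CC \xi$ is the trivial representation of $\fG (1)$, so the two action terms in the coboundary formula \eqref{eq:LSACoboundaries} vanish. Hence $f_1 - f_2 \in B^2 ( \fG (1) , \cV )$ reduces to the existence of some $g \in C^1 ( \fG (1) , \cV )$ with $(f_1 - f_2)(a,b) = - g([a,b])$ for all homogeneous $a,b \in \fG (1)$. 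Since $g$ is even and $\cV$ is concentrated in even degree, $g$ annihilates the odd part $V = \fG (1)_{\bar 1}$, and its restriction to $\fg = \fG (1)_{\bar 0}$ can be written uniquely as $g(X) = - \kappa (X) \xi$ for some $\kappa \in \fg^*$.

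Next I would evaluate the reduced coboundary identity on the three types of homogeneous pairs. On $\fg \times \fg$ the left-hand side is zero (because $f_a ( \fg , \fG (1) ) = 0$) while the right-hand side equals $\kappa ( [X,Y] ) \xi$, forcing $\kappa ( [ \fg , \fg ] ) = 0$. On $\fg \times V$ both sides vanish automatically, since $[X,v] = X \cdot v \in V$ and $g$ kills $V$. On $V \times V$ the left-hand side is $(b_1 - b_2)(v,w) \xi$, and using $[v,w] = c(v,w)$ in $\fG (1)$ (see \eqref{eq:ELSA11} under the identification $e \otimes v \mapsto v$ with $\langle e , e \rangle = 1$), the right-hand side is $\kappa ( c(v,w) ) \xi$, which gives \eqref{eq:KappaEquivalent}. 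This settles the forward direction, and the converse is obtained by running the same computation backwards: given $\kappa \in \fg^*$ with $\kappa ( [ \fg , \fg ] ) = 0$ satisfying \eqref{eq:KappaEquivalent}, one defines $g$ by $g(X) = - \kappa (X) \xi$ and $g(V) = 0$, checks it is a genuine even element of $C^1 ( \fG (1) , \cV )$, and verifies the coboundary identity case by case to conclude $f_1 - f_2 \in B^2 ( \fG (1) , \cV )$.

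The argument is essentially mechanical once these reductions are made, and I do not anticipate a serious obstacle; the one point deserving care is that the constraint $\kappa ( [ \fg , \fg ] ) = 0$ is \emph{not} an auxiliary hypothesis but is forced by the $\fg \times \fg$ component of the coboundary condition, reflecting the fact that $f_a$ vanishes on the even part while $g$ need not. I would also be sure to confirm that a map $g$ prescribed only by $\kappa$ on $\fg$ and by zero on $V$ is automatically even, so that it is a legitimate cochain and no further conditions on $\kappa$ are generated.
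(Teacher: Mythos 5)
Your proof is correct and follows essentially the same route as the paper: reduce equivalence to the coboundary condition $f_1 - f_2 = -g([-,-])$ (the action terms dropping out because $\cV$ is the trivial module), encode $g|_{\fg}$ as $-\kappa(\cdot)\xi$, and read off $\kappa([\fg,\fg])=0$ from the $\fg\times\fg$ component and \eqref{eq:KappaEquivalent} from the $V\times V$ component. The extra care you take over the $\fg\times V$ case and the evenness of $g$ is implicit in the paper's version but changes nothing substantive.
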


\begin{proof}
By definition, $\fG (1)_{f_1}$ and $\fG (1)_{f_2}$ are equivalent to each other if $f_1 - f_2 \in B^2 ( \fG (1) , \cV )$. This means that there must exist some $g \in C^1 ( \fG (1) , \cV )$ for which 
\begin{equation}\label{eq:EquivalentCentralExtensionPbar}
f_1 (a,b) - f_2 (a,b) = - g( [a,b] )~,
\end{equation}
for all homogeneous $a,b \in \fG (1)$. The condition \eqref{eq:EquivalentCentralExtensionPbar} is equivalent to
\begin{align}
0 &= - g ([X,Y])~, \label{eq:EquivalentCentralExtensionPbarXY} \\
( b_1 (v,w) - b_2 (v,w) ) \xi &= - g ( c(v,w) )~, \label{eq:EquivalentCentralExtensionPbarvw} 
\end{align}
for all $X,Y \in \fg$, $v , w \in V$.

Let $\kappa \in \fg^*$ be defined such that 
\begin{equation}\label{eq:EquivalentCentralExtensionPbarKappa}
g (X) = - \kappa (X) \xi~,
\end{equation}
for all $X \in \fg$. Then \eqref{eq:EquivalentCentralExtensionPbarXY} is equivalent to $\kappa ( [ \fg , \fg ] ) =0$ and \eqref{eq:EquivalentCentralExtensionPbarvw} is equivalent to \eqref{eq:KappaEquivalent}.
\end{proof}

\begin{cor} \label{cor:ReductiveEquivalentCentralExtensionPbar}
If $\fg$ is reductive with semisimple part $\fs$ then $\kappa ( \fs ) =0$ in Lemma~\ref{lem:EquivalentCentralExtensionPbar}. It follows that if $\fg$ is semisimple then $\fG (1)_{f_1}$ and $\fG (1)_{f_2}$ are equivalent to each other if and only if $f_1 = f_2$.
\end{cor}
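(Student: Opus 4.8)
The plan is to feed the structural hypothesis on $\fg$ into the single constraint $\kappa([\fg,\fg])=0$ that Lemma~\ref{lem:EquivalentCentralExtensionPbar} imposes on any witnessing functional $\kappa \in \fg^*$. The key observation is that the derived subalgebra of a reductive Lie algebra is precisely its semisimple part, so that $\kappa([\fg,\fg])=0$ translates directly into $\kappa(\fs)=0$; the semisimple case then follows because there the derived subalgebra is everything.

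First I would identify $[\fg,\fg]$. Writing $\fg = \fs \oplus \fa$ with $\fs$ semisimple and $\fa$ abelian and commuting with $\fs$, bilinearity of the bracket gives $[\fg,\fg] = [\fs,\fs] + [\fs,\fa] + [\fa,\fa] = [\fs,\fs]$, since the last two terms vanish. Because every semisimple Lie algebra is perfect, $[\fs,\fs]=\fs$, and hence $[\fg,\fg]=\fs$. Now any $\kappa$ realising the equivalence in Lemma~\ref{lem:EquivalentCentralExtensionPbar} satisfies $\kappa([\fg,\fg])=0$, so $\kappa(\fs)=0$, which is the first assertion.

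For the second assertion I would specialise to $\fg$ semisimple, where $\fg=\fs$ and the first part forces $\kappa=0$ on all of $\fg$. The equivalence condition \eqref{eq:KappaEquivalent} then reads $b_1(v,w)-b_2(v,w)=\kappa(c(v,w))=0$ for all $v,w\in V$, so $b_1=b_2$; since each $f_a$ is determined by $b_a$ through \eqref{eq:PbarCocyclen1} and vanishes on $\fG(1)_{\bar 0}$, this gives $f_1=f_2$. The converse is immediate, as $f_1=f_2$ makes $\kappa=0$ satisfy both hypotheses of the lemma trivially, so the extensions are equivalent. I do not anticipate a genuine obstacle here: beyond Lemma~\ref{lem:EquivalentCentralExtensionPbar} the only input is the standard fact that semisimple Lie algebras are perfect together with the commuting reductive decomposition, and once $[\fg,\fg]=\fs$ is in hand the remainder is purely formal.
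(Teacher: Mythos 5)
Your proposal is correct and follows essentially the same route as the paper: both arguments reduce the matter to the identity $[\fg,\fg]=\fs$ (which you justify slightly more explicitly via perfectness of $\fs$ and the commuting decomposition), feed it into the constraint $\kappa([\fg,\fg])=0$ from Lemma~\ref{lem:EquivalentCentralExtensionPbar}, and in the semisimple case conclude $\kappa=0$, hence $b_1=b_2$ and $f_1=f_2$ via \eqref{eq:KappaEquivalent} and \eqref{eq:PbarCocyclen1}. The only addition is your explicit remark on the trivial converse, which the paper leaves implicit.
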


\begin{proof}
Since $[ \fg , \fg ] = \fs$ then $\kappa ( [ \fg , \fg ] ) =0$ is equivalent to $\kappa ( \fs ) = 0$. If $\fg = \fs$ then this implies $\kappa = 0$ and \eqref{eq:KappaEquivalent} implies $b_1 = b_2$ so $f_1 = f_2$.
\end{proof}

\subsubsection{$n=2$}
\label{sec:CharacterisationEmbeddingLSAn2}

In this case $\dim \, \fso ( \cV ) = 1$ and $\dim \, \cS_\pm^{(2)} = 1$. Therefore any $A \in \fso ( \cV )$ acts as a scalar on the irreducible representations $\cS_\pm^{(2)}$. Let $E \in \fso ( \cV )$ be normalised such that $s_E \cdot \psi_\pm = \pm \psi_\pm$, for all $\psi_\pm \in \cS_\pm^{(2)}$. This is possible since $\langle -,- \rangle$ is $\fso ( \cV )$-invariant and \eqref{eq:AdmissibleSpinorialBilinearFormPlusMinus} implies $\langle \cS_\pm^{(2)} , \cS_\pm^{(2)} \rangle = 0$. Furthermore, \eqref{eq:XiPlusMinus} implies $\xi ( \cS_\pm^{(2)} , \cS_\mp^{(2)} ) =0$. Let $e_\pm \in \cS_\pm^{(2)}$ be normalised such that $\langle e_+ , e_- \rangle = 1$ and define $\xi_\pm = \xi ( e_\pm , e_\pm )$ as a basis for $\cV$. We can then let $\fG (2)_{\bar 1} = V_+ \oplus V_-$ by identifying every $e_\pm \otimes v_\pm \in \cS_\pm^{(2)} \otimes V_\pm$ with $v_\pm \in V_\pm$.  

The brackets for $\fP (2) = \fG (2)_f$ now look like
\begin{align}
[ E ,X]_f &= 0~, \label{eq:AbeilianExtensionPbarBracketsAXn2} \\
[ E , v_\pm ]_f &= \pm v_\pm~, \label{eq:AbeilianExtensionPbarBracketsAvn2} \\
[X,Y]_f &= [X,Y]~, \label{eq:AbeilianExtensionPbarBracketsXYn2} \\
[ X , v_\pm ]_f &= X \cdot v_\pm~, \label{eq:AbeilianExtensionPbarBracketsXvn2} \\
[ v_\pm , w_\pm ]_f &= b ( v_\pm , w_\pm ) \xi_\pm~, \label{eq:AbeilianExtensionPbarBracketsvwPlusPlusn2} \\
[ v_+ , w_- ]_f &= c( v_+ , w_- )~, \label{eq:AbeilianExtensionPbarBracketsvwnPlusMinus2} \\
[ E , \xi_\pm ]_f &= \pm 2 \xi_\pm~, \label{eq:AbeilianExtensionPbarBracketsAxn2} \\
[ X , x ]_f &= 0~, \label{eq:AbeilianExtensionPbarBracketsXxn2} \\
[ v_\pm , x ]_f &= 0~, \label{eq:AbeilianExtensionPbarBracketsvxn2} \\
[ x , y ]_f &= 0~, \label{eq:AbeilianExtensionPbarBracketsxyn2} 
\end{align}
for all $X,Y \in \fg$, $v_\pm , w_\pm \in V_\pm$ and $x,y \in \cV$.  

A Lie superalgebra $\fG$ with $\fG_{\bar 1} = \fG_{\bar 1}^+ \oplus \fG_{\bar 1}^-$, $[ \fG_{\bar 0} , \fG_{\bar 1}^\pm ] \subset \fG_{\bar 1}^\pm$ and $[ \fG_{\bar 1}^\pm , \fG_{\bar 1}^\pm ] = 0$ is said to be {\emph{3-graded}}. Any 3-graded Lie superalgebra $\fG$ can therefore play the role of $\fG (2)$ so long as we identify $\fG_{\bar 0} = \fg$ and $\fG_{\bar 1}^\pm = V_\pm$. Since $\fP (2)$ is a Lie superalgebra if and only if $b$ is $\fg$-invariant, it follows that a 3-graded Lie superalgebra $\fG$ is $2$-admissible if and only if $\fG_{\bar 1}^+$ and $\fG_{\bar 1}^-$ both admit a nondegenerate $\fG_{\bar 0}$-invariant symmetric bilinear form. 

\subsubsection{$n=3$}
\label{sec:CharacterisationEmbeddingLSAn3}

In this case $\fso ( \cV ) \cong \fsl ( \cS^{(3)} )$ and $\dim \, \cS^{(3)} = 2$. Let $E_0 , E_\pm$ be a Cartan-Weyl basis for $\fsl ( \cS^{(3)} )$ with respect to which we have the Lie brackets
\begin{equation}\label{eq:CartanWeylBasisSL2}
[ E_0 , E_\pm ] = \pm 2 E_\pm~, \quad [ E_+ , E_- ] = E_0~.
\end{equation}
The action of $\fsl ( \cS^{(3)} )$ on weight vectors $e_\pm \in \cS^{(3)}$ is then defined by
\begin{equation}\label{eq:CartanWeylBasisSL2Action}
E_0 e_\pm = \pm e_\pm~, \quad E_\pm e_\pm = 0~, \quad E_\mp e_\pm = e_\mp~.
\end{equation}
Table~\ref{tab:SigmaTauSigns}, \eqref{eq:AdmissibleSpinorialBilinearForm1} and \eqref{eq:XiSym2} imply that $\langle -,- \rangle$ is skewsymmetric while $\xi$ is symmetric. Let the weight vectors $e_\pm \in \cS^{(3)}$ be normalised such that $\langle e_+ , e_- \rangle = 1$ and define $\xi_0 = \xi ( e_+ , e_- )$, $\xi_\pm = \xi ( e_\pm , e_\pm )$ as a basis for $\cV$. We can then let $\fG (3)_{\bar 1} = V \oplus V$ by identifying every $e_+ \otimes v + e_- \otimes w \in \cS^{(3)} \otimes V$ with $(v,w) \in V \oplus V$. For any $v \in V$, let us denote $v_+ = (v,0)$ and $v_- = (0,v)$ as elements in $V \oplus V$.

The brackets for $\fP (3) = \fG (3)_f$ now look like
\begin{align}
[ E_0 , E_\pm ]_f &= \pm 2 E_\pm~, \label{eq:AbeilianExtensionPbarBracketsE0EPlusMinusn3} \\
[ E_+ , E_- ]_f &= E_0~, \label{eq:AbeilianExtensionPbarBracketsEPlusEMinusn3} \\
[A,X]_f &= 0~, \label{eq:AbeilianExtensionPbarBracketsAXn3} \\
[ E_0, v_\pm ]_f &= \pm v_\pm~, \label{eq:AbeilianExtensionPbarBracketsE0vn3} \\
[ E_\pm, v_\pm ]_f &= 0~, \label{eq:AbeilianExtensionPbarBracketsEPlusMinusvn3} \\
[ E_\mp, v_\pm ]_f &= v_\mp~, \label{eq:AbeilianExtensionPbarBracketsEMinusPlusvn3} \\
[X,Y]_f &= [X,Y]~, \label{eq:AbeilianExtensionPbarBracketsXYn3} \\
[ X , v_\pm ]_f &= ( X \cdot v )_\pm~, \label{eq:AbeilianExtensionPbarBracketsXvn3} \\
[ v_\pm , w_\pm ]_f &= b (v,w) \xi_\pm~, \label{eq:AbeilianExtensionPbarBracketsvwPlusPlusn3} \\
[ v_+ , w_- ]_f &= c(v,w) + b (v,w) \xi_0~, \label{eq:AbeilianExtensionPbarBracketsvwPlusMinusn3} \\
[ E_0 , \xi_0 ]_f &= 0~, \label{eq:AbeilianExtensionPbarBracketsE0Xi0n3} \\
[ E_0 , \xi_\pm ]_f &= \pm 2 \xi_\pm~, \label{eq:AbeilianExtensionPbarBracketsE0XiPMn3} \\
[ E_\pm , \xi_0 ]_f &= \xi_\pm~, \label{eq:AbeilianExtensionPbarBracketsEPMXi0n3} \\
[ E_\pm , \xi_\pm ]_f &= 0~, \label{eq:AbeilianExtensionPbarBracketsEPMXiPMn3} \\
[ E_\pm , \xi_\mp ]_f &= 2 \xi_0~, \label{eq:AbeilianExtensionPbarBracketsEPMXiMPn3} \\
[ X , x ]_f &= 0~, \label{eq:AbeilianExtensionPbarBracketsXxn3} \\
[ v_\pm , x ]_f &= 0~, \label{eq:AbeilianExtensionPbarBracketsvxn3} \\
[x,y]_f &= 0~, \label{eq:AbeilianExtensionPbarBracketsxyn3} 
\end{align}
for all $A \in \fso ( \cV )$, $X,Y \in \fg$, $v,w \in V$ and $x,y \in \cV$. Note that Table~\ref{tab:SigmaTauSigns} and \eqref{eq:cSym} imply $c$ is skewsymmetric.

Now let $\fG$ be a 3-graded Lie superalgebra with $\fG_{\bar 1}^+ = \fG_{\bar 1}^-$ (as representations of $\fG_{\bar 0}$). In this case we will denote $\fG_{\bar 1}^\pm$ by $\fG_{\bar 1}^\bullet$ in order to distinguish it from $( \fG_{\bar 1}^\bullet , 0 )$ and $( 0 ,  \fG_{\bar 1}^\bullet )$ in $\fG_{\bar 1}$. For any $v \in \fG_{\bar 1}^\bullet$, we will write $v_+ = (v,0)$ and $v_- = (0,v)$ as elements in $\fG_{\bar 1}$. We will say that $\fG$ is {\emph{balanced}} if
\begin{equation}\label{eq:BalancedSkew}
[ v_+ , w_- ] = - [ w_+ , v_- ]~,
\end{equation}
for all $v , w \in \fG_{\bar 1}^\bullet$. Any balanced 3-graded Lie superalgebra $\fG$ can therefore play the role of $\fG (3)$ so long as we identify $\fG_{\bar 0} = \fg$ and $\fG_{\bar 1}^\bullet = V$. Since $\fP (3)$ is a Lie superalgebra if and only if $b$ is $\fg$-invariant, it follows that a balanced 3-graded Lie superalgebra $\fG$ is $3$-admissible if and only if $\fG_{\bar 1}^\bullet$ admits a nondegenerate $\fG_{\bar 0}$-invariant symmetric bilinear form. 


\subsection{Classification of classical $n$-admissible Lie superalgebras}
\label{sec:ClassificationOfSimpleAdmissibleLSA}

A Poincar\'{e} superalgebra $\fP (n)$ is never simple since it always contains the abelian ideal $\cV$. Furthermore, since ${\bar \fP} (n) \cong \fP (n) / \cV$ has $[ {\bar \fP} (n)_{\bar 1} , {\bar \fP} (n)_{\bar 1} ] \subset \fg$ and ${\bar \fP} (n)_{\bar 0} = \fso ( \cV ) \oplus \fg$, the following proposition implies that ${\bar \fP} (n)$ is never simple if ${\bar \fP} (n)_{\bar 1} \neq 0$ and $n>1$.
\!\footnote{If ${\bar \fP} (n)_{\bar 1} = 0$ and $n>1$ then ${\bar \fP} (n) = \fso ( \cV ) \oplus \fg$ is simple if and only if $\fg = 0$ and $n \neq 2,4$.}

\begin{prop} \label{prop:GSimple}
If $\fG$ is a simple Lie superalgebra with $\fG_{\bar 1} \neq 0$ then $\fG_{\bar 1}$ is a faithful representation of $\fG_{\bar 0}$ and $[ \fG_{\bar 1} , \fG_{\bar 1} ] = \fG_{\bar 0}$. 
\end{prop}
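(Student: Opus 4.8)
The plan is to exploit the defining property of a simple Lie superalgebra---that it admits no ideals other than $0$ and $\fG$---by exhibiting, for each of the two assertions, an explicit ideal and then invoking simplicity. Both ideals will be manifestly graded, so the only work is to check closure under bracketing, using the Jacobi identity \eqref{eq:LSAJacobiG} and the superskewsymmetry of the bracket.

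For faithfulness, I would let $\fk = \{ X \in \fG_{\bar 0} \mid [X, \fG_{\bar 1}] = 0 \}$ denote the kernel of the action of $\fG_{\bar 0}$ on $\fG_{\bar 1}$, and show that $\fk$ is an ideal of $\fG$. Since $\fk \subseteq \fG_{\bar 0}$, there are two bracket types to verify. Superskewsymmetry of the bracket immediately gives $[\fG_{\bar 1}, \fk] = 0 \subseteq \fk$. For the even part, given $Y \in \fG_{\bar 0}$, $X \in \fk$ and $a \in \fG_{\bar 1}$, the Jacobi identity yields $[[Y,X],a] = [Y,[X,a]] - [X,[Y,a]]$, where the first term vanishes because $[X,a] = 0$ and the second because $[Y,a] \in \fG_{\bar 1}$ and $X \in \fk$; hence $[Y,X] \in \fk$ and $[\fG_{\bar 0}, \fk] \subseteq \fk$. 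As $\fk \subseteq \fG_{\bar 0}$ while $\fG_{\bar 1} \neq 0$, we have $\fk \neq \fG$, so simplicity forces $\fk = 0$, which is exactly faithfulness of $\fG_{\bar 1}$ as a representation of $\fG_{\bar 0}$.

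For the equality $[\fG_{\bar 1}, \fG_{\bar 1}] = \fG_{\bar 0}$, I would set $\fh = [\fG_{\bar 1}, \fG_{\bar 1}] \subseteq \fG_{\bar 0}$ and consider the graded subspace $\fI = \fh \oplus \fG_{\bar 1}$. To see that $\fI$ is an ideal, note that $[\fG_{\bar 0}, \fG_{\bar 1}] \subseteq \fG_{\bar 1}$ and $[\fG_{\bar 1}, \fG_{\bar 1}] = \fh$ hold by construction, that $[\fG_{\bar 1}, \fh] \subseteq \fG_{\bar 1}$ since $\fh$ acts on $\fG_{\bar 1}$ through $\fG_{\bar 0}$, and that $[\fG_{\bar 0}, \fh] \subseteq \fh$ follows by differentiating a bracket $[a,b]$ with $a,b \in \fG_{\bar 1}$ via the Jacobi identity, so that $[Y,[a,b]] = [[Y,a],b] + [a,[Y,b]] \in [\fG_{\bar 1}, \fG_{\bar 1}] = \fh$. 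Because $\fG_{\bar 1} \neq 0$ forces $\fI \neq 0$, simplicity gives $\fI = \fG$, and comparing even parts yields $\fh = \fG_{\bar 0}$.

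I do not expect a genuine obstacle here: both statements reduce to formal consequences of the Jacobi identity and the no-proper-ideals property. The only point demanding care is to confirm that the two candidate subspaces are closed under bracketing with \emph{all} of $\fG$, not merely with homogeneous elements of one parity, which is where the sign conventions of \eqref{eq:LSAJacobiG} and the superskewsymmetry of the bracket must be tracked correctly; since each candidate is graded and the bracket respects the $2$-grading, this verification is routine.
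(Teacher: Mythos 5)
Your proof is correct. Both ideals you construct are genuinely graded ideals ($\fk$ because it sits inside $\fG_{\bar 0}$, and $\fI = [\fG_{\bar 1},\fG_{\bar 1}] \oplus \fG_{\bar 1}$ by construction), the closure checks are exactly the right ones, and the sign bookkeeping with \eqref{eq:LSAJacobiG} works out since in every application at least one of the first two arguments is even, so all the signs $(-1)^{|a||b|}$ you encounter equal $+1$. The paper does not prove this statement at all: it simply cites Proposition~1.2.7 of Kac's classification paper \cite{Kac:1977em}, where precisely these two facts are established. So your route is genuinely different only in the sense that it is self-contained where the paper defers to the literature; the argument you give is in fact the standard one underlying Kac's proposition. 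What your version buys is transparency --- the reader sees that both claims are formal consequences of simplicity plus the graded Jacobi identity, with no input from the classification --- at the cost of a few lines. The only point worth being explicit about (which you implicitly rely on) is that the definition of simplicity in use excludes the abelian case, so that a nonzero graded ideal containing $\fG_{\bar 1}$ must be all of $\fG$ and the kernel ideal must vanish; with $\fG_{\bar 1} \neq 0$ this causes no difficulty.
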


\begin{proof}
See Proposition 1.2.7 in \cite{Kac:1977em}.
\end{proof}

However, as we will now demonstrate, it is possible for an embedding superalgebra $\fG (n)$ to be simple under less restrictive conditions. Simple Lie superalgebras were classified (up to isomorphism) in \cite{Kac:1977em}. It follows from this classification that every simple Lie superalgebra $\fG$ has $\fG_{\bar 0}$ reductive if and only if $\fG$ is {\emph{classical}}, meaning that $\fG_{\bar 1}$ is a completely reducible representation of $\fG_{\bar 0}$. 

\begin{remark} \label{rem:ClassicalLSAZeroG1}
It follows that a Lie superalgebra $\fG$ with $\fG_{\bar 1} = 0$ is classical if and only if $\fG$ is a simple Lie algebra. In this case $\fG$ can obviously play the role of any $\fG (n)$ with $\fG (n)_{\bar 1} = 0$ which automatically defines a Poincar\'{e} superalgebra $\fP (n)$ with $\fP (n)_{\bar 1} = 0$. Having dealt with this trivial case, let us now assume that every classical Lie superalgebra $\fG$ has $\fG_{\bar 1} \neq 0$.
\end{remark}

Theorem 2 in \cite{Kac:1977em} implies that every classical Lie superalgebra $\fG$ is isomorphic to one of the entries in Table~\ref{tab:ClassicalLSA}. If a Lie superalgebra $\fG$ is 3-graded then $\fG_{\bar 1} = \fG_{\bar 1}^+ \oplus \fG_{\bar 1}^-$, in terms of two representations $\fG_{\bar 1}^\pm$ of $\fG_{\bar 0}$. If $\fG_{\bar 1}^\pm$ are both non-zero then the only candidates in Table~\ref{tab:ClassicalLSA} are of type ${\bf A}$, ${\bf C}$ and ${\bf P}$. In each case, $\fG_{\bar 1}^\pm$ are both irreducible representations of $\fG_{\bar 0}$. By inspecting the $[ {\bar 1} {\bar 1} ]$ brackets for these classical Lie superalgebras in \cite{Kac:1977em}, it follows that they are all 3-graded since $[ \fG_{\bar 1}^\pm , \fG_{\bar 1}^\pm ] =0$ in each case. If $\fG_{\bar 1}^+ = 0$ or $\fG_{\bar 1}^- = 0$ then $\fG$ being 3-graded implies $[ \fG_{\bar 1} , \fG_{\bar 1} ] = 0$. But Proposition~\ref{prop:GSimple} implies that every entry in Table~\ref{tab:ClassicalLSA} has $[ \fG_{\bar 1} , \fG_{\bar 1} ] = \fG_{\bar 0} \neq 0$.

\begin{table}
  \caption{Classical Lie superalgebras}
  \label{tab:ClassicalLSA}
  \hspace*{-1.1cm}
  \begin{tabular}{|c|c|c|c|}
  &&& \\ [-.5cm] 
   \hline 
    $\fG$ & $\fG_{\bar 0}$ & $\fG_{\bar 1}$ & restrictions \\ [.05cm]
    \hline\hline
    &&& \\ [-.35cm] 
    ${\bf A} ( N_+ -1 , N_- - 1 )$ & $\fsl ( W_+ ) \oplus \fsl ( W_- ) \oplus \fz$ & $W_+ \otimes W_-^* \otimes \zeta_{( N_+ - N_- )} \oplus W_+^* \otimes W_- \otimes \zeta_{( N_- - N_+ )}$ & $2 < N_+ > N_-$  \\ [.05cm] \hline 
    &&& \\ [-.35cm]
    ${\bf A} ( N-1 , N-1 )$ & $\fsl ( W_1 ) \oplus \fsl ( W_2 )$ & $W_1 \otimes W_2^* \oplus W_1^* \otimes W_2$ & $W_{1,2} = W \;\! ,~N>2$ \\ [.05cm] \hline
    &&& \\ [-.35cm]
    ${\bf A} ( 1 , 1 )$ & $\fsp ( \Delta_1 ) \oplus \fsp ( \Delta_2 )$ & $\Delta_1 \otimes \Delta_2 \oplus \Delta_1 \otimes \Delta_2$ & $\Delta_{1,2} = \Delta$ \\ [.05cm] \hline
    &&& \\ [-.35cm]
     $\fosp ( W_+ , W_- )$ & $\fso ( W_+ ) \oplus \fsp ( W_- )$ & $W_+ \otimes W_-$ & $N_+ \neq 2$ \\ [.05cm] \hline
     &&& \\ [-.35cm] 
     ${\bf C} ( \frac{N}{2} +1 )$ & $\fz \oplus \fsp ( W )$ & $\zeta_{(1)} \otimes W \oplus \zeta_{(-1)} \otimes W$ & $-$ \\ [.05cm] \hline
     &&& \\ [-.35cm] 
      ${\bf D} ( 2,1; \alpha )$ & $\fsp ( \Delta_1 ) \oplus \fsp ( \Delta_2 ) \oplus \fsp ( \Delta_3 )$ & $\Delta_1 \otimes \Delta_2 \otimes \Delta_3$ & $\Delta_{1,2,3} = \Delta \;\! ,$ \\ [.05cm]
      &&& \\ [-.35cm] 
      & & & $\alpha \in \CC \,\backslash \, \{ 0,-1 \}$ \\ [.05cm] \hline
      &&& \\ [-.35cm] 
      ${\bf F} ( 4 )$ & $\fso ( W ) \oplus \fsp ( \Delta )$ & $\cS^{(7)} \otimes \Delta$ & $N = 7$ \\ [.05cm] \hline
      &&& \\ [-.35cm] 
      ${\bf G} ( 3 )$ & $\fg_2 \oplus \fsp ( \Delta )$ & $W \otimes \Delta$ & $N = 7$ \\ [.05cm] \hline
      &&& \\ [-.35cm]
      ${\bf P} ( N -1 )$ & $\fsl ( W )$ & $S^2 W \oplus \Lambda^2 W^*$ & $N > 2$ \\ [.05cm] \hline
      &&& \\ [-.35cm] 
      ${\bf Q} ( N - 1 )$ & $\fsl ( W )$ & $\fsl (W)$ & $N > 2$ \\ [.05cm] \hline 
  \end{tabular}
  \vspace*{.2cm}
  \caption*{This table summarises the data for every classical Lie superalgebra $\fG$ with $\fG_{\bar 1} \neq 0$. The notation is such that $\dim\, W_\pm = N_\pm >0$, $\dim\, W = N >0$ and $\dim\, \Delta = 2$. In rows one and five, $\fz$ denotes a one-dimensional Lie algebra and $\zeta_{( \lambda )}$ denotes a one-dimensional irreducible representation of $\fz$ with weight $\lambda$. In row seven, $\cS^{(7)}$ denotes the spinor representation of $\fso (W)$. In row eight, the exceptional Lie algebra $\fg_2$ should be thought of as a Lie subalgebra of $\fso (W)$. }
\end{table}

\begin{prop} \label{prop:ClassicalLSABalanced3Graded}
The only balanced 3-graded classical Lie superalgebra is ${\bf A} ( 1 , 1 )$.
\end{prop}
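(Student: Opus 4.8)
The plan is to combine Table~\ref{tab:ClassicalLSA} with the structural facts already established for 3-graded classical Lie superalgebras and then to test the balanced condition on the one surviving candidate. Recall that, as set up in Section~\ref{sec:CharacterisationEmbeddingLSAn3}, a 3-graded $\fG$ can only be called balanced once $\fG_{\bar 1}^+ = \fG_{\bar 1}^-$ as representations of $\fG_{\bar 0}$; in particular both summands must be non-zero, since we are assuming $\fG_{\bar 1} \neq 0$ throughout (Remark~\ref{rem:ClassicalLSAZeroG1}). By the discussion preceding the proposition, a 3-graded classical Lie superalgebra with $\fG_{\bar 1}^\pm$ both non-zero must be of type ${\bf A}$, ${\bf C}$ or ${\bf P}$, with $\fG_{\bar 1}^\pm$ the two irreducible summands in the table. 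So the strategy is first to rule out every such entry except ${\bf A}(1,1)$ by exhibiting a $\fG_{\bar 0}$-invariant that distinguishes $\fG_{\bar 1}^+$ from $\fG_{\bar 1}^-$, and then to check that ${\bf A}(1,1)$ really is balanced.

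For the elimination I would inspect the relevant rows. For ${\bf A}(N_+ -1 , N_- -1)$ with $N_+ > N_-$ and for ${\bf C}(\tfrac{N}{2}+1)$ the two summands carry opposite non-zero weights under the central factor $\fz$ (namely $N_+ - N_-$ against $N_- - N_+$, respectively $+1$ against $-1$), so they cannot be isomorphic. For ${\bf A}(N-1,N-1)$ with $N>2$ the summands are $W_1 \otimes W_2^*$ and $W_1^* \otimes W_2$; restricting the action to the first factor $\fsl(W_1)$ yields a direct sum of copies of the fundamental $W_1$ in one case and of its dual $W_1^*$ in the other, and these are inequivalent once $N>2$. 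For ${\bf P}(N-1)$ with $N>2$ the summands $S^2 W$ and $\Lambda^2 W^*$ already differ in dimension, $\tfrac{N(N+1)}{2} \neq \tfrac{N(N-1)}{2}$. In each case the precondition $\fG_{\bar 1}^+ = \fG_{\bar 1}^-$ fails, leaving only ${\bf A}(1,1)$, for which $\fG_{\bar 1}^\bullet = \Delta_1 \otimes \Delta_2$ and manifestly $\fG_{\bar 1}^+ = \fG_{\bar 1}^-$.

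It remains to verify that ${\bf A}(1,1)$ is balanced, i.e. that $\mu(v,w) := [v_+,w_-]$ is antisymmetric in $v,w \in \Delta_1 \otimes \Delta_2$. I would deduce this from $\fG_{\bar 0}$-equivariance of the bracket together with a multiplicity count. Writing $\omega_i$ for the symplectic form on $\Delta_i$, so that $\fsp(\Delta_i) \cong S^2\Delta_i$, and using $\Lambda^2\Delta_i \cong \CC$ (as $\dim\Delta_i = 2$), the decomposition
\begin{equation*}
(\Delta_1 \otimes \Delta_2)^{\otimes 2} \cong (S^2\Delta_1 \otimes S^2\Delta_2) \oplus S^2\Delta_1 \oplus S^2\Delta_2 \oplus \CC
\end{equation*}
shows that each target summand $\fsp(\Delta_i) \cong S^2\Delta_i$ occurs with multiplicity one. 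Hence the $\fsp(\Delta_1)$-component of $\mu$ is, up to scale, the unique equivariant map: the symmetrisation $\Delta_1^{\otimes 2} \to S^2\Delta_1$ combined with the $\omega_2$-contraction $\Delta_2^{\otimes 2} \to \CC$. On decomposables $v = x_1 \otimes y_1$ and $w = x_2 \otimes y_2$ it is therefore proportional to $(x_1 \cdot x_2)\,\omega_2(y_1,y_2)$, where $x_1 \cdot x_2 \in S^2\Delta_1$ is the symmetric product; this is symmetric in $x_1,x_2$ but antisymmetric in $y_1,y_2$, hence antisymmetric under the simultaneous exchange $v \leftrightarrow w$. The same holds for the $\fsp(\Delta_2)$-component, so $\mu(v,w) = -\mu(w,v)$ and ${\bf A}(1,1)$ is balanced. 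Equivalently, one may realise ${\bf A}(1,1)$ as $\mathfrak{psl}(2|2)$ and compute the anticommutator of off-diagonal odd matrices directly, whereupon the balanced condition collapses to the two-dimensional identity $x_1 \otimes x_2^\flat - x_2 \otimes x_1^\flat = - \omega_1(x_1,x_2)\,\mathbb{1}$.

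The main obstacle is this last step. Eliminating the other candidates is routine once the representation content is read off from Table~\ref{tab:ClassicalLSA}, whereas the antisymmetry of $\mu$ is a genuine consequence of the low-rank coincidences $\fsl(2) = \fsp(2)$ and $\Lambda^2\Delta_i \cong \CC$, and it is precisely here that one must invoke multiplicity one (or carry out the explicit matrix computation) rather than merely the precondition $\fG_{\bar 1}^+ = \fG_{\bar 1}^-$.
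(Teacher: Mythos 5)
Your proof is correct, and the elimination step (central weights for types ${\bf A}(N_+-1,N_--1)$ and ${\bf C}$, fundamental versus dual for ${\bf A}(N-1,N-1)$, a dimension count for ${\bf P}$) is exactly what the paper asserts in one sentence when it says that in all cases other than ${\bf A}(1,1)$ the summands $\fG_{\bar 1}^\pm$ are non-isomorphic as $\fG_{\bar 0}$-representations; you have simply made that claim explicit. Where you genuinely diverge is in verifying that ${\bf A}(1,1)$ is balanced. The paper writes down the $[\bar 1\bar 1]$ bracket concretely as $[(\psi_1\otimes\chi_1)_+,(\psi_2\otimes\chi_2)_-]=\varepsilon(\chi_1,\chi_2)(\psi_1\otimes\psi_2^\flat+\psi_2\otimes\psi_1^\flat)_+-\varepsilon(\psi_1,\psi_2)(\chi_1\otimes\chi_2^\flat+\chi_2\otimes\chi_1^\flat)_-$ and checks the sign change under the swap by inspection, whereas you argue abstractly: since $(\Delta_1\otimes\Delta_2)^{\otimes 2}\cong S^2\Delta_1\otimes S^2\Delta_2\oplus S^2\Delta_1\oplus S^2\Delta_2\oplus\CC$, each copy of $\fsp(\Delta_i)\cong S^2\Delta_i$ occurs with multiplicity one, so the equivariant map $\mu(v,w)=[v_+,w_-]$ is forced (up to scale in each component) to be symmetrisation in one factor against the $\varepsilon$-contraction in the other, hence antisymmetric under $v\leftrightarrow w$. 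Your argument buys independence from the explicit normalisation of the bracket -- equivariance plus multiplicity one suffices, and it would survive any rescaling of the two components -- while the paper's computation has the advantage that the explicit formula is needed anyway (it is reused verbatim in Example~\ref{ex:SimpleFTS} to identify the simple Filippov triple system with ${\bf A}(1,1)$), so nothing extra is gained by avoiding it there. Both are complete; your multiplicity-one observation correctly isolates the low-rank coincidence $\Lambda^2\Delta\cong\CC$ as the reason the balanced condition holds.
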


\begin{proof}
Of the 3-graded classical Lie superalgebras in Table~\ref{tab:ClassicalLSA}, only ${\bf A} ( 1 , 1 )$ has $\fG_{\bar 1}^+ = \fG_{\bar 1}^-$. In all the other cases, $\fG_{\bar 1}^+$ and $\fG_{\bar 1}^-$ are not isomorphic to each other as representations of $\fG_{\bar 0}$.

Now let $\fG = {\bf A} ( 1 , 1 )$. For any $X \in \fsp ( \Delta )$, let $X_+ = ( X , 0 )$ and $X_- = ( 0 , X )$ as elements of $\fG_{\bar 0}$. For any $\psi , \chi \in \Delta$, let $( \psi \otimes \chi )_+ = ( \psi \otimes \chi , 0 )$ and $( \psi \otimes \chi )_- = ( 0 , \psi \otimes \chi )$ as elements of $\fG_{\bar 1}$. Finally, let $\varepsilon$ denote a nondegenerate skewsymmetric $\fsp ( \Delta )$-invariant bilinear form on $\Delta$.

In terms of this notation, the brackets for $\fG$ that are not identically zero are given by
\begin{align}
[ X_\pm , Y_\pm ] &= [ X , Y ]_\pm~, \label{eq:A1100} \\
[ X_+ , ( \psi \otimes \chi )_\pm ] &= ( X \psi \otimes \chi )_\pm~, \label{eq:A1101+} \\
[ X_- , ( \psi \otimes \chi )_\pm ] &= ( \psi \otimes X \chi )_\pm~, \label{eq:A1101-} \\
[ ( \psi_1 \otimes \chi_1 )_+ , ( \psi_2 \otimes \chi_2 )_- ] &= \varepsilon ( \chi_1 , \chi_2 ) ( \psi_1 \otimes \psi_2^\flat + \psi_2 \otimes \psi_1^\flat )_+ - \varepsilon ( \psi_1 , \psi_2 ) ( \chi_1 \otimes \chi_2^\flat + \chi_2 \otimes \chi_1^\flat )_-~, \label{eq:A1111}
\end{align}
for all $X,Y \in \fsp ( \Delta )$ and $\psi , \chi , \psi_1 , \chi_1 , \psi_2 , \chi_2 \in \Delta$.

The condition \eqref{eq:BalancedSkew} is therefore satisfied since 
\begin{align} \label{eq:BalancedSkewA11}
[ ( \psi_1 \otimes \chi_1 )_+ , ( \psi_2 \otimes \chi_2 )_- ] &= \varepsilon ( \chi_1 , \chi_2 ) ( \psi_1 \otimes \psi_2^\flat + \psi_2 \otimes \psi_1^\flat )_+ - \varepsilon ( \psi_1 , \psi_2 ) ( \chi_1 \otimes \chi_2^\flat + \chi_2 \otimes \chi_1^\flat )_- \nonumber \\
&= - \varepsilon ( \chi_2 , \chi_1 ) ( \psi_2 \otimes \psi_1^\flat + \psi_1 \otimes \psi_2^\flat )_+ + \varepsilon ( \psi_2 , \psi_1 ) ( \chi_2 \otimes \chi_1^\flat + \chi_1 \otimes \chi_2^\flat )_- \nonumber \\
&= - [ ( \psi_2 \otimes \chi_2 )_+ , ( \psi_1 \otimes \chi_1 )_- ]~,
\end{align}
for all $\psi_1 , \chi_1 , \psi_2 , \chi_2 \in \Delta$.
\end{proof}

\begin{cor} \label{cor:ClassicalGn}
Any classical Lie superalgebra of type ${\bf A}$, ${\bf C}$ or ${\bf P}$ can play the role of $\fG (2)$ but only ${\bf A} ( 1 , 1 )$ can play the role of $\fG (3)$. 
\end{cor}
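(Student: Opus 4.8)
The plan is to deduce both assertions directly from the structural characterisations of embedding superalgebras in Sections~\ref{sec:CharacterisationEmbeddingLSAn2} and~\ref{sec:CharacterisationEmbeddingLSAn3}, together with the inventory of 3-graded classical Lie superalgebras assembled just before Table~\ref{tab:ClassicalLSA}. The key observation is that ``playing the role of $\fG(2)$'' is nothing more than the condition of being 3-graded, while ``playing the role of $\fG(3)$'' is the condition of being balanced 3-graded; neither involves the bilinear forms $b$ or $c$, which enter only in the stronger notion of $n$-admissibility. So the corollary reduces to deciding which entries of Table~\ref{tab:ClassicalLSA} carry each of these two grading structures, with no further bracket computation required.

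For the statement about $\fG(2)$, I would show that a classical Lie superalgebra $\fG$ is 3-graded precisely when it is of type ${\bf A}$, ${\bf C}$ or ${\bf P}$, which is exactly the content of the discussion preceding Table~\ref{tab:ClassicalLSA}. If $\fG$ is 3-graded with $\fG_{\bar 1} = \fG_{\bar 1}^+ \oplus \fG_{\bar 1}^-$ and one of the two summands vanishes, then $[\fG_{\bar 1},\fG_{\bar 1}] = 0$, contradicting $[\fG_{\bar 1},\fG_{\bar 1}] = \fG_{\bar 0} \neq 0$ from Proposition~\ref{prop:GSimple}; hence both $\fG_{\bar 1}^\pm$ must be non-zero, and scanning the table shows that this occurs only for types ${\bf A}$, ${\bf C}$ and ${\bf P}$. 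Conversely, each of these is genuinely 3-graded because the corresponding $[{\bar 1}{\bar 1}]$ brackets in \cite{Kac:1977em} satisfy $[\fG_{\bar 1}^\pm,\fG_{\bar 1}^\pm] = 0$. This settles that exactly these types can play the role of $\fG(2)$.

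For the statement about $\fG(3)$, I would observe that being balanced 3-graded is strictly stronger than being 3-graded, so the candidates are already restricted to types ${\bf A}$, ${\bf C}$ and ${\bf P}$; among these, Proposition~\ref{prop:ClassicalLSABalanced3Graded} identifies ${\bf A}(1,1)$ as the only balanced one. The reason this singles out ${\bf A}(1,1)$ is that the balanced condition presupposes $\fG_{\bar 1}^+ \cong \fG_{\bar 1}^-$ as $\fG_{\bar 0}$-representations (so that the identification $\fG_{\bar 1}^\bullet = V$ is meaningful), and ${\bf A}(1,1)$ is the unique entry of the table for which this isomorphism holds; the skew-symmetry \eqref{eq:BalancedSkew} is then verified explicitly in \eqref{eq:BalancedSkewA11}.

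In short, the genuine work sits in Proposition~\ref{prop:ClassicalLSABalanced3Graded} and the pre-table inventory, and the corollary is essentially bookkeeping. The one point I would take care over is the appeal to Proposition~\ref{prop:GSimple}: without simplicity, and the resulting identity $[\fG_{\bar 1},\fG_{\bar 1}] = \fG_{\bar 0}$, a classical Lie superalgebra could in principle be declared trivially 3-graded with $\fG_{\bar 1}^- = 0$, so one must confirm that no entry of Table~\ref{tab:ClassicalLSA} admits such a degenerate grading. This is the only place where the argument uses more than a glance at the table.
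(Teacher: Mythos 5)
Your proposal is correct and follows essentially the same route as the paper: the corollary is read off from the identification of ``playing the role of $\fG(2)$'' with 3-gradedness and of ``playing the role of $\fG(3)$'' with being balanced 3-graded (Sections~\ref{sec:CharacterisationEmbeddingLSAn2} and~\ref{sec:CharacterisationEmbeddingLSAn3}), combined with the inventory of 3-graded entries of Table~\ref{tab:ClassicalLSA} (where Proposition~\ref{prop:GSimple} rules out the degenerate grading with $\fG_{\bar 1}^\pm = 0$) and Proposition~\ref{prop:ClassicalLSABalanced3Graded}. Your closing caveat about the degenerate grading is precisely the point the paper also makes, so nothing is missing.
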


Proposition~\ref{prop:GSimple} implies that every classical Lie superalgebra $\fG$ has $[ [ \fG_{\bar 1} , \fG_{\bar 1} ] , \fG_{\bar 1} ] \neq 0$. But Theorem~\ref{thm:111Jacobi} implies that every $n$-admissible embedding superalgebra $\fG (n)$ must have $[ [ \fG (n)_{\bar 1} , \fG (n)_{\bar 1} ] , \fG (n)_{\bar 1} ] = 0$ if $n>3$. Therefore no classical Lie superalgebra $\fG$ can play the role of $\fG (n)$ if $n>3$.

The following theorem will allow us to determine which classical Lie superalgebras are $n$-admissible. 

\begin{thm} \label{thm:ClassicalLSA1Admissible}
If $\fG$ is a classical Lie superalgebra and $\fG_{\bar 1}$ admits a nondegenerate $\fG_{\bar 0}$-invariant symmetric bilinear form $b$ then either
\begin{itemize}
\item $\fG = {\bf A} ( N_+ -1 , N_- - 1 )$ with
\begin{equation}\label{eq:AN+N-BilinearForm}
b ( w^1_+ \otimes \alpha^1_- + \alpha^1_+ \otimes w^1_- , w^2_+ \otimes \alpha^2_- + \alpha^2_+ \otimes w^2_- ) = \lambda ( \alpha^1_+ ( w^2_+ ) \alpha^2_- ( w^1_- ) + \alpha^2_+ ( w^1_+ ) \alpha^1_- ( w^2_- ) )~,
\end{equation}
for all $w^1_+ \otimes \alpha^1_- + \alpha^1_+ \otimes w^1_- , w^2_+ \otimes \alpha^2_- + \alpha^2_+ \otimes w^2_- \in \fG_{\bar 1}$, in terms of some non-zero $\lambda \in \CC$. \\

\item $\fG = {\bf A} ( N -1 , N - 1 )$ with
\begin{equation}\label{eq:ANNBilinearForm}
b ( w^1 \otimes \alpha^1 + \beta^1 \otimes u^1 , w^2 \otimes \alpha^2 + \beta^2 \otimes u^2 ) = \lambda ( \alpha^1 ( u^2 ) \beta^2 ( w^1 ) +  \alpha^2 ( u^1 ) \beta^1 ( w^2 ) )~,
\end{equation}
for all $w^1 \otimes \alpha^1 + \beta^1 \otimes u^1 , w^2 \otimes \alpha^2 + \beta^2 \otimes u^2 \in \fG_{\bar 1}$, in terms of some non-zero $\lambda \in \CC$. \\

\item $\fG = {\bf A} ( 1 , 1 )$ with
\begin{align}\label{eq:A11BilinearForm}
&b ( ( \psi_1 \otimes \chi_1 )_+ + ( \phi_1 \otimes \upsilon_1 )_- , ( \psi_2 \otimes \chi_2 )_+ + ( \phi_2 \otimes \upsilon_2 )_- ) \nonumber \\
&= \lambda_+ \varepsilon ( \psi_1 , \psi_2 ) \varepsilon ( \chi_1 , \chi_2 ) +  \lambda_0 ( \varepsilon ( \psi_1 , \phi_2 ) \varepsilon ( \chi_1 , \upsilon_2 ) + \varepsilon ( \psi_2 , \phi_1 ) \varepsilon ( \chi_2 , \upsilon_1 ) ) + \lambda_- \varepsilon ( \phi_1 , \phi_2 ) \varepsilon ( \upsilon_1 , \upsilon_2 )~,
\end{align}
for all $( \psi_1 \otimes \chi_1 )_+ + ( \phi_1 \otimes \upsilon_1 )_- , ( \psi_2 \otimes \chi_2 )_+ + ( \phi_2 \otimes \upsilon_2 )_- \in \fG_{\bar 1}$, in terms of some $\lambda_+ , \lambda_0 , \lambda_-  \in \CC$ with $\lambda_+ \lambda_- \neq \lambda_0^2$, where $\varepsilon$ is a nondegenerate skewsymmetric $\fsp ( \Delta )$-invariant bilinear form on $\Delta$. \\

\item $\fG = {\bf C} ( \frac{N}{2} +1 )$ with
\begin{equation}\label{eq:CNBilinearForm}
b ( w^1_+ + u^1_- , w^2_+ + u^2_- ) =  \lambda ( \omega ( w^1 , u^2 ) + \omega ( w^2 , u^1 ) )~,
\end{equation}
for all $w^1_+ + u^1_- , w^2_+ + u^2_- \in \fG_{\bar 1}$, in terms of some non-zero $\lambda \in \CC$, where $\omega$ is a nondegenerate skewsymmetric $\fsp ( W )$-invariant bilinear form on $W$. \\

\item $\fG = {\bf Q} ( N-1 )$ with
\begin{equation}\label{eq:QNBilinearForm}
b ( X,Y ) =  \lambda \tr (XY)~,
\end{equation}
for all $X,Y \in \fG_{\bar 1}$, in terms of some non-zero $\lambda \in \CC$, where $\tr$ denotes the trace in $\End W$.
\end{itemize}
\end{thm}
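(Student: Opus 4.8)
The plan is to reduce everything to Proposition~\ref{prop:gVbExistence} and then work through Table~\ref{tab:ClassicalLSA} one entry at a time. Since $\fG$ is classical, $\fG_{\bar 0}$ is reductive, say $\fG_{\bar 0} = \fs \oplus \fz$ with $\fs$ semisimple and $\fz$ central, and $\fG_{\bar 1}$ is a completely reducible $\fG_{\bar 0}$-module, $\fG_{\bar 1} = \bigoplus_i U_i$ with each $U_i$ irreducible. First I would record the general principle: a $\fG_{\bar 0}$-invariant bilinear form $b$ on $\fG_{\bar 1}$ restricts to invariant pairings $U_i \times U_j \rightarrow \CC$, and $\fz$-invariance forces the central weights of $U_i$ and $U_j$ to be opposite while $\fs$-invariance, via Proposition~\ref{prop:gVbExistence}, forces $U_j \cong U_i^*$; in that case the pairing is unique up to scale. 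Consequently $b$ is nondegenerate only if the summands $U_i$ can be matched into mutually dual pairs together with self-dual summands, and the problem becomes purely combinatorial once the self-duality type and the symmetry (symmetric versus skewsymmetric) of each summand is known.

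The symmetry bookkeeping is governed by one elementary rule: if a self-dual irreducible $U$ carries an invariant form of sign $\epsilon_U \in \{ +1 , -1 \}$ (symmetric or skewsymmetric, as dictated by Proposition~\ref{prop:gVbExistence}), then a tensor product $U_1 \otimes U_2 \otimes \cdots$ of self-dual factors is self-dual with sign $\prod_k \epsilon_{U_k}$. I would fix the signs of the building blocks: the defining module of $\fso(\cdot)$ is symmetric ($+$), the defining module of $\fsp(\cdot)$ is skewsymmetric ($-$), the adjoint module of a simple Lie algebra is symmetric ($+$) via the Killing form, the $7$-dimensional $\fg_2$-module is symmetric ($+$), and the spinor module $\cS^{(7)}$ is symmetric ($+$) by Table~\ref{tab:SigmaTauSigns} (where $\sigma = +$ for $n = 7$). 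Then for a single self-dual summand the invariant form is symmetric if and only if its sign is $+$; for a dual pair $U \oplus U^*$ with $U \not\cong U^*$ the symmetric invariant forms are one-dimensional (the hyperbolic pairing), always nondegenerate; and for $m$ isomorphic copies of a self-dual symmetric irreducible the symmetric invariant forms are parametrised by symmetric $m \times m$ matrices, nondegenerate exactly when the matrix is invertible.

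Armed with this, I would sweep Table~\ref{tab:ClassicalLSA}. The four cases $\fosp ( W_+ , W_- )$, ${\bf D} ( 2,1;\alpha )$, ${\bf F} ( 4 )$ and ${\bf G} ( 3 )$ all have $\fG_{\bar 1}$ an irreducible self-dual tensor product containing an odd number of symplectic (sign $-$) factors, namely $W_+ \otimes W_-$, $\Delta_1 \otimes \Delta_2 \otimes \Delta_3$, $\cS^{(7)} \otimes \Delta$ and $W \otimes \Delta$, so their unique invariant form is skewsymmetric and they are excluded. For ${\bf P} ( N-1 )$ with $N > 2$ the summand $S^2 W$ has dual $( S^2 W )^* = S^2 W^*$, which is isomorphic to neither $S^2 W$ nor $\Lambda^2 W^*$, so it has no dual partner among the summands and no nondegenerate invariant form exists, excluding this case. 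For ${\bf Q} ( N-1 )$ the odd part is the adjoint module, self-dual and symmetric, with invariant form proportional to $\tr ( XY )$, giving \eqref{eq:QNBilinearForm}. The cases ${\bf A} ( N_+ -1 , N_- -1 )$, ${\bf A} ( N-1 , N-1 )$ and ${\bf C} ( \frac{N}{2} +1 )$ each have $\fG_{\bar 1} = U \oplus U^*$ with $U \not\cong U^*$ (a nonzero central weight, respectively the failure of $W \cong W^*$ for $N > 2$, obstructing self-duality), so they admit a unique symmetric hyperbolic form, yielding \eqref{eq:AN+N-BilinearForm}, \eqref{eq:ANNBilinearForm} and \eqref{eq:CNBilinearForm} after writing out the pairing; the skew factor $\omega$ in \eqref{eq:CNBilinearForm} enters because the dual partner is realised on $W$ via the symplectic form. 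Finally, ${\bf A} ( 1,1 )$ is special: $\fG_{\bar 1}$ consists of two copies of the irreducible self-dual symmetric module $\Delta_1 \otimes \Delta_2$ (of sign $(-)(-) = +$), so the symmetric invariant forms are the three-parameter family \eqref{eq:A11BilinearForm} with nondegeneracy condition $\lambda_+ \lambda_- \neq \lambda_0^2$.

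The main obstacle is the representation-theoretic input for the type-${\bf A}$, ${\bf C}$ and ${\bf P}$ entries: deciding precisely when the $\fsl ( W )$-modules $W \otimes W^*$, $S^2 W$, $\Lambda^2 W$ and their tensor and dual variants are self-dual, and when self-dual whether the invariant form is symmetric or skewsymmetric. These facts, such as $S^2 W \not\cong S^2 W^*$ for $N > 2$ and the precise sign of the form on each self-dual summand, are exactly the technical statements collected in Appendix~\ref{sec:InvariantBilinearFormsOnRepresentationsOFSLW}, which I would invoke rather than reprove inline. The remaining subtlety is the multiplicity-two bookkeeping for ${\bf A} ( 1,1 )$, where I must verify that the cross-pairings built from $\varepsilon \otimes \varepsilon$ exhaust the symmetric invariant forms and that invertibility of the corresponding $2 \times 2$ Gram matrix is precisely $\lambda_+ \lambda_- - \lambda_0^2 \neq 0$.
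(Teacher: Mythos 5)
Your proposal is correct and follows essentially the same route as the paper: a case-by-case sweep of Table~\ref{tab:ClassicalLSA} using Proposition~\ref{prop:gVbExistence} (Schur-type uniqueness of invariant pairings on irreducibles), the multiplicativity of the symmetric/skewsymmetric sign under tensor products of self-dual factors, and the technical facts about $\fsl (W)$-modules from Appendix~\ref{sec:InvariantBilinearFormsOnRepresentationsOFSLW}. Your packaging is marginally cleaner in places --- treating the odd parts of types ${\bf A}$ and ${\bf C}$ directly as dual pairs $U \oplus U^*$ of irreducibles of the full even part lets you bypass the paper's sub-case analysis for $N_- \in \{ 1,2 \}$ via the auxiliary map $\mu$, while your $2 \times 2$ Gram-matrix argument for ${\bf A} (1,1)$ is the same computation the paper carries out by hand.
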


\begin{proof}
If $b$ is a nondegenerate $\fsl ( W_+ ) \oplus \fsl ( W_- ) \oplus \fz$-invariant bilinear form on $W_+ \otimes W_-^* \otimes \zeta_{( N_+ - N_- )} \oplus W_+^* \otimes W_- \otimes \zeta_{( N_- - N_+ )}$ then, for any $w_\pm , u_\pm \in W_\pm$ and $\alpha_\pm , \beta_\pm \in W_\pm^*$, $b ( - \otimes \alpha_-  + - \otimes w_-  , - \otimes \beta_- + - \otimes u_- )$ defines an $\fsl ( W_+ )$-invariant bilinear form on $W_+ \oplus W_+^*$ and $b ( w_+ \otimes -  + \alpha_+ \otimes - , u_+ \otimes - + \beta_+ \otimes - )$ defines an $\fsl ( W_- )$-invariant bilinear form on $W_-^* \oplus W_-$. Since $N_+ >2$, any $\fsl ( W_+ )$-invariant bilinear form on $W_+$ or $W_+^*$ is zero and any $\fsl ( W_+ )$-invariant bilinear map $W_+ \times W_+^* \rightarrow \CC$ is a scalar multiple of the dual pairing between $W_+$ and $W_+^*$ (see Appendix~\ref{sec:WPlusW*}). If $b$ is symmetric then this means
\begin{equation}\label{eq:AN+N-BilinearFormHalf}
b ( w^1_+ \otimes \alpha^1_- + \alpha^1_+ \otimes w^1_- , w^2_+ \otimes \alpha^2_- + \alpha^2_+ \otimes w^2_- ) =  \alpha^1_+ ( w^2_+ ) \mu ( w^1_- , \alpha^2_- ) + \alpha^2_+ ( w^1_+ ) \mu ( w^2_- , \alpha^1_- )~,
\end{equation}
for all $w^1_+ \otimes \alpha^1_- + \alpha^1_+ \otimes w^1_- , w^2_+ \otimes \alpha^2_- + \alpha^2_+ \otimes w^2_- \in W_+ \otimes W_-^* \otimes \zeta_{( N_+ - N_- )} \oplus W_+^* \otimes W_- \otimes \zeta_{( N_- - N_+ )}$, in terms of some $\fsl ( W_- )$-invariant bilinear map $\mu : W_- \times W_-^* \rightarrow \CC$. Any such $\mu$ must be a non-zero scalar multiple of the dual pairing between $W_-$ and $W_-^*$. For $N_- >2$, this is proven in Appendix~\ref{sec:WPlusW*}. For $N_- = 2$, any non-zero skewsymmetric bilinear form $\omega$ on $W_-$ is $\fsl ( W_- )$-invariant. Proposition~\ref{prop:gVbExistence} therefore implies $W_- \cong W_-^*$ as representations of $\fsl ( W_- )$. Furthermore, $\mu ( - , -^\flat )$ defines an $\fsl ( W_- )$-invariant bilinear form on $W_-$ which must be a non-zero scalar multiple of $\omega$, i.e.
\begin{equation}\label{eq:AN+N-BilinearFormN-2}
\mu ( w_- , u_-^\flat ) = \lambda \omega ( w_- , u_- )~,
\end{equation}
for all $w_- , u_- \in W_-$, in terms of some non-zero $\lambda \in \CC$. But \eqref{eq:AN+N-BilinearFormN-2} is equivalent to 
\begin{equation}\label{eq:AN+N-BilinearFormN-2Alpha}
\mu ( w_- , \alpha_- ) = \lambda \omega ( w_- , \alpha_-^\sharp ) = - \lambda \omega ( \alpha_-^\sharp , w_- ) = - \lambda \alpha_- ( w_- )~,
\end{equation}
for all $w_- \in W_-$ and $\alpha_- \in W_-^*$, using the canonical isomorphisms in \eqref{eq:FlatSharp}. Thus $\mu$ is a non-zero scalar multiple of the dual pairing between $W_-$ and $W_-^*$ if $N_- = 2$ as well. For $N_- =1$, this result is trivial. One then recovers \eqref{eq:AN+N-BilinearForm} by substituting this back into \eqref{eq:AN+N-BilinearFormHalf}. It follows that \eqref{eq:AN+N-BilinearForm} is automatically $\fz$-invariant since $b$ only pairs the factors in $\zeta_{( N_+ - N_- )}$ and $\zeta_{( N_- - N_+ )}$ with each other and not with themselves. It is clear by inspection that \eqref{eq:AN+N-BilinearForm} is nondegenerate. This rules in ${\bf A} ( N_+ -1 , N_- - 1 )$ with $b$ as in \eqref{eq:AN+N-BilinearForm}. The same logic with $W_\pm = W$ and $N_\pm = N >2$ also rules in ${\bf A} ( N -1 , N - 1 )$ with $b$ as in \eqref{eq:ANNBilinearForm}.

If $b$ is a nondegenerate $\fsp ( \Delta_1 ) \oplus \fsp ( \Delta_2 )$-invariant bilinear form on $\Delta_1 \otimes \Delta_2 \oplus \Delta_1 \otimes \Delta_2$ then, for any $\psi , \chi \in \Delta$, $b ( ( - \otimes \psi )_\pm , ( - \otimes \chi )_\pm )$, $b ( ( - \otimes \psi )_\pm , ( - \otimes \chi )_\mp )$,  $b ( ( \psi \otimes - )_\pm , ( \chi \otimes - )_\pm )$ and $b ( ( \psi \otimes - )_\pm , ( \chi \otimes - )_\mp )$ define $\fsp ( \Delta )$-invariant bilinear forms on $\Delta$. Proposition~\ref{prop:gVbExistence} implies that (up to scaling) $\Delta$ admits a unique non-zero $\fsp ( \Delta )$-invariant bilinear form $\varepsilon$ that is nondegenerate and skewsymmetric. Therefore
\begin{align}\label{eq:A11BilinearFormHalf}
&b ( ( \psi_1 \otimes \chi_1 )_+ + ( \phi_1 \otimes \upsilon_1 )_- , ( \psi_2 \otimes \chi_2 )_+ + ( \phi_2 \otimes \upsilon_2 )_- ) \nonumber \\
&= \lambda_+ \varepsilon ( \psi_1 , \psi_2 ) \varepsilon ( \chi_1 , \chi_2 ) +  \lambda_0 \varepsilon ( \psi_1 , \phi_2 ) \varepsilon ( \chi_1 , \upsilon_2 ) + \lambda_0^\prime \varepsilon ( \phi_1 , \psi_2 ) \varepsilon ( \upsilon_1 , \chi_2 )  + \lambda_- \varepsilon ( \phi_1 , \phi_2 ) \varepsilon ( \upsilon_1 , \upsilon_2 )~,
\end{align}
for all $( \psi_1 \otimes \chi_1 )_+ + ( \phi_1 \otimes \upsilon_1 )_- , ( \psi_2 \otimes \chi_2 )_+ + ( \phi_2 \otimes \upsilon_2 )_- \in \Delta_1 \otimes \Delta_2 \oplus \Delta_1 \otimes \Delta_2$, in terms of some $\lambda_+ , \lambda_0 , \lambda_0^\prime , \lambda_- \in \CC$. If $b$ is symmetric then $\lambda_0 = \lambda_0^\prime$ and \eqref{eq:A11BilinearFormHalf} becomes \eqref{eq:A11BilinearForm}. If \eqref{eq:A11BilinearForm} is zero for all $( \psi_2 \otimes \chi_2 )_+ + ( \phi_2 \otimes \upsilon_2 )_- \in \Delta_1 \otimes \Delta_2 \oplus \Delta_1 \otimes \Delta_2$ then nondegeneracy of $\varepsilon$ implies
\begin{equation}\label{eq:A11BilinearFormNondegenerate1}
\lambda_+ \psi_1 \otimes \chi_1 + \lambda_0 \phi_1 \otimes \upsilon_1 = 0
\end{equation}
and 
\begin{equation}\label{eq:A11BilinearFormNondegenerate2}
\lambda_0 \psi_1 \otimes \chi_1 + \lambda_- \phi_1 \otimes \upsilon_1 = 0~.
\end{equation}
Subtracting $\lambda_0$ times \eqref{eq:A11BilinearFormNondegenerate2} from $\lambda_-$ times \eqref{eq:A11BilinearFormNondegenerate1} and $\lambda_0$ times \eqref{eq:A11BilinearFormNondegenerate1} from $\lambda_+$ times \eqref{eq:A11BilinearFormNondegenerate2} implies $( \psi_1 \otimes \chi_1 )_+ + ( \phi_1 \otimes \upsilon_1 )_- = 0$ unless $\lambda_+ \lambda_- = \lambda_0^2$. If $\lambda_+ \lambda_- = \lambda_0^2$ and $\lambda_0 = 0$ then either $\psi_1 \otimes \chi_1$ or $\phi_1 \otimes \upsilon_1$ is zero (but not necessarily both). If $\lambda_+ \lambda_- = \lambda_0^2$ and $\lambda_0 \neq 0$ then $\psi_1 \otimes \chi_1$ and $\phi_1 \otimes \upsilon_1$ are collinear. Therefore \eqref{eq:A11BilinearForm} is nondegenerate if and only if $\lambda_+ \lambda_- \neq \lambda_0^2$. This rules in ${\bf A} ( 1 , 1 )$ with $b$ as in \eqref{eq:A11BilinearForm}. 

If $b$ is a nondegenerate $\fso ( W_+ ) \oplus \fsp ( W_- )$-invariant bilinear form on $W_+ \otimes W_-$ then, for any $w_\pm , u_\pm \in W_\pm$, $b ( - \otimes w_- , - \otimes u_- )$ defines an $\fso ( W_+ )$-invariant bilinear form on $W_+$ and $b ( w_+ \otimes - , u_+ \otimes - )$ defines an $\fsp ( W_- )$-invariant bilinear form on $W_-$. Proposition~\ref{prop:gVbExistence} implies that (up to scaling) $W_+$ admits a unique non-zero $\fso ( W_+ )$-invariant bilinear form that is nondegenerate and symmetric while $W_-$ admits a unique non-zero $\fsp ( W_- )$-invariant bilinear form that is nondegenerate and skewsymmetric. Therefore $b$ must be skewsymmetric since it is (up to scaling) a tensor product of these two bilinear forms. This rules out $\fosp ( W_+ , W_- )$.

If $b$ is a nondegenerate $\fz \oplus \fsp ( W )$-invariant bilinear form on $\zeta_{(1)} \otimes W \oplus \zeta_{(-1)} \otimes W$ then $\fz$-invariance implies $b ( \zeta_{( \pm 1 )} \otimes W ,  \zeta_{( \pm 1 )} \otimes W ) = 0$. The remaining components $b ( \zeta_{( \pm 1 )} \otimes W ,  \zeta_{( \mp 1 )} \otimes W )$ are automatically $\fz$-invariant and define $\fsp (W)$-invariant bilinear forms on $W$. Proposition~\ref{prop:gVbExistence} implies that both these components of $b$ must be non-zero scalar multiples of a nondegenerate $\fsp (W)$-invariant skewsymmetric bilinear form $\omega$ on $W$. If $b$ is symmetric then the relative values of the two scalars are fixed. This rules in ${\bf C} ( \frac{N}{2} +1 )$ with $b$ as in \eqref{eq:CNBilinearForm}. It is clear by inspection that \eqref{eq:CNBilinearForm} is nondegenerate.

If $b$ is a nondegenerate $\fsp ( \Delta_1 ) \oplus \fsp ( \Delta_2 ) \oplus \fsp ( \Delta_3 )$-invariant bilinear form on $\Delta_1 \otimes \Delta_2 \otimes \Delta_3$ then, for any $\psi , \chi , \phi , \upsilon \in \Delta$, $b ( - \otimes \psi \otimes \chi , - \otimes \phi \otimes \upsilon )$, $b ( \psi \otimes - \otimes \chi , \phi \otimes - \otimes \upsilon )$ and $b ( \psi \otimes \chi \otimes - , \phi \otimes \upsilon \otimes - )$ define $\fsp ( \Delta )$-invariant bilinear forms on each factor of $\Delta$ in $\Delta_1 \otimes \Delta_2 \otimes \Delta_3$. Proposition~\ref{prop:gVbExistence} implies that (up to scaling) $\Delta$ admits a unique non-zero $\fsp ( \Delta )$-invariant bilinear form that is nondegenerate and skewsymmetric. Therefore $b$ must be skewsymmetric since it is (up to scaling) a tensor product of three copies of this bilinear form. This rules out ${\bf D} ( 2,1; \alpha )$.

If $b$ is a nondegenerate $\fso ( W ) \oplus \fsp ( \Delta )$-invariant bilinear form on $\cS^{(7)} \otimes \Delta$ then, for any $\psi , \chi \in \cS^{(7)}$ and $\phi , \upsilon \in \Delta$, $b ( - \otimes \phi , - \otimes \upsilon )$ defines an $\fso ( W )$-invariant bilinear form on $\cS^{(7)}$ and $b ( \psi \otimes - , \chi \otimes - )$ defines an $\fsp ( \Delta )$-invariant bilinear form on $\Delta$. Theorem~\ref{thm:AdmissibleSpinorialBilinearForm} and Table~\ref{tab:SigmaTauSigns} imply that $\cS^{(7)}$ admits a nondegenerate $\fso ( W )$-invariant symmetric bilinear form which, according to Proposition~\ref{prop:gVbExistence}, is unique (up to scaling). Therefore $b$ must be skewsymmetric since it is (up to scaling) a tensor product of this symmetric bilinear form on $\cS^{(7)}$ and a nondegenerate $\fsp ( \Delta )$-invariant skewsymmetric bilinear form on $\Delta$. This rules out ${\bf F} ( 4 )$.

If $b$ is a nondegenerate $\fg_2 \oplus \fsp ( \Delta )$-invariant bilinear form on $W \otimes \Delta$ then, for any $w,u \in W$ and $\psi , \chi \in \Delta$, $b ( - \otimes \psi , - \otimes \chi )$ defines a $\fg_2$-invariant bilinear form on $W$ and $b ( w \otimes - , u \otimes - )$ defines an $\fsp ( \Delta )$-invariant bilinear form on $\Delta$. Proposition~\ref{prop:gVbExistence} implies that any non-zero $\fg_2$-invariant bilinear form on $W$ is a non-zero scalar multiple of a nondegenerate $\fso (W)$-invariant symmetric bilinear form on $W$ (since $\fg_2$ is a Lie subalgebra of $\fso (W)$). Therefore $b$ must be skewsymmetric since it is (up to scaling) a tensor product of any such symmetric bilinear form on $W$ and a nondegenerate $\fsp ( \Delta )$-invariant skewsymmetric bilinear form on $\Delta$. This rules out ${\bf G} ( 3 )$. 

If $N>2$ then $S^2 W \oplus \Lambda^2 W^*$ does not admit a nondegenerate $\fsl ( W )$-invariant bilinear form (see Appendix~\ref{sec:S2WPlusWedge2W*}). This rules out ${\bf P} ( N -1 )$.   

Let $\kappa$ be the bilinear form on $\fsl (W)$ defined such that $\kappa (X,Y) = \tr ( XY )$, for all $X , Y \in \fsl (W)$. Clearly $\kappa$ is non-zero, symmetric and invariant under the adjoint action of $\fsl (W)$. Proposition~\ref{prop:gVbExistence} therefore implies that any nondegenerate $\fsl (W)$-invariant symmetric bilinear form on $\fsl (W)$ must be a non-zero scalar multiple of $\kappa$. This rules in ${\bf Q} ( N-1 )$ with $b$ as in \eqref{eq:QNBilinearForm}.
\end{proof}

\begin{cor} \label{cor:nAdmissibleClassical}
A classical Lie superalgebra is $1$-admissible if and only if it is of type ${\bf A}$, ${\bf C}$ or ${\bf Q}$ and the only $n$-admissible classical Lie superalgebra with $n > 1$ is ${\bf A} ( 1 , 1 )$ for $n=2$ and $n=3$.
\end{cor}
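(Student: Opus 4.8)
The plan is to dispatch the three cases $n=1,2,3$ in turn, in each feeding the structural characterisation of $n$-admissibility from Section~\ref{sec:CharacterisationEmbeddingLSA} into the classification of invariant forms in Theorem~\ref{thm:ClassicalLSA1Admissible}, and then to observe separately that nothing survives for $n>3$. For $n=1$ the result is essentially immediate: a Lie superalgebra is $1$-admissible precisely when $\fG_{\bar 1}$ carries a nondegenerate $\fG_{\bar 0}$-invariant symmetric bilinear form (Section~\ref{sec:CharacterisationEmbeddingLSAn1}), and Theorem~\ref{thm:ClassicalLSA1Admissible} identifies these as exactly the classical Lie superalgebras of type ${\bf A}$, ${\bf C}$ and ${\bf Q}$. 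For $n>3$ there is nothing to do, since Proposition~\ref{prop:GSimple} gives $[[\fG_{\bar 1},\fG_{\bar 1}],\fG_{\bar 1}]=[\fG_{\bar 0},\fG_{\bar 1}]\neq 0$ while Theorem~\ref{thm:111Jacobi} forces the conventional solution $c(v,w)\cdot u=0$; this incompatibility was already recorded in the text preceding Theorem~\ref{thm:ClassicalLSA1Admissible}.

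The substantive case is $n=2$. By Corollary~\ref{cor:ClassicalGn} only the $3$-graded types ${\bf A}$, ${\bf C}$, ${\bf P}$ can play the role of $\fG(2)$, and by Section~\ref{sec:CharacterisationEmbeddingLSAn2} such a $\fG$ is $2$-admissible if and only if each of $\fG_{\bar 1}^{+}$ and $\fG_{\bar 1}^{-}$ \emph{separately} admits a nondegenerate $\fG_{\bar 0}$-invariant symmetric bilinear form. Here I would refine the computations inside the proof of Theorem~\ref{thm:ClassicalLSA1Admissible}: for ${\bf A}(N_+-1,N_--1)$ with $N_+\neq N_-$ and for ${\bf C}(\tfrac{N}{2}+1)$, the two summands $\fG_{\bar 1}^{\pm}$ carry opposite nonzero $\fz$-weights, so $\fz$-invariance forces $b(\fG_{\bar 1}^{\pm},\fG_{\bar 1}^{\pm})=0$ and no form on a single chiral half can be nondegenerate; for ${\bf A}(N-1,N-1)$ with $N>2$, any $\fG_{\bar 0}$-invariant form on $\fG_{\bar 1}^{+}=W_1\otimes W_2^*$ would induce an $\fsl(W_1)$-invariant bilinear form on $W_1$, which vanishes because $W_1\not\cong W_1^*$ (Proposition~\ref{prop:gVbExistence}), and likewise for $\fG_{\bar 1}^{-}$; and for ${\bf P}(N-1)$ each of $\fG_{\bar 1}^{+}=S^2 W$ and $\fG_{\bar 1}^{-}=\Lambda^2 W^*$ is irreducible and not self-dual for $N>2$, hence admits no nonzero invariant form. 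Only ${\bf A}(1,1)$ survives, since both $\fG_{\bar 1}^{\pm}=\Delta_1\otimes\Delta_2$ carry the form $\varepsilon_1\otimes\varepsilon_2$, which is symmetric (skew $\otimes$ skew), nondegenerate and $\fsp(\Delta_1)\oplus\fsp(\Delta_2)$-invariant; this is exactly the content of the $\lambda_{+}$ and $\lambda_{-}$ terms in \eqref{eq:A11BilinearForm}.

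For $n=3$, Proposition~\ref{prop:ClassicalLSABalanced3Graded} leaves ${\bf A}(1,1)$ as the only balanced $3$-graded classical Lie superalgebra, and by Section~\ref{sec:CharacterisationEmbeddingLSAn3} it is $3$-admissible if and only if $\fG_{\bar 1}^{\bullet}=\Delta_1\otimes\Delta_2$ admits a nondegenerate $\fG_{\bar 0}$-invariant symmetric bilinear form, which the same $\varepsilon_1\otimes\varepsilon_2$ provides; hence ${\bf A}(1,1)$ is again the unique candidate. Combining the three cases with the $n>3$ observation yields the corollary. The main obstacle is the $n=2$ step: one must upgrade the mere existence of a form on all of $\fG_{\bar 1}$ guaranteed by Theorem~\ref{thm:ClassicalLSA1Admissible} to the strictly stronger demand of a nondegenerate invariant form on each chiral half. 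That upgrade cannot be read off the theorem directly and instead forces a closer inspection of the $\fz$-weights and the self-duality of the two summands $\fG_{\bar 1}^{\pm}$, which is precisely where types ${\bf A}(N_+-1,N_--1)$, ${\bf A}(N-1,N-1)$, ${\bf C}$ and ${\bf P}$ drop out while ${\bf A}(1,1)$ persists.
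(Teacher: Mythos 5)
Your proof is correct and follows essentially the same route as the paper: reduce $n=1$ to Theorem~\ref{thm:ClassicalLSA1Admissible}, kill $n>3$ via Proposition~\ref{prop:GSimple} against Theorem~\ref{thm:111Jacobi}, and for $n=2,3$ test each chiral half of the $3$-graded (resp.\ balanced $3$-graded) candidates for a nondegenerate invariant symmetric form; your explicit $\fz$-weight and self-duality arguments merely fill in details the paper's proof states tersely, and your direct treatment of type ${\bf P}$ replaces the paper's earlier exclusion of it in Theorem~\ref{thm:ClassicalLSA1Admissible}. One small inaccuracy: $\Lambda^2 W^*$ \emph{is} self-dual for $N=4$ (it carries the invariant pairing into $\Lambda^4 W$, as the paper's appendix records), so your blanket claim that it admits no invariant form for $N>2$ is false there; the conclusion for ${\bf P}(N-1)$ still stands because $S^2W$ alone fails to be self-dual for $N>2$, and $2$-admissibility requires both halves.
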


\begin{proof}
The conditions for $1$-admissibility follow directly from Theorem~\ref{thm:ClassicalLSA1Admissible} and we have already established that no classical Lie superalgebra can be $n$-admissible if $n>3$.

The only $3$-graded Lie superalgebras in Theorem~\ref{thm:ClassicalLSA1Admissible} are of type ${\bf A}$ and ${\bf C}$, and none of them except $\fG = {\bf A} ( 1 , 1 )$ admits a nondegenerate $\fG_{\bar 0}$-invariant symmetric bilinear form on both $\fG_{\bar 1}^+$ and $\fG_{\bar 1}^-$. More precisely, any nondegenerate $\fsp ( \Delta_1 ) \oplus \fsp ( \Delta_2 )$-invariant symmetric bilinear form on $\fG_{\bar 1}^\pm = \Delta_1 \otimes \Delta_2$ is of the form $\lambda_\pm \varepsilon \otimes \varepsilon$, for some non-zero $\lambda_\pm \in \CC$, where $\varepsilon$ is a nondegenerate $\fsp ( \Delta )$-invariant skewsymmetric bilinear form on $\Delta$. Therefore only ${\bf A} ( 1 , 1 )$ is $2$-admissible.

Proposition~\ref{prop:ClassicalLSABalanced3Graded} implies that the only balanced $3$-graded Lie superalgebra in Theorem~\ref{thm:ClassicalLSA1Admissible} is ${\bf A} ( 1 , 1 )$ and, as above, any nondegenerate $\fsp ( \Delta_1 ) \oplus \fsp ( \Delta_2 )$-invariant symmetric bilinear form on $\Delta_1 \otimes \Delta_2$ is of the form $\lambda \varepsilon \otimes \varepsilon$, for some non-zero $\lambda \in \CC$. Therefore only ${\bf A} ( 1 , 1 )$ is $3$-admissible. 
\end{proof}

\begin{prop} \label{prop:NontrivialClassicalLSAAdmissible}
The Poincar\'{e} superalgebra $\fP (n)$ defined by any $n$-admissible classical Lie superalgebra $\fG (n)$ is non-trivial if and only if $\fG (n)_{\bar 0}$ is semisimple.
\end{prop}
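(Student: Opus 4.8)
The plan is to recognise $\fP(n)$ as the abelian extension ${\bar\fP}(n)_f$ of ${\bar\fP}(n)$ by $\cV$ (Corollary~\ref{cor:AbeilianExtensionPbar}), so that, by definition, $\fP(n)$ is non-trivial precisely when $f\notin B^2({\bar\fP}(n),\cV)$. I would treat the ranges $n>1$ and $n=1$ separately, since the relevant cohomological criterion differs between them.

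For $n>1$, Lemma~\ref{lem:EquivalentAbeilianExtensionPbar} shows that ${\bar\fP}(n)_f$ is equivalent to the split extension ${\bar\fP}(n)_0$ if and only if $f=0$. Since $b$ is nondegenerate and $\xi$ is nonzero, the cocycle $f(\psi\otimes v,\chi\otimes w)=\xi(\psi,\chi)b(v,w)$ is nonzero, so $\fP(n)$ is always non-trivial when $n>1$. By Corollary~\ref{cor:nAdmissibleClassical} the only classical $n$-admissible Lie superalgebra with $n>1$ is $\mathbf{A}(1,1)$, whose even part $\fsp(\Delta_1)\oplus\fsp(\Delta_2)$ is semisimple, so the stated equivalence holds (both sides being true) throughout this range.

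For $n=1$ I would invoke Lemma~\ref{lem:EquivalentCentralExtensionPbar} with $b_1=b$ and $b_2=0$: the central extension $\fG(1)_f$ is equivalent to the split one if and only if there is some $\kappa\in\fg^*$ with $\kappa([\fg,\fg])=0$ and $b(v,w)=\kappa(c(v,w))$ for all $v,w\in V$ (cf.\ \eqref{eq:KappaEquivalent}). If $\fg$ is semisimple then $[\fg,\fg]=\fg$, forcing $\kappa=0$ and hence $b=0$, which contradicts the nondegeneracy of $b$; thus no such $\kappa$ exists and $\fP(1)$ is non-trivial. This already settles the semisimple cases $\mathbf{A}(N-1,N-1)$, $\mathbf{A}(1,1)$ and $\mathbf{Q}(N-1)$.

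It remains to produce such a $\kappa$ in the two non-semisimple cases, $\mathbf{A}(N_+-1,N_--1)$ with $N_+\neq N_-$ and $\mathbf{C}(\tfrac{N}{2}+1)$, where $\fg=\fs\oplus\fz$ is reductive with a one-dimensional centre $\fz$. Here $\fG$ is simple, so by Proposition~\ref{prop:GSimple} the bracket $c$ maps $V\times V$ onto $\fg$; in particular its image is not contained in $\fs=[\fg,\fg]$, so if $\kappa_0$ denotes the projection $\fg\to\fz$ (regarded as an element of $\fg^*$ vanishing on $\fs$) then $\kappa_0\circ c\neq0$. Using \eqref{eq:gEquivariantc} together with $\kappa_0([\fg,\fg])=0$, and the symmetry of $c$ from \eqref{eq:cSym}, the bilinear form $\kappa_0\circ c$ is $\fg$-invariant and symmetric. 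I would then observe that in each of these two cases $V=V_+\oplus V_-$ with $V_\pm$ irreducible, $V_+\cong V_-^*$ and $V_+\not\cong V_+^*$ (the summands carry opposite, nonzero $\fz$-charges): the $\fz$-grading forces any invariant form to pair $V_+$ with $V_-$, and Schur's lemma then makes the space of $\fg$-invariant symmetric bilinear forms on $V$ one-dimensional. Both $b$ and $\kappa_0\circ c$ are nonzero elements of this line, hence $b=\lambda\,\kappa_0\circ c$ for some $\lambda\in\CC^\times$; taking $\kappa=\lambda\kappa_0$ gives $\kappa([\fg,\fg])=0$ and $b=\kappa\circ c$, so $\fP(1)$ is trivial. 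The main obstacle is exactly this last step, namely showing that $b$ genuinely factors through the central projection of $c$: surjectivity of $c$ (from simplicity) guarantees that the centre appears in the image of $c$, while the one-dimensionality of the invariant-form space — which could alternatively be checked directly from the explicit expressions for $b$ in Theorem~\ref{thm:ClassicalLSA1Admissible} — ensures that the unique invariant symmetric form, up to scale, is simultaneously $b$ and $\kappa_0\circ c$.
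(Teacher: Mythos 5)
Your proposal is correct and follows essentially the same route as the paper: reduce triviality to the coboundary condition underlying Lemma~\ref{lem:EquivalentCentralExtensionPbar}, observe that $\kappa([\fg,\fg])=0$ forces $\kappa=0$ when $\fg$ is semisimple (contradicting nondegeneracy of $b$), and in the two reductive cases identify $b$, via the one-dimensionality of the space of $\fg$-invariant symmetric bilinear forms on $\fG(1)_{\bar 1}$, with the $\fz$-component of $c$, whose non-vanishing follows from simplicity. The only cosmetic difference is that the paper deduces this non-vanishing by noting that otherwise $\fs\oplus\fG(1)_{\bar 1}$ would be a proper ideal of the simple $\fG(1)$, whereas you use $[\fG_{\bar 1},\fG_{\bar 1}]=\fG_{\bar 0}$ from Proposition~\ref{prop:GSimple}, and you supply a Schur's-lemma argument for uniqueness of the invariant form where the paper cites Theorem~\ref{thm:ClassicalLSA1Admissible}; these are interchangeable.
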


\begin{proof}
If $n=1$ and $\fG (1)_{\bar 0}$ is semisimple then Corollary~\ref{cor:nAdmissibleClassical} implies $\fG (1)$ is either ${\bf A} ( N -1 , N - 1 )$ with $N \geq 2$ or ${\bf Q} ( N -1 )$. In this case Corollary~\ref{cor:ReductiveEquivalentCentralExtensionPbar} implies $\fP (1)$ is non-trivial. 

If $n=1$ and $\fG (1)_{\bar 0}$ is not semisimple then Corollary~\ref{cor:nAdmissibleClassical} implies $\fG (1)$ is either ${\bf A} ( N_+ -1 , N_- - 1 )$ or ${\bf C} ( \frac{N}{2} +1 )$. In this case the centre of $\fG (1)_{\bar 0}$ is $\fz$, so Lemma~\ref{lem:EquivalentCentralExtensionPbar} and Corollary~\ref{cor:ReductiveEquivalentCentralExtensionPbar} imply $\fP (1)$ is non-trivial unless $b$ is a scalar multiple of the $\fz$-part of the $[ {\bar 1}{\bar 1} ]$ bracket for $\fG (1)$. But the $[ {\bar 1}{\bar 1} ]$ bracket for $\fG (1)$ defines a $\fG (1)_{\bar 0}$-equivariant symmetric bilinear map $\fG (1)_{\bar 1} \times \fG (1)_{\bar 1} \rightarrow \fG (1)_{\bar 0}$, so its $\fz$-part must define a $\fG (1)_{\bar 0}$-invariant symmetric bilinear form on $\fG (1)_{\bar 1}$ (since $[ \fG (1)_{\bar 0} , [ \fG (1)_{\bar 1} , \fG (1)_{\bar 1} ] ]$ does not contain $\fz$). Theorem~\ref{thm:ClassicalLSA1Admissible} therefore implies that the $\fz$-part of the $[ {\bar 1}{\bar 1} ]$ bracket for $\fG (1)$ must be a non-zero scalar multiple of $b$ unless it is identically zero. But if it were identically zero then removing $\fz$ from $\fG (1)$ would define a proper ideal of $\fG (1)$ which cannot exist since $\fG (1)$ is simple. Therefore $\fP (1)$ is trivial.

If $n>1$ then Corollary~\ref{cor:nAdmissibleClassical} implies that only ${\bf A} ( 1 , 1 )$ is $n$-admissible for $n=2$ and $n=3$. In this case ${\bf A} ( 1 , 1 )_{\bar 0} = \fsp ( \Delta_1 ) \oplus \fsp ( \Delta_2 )$ is semisimple and Lemma~\ref{lem:EquivalentAbeilianExtensionPbar} implies $\fP (n)$ is non-trivial. 
\end{proof}


\section{Admissible Lie superalgebras from triple systems}
\label{sec:EmbeddingLSAFromTripleSystems}

In this section we will show how to construct an embedding superalgebra with $n<4$ from a certain type of triple system with derivations. Furthermore, we will demonstrate that the embedding superalgebra is $n$-admissible if and only if the corresponding triple system with derivations admits a certain nondegenerate derivation-invariant symmetric bilinear form.


\subsection{Triple systems}
\label{sec:TripleSystems}

In general, a vector space $V$ is called a {\emph{triple system}} (or {\emph{3-algebra}}) if it is equipped with a trilinear map (or {\emph{3-bracket}})
\begin{equation}\label{eq:3Bracket}
[-,-,-] : V \times V \times V \longrightarrow V~.
\end{equation}

Let us define 
\begin{equation}\label{eq:3Derivation}
\Der^\pm V = \{ X \in \End V~|~X [ v , w , u ] = [ Xv , w , u ] \pm [ v , Xw , u ] + [ v , w , Xu ] ,~v,w,u \in V \}~.
\end{equation}
Any element in $\Der^+ V$ is called a {\emph{derivation}} of $V$ while any element in $\Der^- V$ is called an {\emph{anti-derivation}} of $V$. It is easily verified that $[X,Y] \in \Der^+ V$ if $X,Y \in \Der^\pm V$ while $[X,Y] \in \Der^- V$ if $X \in \Der^\pm V$ and $Y \in \Der^\mp V$, where brackets denote commutators of endomorphisms. It follows that $\Der^+ V$ forms a Lie subalgebra of $\fgl (V)$. 

For any $v,w \in V$, let us define $d(v,w) \in \End V$ such that
\begin{equation}\label{eq:dVV}
d(v,w) u = [v,w,u]~,
\end{equation}
for all $u \in V$. When written in terms of $d$, the definition of $X \in \Der^\pm V$ in \eqref{eq:3Derivation} is equivalent to
\begin{equation}\label{eq:3dDerivation}
[ X , d(v,w) ] = d ( Xv , w ) \pm d ( v , Xw )~,
\end{equation}
for all $v,w \in V$. 


\subsection{$1$-admissible Lie superalgebras from anti-Lie triple systems}
\label{sec:AntiLieTripleSystems}

A triple system $V$ is called an {\emph{anti-Lie triple system}} if 
\begin{equation}\label{eq:ALTSsym}
[ v , w , u ] = [ w , v , u ]~,
\end{equation}
for all $v,w,u \in V$, 
\begin{equation}\label{eq:ALTScyc}
[ v , w , u ] + [ w , u , v ] + [ u , v , w ] = 0~,
\end{equation}
for all $v,w,u \in V$, and
\begin{equation}\label{eq:ALTSend}
[ v_1 , v_2 , [ w_1 , w_2 , w_3 ] ] = [ [ v_1 , v_2 , w_1 ] , w_2 , w_3 ] + [ w_1 , [ v_1 , v_2 , w_2 ] , w_3 ] + [ w_1 , w_2 , [ v_1 , v_2 , w_3 ] ]~,
\end{equation}
for all $v_1 , v_2 , w_1 , w_2 , w_3 \in V$. 

The condition \eqref{eq:ALTSend} simply means $d(V,V) \subset \Der^+ V$. In fact, \eqref{eq:3dDerivation} implies that $d(V,V)$ is an ideal of $\Der^+ V$. Any Lie subalgebra $D(V)$ of $\Der^+ V$ which contains $d(V,V)$ will be referred to as a {\emph{derivation algebra}} of the anti-Lie triple system $V$.

We will now summarise the relationship between anti-Lie triple systems with a derivation algebra and Lie superalgebras that was first explored in \cite{FaulknerFerrar}.

\begin{prop} \label{prop:ALTS-LSA}
Any anti-Lie triple system $V$ with derivation algebra $D(V)$ defines a Lie superalgebra $\fG$ with $\fG_{\bar 0} = D(V)$, $\fG_{\bar 1} = V$ and brackets
\begin{align}
[X,Y] &= XY - YX~, \label{eq:ALTSLSAbracket00} \\
[X,v] &= Xv~, \label{eq:ALTSLSAbracket01} \\
[v,w] &= d(v,w)~, \label{eq:ALTSLSAbracket11}
\end{align}
for all $X,Y \in \fG_{\bar 0}$ and $v,w \in \fG_{\bar 1}$.
\end{prop}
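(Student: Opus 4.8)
The plan is to invoke the criterion recorded at the end of Section~\ref{sec:LieSuperalgebras}: after extending the given brackets to all of $\fG = D(V) \oplus V$ by superskewsymmetry, it suffices to check that the $[\bar 0 \bar 0]$ bracket makes $\fG_{\bar 0}$ a Lie algebra, that the $[\bar 0 \bar 1]$ bracket makes $\fG_{\bar 1}$ a representation of $\fG_{\bar 0}$, that the $[\bar 1 \bar 1]$ bracket is $\fG_{\bar 0}$-equivariant, and finally that the cubic condition \eqref{eq:LSAJacobi111} holds on $\fG_{\bar 1}$. The three structural axioms of an anti-Lie triple system should then match these requirements one for one.

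First I would confirm that the bracket is even and superskewsymmetric. Evenness is immediate: $[X,Y] = XY - YX \in D(V)$ because $D(V)$ is a Lie subalgebra of $\fgl (V)$, $[X,v] = Xv \in V$, and $[v,w] = d(v,w) \in D(V)$ because $d(V,V) \subset \Der^+ V$ by \eqref{eq:ALTSend} and $D(V)$ is assumed to contain $d(V,V)$. For superskewsymmetry, the $[\bar 0 \bar 0]$ bracket is antisymmetric since it is a commutator, and I extend the $[\bar 0 \bar 1]$ bracket by setting $[v,X] = -Xv$; the only genuine input is that the $[\bar 1 \bar 1]$ bracket must be symmetric, which is exactly \eqref{eq:ALTSsym} read through \eqref{eq:dVV} as $d(v,w) = d(w,v)$.

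Next I would dispatch the three representation-theoretic conditions. That $\fG_{\bar 0} = D(V)$ is a Lie algebra is part of the hypothesis that $D(V)$ is a derivation algebra. That $\fG_{\bar 1} = V$ is a representation of $D(V)$ under $X \cdot v = Xv$ is automatic, since the action is the restriction of the defining action of $\fgl (V)$ and the bracket on $D(V)$ is the commutator. The equivariance of $d$, namely $[X, d(v,w)] = d(Xv,w) + d(v,Xw)$ for $X \in D(V)$, is precisely \eqref{eq:3dDerivation} with the plus sign, which holds because $D(V) \subset \Der^+ V$.

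The main substantive step, and the one I expect to require the most care, is the cubic Jacobi identity \eqref{eq:LSAJacobi111} for $a = v$, $b = w$, $c = u \in V$. Using superskewsymmetry to rewrite each nested bracket as $[v,[w,u]] = [v, d(w,u)] = -d(w,u)v = -[w,u,v]$, and cyclically, the left-hand side collapses to $-( [w,u,v] + [u,v,w] + [v,w,u] )$, which vanishes by \eqref{eq:ALTScyc}. The only thing to watch is the sign bookkeeping from superskewsymmetry of the odd--even bracket; once that is handled, the three triple-system axioms \eqref{eq:ALTSsym}, \eqref{eq:ALTScyc} and \eqref{eq:ALTSend} line up exactly with the three nontrivial requirements of the criterion, and the brackets \eqref{eq:ALTSLSAbracket00}--\eqref{eq:ALTSLSAbracket11} define a Lie superalgebra.
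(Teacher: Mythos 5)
Your proposal is correct and follows essentially the same route as the paper's proof: both reduce the claim to the criterion at the end of Section~\ref{sec:LieSuperalgebras} and match \eqref{eq:ALTSsym}, \eqref{eq:3dDerivation} (via \eqref{eq:ALTSend}) and \eqref{eq:ALTScyc} to the symmetry of the $[\bar 1 \bar 1]$ bracket, the $[\bar 0 \bar 1 \bar 1]$ Jacobi identity and the $[\bar 1 \bar 1 \bar 1]$ Jacobi identity respectively. Your explicit sign bookkeeping in the cubic identity is correct and slightly more detailed than the paper's one-line justification.
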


\begin{proof}
Clearly the $[{\bar 0}{\bar 0}]$ bracket in \eqref{eq:ALTSLSAbracket00} is skewsymmetric while the $[{\bar 1}{\bar 1}]$ bracket in \eqref{eq:ALTSLSAbracket11} is symmetric due to \eqref{eq:ALTSsym}. The $[ {\bar 0}{\bar 0}{\bar 0} ]$ Jacobi identity for $\fG$ is satisfied since $D(V)$ is a Lie algebra. The $[ {\bar 0}{\bar 0}{\bar 1} ]$ Jacobi identity for $\fG$ is satisfied since $V$ is a representation of $D(V)$. The $[ {\bar 0}{\bar 1}{\bar 1} ]$ Jacobi identity for $\fG$ is satisfied due to \eqref{eq:3dDerivation}. The $[ {\bar 1}{\bar 1}{\bar 1} ]$ Jacobi identity for $\fG$ is satisfied due to \eqref{eq:ALTScyc}.
\end{proof}

\begin{prop} \label{prop:LSA-ALTS}
Any Lie superalgebra $\fG$ defines an anti-Lie triple system $\fG_{\bar 1}$ with 3-bracket
\begin{equation}\label{eq:LSAALTS3bracket}
[ v , w , u ] = [ [ v , w ] , u ]~,
\end{equation}
for all $v,w,u \in \fG_{\bar 1}$, and a derivation algebra $\ad ( \fG_{\bar 0} ) |_{\fG_{\bar 1}}$ of $\fG_{\bar 1}$. 
\end{prop}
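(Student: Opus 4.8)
The plan is to verify directly that the 3-bracket \eqref{eq:LSAALTS3bracket} satisfies the three defining axioms \eqref{eq:ALTSsym}--\eqref{eq:ALTSend} of an anti-Lie triple system, and then that $\ad ( \fG_{\bar 0} ) |_{\fG_{\bar 1}}$ is a derivation algebra in the sense of Section~\ref{sec:AntiLieTripleSystems}. First I would note that the 3-bracket is well-defined and closes on $\fG_{\bar 1}$: for $v,w,u \in \fG_{\bar 1}$ the inner bracket $[v,w]$ lies in $\fG_{\bar 0}$ (the $[{\bar 1}{\bar 1}]$ bracket), and then $[[v,w],u] \in \fG_{\bar 1}$ (the $[{\bar 0}{\bar 1}]$ bracket). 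The single structural fact I will use throughout is that the adjoint action $\ad (X) = [X,-]$ of an \emph{even} element $X \in \fG_{\bar 0}$ is an (even) derivation of the Lie superalgebra bracket, i.e. $\ad (X) [a,b] = [ \ad (X) a , b ] + [ a , \ad (X) b ]$; this is just \eqref{eq:LSAJacobiG} with $|X|=0$, where the sign $(-1)^{|X||a|}$ collapses to $+1$.

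For the symmetry axiom \eqref{eq:ALTSsym}, since $v,w$ are both odd, superskewsymmetry \eqref{eq:superskewsymmetry} gives $[v,w]=[w,v]$, and applying the outer bracket with $u$ yields $[v,w,u]=[w,v,u]$ at once. For the cyclic axiom \eqref{eq:ALTScyc}, I would use superskewsymmetry to rewrite each term as $[[v,w],u] = -[u,[v,w]]$ (the sign is trivial because $[v,w]$ is even), so that the cyclic sum equals $-\bigl( [u,[v,w]] + [v,[w,u]] + [w,[u,v]] \bigr)$, which vanishes by the $[{\bar 1}{\bar 1}{\bar 1}]$ Jacobi identity \eqref{eq:LSAJacobi111} applied with $(a,b,c)=(u,v,w)$. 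For the derivation axiom \eqref{eq:ALTSend}, I would set $X=[v_1,v_2] \in \fG_{\bar 0}$ so that $[v_1,v_2,u]=\ad (X) u$, write the left-hand side as $\ad (X) [[w_1,w_2],w_3]$, and expand using the even-derivation rule twice: once across $[[w_1,w_2],w_3]$ and once across $[w_1,w_2]$. Re-identifying each $\ad (X) w_i$ with $[v_1,v_2,w_i]$ then reproduces exactly the three terms on the right-hand side of \eqref{eq:ALTSend}.

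For the derivation-algebra claim I would check the three requirements in turn. That $\ad (X)|_{\fG_{\bar 1}} \in \Der^+ \fG_{\bar 1}$ for each $X \in \fG_{\bar 0}$ follows from the same computation as \eqref{eq:ALTSend}: applying the even-derivation rule to $[[v,w],u]$ produces precisely the $\Der^+$ condition \eqref{eq:3Derivation}. That $\ad ( \fG_{\bar 0} )|_{\fG_{\bar 1}}$ is a Lie subalgebra of $\fgl ( \fG_{\bar 1} )$ holds because $\ad$ is a Lie algebra homomorphism on $\fG_{\bar 0}$, so $[ \ad (X) , \ad (Y) ] = \ad ( [X,Y] )$ with $[X,Y] \in \fG_{\bar 0}$, and this persists under restriction to $\fG_{\bar 1}$. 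Finally, it contains $d ( \fG_{\bar 1} , \fG_{\bar 1} )$ since $d(v,w) = \ad ( [v,w] )|_{\fG_{\bar 1}}$ by \eqref{eq:dVV} and \eqref{eq:LSAALTS3bracket}, with $[v,w] \in \fG_{\bar 0}$.

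There is no genuine obstacle here; the proposition is essentially a matter of sign bookkeeping, and the one point to be careful about is confirming that every sign factor $(-1)^{(\cdot)}$ collapses to $+1$ because the element being commuted past, or acting as a derivation, is always even. The only mildly non-mechanical step is the cyclic axiom, where one must align the cyclic ordering appearing in \eqref{eq:LSAJacobi111} with the ordering produced after applying superskewsymmetry; choosing $(a,b,c)=(u,v,w)$ lands on the identity directly.
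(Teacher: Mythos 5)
Your proof is correct and follows essentially the same route as the paper: symmetry from the supersymmetric $[\bar 1 \bar 1]$ bracket, the cyclic axiom from the $[\bar 1\bar 1\bar 1]$ Jacobi identity, and the derivation axiom together with the derivation-algebra claim from the fact that the adjoint action of an even element is a derivation of the bracket (the paper packages this via the map $\rho$ and the $[\bar 0\bar 1\bar 1]$ and $[\bar 0\bar 0\bar 1]$ Jacobi identities, but the content is identical). The only cosmetic difference is that you verify \eqref{eq:ALTSend} directly before the derivation-algebra discussion, whereas the paper obtains it as a corollary of $d(\fG_{\bar 1},\fG_{\bar 1}) \subset \ad(\fG_{\bar 0})|_{\fG_{\bar 1}} \subset \Der^+ \fG_{\bar 1}$.
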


\begin{proof}
The 3-bracket \eqref{eq:LSAALTS3bracket} obeys \eqref{eq:ALTSsym} since the $[ {\bar 1}{\bar 1} ]$ bracket for $\fG$ is symmetric. The condition \eqref{eq:ALTScyc} is precisely the $[ {\bar 1}{\bar 1}{\bar 1} ]$ Jacobi identity for $\fG$. Any $X \in \fG_{\bar 0}$ defines $\rho (X) \in \ad ( \fG_{\bar 0} ) |_{\fG_{\bar 1}}$ such that 
\begin{equation}\label{eq:LSAALTSRho}
\rho (X) v = [X,v]~,
\end{equation}
for all $v \in \fG_{\bar 1}$. Therefore \eqref{eq:dVV} and \eqref{eq:LSAALTS3bracket} imply
\begin{equation}\label{eq:LSAALTSd}
d(v,w) = \rho ( [v,w] )~,
\end{equation}
for all $v,w \in \fG_{\bar 1}$. Acting with $\rho$ on the $[ {\bar 0}{\bar 1}{\bar 1} ]$ Jacobi identity for $\fG$ gives
\begin{equation}\label{eq:LSAALTS011}
\rho ( [ X , [ v , w ] ] ) \underset{\eqref{eq:LSAALTSRho}}{=} \rho ( [ \rho ( X ) v , w ] ) + \rho ( [ v , \rho ( X ) w ] ) \underset{\eqref{eq:LSAALTSd}}{=} d ( \rho ( X )  v , w ) + d ( v , \rho ( X ) w )~,
\end{equation}
for all $X \in \fG_{\bar 0}$ and $v,w \in \fG_{\bar 1}$. On the other hand, the $[ {\bar 0}{\bar 0}{\bar 1} ]$ Jacobi identity for $\fG$ says that $\rho : \fG_{\bar 0} \rightarrow \fgl ( \fG_{\bar 1} )$ is a Lie algebra homomorphism, so
\begin{equation}\label{eq:LSAALTS001}
\rho ( [ X , [ v , w ] ] ) =  [ \rho (X) , \rho ( [v,w] ) ] \underset{\eqref{eq:LSAALTSd}}{=} [ \rho (X) , d (v,w) ]~,
\end{equation}
for all $X \in \fG_{\bar 0}$ and $v,w \in \fG_{\bar 1}$. Combining \eqref{eq:LSAALTS011} and  \eqref{eq:LSAALTS001} implies that any $\rho (X) \in \ad ( \fG_{\bar 0} ) |_{\fG_{\bar 1}}$ is a derivation of the triple system $\fG_{\bar 1}$. Furthermore, \eqref{eq:LSAALTSd} implies $d ( \fG_{\bar 1} , \fG_{\bar 1} ) = \ad ( [ \fG_{\bar 1} , \fG_{\bar 1} ] ) |_{\fG_{\bar 1}} \subset  \ad ( \fG_{\bar 0} ) |_{\fG_{\bar 1}}$. Therefore $\fG_{\bar 1}$ is an anti-Lie triple system with derivation algebra $\ad ( \fG_{\bar 0} ) |_{\fG_{\bar 1}}$.
\end{proof}

\begin{cor} \label{cor:LSAALTSbijection}
Up to isomorphism, the maps $( V , D(V) ) \rightarrow D(V) \oplus V$ and $\fG \rightarrow ( \fG_{\bar 1} , \ad ( \fG_{\bar 0} ) |_{\fG_{\bar 1}} )$ defined by Propositions~\ref{prop:ALTS-LSA} and \ref{prop:LSA-ALTS} provide a bijection between anti-Lie triple systems with a derivation algebra and Lie superalgebras $\fG$ with $\fG_{\bar 0}$ acting faithfully on $\fG_{\bar 1}$.
\end{cor}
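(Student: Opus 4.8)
The plan is to show the two constructions are mutually inverse, up to isomorphism, once one restricts attention to Lie superalgebras whose even part acts faithfully on the odd part. Write $\Phi$ for the assignment $(V,D(V)) \mapsto D(V) \oplus V$ of Proposition~\ref{prop:ALTS-LSA} and $\Psi$ for the assignment $\fG \mapsto ( \fG_{\bar 1} , \ad ( \fG_{\bar 0} ) |_{\fG_{\bar 1}} )$ of Proposition~\ref{prop:LSA-ALTS}. First I would observe that $\Phi$ always lands in the correct target: by definition $D(V)$ is a Lie subalgebra of $\Der^+ V \subset \fgl (V)$, so the action $X \cdot v = Xv$ of $\fG_{\bar 0} = D(V)$ on $\fG_{\bar 1} = V$ is faithful. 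This is precisely why the faithfulness hypothesis appears on the Lie superalgebra side of the claimed bijection.

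For $\Psi \circ \Phi$, start from an anti-Lie triple system $V$ with derivation algebra $D(V)$ and set $\fG = \Phi ( V , D(V) )$. Applying $\Psi$ recovers on $\fG_{\bar 1} = V$ the $3$-bracket $[[v,w],u]$, which by \eqref{eq:ALTSLSAbracket11} and \eqref{eq:dVV} equals $d(v,w)u = [v,w,u]$, the original triple product. The recovered derivation algebra is $\ad ( D(V) ) |_V$; since the action of $D(V)$ on $V$ is faithful, $\ad (X) |_V = X$ for each $X \in D(V)$, so $\ad ( D(V) ) |_V = D(V)$ as a subalgebra of $\fgl (V)$. Hence $\Psi \circ \Phi$ is the identity.

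For $\Phi \circ \Psi$, start from a Lie superalgebra $\fG$ with $\fG_{\bar 0}$ acting faithfully on $\fG_{\bar 1}$. Writing $\rho (X) = \ad (X) |_{\fG_{\bar 1}}$ as in \eqref{eq:LSAALTSRho}, Proposition~\ref{prop:LSA-ALTS} gives $\Psi ( \fG ) = ( \fG_{\bar 1} , \rho ( \fG_{\bar 0} ) )$, and $\Phi$ then produces $\fG^\prime = \rho ( \fG_{\bar 0} ) \oplus \fG_{\bar 1}$. I would define $\phi : \fG \rightarrow \fG^\prime$ by $\phi (X) = \rho (X)$ on $\fG_{\bar 0}$ and $\phi (v) = v$ on $\fG_{\bar 1}$. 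Faithfulness makes $\rho$ injective, and since $\rho$ is a Lie algebra homomorphism (the $[ {\bar 0}{\bar 0}{\bar 1} ]$ Jacobi identity) it restricts to an isomorphism from $\fG_{\bar 0}$ onto $\rho ( \fG_{\bar 0} )$; thus $\phi$ is a vector superspace isomorphism. To see it is a Lie superalgebra homomorphism I would check the three bracket types: the $[ {\bar 0}{\bar 0} ]$ case is that $\rho$ is a homomorphism; the $[ {\bar 0}{\bar 1} ]$ case reads $\phi ( [X,v] ) = [X,v] = \rho (X) v = [ \phi (X) , \phi (v) ]_{\fG^\prime}$ by \eqref{eq:LSAALTSRho} and the construction of $\fG^\prime$; and the $[ {\bar 1}{\bar 1} ]$ case uses \eqref{eq:LSAALTSd}, namely $\phi ( [v,w] ) = \rho ( [v,w] ) = d(v,w) = [v,w]_{\fG^\prime}$. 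Hence $\fG \cong \fG^\prime$.

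I expect the only real subtlety to be bookkeeping around the faithfulness condition and the phrase \emph{up to isomorphism}. The substance is that $\Phi$ manifestly produces a faithful action, whereas $\Psi$ factors the even part through its image in $\fgl ( \fG_{\bar 1} )$, so the two can only be inverse after discarding the kernel of the action; restricting to faithful Lie superalgebras removes exactly this ambiguity. No genuinely hard computation is involved, since all the required equivariance identities were already established in Propositions~\ref{prop:ALTS-LSA} and \ref{prop:LSA-ALTS}.
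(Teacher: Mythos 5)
Your proposal is correct and follows essentially the same route as the paper: both directions are handled by showing the two constructions are mutually inverse, with the key observation that $\ad(X)|_{V} = X$ for $X \in D(V) \subset \fgl(V)$ in one direction, and that faithfulness of the $\fG_{\bar 0}$-action is exactly the condition making $\rho : \fG_{\bar 0} \rightarrow \ad(\fG_{\bar 0})|_{\fG_{\bar 1}}$ an isomorphism in the other. Your explicit verification of the three bracket types for the map $\phi$ is slightly more detailed than the paper's compressed statement, but the substance is identical.
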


\begin{proof}
Proposition~\ref{prop:ALTS-LSA} implies that any anti-Lie triple system $V$ with derivation algebra $D(V)$ defines a Lie superalgebra $\fG$ with $\fG_{\bar 1} = V$ and $\fG_{\bar 0} = D(V)$. If $X \in \fG_{\bar 0}$ then \eqref{eq:LSAALTSRho} and \eqref{eq:ALTSLSAbracket01} imply $\rho (X) = X$, so $\ad ( \fG_{\bar 0} ) |_{\fG_{\bar 1}} = \fG_{\bar 0}$. Proposition~\ref{prop:LSA-ALTS} then implies that this Lie superalgebra $\fG$ defines the original anti-Lie triple system $\fG_{\bar 1} = V$ with derivation algebra $\ad ( \fG_{\bar 0} ) |_{\fG_{\bar 1}} = \fG_{\bar 0} = D(V)$.

Conversely, Proposition~\ref{prop:LSA-ALTS} implies that any Lie superalgebra $\fG$ defines an anti-Lie triple system $\fG_{\bar 1}$ with derivation algebra $\ad ( \fG_{\bar 0} ) |_{\fG_{\bar 1}}$. Proposition~\ref{prop:ALTS-LSA} then implies that $( \fG_{\bar 1} , \ad ( \fG_{\bar 0} ) |_{\fG_{\bar 1}} )$ defines a Lie superalgebra with even part $\ad ( \fG_{\bar 0} ) |_{\fG_{\bar 1}}$ and odd part $ \fG_{\bar 1}$. This Lie superalgebra is isomorphic to $\fG$ if and only if $\ad ( \fG_{\bar 0} ) |_{\fG_{\bar 1}} \cong \fG_{\bar 0}$ as Lie algebras. Since $\ad ( \fG_{\bar 0} ) |_{\fG_{\bar 1}} \cong \fG_{\bar 0} / \ker \rho$, where $\rho : \fG_{\bar 0} \rightarrow \ad ( \fG_{\bar 0} ) |_{\fG_{\bar 1}}$ is the surjective Lie algebra homomorphism defining the adjoint action of $\fG_{\bar 0}$ on $\fG_{\bar 1}$, it follows that $\ad ( \fG_{\bar 0} ) |_{\fG_{\bar 1}} \cong \fG_{\bar 0}$ if and only if $\rho$ is injective, which is precisely what it means for $\fG_{\bar 0}$ to act faithfully on $\fG_{\bar 1}$.
\end{proof}

\begin{remark}
The requirement of faithfulness in Corollary~\ref{cor:LSAALTSbijection} does not represent much loss of generality since one can always quotient $\fG_{\bar 0}$ by the kernel of its adjoint action on $\fG_{\bar 1}$ to make the action of $\fG_{\bar 0}$ on $\fG_{\bar 1}$ faithful.
\end{remark}

It follows that, up to isomorphism, every embedding superalgebra $\fG (1)$ with $\fG (1)_{\bar 0}$ acting faithfully on $\fG (1)_{\bar 1}$ can be constructed from a unique anti-Lie triple system $V$ with derivation algebra $D(V)$. Clearly every nondegenerate $\fG (1)_{\bar 0}$-invariant symmetric bilinear form on $\fG (1)_{\bar 1}$ corresponds to a unique nondegenerate $D(V)$-invariant symmetric bilinear form on $V$, so we can use the existence of either to recognise when $\fG (1)$ is $1$-admissible.


\subsubsection{Anti-Lie triple systems from anti-Jordan triple systems}
\label{sec:AntiLieTripleSystemsFromAntiJordanTripleSystems}

A triple system $V$ is called an {\emph{anti-Jordan triple system}} if 
\begin{equation}\label{eq:AJTSsym}
[ v , w , u ] = - [ u , w , v ]~,
\end{equation}
for all $v,w,u \in V$, and
\begin{equation}\label{eq:AJTSend}
[ v_1 , v_2 , [ w_1 , w_2 , w_3 ] ] = [ [ v_1 , v_2 , w_1 ] , w_2 , w_3 ] + [ w_1 , [ v_2 , v_1 , w_2 ] , w_3 ] + [ w_1 , w_2 , [ v_1 , v_2 , w_3 ] ]~,
\end{equation}
for all $v_1 , v_2 , w_1 , w_2 , w_3 \in V$. 

When written in terms of $d$, \eqref{eq:AJTSend} is equivalent to 
\begin{equation}\label{eq:AJTSendd}
[ d ( v_1 , v_2 ) , d ( w_1 , w_2 ) ] = d ( d ( v_1 , v_2 ) w_1 , w_2 ) + d ( w_1 , d ( v_2 , v_1 ) w_2 )~,
\end{equation}
for all $v_1 , v_2 , w_1 , w_2 \in V$. The reversed order of $v_1$ and $v_2$ in the second term on the right hand side in \eqref{eq:AJTSendd} means $d(V,V)$ is not necessarily contained in $\Der^\pm V$. However, if we define
\begin{equation}\label{eq:dPMVV}
d^\pm (v,w) = d(v,w) \pm d(w,v)~,
\end{equation}
for all $v,w \in V$, then clearly $d^\pm (V,V) \subset \Der^\pm V$. Any Lie subalgebra $D(V)$ of $\Der^+ V$ which contains $d^+ (V,V)$ will be referred to as a {\emph{derivation algebra}} of the anti-Jordan triple system $V$.

\begin{example} \label{ex:AJTSLA}
Let $V$ be a Lie algebra equipped with Lie bracket $[-,-]$ and an element $\phi \in V^*$ with $\phi ( [V,V] ) = 0$. Now define a 3-bracket on $V$ such that
\begin{equation}\label{eq:AJTSExLA}
[ v , w , u ] = \phi (w) [v,u]~,
\end{equation}
for all $v,w,u \in V$. Clearly \eqref{eq:AJTSsym} is satisfied since $[-,-]$ is skewsymmetric. 

From \eqref{eq:AJTSExLA} it follows that 
\begin{equation}\label{eq:AJTSExLAd}
d ( v,w ) = \phi (w) \ad (v)~,
\end{equation}
for all $v,w \in V$, where 
\begin{equation}\label{eq:LAad}
\ad (v) w = [v,w]~,
\end{equation}
for all $v,w \in V$. Therefore 
\begin{equation}\label{eq:AJTSExLAdTerm2}
d ( w_1 , d ( v_2 , v_1 ) w_2 ) \underset{\eqref{eq:AJTSExLAd}}{=} \phi ( d ( v_2 , v_1 ) w_2 ) \ad ( w_1 ) \underset{\eqref{eq:AJTSExLA}}{=} \phi ( v_1 ) \phi ( [ v_2 , w_2 ] ) \ad ( w_1 ) = 0~,
\end{equation}
for all $v_1 , v_2 , w_1 , w_2 \in V$, since $\phi ( [V,V] ) = 0$. It follows that the second term on the right hand side in \eqref{eq:AJTSendd} is identically zero, so $d (V,V) \subset \Der^+ V$ if  \eqref{eq:AJTSendd} is satisfied. This is indeed the case since 
\begin{align}\label{eq:AJTSExLAdTerms}
[ d ( v_1 , v_2 ) , d ( w_1 , w_2 ) ] &\underset{\eqref{eq:AJTSExLAd}}{=} \phi ( v_2 ) \phi ( w_2 ) [ \ad ( v_1 ) , \ad ( w_1 ) ] \nonumber \\
&\;\;\, = \;\;\,  \phi ( v_2 ) \phi ( w_2 ) \ad ( [ v_1 , w_1 ] ) \nonumber \\
&\underset{\eqref{eq:AJTSExLA}}{=} \phi ( w_2 ) \ad ( [ v_1 , v_2 , w_1 ] ) \nonumber \\
&\underset{\eqref{eq:AJTSExLAd}}{=} d ( d ( v_1 , v_2 ) w_1 , w_2 )~,
\end{align}
for all $v_1 , v_2 , w_1 , w_2 \in V$. Therefore $V$ is an anti-Jordan triple system.

From \eqref{eq:3Derivation} and \eqref{eq:AJTSExLA} it follows that $X \in \Der^+ V$ if and only if
\begin{equation}\label{eq:AJTSExLADer}
\phi (w) ( X [v,u] - [ Xv , u ] - [ v , Xu ] ) = \phi (Xw) [v,u]~,
\end{equation}
for all $v,w,u \in V$. If $X \in \ad (V)$ then the left and right hand sides of \eqref{eq:AJTSExLADer} both vanish due to the Jacobi identity for $V$ and $\phi ( [V,V] ) = 0$. Therefore $\ad (V)$ is a Lie subalgebra of $\Der^+ V$ which contains $d (V,V)$ and, as such, constitutes a derivation algebra of $V$.
\end{example}

\begin{example} \label{ex:AJTSDouble}
Let $V$ be an anti-Jordan triple system and let $\Delta$ be a two-dimensional vector space. Now define a 3-bracket on $V \otimes \Delta$ whose non-zero components are given by
\begin{equation}\label{eq:AJTSDouble}
[ v \otimes e_\pm , w \otimes e_\mp , u \otimes e_\pm ] = [v,w,u] \otimes e_\pm~,
\end{equation}
for all $v,w,u \in V$, where $e_\pm$ is a basis for $\Delta$. Clearly \eqref{eq:AJTSsym} is satisfied for $V \otimes \Delta$ since it is satisfied for $V$. 

From \eqref{eq:AJTSDouble} it follows that 
\begin{equation}\label{eq:AJTSDoubled}
d ( v \otimes e_\pm , w \otimes e_\mp ) = d(v,w) \otimes P_\pm~,
\end{equation}
for all $v,w \in V$, where $P_\pm \in \End \Delta$ are projections defined such that $P_\pm e_\pm = e_\pm$ and $P_\pm e_\mp = 0$. Therefore \eqref{eq:AJTSendd} is satisfied for $V \otimes \Delta$ since
\begin{align}\label{eq:AJTSDoubleEnd++}
[ d ( v_1 \otimes e_\pm , v_2 \otimes e_\mp ) , d ( w_1 \otimes e_\pm , w_2 \otimes e_\mp ) ]  &\underset{\eqref{eq:AJTSDoubled}}{=} [ d ( v_1 , v_2 ) , d ( w_1 , w_2 ) ] \otimes P_\pm \nonumber \\
&\underset{\eqref{eq:AJTSendd}}{=} d ( d ( v_1 , v_2 ) w_1 , w_2 ) \otimes P_\pm + d ( w_1 , d ( v_2 , v_1 ) w_2 ) \otimes P_\pm \nonumber \\
&\underset{\eqref{eq:AJTSDoubled}}{=} d ( d ( v_1 \otimes e_\pm , v_2 \otimes e_\mp )  ( w_1 \otimes e_\pm ) , w_2 \otimes e_\mp) \nonumber \\
&\hspace*{.8cm}+ d ( w_1 \otimes e_\pm , d ( v_2 \otimes e_\mp , v_1 \otimes e_\pm ) ( w_2 \otimes e_\mp ) )
\end{align}
and
\begin{align}\label{eq:AJTSDoubleEnd+-}
[ d ( v_1 \otimes e_\pm , v_2 \otimes e_\mp ) , d ( w_1 \otimes e_\mp , w_2 \otimes e_\pm ) ]  &\underset{\eqref{eq:AJTSDoubled}}{=} 0 \nonumber \\
&\underset{\eqref{eq:AJTSDoubled}}{=} d ( d ( v_1 \otimes e_\pm , v_2 \otimes e_\mp )  ( w_1 \otimes e_\mp ) , w_2 \otimes e_\pm) \nonumber \\
&\hspace*{.8cm}+ d ( w_1 \otimes e_\mp , d ( v_2 \otimes e_\mp , v_1 \otimes e_\pm ) ( w_2 \otimes e_\pm ) )~,
\end{align}
for all $v_1 , v_2 , w_1 , w_2 \in V$. So $V \otimes \Delta$ is an anti-Jordan triple system. (This construction is equivalent to Example 4 in \cite{Kamiya}.)

Now let us denote $\mathbb{1} = P_+ + P_-$ and $Q = P_+ - P_-$. If $X \in \Der^+ V$ then 
\begin{align}\label{eq:AJTSDoubleDer}
[ X \otimes \mathbb{1} , d ( v \otimes e_\pm , w \otimes e_\mp ) ]  &\underset{\eqref{eq:AJTSDoubled}}{=} [ X , d ( v , w ) ] \otimes P_\pm \nonumber \\
&\, \underset{\eqref{eq:3dDerivation}}{=} \, d ( Xv , w ) \otimes P_\pm + d ( v , Xw ) \otimes P_\pm \nonumber \\
&\underset{\eqref{eq:AJTSDoubled}}{=} d ( Xv \otimes e_\pm , w \otimes e_\mp ) + d ( v \otimes e_\pm , Xw \otimes  e_\mp ) \nonumber \\
&\;\;\, = \;\;\, d ( ( X \otimes \mathbb{1} ) ( v \otimes e_\pm ) , w \otimes e_\mp ) + d ( v \otimes e_\pm , ( X \otimes \mathbb{1} ) ( w \otimes  e_\mp ) )~,
\end{align}
for all $v,w \in V$, so $X \otimes \mathbb{1} \in \Der^+ ( V \otimes \Delta )$. If $X \in \Der^- V$ then\begin{align}\label{eq:AJTSDoubleAntiDer}
[ X \otimes Q , d ( v \otimes e_\pm , w \otimes e_\mp ) ]  &\underset{\eqref{eq:AJTSDoubled}}{=} \pm [ X , d ( v , w ) ] \otimes P_\pm \nonumber \\
&\, \underset{\eqref{eq:3dDerivation}}{=} \, \pm d ( Xv , w ) \otimes P_\pm \mp d ( v , Xw ) \otimes P_\pm \nonumber \\
&\underset{\eqref{eq:AJTSDoubled}}{=} \pm d ( Xv \otimes e_\pm , w \otimes e_\mp ) \mp d ( v \otimes e_\pm , Xw \otimes  e_\mp ) \nonumber \\
&\;\;\, = \;\;\, d ( ( X \otimes Q ) ( v \otimes e_\pm ) , w \otimes e_\mp ) + d ( v \otimes e_\pm , ( X \otimes Q ) ( w \otimes  e_\mp ) )~,
\end{align}
for all $v,w \in V$, so $X \otimes Q \in \Der^+ ( V \otimes \Delta )$. It follows that $\Der^+ V \otimes \mathbb{1} \oplus \Der^- V \otimes Q$ is a Lie subalgebra of $\Der^+ ( V \otimes \Delta )$.

Furthermore,
\begin{align}\label{eq:AJTSDoubled+}
d^+ ( v \otimes e_+ , w \otimes e_- ) &\underset{\eqref{eq:dPMVV}}{=} d ( v \otimes e_+ , w \otimes e_- ) + d ( w \otimes e_- ,  v \otimes e_+ )  \nonumber \\
&\underset{\eqref{eq:AJTSDoubled}}{=} d ( v , w ) \otimes P_+ + d ( w , v ) \otimes P_- \nonumber \\
 &\underset{\eqref{eq:dPMVV}}{=} \half ( d^+ (v,w) \otimes P_+ + d^- (v,w) \otimes P_+ + d^+ (w,v) \otimes P_- + d^- (w,v) \otimes P_- )  \nonumber \\
&\;\;\, = \;\;\,  \half ( d^+ (v,w) \otimes \mathbb{1} + d^- (v,w) \otimes Q )~,
\end{align}
for all $v,w \in V$. This means $\Der^+ V \otimes \mathbb{1} \oplus \Der^- V \otimes Q$ contains $d^+ ( V \otimes \Delta , V \otimes \Delta )$ and is therefore a derivation algebra of $V \otimes \Delta$. 
\end{example}

We will now describe the method introduced in \cite{Kamiya} to construct an anti-Lie triple system from an anti-Jordan triple system.

\begin{prop} \label{prop:ALTS-AJTS}
If $V$ is an anti-Jordan triple system with 3-bracket $[-,-,-]_J$ then $V$ is an anti-Lie triple system with 3-bracket $[-,-,-]_L$ defined by
\begin{equation}\label{eq:ALTSAJTS3bracket}
[ v , w , u ]_L = [ v , w , u ]_J + [ w , v , u ]_J~,
\end{equation}
for all $v,w,u \in V$. Furthermore, if $D(V)$ is a derivation algebra of $V$ as an anti-Jordan triple system then $D(V)$ is a derivation algebra of $V$ as an anti-Lie triple system. 
\end{prop}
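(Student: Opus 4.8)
The plan is to check the three defining axioms \eqref{eq:ALTSsym}--\eqref{eq:ALTSend} of an anti-Lie triple system for the symmetrised bracket $[-,-,-]_L$, and then to read off the derivation-algebra statement. The first thing I would record is that \eqref{eq:ALTSAJTS3bracket} together with \eqref{eq:dVV} gives $d_L (v,w) = d (v,w) + d (w,v) = d^+ (v,w)$, so the operator associated with the anti-Lie product is precisely the $d^+$ of \eqref{eq:dPMVV}. Axiom \eqref{eq:ALTSsym} is then immediate, since interchanging the first two slots in \eqref{eq:ALTSAJTS3bracket} merely swaps the two summands. For the cyclic axiom \eqref{eq:ALTScyc} I would expand $[v,w,u]_L + [w,u,v]_L + [u,v,w]_L$ into six $J$-brackets and use the anti-Jordan antisymmetry \eqref{eq:AJTSsym} in the form $[a,b,c]_J = -[c,b,a]_J$; the six terms then pair off into three cancelling pairs, for instance $[v,w,u]_J + [u,w,v]_J = 0$, so the total sum vanishes.

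The substantive step is the endomorphism axiom \eqref{eq:ALTSend}, which asserts that $d_L (V,V)$ consists of derivations with respect to $[-,-,-]_L$. The cleanest route is to prove the inclusion that every derivation of the anti-Jordan product is automatically a derivation of the anti-Lie product, i.e. $\Der^+ V \subset \Der^+_L V$, where here $\Der^+ V$ is computed from the $J$-bracket and $\Der^+_L V$ from the $L$-bracket. Given $X \in \Der^+ V$, I would write $X [a,b,c]_L = X [a,b,c]_J + X [b,a,c]_J$, apply the $J$-derivation rule \eqref{eq:3Derivation} to each of the two terms, and then regroup the resulting six $J$-brackets into the three symmetrised combinations $[Xa,b,c]_L$, $[a,Xb,c]_L$ and $[a,b,Xc]_L$. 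Since the text already records, via \eqref{eq:AJTSendd}, that $d^+ (V,V) \subset \Der^+ V$ for the anti-Jordan system, and since $d_L = d^+$, the inclusion just established places $d_L (V,V)$ inside $\Der^+_L V$, which is exactly the content of \eqref{eq:ALTSend}.

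The derivation-algebra claim then follows with no further computation. If $D(V)$ is a derivation algebra of the anti-Jordan system, it is by definition a Lie subalgebra of $\Der^+ V$ (for the $J$-product) containing $d^+ (V,V)$; the inclusion $\Der^+ V \subset \Der^+_L V$ shows it is a Lie subalgebra of $\Der^+_L V$, and it still contains $d_L (V,V) = d^+ (V,V)$. These are precisely the two conditions defining a derivation algebra of $V$ as an anti-Lie triple system. I expect the only real obstacle to be bookkeeping: ensuring that the regrouping in the endomorphism step matches the symmetrisation \eqref{eq:ALTSAJTS3bracket} slot by slot, and keeping clear throughout that the two incarnations of $\Der^+$ are taken with respect to different triple products.
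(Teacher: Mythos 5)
Your proof is correct, and for the symmetry axiom \eqref{eq:ALTSsym}, the cyclicity axiom \eqref{eq:ALTScyc} and the derivation-algebra claim it coincides with the paper's argument: the same pairing of six $J$-brackets via \eqref{eq:AJTSsym}, and the same regrouping of six terms to show that every $J$-derivation is an $L$-derivation. Where you genuinely diverge is in the treatment of \eqref{eq:ALTSend}. The paper proves it head-on, expanding $[v_1,v_2,[w_1,w_2,w_3]_L]_L$ into a sum of nested $J$-brackets, applying \eqref{eq:AJTSend} to each, and carefully resymmetrising — this is the fiddliest computation in the proof, because the reversed slots $[v_2,v_1,-]_J$ in the middle term of \eqref{eq:AJTSend} only recombine into an $L$-bracket after summing over the $v_1 \leftrightarrow v_2$ swap. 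You instead deduce \eqref{eq:ALTSend} as a corollary of two inclusions: the derivation-inclusion $\Der^+ V \subset \Der^+_L V$ (which you need anyway for the second half of the statement) composed with the already-recorded fact $d^+(V,V) \subset \Der^+ V$ from the discussion following \eqref{eq:dPMVV}, together with $d_L = d^+$. The mathematical content is the same — both routes ultimately rest on \eqref{eq:AJTSend} — but your ordering replaces the delicate nested-bracket bookkeeping with a composition of two cleaner statements, and makes the logical dependence of \eqref{eq:ALTSend} on the derivation lemma explicit. The one point worth flagging is that your argument leans on the unproved ``clearly'' in the paper's text for $d^+(V,V) \subset \Der^+ V$; if you wanted the proof self-contained you would add the one-line symmetrisation of \eqref{eq:AJTSendd} in $v_1 \leftrightarrow v_2$ that establishes it.
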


\begin{proof}
Condition \eqref{eq:ALTSsym} is satisfied since
\begin{equation}\label{eq:ALTSAJTS3bracketSym}
[ v , w , u ]_L \underset{\eqref{eq:ALTSAJTS3bracket}}{=} [ v , w , u ]_J + [ w , v , u ]_J \underset{\eqref{eq:ALTSAJTS3bracket}}{=} [ w , v , u ]_L~,
\end{equation}
for all $v,w,u \in V$. Condition \eqref{eq:ALTScyc} is satisfied since
\begin{align}\label{eq:ALTSAJTS3bracketCyc}
[ v , w , u ]_L + [ w , u , v ]_L + [ u , v , w ]_L &\underset{\eqref{eq:ALTSAJTS3bracket}}{=} [ v , w , u ]_J + [ w , v , u ]_J + [ w , u , v ]_J + [ u , w , v ]_J + [ u , v , w ]_J + [ v , u , w ]_J \nonumber \\
&\underset{\eqref{eq:AJTSsym}}{=} 0~,
\end{align}
for all $v,w,u \in V$. Condition \eqref{eq:ALTSend} is satisfied since
\begin{align}\label{eq:ALTSAJTS3bracketEnd}
[ v_1 , v_2 , [ w_1 , w_2 , w_3 ]_L ]_L &\underset{\eqref{eq:ALTSAJTS3bracket}}{=} [ v_1 , v_2 , [ w_1 , w_2 , w_3 ]_J ]_J + ( v_1 \leftrightarrow v_2, w_1 \leftrightarrow w_2 ) \nonumber \\
&\underset{\eqref{eq:AJTSend}}{=} [ [ v_1 , v_2 , w_1 ]_J , w_2 , w_3 ]_J + [ w_1 , [ v_2 , v_1 , w_2 ]_J , w_3 ]_J + [ w_1 , w_2 , [ v_1 , v_2 , w_3 ]_J ]_J + ( v_1 \leftrightarrow v_2, w_1 \leftrightarrow w_2 ) \nonumber \\
&\underset{\eqref{eq:ALTSAJTS3bracket}}{=} [ [ v_1 , v_2 , w_1 ]_L , w_2 , w_3 ]_J + [ w_1 , [ v_1 , v_2 , w_2 ]_L , w_3 ]_J + [ w_1 , w_2 , [ v_1 , v_2 , w_3 ]_L ]_J + ( w_1 \leftrightarrow w_2 )  \nonumber \\
&\underset{\eqref{eq:ALTSAJTS3bracket}}{=} [ [ v_1 , v_2 , w_1 ]_L , w_2 , w_3 ]_L + [ w_1 , [ v_1 , v_2 , w_2 ]_L , w_3 ]_L + [ w_1 , w_2 , [ v_1 , v_2 , w_3 ]_L ]_L~,
\end{align}
for all $v_1 , v_2 , w_1 , w_2 , w_3 \in V$. 

If $X$ is a derivation of $V$ as an anti-Jordan triple system then it is also a derivation of $V$ as an anti-Lie triple system since 
\begin{align}\label{eq:ALTSAJTS3bracketDer}
X [ v , w , u ]_L &\underset{\eqref{eq:ALTSAJTS3bracket}}{=} X [ v , w , u ]_J + X [ w , v , u ]_J \nonumber \\
&\;\underset{\eqref{eq:3Derivation}}{=} [ Xv , w , u ]_J + [ v , Xw , u ]_J + [ v , w , Xu ]_J + [ Xw , v , u ]_J + [ w , Xv , u ]_J  + [ w , v , Xu ]_J \nonumber \\
&\underset{\eqref{eq:ALTSAJTS3bracket}}{=} [ Xv , w , u ]_L + [ v , Xw , u ]_L + [ v , w , Xu ]_L~,
\end{align}
for all $v,w,u \in V$. Furthermore, \eqref{eq:ALTSAJTS3bracket} implies $d_L (V,V) = d^+_J (V,V)$. It follows that if $D(V)$ is a Lie subalgebra of $\Der^+_J V$ which contains $d^+_J (V,V)$ then it is also a Lie subalgebra of $\Der^+_L V$ which contains $d_L (V,V)$.
\end{proof}

\begin{example} \label{ex:ALTSLA}
Proposition~\ref{prop:ALTS-AJTS} implies that the anti-Jordan triple system $V$ in Example~\ref{ex:AJTSLA} with 3-bracket
\begin{equation}\label{eq:ALTSExLAJ}
[ v , w , u ]_J = \phi (w) [v,u]~,
\end{equation}
for all $v,w,u \in V$, can be thought of as an anti-Lie triple system with 3-bracket
\begin{equation}\label{eq:ALTSExLAL}
[ v , w , u ]_L = \phi (w) [v,u] + \phi (v) [w,u]~,
\end{equation}
for all $v,w,u \in V$. Moreover,
\begin{equation}\label{eq:ALTSExLAd}
d_L ( v,w ) \underset{\eqref{eq:ALTSAJTS3bracket}}{=} d_J^+ (v,w) \underset{\eqref{eq:AJTSExLAd}}{=} \phi (w) \ad (v) + \phi (v) \ad (w)~,
\end{equation}
for all $v,w \in V$. 

Proposition~\ref{prop:ALTS-LSA} further implies that the anti-Lie triple system $V$ above with derivation algebra $\ad (V)$ defines a Lie superalgebra $\fG$ with $\fG_{\bar 0} = \ad (V)$, $\fG_{\bar 1} = V$ and brackets
\begin{align}
[ \ad (v) , \ad (w) ]_\fG &= \ad ( [v,w] )~, \label{eq:ALTSLSAbracket00ExLA} \\
[ \ad (v) , w ]_\fG &= [v,w]~, \label{eq:ALTSLSAbracket01ExLA} \\
[v,w]_\fG &= \phi (w) \ad (v) + \phi (v) \ad (w)~, \label{eq:ALTSLSAbracket11ExLA}
\end{align}
for all $v,w \in V$. By definition, $\fG$ is $1$-admissible if and only if $V$ admits a nondegenerate $\ad (V)$-invariant symmetric bilinear form. 

For example, let $V = \fs \oplus \fz$ be a reductive Lie algebra with $\fs$ semisimple and $\fz$ one-dimensional. Since $\fz$ is central in $V$, we have $\ad (V) = \ad ( \fs ) \cong \fs$. Since $[V,V] = [ \fs , \fs ]  = \fs$, the condition $\phi ( [V,V] ) =0$ is equivalent to $\phi ( \fs ) = 0$. We can therefore assume $\phi \in \fz^*$. If $\phi \neq 0$ then there exists a unique $z \in \fz$ with $\phi (z) = 1$.

In this case $\fG$ has $\fG_{\bar 0} = \ad ( \fs )$, $\fG_{\bar 1} = \fs \oplus \fz$ with non-zero brackets
\begin{align}
[ \ad ( s ) , \ad ( t ) ]_\fG &= \ad ( [ s , t ] )~, \label{eq:ALTSLSAbracket00ExLASS} \\
[ \ad ( s ) , t ]_\fG &= [s,t]~, \label{eq:ALTSLSAbracket01ExLASS} \\
[z,s]_\fG &= \ad (s)~,  \label{eq:ALTSLSAbracket11ExLASS}
\end{align}
for all $s,t \in \fs$. Any nondegenerate $\ad ( \fs )$-invariant symmetric bilinear form on $\fs \oplus \fz$ is the direct sum of non-zero scalar multiples of Killing forms on each simple factor of $\fs$ and a non-zero bilinear form on $\fz$. Therefore $\fG$ is $1$-admissible but not simple since $\ad ( \fs ) \oplus \fs$ is a proper ideal of $\fG$. Furthermore, since $\fG_{\bar 0}$ is semisimple, Corollary~\ref{cor:ReductiveEquivalentCentralExtensionPbar} implies that the Poincar\'{e} superalgebra defined by $\fG$ must be non-trivial.
\end{example}

\begin{example} \label{ex:ALTSDouble}
Proposition~\ref{prop:ALTS-AJTS} implies that the anti-Jordan triple system $V \otimes \Delta$ in Example~\ref{ex:AJTSDouble} with 3-bracket whose non-zero components are
\begin{equation}\label{eq:ALTSExDoubleJ}
[ v \otimes e_\pm , w \otimes e_\mp , u \otimes e_\pm ]_J = [v,w,u] \otimes e_\pm~,
\end{equation}
for all $v,w,u \in V$, can be thought of as an anti-Lie triple system with 3-bracket whose non-zero components are
\begin{equation}\label{eq:ALTSExDoubleL}
[ v \otimes e_\pm , w \otimes e_\mp , u \otimes e_\pm ]_L = [v,w,u] \otimes e_\pm~,
\end{equation}
for all $v,w,u \in V$. Moreover,
\begin{equation}\label{eq:ALTSExDoubled}
d_L ( v \otimes e_+ , w \otimes e_- ) \underset{\eqref{eq:ALTSAJTS3bracket}}{=} d_J^+ ( v \otimes e_+ , w \otimes e_- ) \underset{\eqref{eq:AJTSDoubled+}}{=} \half ( d^+ (v,w) \otimes \mathbb{1} + d^- (v,w) \otimes Q )~,
\end{equation}
for all $v,w \in V$. 

Proposition~\ref{prop:ALTS-LSA} further implies that the anti-Lie triple system $V \otimes \Delta$ above with derivation algebra $\Der^+ V \otimes \mathbb{1} \oplus \Der^- V \otimes Q$ defines a Lie superalgebra $\fG$ with $\fG_{\bar 0} = \Der^+ V \otimes \mathbb{1} \oplus \Der^- V \otimes Q$, $\fG_{\bar 1} = V \otimes \Delta$ and non-zero brackets
\begin{align}
[ X^+ \otimes \mathbb{1} , Y^+ \otimes \mathbb{1} ]_\fG &= [ X^+ , Y^+ ] \otimes \mathbb{1}~, \label{eq:ALTSLSAbracket00ExDouble++} \\
[ X^+ \otimes \mathbb{1} , Y^- \otimes Q ]_\fG &= [ X^+ , Y^- ] \otimes Q~, \label{eq:ALTSLSAbracket00ExDouble+-} \\
[ X^- \otimes Q , Y^- \otimes Q ]_\fG &= [ X^- , Y^- ] \otimes \mathbb{1}~, \label{eq:ALTSLSAbracket00ExDouble--} \\
[ X^+ \otimes \mathbb{1} , v \otimes e_\pm ]_\fG &= X^+ v \otimes e_\pm~, \label{eq:ALTSLSAbracket01ExDouble+} \\
[ X^- \otimes Q , v \otimes e_\pm ]_\fG &= \pm X^- v \otimes e_\pm~, \label{eq:ALTSLSAbracket01ExDouble-} \\
[ v \otimes e_+ , w \otimes e_-  ]_\fG &=  \half ( d^+ (v,w) \otimes \mathbb{1} + d^- (v,w) \otimes Q )~, \label{eq:ALTSLSAbracket11ExDouble}
\end{align}
for all $X^\pm , Y^\pm \in \Der^\pm V$ and $v,w \in V$. 

Now let $b$ be a nondegenerate symmetric bilinear form on $V$ and let $\beta$ be a nondegenerate bilinear form on $\Delta$ with $\beta ( e_\pm , e_\mp ) = 0$. It follows that $b \otimes \beta$ is a nondegenerate symmetric bilinear form on $V \otimes \Delta$ that is invariant under $\Der^+ V \otimes \mathbb{1} \oplus \Der^- V \otimes Q$ if and only if $b$ is invariant under both $\Der^+ V$ and $\Der^- V$, in which case $\fG$ is $1$-admissible.
\end{example}


\subsection{$2$-admissible Lie superalgebras from polarised anti-Jordan triple systems}
\label{sec:PolarisedAntiJordanTripleSystems}

An anti-Lie or anti-Jordan triple system $V$ with derivation algebra $D(V)$ will be called {\emph{polarised}} if $V = V_+ \oplus V_-$, $D(V) V_\pm \subset V_\pm$ and $d ( V_\pm , V_\pm ) = 0$.

\begin{prop} \label{prop:PolALTS-3GradedLSA}
The Lie superalgebra defined by a polarised anti-Lie triple system with a derivation algebra as in Proposition~\ref{prop:ALTS-LSA} is 3-graded. Conversely, the anti-Lie triple system with a derivation algebra defined by a 3-graded Lie superalgebra as in Proposition~\ref{prop:LSA-ALTS} is polarised. 
\end{prop}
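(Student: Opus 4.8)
The plan is to verify both directions by directly unwinding the constructions in Propositions~\ref{prop:ALTS-LSA} and \ref{prop:LSA-ALTS}, matching the three polarisation conditions term by term with the three defining conditions for a 3-graded Lie superalgebra. The splitting $V = V_+ \oplus V_-$ will correspond to $\fG_{\bar 1} = \fG_{\bar 1}^+ \oplus \fG_{\bar 1}^-$, the condition $D(V) V_\pm \subset V_\pm$ to $[ \fG_{\bar 0} , \fG_{\bar 1}^\pm ] \subset \fG_{\bar 1}^\pm$, and $d ( V_\pm , V_\pm ) = 0$ to $[ \fG_{\bar 1}^\pm , \fG_{\bar 1}^\pm ] = 0$.

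For the forward direction, suppose $V = V_+ \oplus V_-$ is a polarised anti-Lie triple system with derivation algebra $D(V)$, and let $\fG$ be the Lie superalgebra of Proposition~\ref{prop:ALTS-LSA}, so $\fG_{\bar 0} = D(V)$ and $\fG_{\bar 1} = V$. I would set $\fG_{\bar 1}^\pm = V_\pm$, giving the required decomposition of $\fG_{\bar 1}$. The bracket \eqref{eq:ALTSLSAbracket01} reads $[X,v] = Xv$ for $X \in \fG_{\bar 0}$, so $D(V) V_\pm \subset V_\pm$ immediately yields $[ \fG_{\bar 0} , \fG_{\bar 1}^\pm ] \subset \fG_{\bar 1}^\pm$. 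Likewise, the bracket \eqref{eq:ALTSLSAbracket11} reads $[v,w] = d(v,w)$, so $d ( V_\pm , V_\pm ) = 0$ yields $[ \fG_{\bar 1}^\pm , \fG_{\bar 1}^\pm ] = 0$. These are precisely the defining conditions for $\fG$ to be 3-graded.

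For the converse, suppose $\fG$ is a 3-graded Lie superalgebra and let $( V , D(V) )$ be the anti-Lie triple system with derivation algebra of Proposition~\ref{prop:LSA-ALTS}, so $V = \fG_{\bar 1}$ and $D(V) = \ad ( \fG_{\bar 0} ) |_{\fG_{\bar 1}}$. I would set $V_\pm = \fG_{\bar 1}^\pm$, giving $V = V_+ \oplus V_-$. Each element of $D(V)$ acts as $\rho (X) v = [X,v]$ by \eqref{eq:LSAALTSRho}, so $[ \fG_{\bar 0} , \fG_{\bar 1}^\pm ] \subset \fG_{\bar 1}^\pm$ gives $D(V) V_\pm \subset V_\pm$. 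Finally, the relation \eqref{eq:LSAALTSd} gives $d(v,w) = \rho ( [v,w] )$ in terms of the $[ {\bar 1}{\bar 1} ]$ bracket; since $[ \fG_{\bar 1}^\pm , \fG_{\bar 1}^\pm ] = 0$ forces $[v,w] = 0$ for $v,w \in V_\pm$, we obtain $d ( V_\pm , V_\pm ) = 0$. Together these are exactly the polarisation conditions.

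Both directions are therefore routine verifications, and I do not anticipate any genuine obstacle; the only point requiring care is the identification $d(v,w) = \rho ( [v,w] )$ from \eqref{eq:LSAALTSd}, which is what converts the vanishing of the $[ {\bar 1}{\bar 1} ]$ bracket on $\fG_{\bar 1}^\pm$ into the vanishing of $d$ on $V_\pm$ (and conversely). Keeping the correspondence $V_\pm \leftrightarrow \fG_{\bar 1}^\pm$ fixed throughout makes the equivalence of the two sets of conditions transparent.
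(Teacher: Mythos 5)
Your proposal is correct and follows essentially the same route as the paper: both directions are handled by identifying $V_\pm$ with $\fG_{\bar 1}^\pm$ and matching the three polarisation conditions against the 3-grading conditions via the brackets \eqref{eq:ALTSLSAbracket01}, \eqref{eq:ALTSLSAbracket11} and the relation $d(v,w) = \rho([v,w])$ from \eqref{eq:LSAALTSd}. Nothing is missing.
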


\begin{proof}
If $V$ is an anti-Lie triple system with derivation algebra $D(V)$ then Proposition~\ref{prop:ALTS-LSA} implies $\fG$ is a Lie superalgebra with $\fG_{\bar 0} = D(V)$ and $\fG_{\bar 1} = V$. If $( V , D(V) )$ is polarised then define $\fG_{\bar 1}^\pm = V_\pm$ so that $\fG_{\bar 1} = \fG_{\bar 1}^+ \oplus \fG_{\bar 1}^-$. The $[{\bar 0}{\bar 1}]$ bracket for $\fG$ in \eqref{eq:ALTSLSAbracket01} has $[ \fG_{\bar 0} , \fG_{\bar 1}^\pm ] \subset \fG_{\bar 1}^\pm$ since $D(V) V_\pm \subset V_\pm$. The $[{\bar 1}{\bar 1}]$ bracket for $\fG$ in \eqref{eq:ALTSLSAbracket11} has $[ \fG_{\bar 1}^\pm , \fG_{\bar 1}^\pm ] = 0$ since $d ( V_\pm , V_\pm ) = 0$. Therefore $\fG$ is 3-graded.

If $\fG$ is a Lie superalgebra then Proposition~\ref{prop:LSA-ALTS} implies $\fG_{\bar 1}$ is an anti-Lie triple system with derivation algebra $\ad ( \fG_{\bar 0} ) |_{\fG_{\bar 1}}$. If $\fG$ is 3-graded then $\fG_{\bar 1} = \fG_{\bar 1}^+ \oplus \fG_{\bar 1}^-$. Furthermore, $\ad ( \fG_{\bar 0} ) \fG_{\bar 1}^\pm = [ \fG_{\bar 0} , \fG_{\bar 1}^\pm ] \subset \fG_{\bar 1}^\pm$ and $d ( \fG_{\bar 1}^\pm , \fG_{\bar 1}^\pm ) = \ad ( [ \fG_{\bar 1}^\pm , \fG_{\bar 1}^\pm ] ) |_{\fG_{\bar 1}} = 0$ with respect to the 3-bracket for $\fG_{\bar 1}$ in \eqref{eq:LSAALTS3bracket}. Therefore $( \fG_{\bar 1} , \ad ( \fG_{\bar 0} ) |_{\fG_{\bar 1}} )$ is polarised.
\end{proof}

\begin{cor} \label{cor:3GradedLSAPolALTSbijection}
Up to isomorphism, there is a bijective correspondence between polarised anti-Lie triple systems with a derivation algebra and 3-graded Lie superalgebras $\fG$ with $\fG_{\bar 0}$ acting faithfully on $\fG_{\bar 1}$.
\end{cor}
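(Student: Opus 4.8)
The plan is to deduce this simply by restricting the bijection already established in Corollary~\ref{cor:LSAALTSbijection} to the appropriate subclass on each side, using Proposition~\ref{prop:PolALTS-3GradedLSA} to identify which objects correspond. Recall that Corollary~\ref{cor:LSAALTSbijection} provides, up to isomorphism, a bijection between anti-Lie triple systems with a derivation algebra and Lie superalgebras $\fG$ whose even part $\fG_{\bar 0}$ acts faithfully on $\fG_{\bar 1}$, implemented by the mutually inverse maps $( V , D(V) ) \mapsto D(V) \oplus V$ and $\fG \mapsto ( \fG_{\bar 1} , \ad ( \fG_{\bar 0} ) |_{\fG_{\bar 1}} )$.

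First I would observe that Proposition~\ref{prop:PolALTS-3GradedLSA} shows precisely that these two maps interchange the property of being polarised (on the triple system side) with the property of being 3-graded (on the Lie superalgebra side): the Lie superalgebra built from a polarised anti-Lie triple system is 3-graded, and conversely the anti-Lie triple system extracted from a 3-graded Lie superalgebra is polarised. Moreover, the two decompositions are matched in the obvious way, $\fG_{\bar 1}^\pm = V_\pm$, so the constructions respect the grading data and not merely its existence.

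It then follows that the bijection of Corollary~\ref{cor:LSAALTSbijection} restricts to a bijection between the subclass of polarised anti-Lie triple systems with a derivation algebra and the subclass of 3-graded Lie superalgebras with $\fG_{\bar 0}$ acting faithfully on $\fG_{\bar 1}$. Indeed, the round-trip identities establishing that the two maps are mutually inverse are inherited verbatim from Corollary~\ref{cor:LSAALTSbijection}; the only additional content needed is that each map sends the designated subclass into the other, which is exactly the two directions of Proposition~\ref{prop:PolALTS-3GradedLSA}. I expect no genuine obstacle here, since the difficult work—both the construction of the inverse maps and the faithfulness bookkeeping—has already been carried out, and the corollary is a formal consequence of combining Corollary~\ref{cor:LSAALTSbijection} with Proposition~\ref{prop:PolALTS-3GradedLSA}.
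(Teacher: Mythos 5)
Your proposal is correct and matches the paper's own proof, which simply substitutes Proposition~\ref{prop:PolALTS-3GradedLSA} into Corollary~\ref{cor:LSAALTSbijection}; you have merely spelled out the restriction argument in more detail. No further comment is needed.
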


\begin{proof}
Substitute Proposition~\ref{prop:PolALTS-3GradedLSA} into Corollary~\ref{cor:LSAALTSbijection}.
\end{proof}

Proposition~\ref{prop:ALTS-AJTS} implies that any anti-Jordan triple system with a derivation algebra defines an anti-Lie triple system with the same derivation algebra. For polarised triple systems, this correspondence is actually bijective in the sense of the following theorem (which generalises Theorem 1.4 in \cite{Kamiya}).

\begin{thm} \label{thm:PolALTS-PolAJTS}
There is a bijective correspondence between polarised anti-Lie triple systems and polarised anti-Jordan triple systems (both with the same derivation algebra). 
\end{thm}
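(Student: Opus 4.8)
The plan is to identify the map of Proposition~\ref{prop:ALTS-AJTS}, which sends an anti-Jordan $3$-bracket $[-,-,-]_J$ to the anti-Lie $3$-bracket $[v,w,u]_L = [v,w,u]_J + [w,v,u]_J$, as the desired correspondence and to show that it restricts to a bijection between the polarised systems. Write $V = V_+ \oplus V_-$ with projections $P_\pm : V \to V_\pm$. First I would establish two structural lemmas. For a polarised anti-Jordan triple system, combining $d(V_\pm , V_\pm)=0$ with the skew symmetry \eqref{eq:AJTSsym} and $D(V) V_\pm \subset V_\pm$ forces the only surviving brackets to be $[v_+, w_-, u_+]_J \in V_+$ and $[v_-, w_+, u_-]_J \in V_-$, each skew in its two like-graded arguments; in particular $[v_+, w_-, u_-]_J = 0$ and $[v_-, w_+, u_+]_J = 0$. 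For a polarised anti-Lie triple system, the same vanishing $d(V_\pm , V_\pm)=0$ together with the symmetry \eqref{eq:ALTSsym} and the cyclic identity \eqref{eq:ALTScyc} forces the surviving brackets to be $[v_+, w_-, u_+]_L \in V_+$ (skew in the outer arguments) and $[v_+, w_-, u_-]_L \in V_-$ (skew in the last two arguments), together with their partners under the first-two symmetry.

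Second, I would verify that the symmetrisation map sends polarised to polarised (immediate from $d_L(v,w) = d_J(v,w) + d_J(w,v)$, which kills the like-graded blocks) and that it is injective on polarised anti-Jordan systems. Injectivity is the conceptual crux and is where polarisation does the work: for opposite-graded first two arguments the two summands $[v,w,u]_J$ and $[w,v,u]_J$ have disjoint support — by the first structural lemma they are non-zero for complementary gradings of $u$ and take values in complementary summands $V_\pm$ — so the symmetrisation loses no information. Concretely this yields the explicit inverse
\[
[v,w,u]_J := [P_+ v, P_- w, P_+ u]_L + [P_+ w, P_- v, P_- u]_L ,
\]
defined on any polarised anti-Lie triple system. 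I would check directly that this $3$-bracket satisfies \eqref{eq:AJTSsym} (using the two skew symmetries from the anti-Lie structural lemma), is polarised, and symmetrises back to $[-,-,-]_L$, and that composing in the other order returns the original anti-Jordan bracket.

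The main obstacle is verifying the anti-Jordan fundamental identity \eqref{eq:AJTSend} for the reconstructed bracket. The clean route is to pass to the endomorphism form \eqref{eq:AJTSendd} and exploit the block structure induced by polarisation: $d_J(v_+, w_-)$ annihilates $V_-$ and restricts to an endomorphism of $V_+$, $d_J(v_-, w_+)$ annihilates $V_+$ and restricts to $V_-$, while $d_L(v_+, w_-)$ is block-diagonal with these two as its $V_+$- and $V_-$-blocks. Restricting the anti-Lie derivation property \eqref{eq:3dDerivation} for $d_L(v_1, v_2) \in \Der^+ V$ to a single graded block and translating back through the inverse formula reproduces \eqref{eq:AJTSendd} exactly; the apparent mismatch between the two identities — the swapped arguments $d_J(v_2, v_1)$ in the last term of \eqref{eq:AJTSendd} versus $d_L(v_1, v_2)$ in \eqref{eq:3dDerivation} — is reconciled precisely by the relation $[w_-, v_+, u_-]_J = [v_+, w_-, u_-]_L$ that the inverse formula encodes. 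A short case analysis over the graded types of the five arguments, most cases being vacuous by polarisation, completes this step.

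Finally, I would confirm that the derivation algebra is preserved in both directions. One containment $\Der^+_J V \subset \Der^+_L V$ is already contained in the proof of Proposition~\ref{prop:ALTS-AJTS}, and $d_L(V,V) = d^+_J(V,V)$ identifies the minimal derivation algebras; the reverse containment for grading-preserving endomorphisms follows from the component identification above, since $X P_\pm = P_\pm X$ lets the $L$-derivation property pass term by term to the $J$-bracket. Hence any fixed derivation algebra $D(V)$ serves both structures, and the two assignments are mutually inverse bijections intertwining the derivation algebras.
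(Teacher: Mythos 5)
Your proposal is correct and follows essentially the same route as the paper: the forward map is the symmetrisation of Proposition~\ref{prop:ALTS-AJTS}, and your explicit inverse $[v,w,u]_J = [P_+v,P_-w,P_+u]_L + [P_+w,P_-v,P_-u]_L$ coincides (via the symmetry \eqref{eq:ALTSsym} in the first two arguments) with the paper's case-by-case definition \eqref{eq:PolALTSAJTS3bracket+-}--\eqref{eq:PolALTSAJTS3bracket++}, with the same reconciliation of the swapped arguments in \eqref{eq:AJTSend} against \eqref{eq:ALTSend}. The only differences are presentational: you package the graded support analysis as explicit structural lemmas and phrase the fundamental-identity check in the endomorphism form \eqref{eq:AJTSendd}, whereas the paper verifies the identities directly on graded components.
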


\begin{proof}
If $V$ is an anti-Jordan triple system with 3-bracket $[-,-,-]_J$ and derivation algebra $D(V)$ then Proposition~\ref{prop:ALTS-AJTS} implies that $V$ is an anti-Lie triple system with 3-bracket $[-,-,-]_L$ given by \eqref{eq:ALTSAJTS3bracket} and derivation algebra $D(V)$. If $( V , D(V) )$ is polarised as an anti-Jordan triple system then clearly it is also polarised as an anti-Lie triple system since $d_L ( V_\pm , V_\pm ) = d^+_J ( V_\pm , V_\pm ) = 0$.

Conversely, if $V$ is a polarised anti-Lie triple system with 3-bracket $[-,-,-]_L$ and derivation algebra $D(V)$, let us define a 3-bracket $[-,-,-]_J$ on $V$ such that
\begin{align}
[ v , w_\mp , u_\pm ]_J &= [ v , w_\mp , u_\pm ]_L~, \label{eq:PolALTSAJTS3bracket+-} \\
[ v , w_\pm , u_\pm ]_J &= 0~, \label{eq:PolALTSAJTS3bracket++} 
\end{align}
for all $v \in V$ and $w_\pm , u_\pm \in V_\pm$. 

Since $d_L ( V_\pm , V_\pm ) = 0$, we have
\begin{equation}\label{eq:PolALTSAJTS3bracket++-}
[ v_\pm , w_\pm , u ]_J \underset{\eqref{eq:PolALTSAJTS3bracket++}}{=} [ v_\pm , w_\pm , u_\mp ]_J \underset{\eqref{eq:PolALTSAJTS3bracket+-}}{=} [ v_\pm , w_\pm , u_\mp ]_L = 0~,
\end{equation}
for all $v_\pm , w_\pm \in V_\pm$ and $u \in V$, so $d_J ( V_\pm , V_\pm ) = 0$. Furthermore, 
\begin{equation}\label{eq:PolALTSAJTS3bracket+-+}
[ v_\pm , w_\mp , u ]_J \underset{\eqref{eq:PolALTSAJTS3bracket++}}{=} [ v_\pm , w_\mp , u_\pm ]_J \underset{\eqref{eq:PolALTSAJTS3bracket+-}}{=} [ v_\pm , w_\mp , u_\pm ]_L~,
\end{equation}
for all $v_\pm , w_\pm \in V_\pm$ and $u \in V$, so $d_J ( V_\pm , V_\mp ) V = d_L ( V_\pm , V_\mp ) V_\pm \subset V_\pm$ since $d_L (V,V) \subset D(V)$. 

It follows that the only non-zero component of $[-,-,-]_J$ is $[ V_\pm , V_\mp , V_\pm ]_J \subset V_\pm$. Therefore \eqref{eq:AJTSsym} is satisfied since
\begin{equation}\label{eq:PolALTSAJTS3bracketSym}
[ v_\pm , w_\mp , u_\pm ]_J + [ u_\pm , w_\mp , v_\pm ]_J \underset{\eqref{eq:PolALTSAJTS3bracket+-}}{=} [ v_\pm , w_\mp , u_\pm ]_L + [ u_\pm , w_\mp , v_\pm ]_L \underset{\eqref{eq:ALTScyc}}{=} - [ u_\pm , v_\pm , w_\mp ]_L = 0~,
\end{equation}
for all $v_\pm , w_\pm , u_\pm \in V_\pm$, using $d_L ( V_\pm , V_\pm ) = 0$ in the last equality. Moreover, \eqref{eq:AJTSend} is satisfied since
\begin{align}\label{eq:PolALTSAJTS3bracketEnd}
[ v_\pm^1 , v_\mp^2 , [ w_\pm^1 , w_\mp^2 , w_\pm^3 ]_J ]_J &\underset{\eqref{eq:PolALTSAJTS3bracket+-}}{=} [ v_\pm^1 , v_\mp^2 , [ w_\pm^1 , w_\mp^2 , w_\pm^3 ]_L ]_L \nonumber \\
&\, \underset{\eqref{eq:ALTSend}}{=} \,  [ [ v_\pm^1 , v_\mp^2 , w_\pm^1 ]_L , w_\mp^2 , w_\pm^3 ]_L + [ w_\pm^1 , [ v_\pm^1 , v_\mp^2 , w_\mp^2 ]_L , w_\pm^3 ]_L + [ w_\pm^1 , w_\mp^2 , [ v_\pm^1 , v_\mp^2 , w_\pm^3 ]_L ]_L \nonumber \\
&\underset{\eqref{eq:PolALTSAJTS3bracket+-}}{=} [ [ v_\pm^1 , v_\mp^2 , w_\pm^1 ]_J , w_\mp^2 , w_\pm^3 ]_J + [ w_\pm^1 , [ v_\mp^2 , v_\pm^1 , w_\mp^2 ]_J , w_\pm^3 ]_J + [ w_\pm^1 , w_\mp^2 , [ v_\pm^1 , v_\mp^2 , w_\pm^3 ]_J ]_J~,
\end{align}
for all $v_\pm^1 , v_\pm^2 , w_\pm^1 , w_\pm^2 , w_\pm^3 \in V_\pm$. Therefore $V$ is an anti-Jordan triple system with 3-bracket $[-,-,-]_J$.
 
Since $[ V_\pm , V_\mp , V_\pm ]_J = [ V_\pm , V_\mp , V_\pm ]_L$ then clearly $D(V)$ is a Lie algebra of derivations of $V$ as an anti-Jordan triple system because the same is true of $V$ as an anti-Lie triple system. Furthermore, 
\begin{equation}\label{eq:PolALTSAJTSdJ++}
d_J^+ ( v_+ , w_- ) u_+ \underset{\eqref{eq:dPMVV}}{=} d_J ( v_+ , w_- ) u_+ + d_J ( w_- , v_+ ) u_+  \underset{\eqref{eq:PolALTSAJTS3bracket++}}{=} d_J ( v_+ , w_- ) u_+ \underset{\eqref{eq:PolALTSAJTS3bracket+-}}{=} d_L ( v_+ , w_- ) u_+~,
\end{equation}
for all $v_+ , u_+ \in V_+$, $w_- \in V_-$, and 
\begin{equation}\label{eq:PolALTSAJTSdJ+-}
d_J^+ ( v_+ , w_- ) u_- \underset{\eqref{eq:dPMVV}}{=} d_J ( v_+ , w_- ) u_- + d_J ( w_- , v_+ ) u_- \underset{\eqref{eq:PolALTSAJTS3bracket++}}{=}  d_J ( w_- , v_+ ) u_- \underset{\eqref{eq:PolALTSAJTS3bracket+-}}{=} d_L ( w_- , v_+ ) u_- \underset{\eqref{eq:ALTSsym}}{=} d_L ( v_+ , w_- ) u_-~,
\end{equation}
for all $v_+ \in V_+$, $w_- , u_- \in V_-$, so $d_J^+ ( V_+ , V_- ) = d_L ( V_+ , V_- )$. It follows that $d_J^+ ( V , V ) = d_L ( V , V ) \subset D(V)$ since $d_J ( V_\pm , V_\pm ) = 0$ and $d_L ( V_\pm , V_\pm ) = 0$. Therefore $D(V)$ is a derivation algebra of $V$ as an anti-Jordan triple system and $( V , D(V) )$ remains polarised.

Applying the construction in Proposition~\ref{prop:ALTS-AJTS} to the polarised anti-Jordan triple system with 3-bracket $[-,-,-]_J$ above defines a polarised anti-Lie triple system with precisely the original 3-bracket $[-,-,-]_L$. Conversely, applying the construction in Proposition~\ref{prop:ALTS-AJTS} to any polarised anti-Jordan triple system with 3-bracket $[-,-,-]_J$ and then applying the construction above to the resulting polarised anti-Lie triple system with 3-bracket $[-,-,-]_L$ defines a polarised anti-Jordan triple system with precisely the original 3-bracket $[-,-,-]_J$. 
\end{proof}

\begin{remark}
Yet another equivalent way to present a polarised anti-Lie or anti-Jordan triple system $V = V_+ \oplus V_-$ is as a so-called {\emph{anti-Jordan pair}} $( V_+ , V_- )$ equipped with a pair of trilinear maps $V_\pm \times V_\mp \times V_\pm \rightarrow V_\pm$ which encode the 3-bracket (see \cite{FaulknerFerrar,Kamiya}).
\end{remark}

It follows that, up to isomorphism, every embedding superalgebra $\fG (2)$ with $\fG (2)_{\bar 0}$ acting faithfully on $\fG (2)_{\bar 1}$ can be constructed from a unique polarised anti-Jordan triple system $V$ with derivation algebra $D(V)$. Clearly every pair of nondegenerate $\fG (2)_{\bar 0}$-invariant symmetric bilinear forms on $\fG (2)_{\bar 1}^\pm$ correspond to a unique pair of nondegenerate $D(V)$-invariant symmetric bilinear forms on $V_\pm$, so we can use the existence of either to recognise when $\fG (2)$ is $2$-admissible.

\begin{example} \label{ex:AJTSLAPM}
Let $V = V_+ \oplus V_-$ be a Lie algebra equipped with Lie bracket $[-,-]$ and elements $\phi_\pm \in V_\pm^*$ with $\phi_\pm ( [ V_\pm , V_\pm ] ) = 0$. Now define a 3-bracket $[-,-,-]_J$ on $V$ with non-zero components
\begin{equation}\label{eq:AJTSExLAPM}
[ v_\pm , w_\mp , u_\pm ]_J = \phi_\mp ( w_\mp ) [ v_\pm , u_\pm ]~,
\end{equation}
for all $v_\pm , w_\pm , u_\pm \in V_\pm$. Clearly \eqref{eq:AJTSsym} is satisfied since $[-,-]$ is skewsymmetric. 

From \eqref{eq:AJTSExLAPM} it follows that 
\begin{equation}\label{eq:AJTSExLAPMd}
d_J ( v_\pm , w_\mp ) = \phi_\mp ( w_\mp ) \ad ( v_\pm )~,
\end{equation}
for all $v_\pm , w_\pm \in V_\pm$. Furthermore, $d_J ( V_\pm , V_\pm ) = 0$. 

Therefore 
\begin{equation}\label{eq:AJTSExLAPMdTerm2}
d_J ( w_\pm^1 , d_J ( v_\mp^2 , v_\pm^1 ) w_\mp^2 ) \underset{\eqref{eq:AJTSExLAPMd}}{=} \phi_\mp ( d_J ( v_\mp^2 , v_\pm^1 ) w_\mp^2 ) \ad ( w_\pm^1 ) \underset{\eqref{eq:AJTSExLAPM}}{=} \phi_\pm ( v_\pm^1 ) \phi_\mp ( [ v_\mp^2 , w_\mp^2 ] ) \ad ( w_\pm^1 ) = 0~,
\end{equation}
for all $v_\pm^1 , v_\pm^2 , w_\pm^1 , w_\pm^2 \in V_\pm$, since $\phi_\pm ( [ V_\pm , V_\pm ] ) = 0$. It follows that the second term on the right hand side in \eqref{eq:AJTSendd} is identically zero, so $d_J (V,V) \subset \Der^+ V$ if  \eqref{eq:AJTSendd} is satisfied since $d_J ( V_\pm , V_\mp ) V_\mp = 0$. This is indeed the case since 
\begin{align}\label{eq:AJTSExLAPMdTerms}
[ d_J ( v_\pm^1 , v_\mp^2 ) , d_J ( w_\pm^1 , w_\mp^2 ) ] &\underset{\eqref{eq:AJTSExLAPMd}}{=} \phi_\mp ( v_\mp^2 ) \phi_\mp ( w_\mp^2 ) [ \ad ( v_\pm^1 ) , \ad ( w_\pm^1 ) ] \nonumber \\
&\;\;\, = \;\;\,  \phi_\mp ( v_\mp^2 ) \phi_\mp ( w_\mp^2 ) \ad ( [ v_\pm^1 , w_\pm^1 ] ) \nonumber \\
&\underset{\eqref{eq:AJTSExLAPM}}{=} \phi_\mp ( w_\mp^2 ) \ad ( [ v_\pm^1 , v_\mp^2 , w_\pm^1 ]_J ) \nonumber \\
&\underset{\eqref{eq:AJTSExLAPMd}}{=} d_J ( d_J ( v_\pm^1 , v_\mp^2 ) w_\pm^1 , w_\mp^2 )~,
\end{align}
for all $v_\pm^1 , v_\pm^2 , w_\pm^1 , w_\pm^2 \in V_\pm$. Therefore $V$ is an anti-Jordan triple system.

From \eqref{eq:3Derivation} and \eqref{eq:AJTSExLAPM} it follows that any $X \in \End V_+ \oplus \End V_-$ is a derivation of $V$ if and only if
\begin{equation}\label{eq:AJTSExLAPMDer}
\phi_\mp ( w_\mp ) ( X [ v_\pm , u_\pm ] - [ X v_\pm , u_\pm ] - [ v_\pm , X u_\pm ] ) = \phi_\mp ( X w_\mp ) [ v_\pm , u_\pm ]~,
\end{equation}
for all $v_\pm , w_\pm , u_\pm \in V_\pm$. If $X \in \ad V$ then the left and right hand sides of \eqref{eq:AJTSExLAPMDer} both vanish due to the Jacobi identity for $V_\pm$ and $\phi_\mp ( [ V_\mp , V_\mp ] ) = 0$. Therefore $\ad ( V )$ is a Lie subalgebra of $\Der^+ V$ which contains $d_J (V,V)$ and, as such, constitutes a derivation algebra of $V$. Since $\ad ( V ) V_\pm = [ V_\pm , V_\pm ] \subset V_\pm$ and $d_J ( V_\pm , V_\pm ) = 0$, it follows that $( V , \ad ( V ) )$ is polarised.

Proposition~\ref{prop:ALTS-AJTS} and Theorem~\ref{thm:PolALTS-PolAJTS} imply that $V$ can be thought of as a polarised anti-Lie triple system with 3-bracket $[-,-,-]_L$ whose non-zero components are
\begin{equation} \label{eq:ALTSExLAPMLPM} 
[ v_\pm , w_\mp , u_\pm ]_L = \phi_\mp ( w_\mp ) [ v_\pm , u_\pm ]~,
\end{equation}
for all $v_\pm , w_\pm , u_\pm \in V_\pm$, and derivation algebra $\ad ( V )$. Moreover,
\begin{equation}\label{eq:ALTSExLAPMd}
d_L ( v_+ , w_- ) \underset{\eqref{eq:ALTSAJTS3bracket}}{=} d_J^+ ( v_+ , w_- ) \underset{\eqref{eq:AJTSExLAPMd}}{=} \phi_+ ( v_+ ) \ad ( w_- ) + \phi_- ( w_- ) \ad ( v_+ )~,
\end{equation}
for all $v_+ \in V_+$ and $w_- \in V_-$. 

Propositions~\ref{prop:ALTS-LSA} and \ref{prop:PolALTS-3GradedLSA} further imply that the polarised anti-Lie triple system $V$ with derivation algebra $\ad ( V )$ defines a 3-graded Lie superalgebra $\fG$ with $\fG_{\bar 0} = \ad ( V )$, $\fG_{\bar 1}^\pm = V_\pm$ and non-zero brackets
\begin{align}
[ \ad ( v_\pm ) , \ad ( w_\pm ) ]_\fG &= \ad ( [ v_\pm , w_\pm ] )~, \label{eq:ALTSLSAbracket00ExLAPM} \\
[ \ad ( v_\pm ) , w_\pm ]_\fG &= [ v_\pm , w_\pm ]~, \label{eq:ALTSLSAbracket01ExLAPM} \\
[ v_+ , w_- ]_\fG &= \phi_+ ( v_+ ) \ad ( w_- ) + \phi_- ( w_- ) \ad ( v_+ )~, \label{eq:ALTSLSAbracket11ExLAPM}
\end{align}
for all $v_\pm , w_\pm \in V_\pm$. By definition, $\fG$ is $2$-admissible if and only if $V_\pm$ admit nondegenerate $\ad ( V_\pm )$-invariant symmetric bilinear forms. 
\end{example}

\begin{example} \label{ex:ALTSDouble3graded}
The anti-Jordan triple system $V \otimes \Delta = V \otimes e_+ \oplus V \otimes e_-$ with derivation algebra $\Der^+ V \otimes \mathbb{1} \oplus \Der^- V \otimes Q$ in Example~\ref{ex:ALTSDouble} is polarised and the associated Lie superalgebra $\fG$ with $\fG_{\bar 0} = \Der^+ V \otimes \mathbb{1} \oplus \Der^- V \otimes Q$ and $\fG_{\bar 1}^\pm = V \otimes e_\pm$ is 3-graded. 

If $\fG$ is $1$-admissible with respect to the bilinear form $b \otimes \beta$ on $V \otimes \Delta$ in Example~\ref{ex:ALTSDouble} then it is also $2$-admissible with respect to the bilinear forms $b \otimes \beta |_{e_\pm}$ on $V \otimes e_\pm$.
\end{example}


\subsection{$3$-admissible Lie superalgebras from Filippov triple systems}
\label{sec:FilippovTripleSystems}

Now let $V$ be a polarised anti-Lie triple system with $V_+ = V_-$ and derivation algebra $D(V)$. In this case we will denote $V_\pm$ by $V_\bullet$ in order to distinguish it from $( V_\bullet , 0 )$ and $( 0 , V_\bullet )$ in $V$. For any $v \in V_\bullet$, we will write $v_+ = (v,0)$ and $v_- = (0,v)$ as elements in $V$. 

Any $X \in D(V)$ defines $X_\pm \in \End V_\bullet$ such that
\begin{equation}\label{eq:BalancedEndPolarisedAJTS}
X v_\pm = ( X_\pm v )_\pm~,
\end{equation}
for all $v \in V_\bullet$. We will say that $D(V)$ is {\emph{diagonal}} if $X_+ = X_-$, for all $X \in D(V)$, and write $X_\bullet = X_\pm$.

The pair $( V , D(V) )$ will be called {\emph{balanced}} if 
\begin{equation}\label{eq:BalancedSkewPolarisedAJTS}
d ( v_+ , w_- ) = - d ( w_+ , v_- )~,
\end{equation}
for all $v, w \in V_\bullet$, and $D(V)$ is diagonal. 

\begin{prop} \label{prop:BalPolALTS-Bal3GradedLSA}
The Lie superalgebra defined by a balanced polarised anti-Lie triple system with a derivation algebra as in Proposition~\ref{prop:ALTS-LSA} is 3-graded and balanced. Conversely, the anti-Lie triple system with a derivation algebra defined by a balanced 3-graded Lie superalgebra as in Proposition~\ref{prop:LSA-ALTS} is polarised and balanced. 
\end{prop}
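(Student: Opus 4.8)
The plan is to build directly on Proposition~\ref{prop:PolALTS-3GradedLSA}, since a balanced polarised anti-Lie triple system is in particular polarised and a balanced 3-graded Lie superalgebra is in particular 3-graded; the only genuinely new content is to track the \emph{diagonal} and skew-symmetry data encoded in the word \emph{balanced} through the correspondence of Propositions~\ref{prop:ALTS-LSA} and \ref{prop:LSA-ALTS}. Each implication should reduce to a one-line computation once the bookkeeping is in place.

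For the forward direction I would start with a balanced polarised anti-Lie triple system $(V,D(V))$ with $V_+ = V_-$ and let $\fG$ be the Lie superalgebra it defines via Proposition~\ref{prop:ALTS-LSA}. Proposition~\ref{prop:PolALTS-3GradedLSA} already gives that $\fG$ is 3-graded with $\fG_{\bar 1}^\pm = V_\pm = V_\bullet$. First I would check that $\fG_{\bar 1}^+ = \fG_{\bar 1}^-$ as representations of $\fG_{\bar 0} = D(V)$: by \eqref{eq:ALTSLSAbracket01} the action of $X \in D(V)$ on $v_\pm$ is $X v_\pm = (X_\pm v)_\pm$, so the two representations coincide under the identity map on $V_\bullet$ precisely because $D(V)$ is diagonal, which is one half of \emph{balanced}. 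Next, since $[v_+,w_-] = d(v_+,w_-)$ by \eqref{eq:ALTSLSAbracket11}, the triple-system identity \eqref{eq:BalancedSkewPolarisedAJTS} immediately yields $[v_+,w_-] = d(v_+,w_-) = -d(w_+,v_-) = -[w_+,v_-]$, which is \eqref{eq:BalancedSkew}. Hence $\fG$ is balanced.

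For the converse I would take a balanced 3-graded Lie superalgebra $\fG$ and apply Proposition~\ref{prop:LSA-ALTS} to obtain the anti-Lie triple system $\fG_{\bar 1}$ with derivation algebra $\ad(\fG_{\bar 0})|_{\fG_{\bar 1}}$ and 3-bracket \eqref{eq:LSAALTS3bracket}; Proposition~\ref{prop:PolALTS-3GradedLSA} gives that this pair is polarised with $V_\pm = \fG_{\bar 1}^\pm$, and the hypothesis $\fG_{\bar 1}^+ = \fG_{\bar 1}^-$ identifies both with $V_\bullet = \fG_{\bar 1}^\bullet$. Diagonality of the derivation algebra is then read off from $\fG_{\bar 1}^+ = \fG_{\bar 1}^-$ as representations: each derivation has the form $\rho(X)$ with $\rho(X) v_\pm = [X,v_\pm] = (\rho(X)_\pm v)_\pm$, and equality of the two representations is exactly $\rho(X)_+ = \rho(X)_-$. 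For the skew condition I would use \eqref{eq:LSAALTSd}, namely $d(v,w) = \rho([v,w])$, together with the balanced bracket \eqref{eq:BalancedSkew} of $\fG$, giving $d(v_+,w_-) = \rho([v_+,w_-]) = -\rho([w_+,v_-]) = -d(w_+,v_-)$. This establishes \eqref{eq:BalancedSkewPolarisedAJTS}, so $(\fG_{\bar 1}, \ad(\fG_{\bar 0})|_{\fG_{\bar 1}})$ is balanced.

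The argument is essentially bookkeeping and I do not anticipate a serious obstacle, but the one place to be careful is the bijective matching of the two $\ZZ_2$-labelled copies of $V_\bullet$: the content of \emph{diagonal} and of the hypothesis that $\fG_{\bar 1}^+ = \fG_{\bar 1}^-$ as representations is the same statement viewed from opposite sides of the correspondence, and the proof must ensure that the identifications $v_+ = (v,0)$, $v_- = (0,v)$ used on the triple-system side and on the Lie-superalgebra side are compatible, so that $X_\pm$ as defined by \eqref{eq:BalancedEndPolarisedAJTS} really does correspond to the restriction of $\rho(X)$ to each graded piece. Once that compatibility is pinned down, both implications collapse to the two short computations above.
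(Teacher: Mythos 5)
Your proposal is correct and follows essentially the same route as the paper: both directions reduce to Proposition~\ref{prop:PolALTS-3GradedLSA} and then track the two halves of the \emph{balanced} condition (diagonality of the derivation algebra versus $\fG_{\bar 1}^+ = \fG_{\bar 1}^-$ as representations, and \eqref{eq:BalancedSkewPolarisedAJTS} versus \eqref{eq:BalancedSkew}) through the identifications $[v_+,w_-]=d(v_+,w_-)$ and $d(v,w)=\rho([v,w])$. Your version merely spells out the one-line computations that the paper's proof states more tersely.
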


\begin{proof}
If $V$ is a polarised anti-Lie triple system with derivation algebra $D(V)$ then Proposition~\ref{prop:PolALTS-3GradedLSA} implies $\fG$ is a 3-graded Lie superalgebra with $\fG_{\bar 0} = D(V)$ and $\fG_{\bar 1}^\pm = V_\pm$. If $( V , D(V) )$ is balanced then $V_+ = V_-$ implies $\fG_{\bar 1}^+ = \fG_{\bar 1}^-$ (as representations of $\fG_{\bar 0}$ since $D(V)$ is diagonal) and \eqref{eq:BalancedSkewPolarisedAJTS} implies \eqref{eq:BalancedSkew}. Therefore $\fG$ is balanced. 

If $\fG$ is a 3-graded Lie superalgebra then Proposition~\ref{prop:PolALTS-3GradedLSA} implies $\fG_{\bar 1} = \fG_{\bar 1}^+ \oplus \fG_{\bar 1}^-$ is a polarised anti-Lie triple system with derivation algebra $\ad ( \fG_{\bar 0} ) |_{\fG_{\bar 1}}$. If $\fG$ is balanced then $\fG_{\bar 1}^+ = \fG_{\bar 1}^-$ as representations of $\fG_{\bar 0}$ so $\ad ( \fG_{\bar 0} ) |_{\fG_{\bar 1}}$ is diagonal. Furthermore, \eqref{eq:BalancedSkew} implies \eqref{eq:BalancedSkewPolarisedAJTS}. Therefore $( \fG_{\bar 1} , \ad ( \fG_{\bar 0} ) |_{\fG_{\bar 1}} )$ is balanced.
\end{proof}

\begin{cor} \label{cor:Bal3GradedLSABalPolALTSbijection}
Up to isomorphism, there is a bijective correspondence between balanced polarised anti-Lie triple systems with a derivation algebra and balanced 3-graded Lie superalgebras $\fG$ with $\fG_{\bar 0}$ acting faithfully on $\fG_{\bar 1}$.
\end{cor}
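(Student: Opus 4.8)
The plan is to obtain this corollary as a refinement of Corollary~\ref{cor:3GradedLSAPolALTSbijection}, exactly as that corollary was itself obtained from Corollary~\ref{cor:LSAALTSbijection} by substituting in a structural proposition. Corollary~\ref{cor:3GradedLSAPolALTSbijection} already furnishes a bijective correspondence (up to isomorphism) between polarised anti-Lie triple systems with a derivation algebra and 3-graded Lie superalgebras $\fG$ with $\fG_{\bar 0}$ acting faithfully on $\fG_{\bar 1}$. The only thing that then remains is to check that, under this established correspondence, the \emph{balanced} triple systems are in bijection with the \emph{balanced} 3-graded Lie superalgebras.

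First I would invoke Proposition~\ref{prop:BalPolALTS-Bal3GradedLSA}, which handles precisely this matching of balancing data. In one direction it shows that the Lie superalgebra $D(V) \oplus V$ built from a balanced polarised anti-Lie triple system $( V , D(V) )$ via Proposition~\ref{prop:ALTS-LSA} is itself balanced and 3-graded; in the other direction it shows that the polarised anti-Lie triple system $( \fG_{\bar 1} , \ad ( \fG_{\bar 0} ) |_{\fG_{\bar 1}} )$ built from a balanced 3-graded Lie superalgebra $\fG$ via Proposition~\ref{prop:LSA-ALTS} is balanced. Since the two constructions are mutually inverse (up to isomorphism) on the ambient polarised and 3-graded classes by Corollary~\ref{cor:3GradedLSAPolALTSbijection}, and since each restricts to the balanced subclass by Proposition~\ref{prop:BalPolALTS-Bal3GradedLSA}, the bijection restricts to a bijection between the balanced objects. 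Concretely, the proof is the single sentence: substitute Proposition~\ref{prop:BalPolALTS-Bal3GradedLSA} into Corollary~\ref{cor:3GradedLSAPolALTSbijection}, mirroring the one-line proof of Corollary~\ref{cor:3GradedLSAPolALTSbijection}.

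There is essentially no obstacle at the level of this corollary, because all the substantive verification has been relegated to Proposition~\ref{prop:BalPolALTS-Bal3GradedLSA}: namely, that the balancing identity \eqref{eq:BalancedSkewPolarisedAJTS} corresponds to \eqref{eq:BalancedSkew} and that diagonality of $D(V)$ corresponds to the identification $\fG_{\bar 1}^+ \cong \fG_{\bar 1}^-$ as representations of $\fG_{\bar 0}$. The one point worth flagging is the same faithfulness caveat that already appears in Corollary~\ref{cor:LSAALTSbijection}: the correspondence is bijective only after one restricts to Lie superalgebras whose even part acts faithfully on the odd part (equivalently, after quotienting $\fG_{\bar 0}$ by the kernel of its adjoint action), and since the balanced condition is preserved under this restriction, it causes no difficulty here.
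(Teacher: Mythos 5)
Your proposal matches the paper's proof exactly: the paper's argument is the single line ``Substitute Proposition~\ref{prop:BalPolALTS-Bal3GradedLSA} into Corollary~\ref{cor:3GradedLSAPolALTSbijection}'', and your elaboration of why the ambient bijection restricts to the balanced subclasses is precisely the content being relied upon. No gaps; this is the same approach.
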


\begin{proof}
Substitute Proposition~\ref{prop:BalPolALTS-Bal3GradedLSA} into Corollary~\ref{cor:3GradedLSAPolALTSbijection}.
\end{proof}

A triple system $V$ will be called a {\emph{Filippov triple system}} if 
\begin{equation}\label{eq:FTSskewsym}
[ v , w , u ] = - [ w , v , u ] = - [ u , w , v ] = - [ v , u , w ]~,
\end{equation}
for all $v,w,u \in V$, and
\begin{equation}\label{eq:FTSend}
[ v_1 , v_2 , [ w_1 , w_2 , w_3 ] ] = [ [ v_1 , v_2 , w_1 ] , w_2 , w_3 ] + [ w_1 , [ v_1 , v_2 , w_2 ] , w_3 ] + [ w_1 , w_2 , [ v_1 , v_2 , w_3 ] ]~,
\end{equation}
for all $v_1 , v_2 , w_1 , w_2 , w_3 \in V$. This is a special case of a so-called {\emph{$n$-Lie algebra}} introduced in \cite{Filippov} with $n=3$.

The condition \eqref{eq:FTSend} simply means $d(V,V) \subset \Der^+ V$. Any Lie subalgebra $D(V)$ of $\Der^+ V$ which contains $d(V,V)$ will be referred to as a {\emph{derivation algebra}} of the Filippov triple system $V$.

\begin{thm} \label{thm:BalPolALTS-FTS}
There is a bijective correspondence between balanced polarised anti-Lie triple systems and Filippov triple systems (with isomorphic derivation algebras). 
\end{thm}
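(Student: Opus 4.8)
The plan is to mirror the proof of Theorem~\ref{thm:PolALTS-PolAJTS}: construct a Filippov triple system from a balanced polarised anti-Lie triple system and conversely, verify the defining axioms in each direction, and then check that the two constructions are mutually inverse and preserve the derivation algebra up to isomorphism.

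For the forward direction, let $V = V_\bullet \oplus V_\bullet$ be a balanced polarised anti-Lie triple system with diagonal derivation algebra $D(V)$. By polarisation the only non-zero maps $d$ are the $d ( v_+ , w_- )$, and by \eqref{eq:BalancedEndPolarisedAJTS} each such map acts diagonally, restricting to $d ( v_+ , w_- )_\bullet \in \End V_\bullet$. I would define a 3-bracket on $V_\bullet$ by
\begin{equation}
[ v , w , u ]_F := d ( v_+ , w_- )_\bullet\, u~,
\end{equation}
so that $[ v_+ , w_- , u_\pm ]_L = ( [ v , w , u ]_F )_\pm$. The total skewsymmetry \eqref{eq:FTSskewsym} then reduces to two generating transpositions: antisymmetry in the first two slots is precisely the balanced condition \eqref{eq:BalancedSkewPolarisedAJTS}, while antisymmetry in the outer slots follows from the cyclic identity \eqref{eq:ALTScyc} applied to $( v_+ , w_- , u_+ )$ (the term with two $+$ entries vanishing by polarisation) together with the symmetry \eqref{eq:ALTSsym}. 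The fundamental identity \eqref{eq:FTSend} is the statement that each $d ( v_+ , w_- )_\bullet$ is a derivation of $[-,-,-]_F$; since $d ( v_+ , w_- ) \in D(V) \subset \Der^+ V$ acts diagonally, restricting \eqref{eq:3dDerivation} to $V_\bullet$ delivers exactly \eqref{eq:FTSend}. Finally $X \mapsto X_\bullet$ is an injective Lie algebra homomorphism, so the restriction of $D(V)$ is a derivation algebra of the resulting Filippov triple system, isomorphic to $D(V)$.

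Conversely, given a Filippov triple system $W$ with derivation algebra $D(W)$, I would set $V = W \oplus W = V_+ \oplus V_-$ and declare the only non-zero components of a 3-bracket to be
\begin{equation}
[ v_+ , w_- , u_\pm ]_L := ( [ v , w , u ]_F )_\pm~, \qquad [ w_- , v_+ , u ]_L := [ v_+ , w_- , u ]_L~,
\end{equation}
lifting $D(W)$ to the diagonal. Symmetry \eqref{eq:ALTSsym} holds by construction; the cyclic identity \eqref{eq:ALTScyc} and the endomorphism identity \eqref{eq:ALTSend} follow from the skewsymmetry \eqref{eq:FTSskewsym} and the fundamental identity \eqref{eq:FTSend} of $W$ respectively, after checking the non-trivial polarisation patterns; polarisation is immediate; and the balanced condition \eqref{eq:BalancedSkewPolarisedAJTS} is exactly the antisymmetry of $[-,-,-]_F$ in its first two arguments. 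The two constructions are mutually inverse: passing from $W$ to $V$ and back returns $[-,-,-]_F$ unchanged, while passing from $V$ to $W$ and back returns $[-,-,-]_L$ unchanged, the essential point being that in a balanced polarised anti-Lie triple system the symmetry and balanced conditions force every non-zero component of the bracket to be recoverable from the single diagonal datum $d ( v_+ , w_- )_\bullet$.

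The step I expect to be the main obstacle is the sign-and-polarisation bookkeeping: one must confirm that the balanced condition supplies exactly the antisymmetry in the first two slots that upgrades anti-Lie symmetry to Filippov skewsymmetry, and that the reverse construction genuinely reproduces all four mixed components $[ v_\pm , w_\mp , u_\pm ]_L$ from that one diagonal datum. Checking that \eqref{eq:ALTSend} and \eqref{eq:FTSend} transfer correctly across every sign pattern is routine but must be handled case by case, exactly as in the verification \eqref{eq:PolALTSAJTS3bracketEnd} in the proof of Theorem~\ref{thm:PolALTS-PolAJTS}.
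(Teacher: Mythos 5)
Your proposal is correct and follows essentially the same route as the paper's proof: the same definition $[v,w,u]^F_\pm = [v_+,w_-,u_\pm]^L$ in the forward direction, the same diagonal lift in the converse, the same derivation of total skewsymmetry from the balanced condition plus \eqref{eq:ALTSsym} and \eqref{eq:ALTScyc} (with the polarisation vanishing killing one cyclic term), and the same transfer of the fundamental identity via the derivation property of $d(V,V)$. The sign-and-polarisation bookkeeping you flag as the main obstacle is exactly what the paper's equations \eqref{eq:ALTS-FTS3bracketskewsym12}--\eqref{eq:ALTS-FTS3bracketskewsym-13} and \eqref{eq:FTS-ALTS3bracketCyc} carry out case by case, and it goes through as you anticipate.
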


\begin{proof}
If $V$ is a balanced polarised anti-Lie triple system with 3-bracket $[-,-,-]^L$ and derivation algebra $D(V)$ then let us define a 3-bracket $[-,-,-]^F$ on $V_\bullet$ such that
\begin{equation}\label{eq:ALTS-FTS3bracket}
[ v , w , u ]^F_\pm = [ v_+ , w_- , u_\pm ]^L~,
\end{equation}
for all $v,w,u \in V_\bullet$. 

The 3-bracket $[-,-,-]^F$ satisfies \eqref{eq:FTSskewsym} since
\begin{align}
[ v , w , u ]^F_\pm &\underset{\eqref{eq:ALTS-FTS3bracket}}{=} [ v_+ , w_- , u_\pm ]^L \underset{\eqref{eq:BalancedSkewPolarisedAJTS}}{=}  - [ w_+ , v_- , u_\pm ]^L \underset{\eqref{eq:ALTS-FTS3bracket}}{=} - [ w , v , u ]^F_\pm~, \label{eq:ALTS-FTS3bracketskewsym12} \\
[ v , w , u ]^F_+ &\underset{\eqref{eq:ALTS-FTS3bracket}}{=} [ v_+ , w_- , u_+ ]^L 
\overset{\eqref{eq:ALTSsym}}{\underset{\eqref{eq:ALTScyc}}{=}}  - [ u_+ , w_- , v_+ ]^L \underset{\eqref{eq:ALTS-FTS3bracket}}{=} - [ u , w , v ]^F_+~, \label{eq:ALTS-FTS3bracketskewsym+13} \\
[ v , w , u ]^F_- &\underset{\eqref{eq:ALTS-FTS3bracket}}{=} [ v_+ , w_- , u_- ]^L \overset{\eqref{eq:ALTSsym}}{\underset{\eqref{eq:ALTScyc}}{=}} - [ v_+ , u_- , w_- ]^L \underset{\eqref{eq:ALTS-FTS3bracket}}{=} - [ v , u , w ]^F_-~, \label{eq:ALTS-FTS3bracketskewsym-23}
\end{align}
for all $v,w,u \in V_\bullet$. The remaining properties in \eqref{eq:FTSskewsym} follow from those above since
\begin{align}
[ v , w , u ]^F_+ &\underset{\eqref{eq:ALTS-FTS3bracketskewsym12}}{=} - [ w , v , u ]^F_+ \underset{\eqref{eq:ALTS-FTS3bracketskewsym+13}}{=} [ u , v , w ]^F_+ \underset{\eqref{eq:ALTS-FTS3bracketskewsym12}}{=} - [ v , u , w ]^F_+~, \label{eq:ALTS-FTS3bracketskewsym+23} \\
[ v , w , u ]^F_- &\underset{\eqref{eq:ALTS-FTS3bracketskewsym12}}{=} - [ w , v , u ]^F_- \underset{\eqref{eq:ALTS-FTS3bracketskewsym-23}}{=} [ w , u , v ]^F_- \underset{\eqref{eq:ALTS-FTS3bracketskewsym12}}{=} - [ u , w , v ]^F_-~, \label{eq:ALTS-FTS3bracketskewsym-13}
\end{align}
for all $v,w,u \in V_\bullet$. Moreover, \eqref{eq:FTSend} is satisfied since
\begin{align}\label{eq:ALTS-FTS3bracketEnd}
[ v^1 , v^2 , [ w^1 , w^2 , w^3 ]^F ]^F_\pm &\underset{\eqref{eq:ALTS-FTS3bracket}}{=} [ v_+^1 , v_-^2 , [ w_+^1 , w_-^2 , w_\pm^3 ]^L ]^L \nonumber \\
&\, \underset{\eqref{eq:ALTSend}}{=} \,  [ [ v_+^1 , v_-^2 , w_+^1 ]^L , w_-^2 , w_\pm^3 ]^L + [ w_+^1 , [ v_+^1 , v_-^2 , w_-^2 ]^L , w_\pm^3 ]^L + [ w_+^1 , w_-^2 , [ v_+^1 , v_-^2 , w_\pm^3 ]^L ]^L \nonumber \\
&\underset{\eqref{eq:ALTS-FTS3bracket}}{=} [ [ v^1 , v^2 , w^1 ]^F , w^2 , w^3 ]^F_\pm + [ w^1 , [ v^1 , v^2 , w^2 ]^F , w^3 ]^F_\pm + [ w^1 , w^2 , [ v^1 , v^2 , w^3 ]^F ]^F_\pm~,
\end{align}
for all $v^1 , v^2 , w^1 , w^2 , w^3 \in V_\bullet$. Therefore $V_\bullet$ is a Filippov triple system with 3-bracket $[-,-,-]^F$.

If $X \in D(V)$ then 
\begin{align}\label{eq:ALTS-FTS3bracketDer}
( X_\bullet [ v , w , u ]^F )_\pm &\underset{\eqref{eq:BalancedEndPolarisedAJTS}}{=} X [ v , w , u ]^F_\pm \nonumber \\
&\underset{\eqref{eq:ALTS-FTS3bracket}}{=} X [ v_+ , w_- , u_\pm ]^L \nonumber \\
&\, \underset{\eqref{eq:3Derivation}}{=} \, [ X v_+ , w_- , u_\pm ]^L + [ v_+ , X w_- , u_\pm ]^L + [ v_+ , w_- , X u_\pm ]^L \nonumber \\
&\underset{\eqref{eq:BalancedEndPolarisedAJTS}}{=}  [ ( X_\bullet v )_+ , w_- , u_\pm ]^L + [ v_+ , ( X_\bullet w)_- , u_\pm ]^L + [ v_+ , w_- , ( X_\bullet u)_\pm ]^L \nonumber \\
&\underset{\eqref{eq:ALTS-FTS3bracket}}{=} [ X_\bullet v , w , u ]^F_\pm + [ v , X_\bullet w , u ]^F_\pm + [ v , w , X_\bullet u ]^F_\pm~,
\end{align}
for all $v,w,u \in V_\bullet$, so $X_\bullet \in \Der^+ V_\bullet$. Furthermore, \eqref{eq:ALTS-FTS3bracket} implies $d^F ( V_\bullet , V_\bullet ) = d^L ( V , V )_\bullet$. Therefore $D ( V_\bullet ) = \{ X_\bullet \in \End V_\bullet~|~ X \in D(V) \}$ is a derivation algebra of $V_\bullet$. Clearly $D ( V_\bullet ) \cong D(V)$ as Lie algebras since $D(V)$ is diagonal.

Conversely, if $V_\bullet$ is a Filippov triple system with 3-bracket $[-,-,-]^F$ and derivation algebra $D( V_\bullet )$ then let us define a 3-bracket $[-,-,-]^L$ on $V = V_\bullet \oplus V_\bullet$ with non-zero components 
\begin{equation}\label{eq:FTS-ALTS3bracket}
[ v_+ , w_- , u_\pm ]^L = [ w_- , v_+ , u_\pm ]^L = [ v , w , u ]^F_\pm~,
\end{equation}
for all $v,w,u \in V_\bullet$. 

Clearly \eqref{eq:FTS-ALTS3bracket} implies that $[-,-,-]^L$ obeys \eqref{eq:ALTSsym}. Condition \eqref{eq:ALTScyc} is satisfied since
\begin{align}\label{eq:FTS-ALTS3bracketCyc}
[ v_+ , w_+ , u_- ]^L + [ w_+ , u_- , v_+ ]^L + [ u_- , v_+ , w_+ ]^L &\underset{\eqref{eq:FTS-ALTS3bracket}}{=} [ w , u , v ]^F_+ + [ v , u , w ]^F_+ \underset{\eqref{eq:FTSskewsym}}{=} 0~, \nonumber \\
[ v_- , w_- , u_+ ]^L + [ w_- , u_+ , v_- ]^L + [ u_+ , v_- , w_- ]^L &\underset{\eqref{eq:FTS-ALTS3bracket}}{=} [ u , w , v ]^F_- + [ u , v , w ]^F_- \underset{\eqref{eq:FTSskewsym}}{=} 0~,
\end{align}
for all $v,w,u \in V_\bullet$. Furthermore, \eqref{eq:ALTSend} is satisfied since
\begin{align}\label{eq:FTS-ALTS3bracketEnd}
[ v_+^1 , v_-^2 , [ w_+^1 , w_-^2 , w_\pm^3 ]^L ]^L &\underset{\eqref{eq:FTS-ALTS3bracket}}{=} [ v^1 , v^2 , [ w^1 , w^2 , w^3 ]^F ]^F_\pm \nonumber \\
&\underset{\eqref{eq:FTSend}}{=} [ [ v^1 , v^2 , w^1 ]^F , w^2 , w^3 ]^F_\pm + [ w^1 , [ v^1 , v^2 , w^2 ]^F , w^3 ]^F_\pm + [ w^1 , w^2 , [ v^1 , v^2 , w^3 ]^F ]^F_\pm \nonumber \\
&\underset{\eqref{eq:FTS-ALTS3bracket}}{=}  [ [ v_+^1 , v_-^2 , w_+^1 ]^L , w_-^2 , w_\pm^3 ]^L + [ w_+^1 , [ v_+^1 , v_-^2 , w_-^2 ]^L , w_\pm^3 ]^L + [ w_+^1 , w_-^2 , [ v_+^1 , v_-^2 , w_\pm^3 ]^L ]^L~,
\end{align}
for all $v^1 , v^2 , w^1 , w^2 , w^3 \in V_\bullet$. Therefore $V$ is an anti-Lie triple system with 3-bracket $[-,-,-]^L$. 

If $X_\bullet \in D( V_\bullet )$ then let $X \in \End V$ be defined such that 
\begin{equation}\label{eq:FTS-ALTS3bracketXvPM}
X v_\pm = ( X_\bullet v )_\pm~,
\end{equation}
for all $v \in V_\bullet$. It follows that
\begin{align}\label{eq:FTS-ALTS3bracketDer}
X [ v_+ , w_- , u_\pm ]^L &\underset{\eqref{eq:FTS-ALTS3bracket}}{=} X [ v , w , u ]^F_\pm \nonumber \\
&\underset{\eqref{eq:FTS-ALTS3bracketXvPM}}{=} ( X_\bullet [ v , w , u ]^F )_\pm \nonumber \\
&\, \underset{\eqref{eq:3Derivation}}{=} \, [ X_\bullet v , w , u ]^F_\pm + [ v , X_\bullet w , u ]^F_\pm + [ v , w , X_\bullet u ]^F_\pm \nonumber \\
&\underset{\eqref{eq:FTS-ALTS3bracket}}{=}  [ ( X_\bullet v )_+ , w_- , u_\pm ]^L + [ v_+ , ( X_\bullet w)_- , u_\pm ]^L + [ v_+ , w_- , ( X_\bullet u)_\pm ]^L \nonumber \\
&\underset{\eqref{eq:FTS-ALTS3bracketXvPM}}{=} [ X v_+ , w_- , u_\pm ]^L + [ v_+ , X w_- , u_\pm ]^L + [ v_+ , w_- , X u_\pm ]^L~,
\end{align}
for all $v,w,u \in V_\bullet$, so $X \in \Der^+ V$. Furthermore, $d^L (V,V) \subset \{ X \in \End V~|~ X_\bullet  \in D ( V_\bullet ) \}$ since \eqref{eq:FTS-ALTS3bracket} implies 
\begin{equation}\label{eq:FTS-ALTS3bracketdLVV}
d^L ( v_+ , w_- ) u_\pm = ( d^F (v,w) u )_\pm~,
\end{equation}
for all $v,w,u \in V_\bullet$, so $d^L ( V , V )_\bullet = d^F ( V_\bullet , V_\bullet ) \subset D ( V_\bullet )$. Therefore $D( V ) = \{ X \in \End V~|~ X_\bullet  \in D ( V_\bullet ) \}$ is a derivation algebra of $V$ and clearly $D(V) \cong D ( V_\bullet )$ as Lie algebras. 

By construction, $D(V)$ is diagonal and
\begin{equation}\label{eq:FTS-ALTS3bracketSkew}
d^L ( v_+ , w_- )_\bullet \underset{\eqref{eq:FTS-ALTS3bracketdLVV}}{=} d^F (v,w) \underset{\eqref{eq:FTSskewsym}}{=} - d^F (w,v) \underset{\eqref{eq:FTS-ALTS3bracketdLVV}}{=} - d^L ( w_+ , v_- )_\bullet ~,
\end{equation}
for all $v, w \in V_\bullet$. Therefore $( V , D(V) )$ is polarised and balanced. 
\end{proof}

It follows that, up to isomorphism, every embedding superalgebra $\fG (3)$ with $\fG (3)_{\bar 0}$ acting faithfully on $\fG (3)_{\bar 1}$ can be constructed from a unique Filippov triple system $V_\bullet$ with derivation algebra $D( V_\bullet )$. Clearly every nondegenerate $\fG (3)_{\bar 0}$-invariant symmetric bilinear form on $\fG (3)_{\bar 1}^\bullet$ corresponds to a unique nondegenerate $D( V_\bullet )$-invariant symmetric bilinear form on $V_\bullet$, so we can use the existence of either to recognise when $\fG (3)$ is $3$-admissible.

\begin{example} \label{ex:SimpleFTS}
Let $V_\bullet = \Delta_1 \otimes \Delta_2$, where $\Delta_{1,2} = \Delta$ is a two-dimensional vector space equipped with a nondegenerate skewsymmetric bilinear form $\varepsilon$. It follows that $b = \varepsilon \otimes \varepsilon$ is a nondegenerate symmetric bilinear form on $V_\bullet$ and $b$ is invariant under $\fso ( V_\bullet ) = \fsp ( \Delta_1 ) \oplus \fsp ( \Delta_2 )$ since $\varepsilon$ is invariant under $\fsp ( \Delta )$. 

We define $\Omega \in \Lambda^4 V_\bullet^*$ such that
\begin{align}\label{eq:SimpleFTSOmega}
&\Omega ( \psi_1 \otimes \chi_1 , \psi_2 \otimes \chi_2 , \psi_3 \otimes \chi_3 , \psi_4 \otimes \chi_4 ) \nonumber \\
&= 2 \varepsilon ( \psi_1 , \psi_4 ) \varepsilon ( \psi_2 , \psi_3 ) \varepsilon ( \chi_1 , \chi_3 ) \varepsilon ( \chi_2 , \chi_4 ) - 2  \varepsilon ( \psi_2 , \psi_4 ) \varepsilon ( \psi_1 , \psi_3 ) \varepsilon ( \chi_2 , \chi_3 ) \varepsilon ( \chi_1, \chi_4 )~,
\end{align}
for all $\psi_1 \otimes \chi_1 , \psi_2 \otimes \chi_2 , \psi_3 \otimes \chi_3 , \psi_4 \otimes \chi_4 \in V_\bullet$. Up to a scalar multiple, $\Omega$ is unique since $\dim \, V_\bullet = 4$. 

The 3-bracket $[-,-,-]^F$ on $V_\bullet$ is defined such that
\begin{equation}\label{eq:SimpleFTS3bracketb}
b ( [ \psi_1 \otimes \chi_1 , \psi_2 \otimes \chi_2 , \psi_3 \otimes \chi_3 ]^F , \psi_4 \otimes \chi_4 ) = \Omega ( \psi_1 \otimes \chi_1 , \psi_2 \otimes \chi_2 , \psi_3 \otimes \chi_3 , \psi_4 \otimes \chi_4 )~,
\end{equation}
for all $\psi_1 \otimes \chi_1 , \psi_2 \otimes \chi_2 , \psi_3 \otimes \chi_3 , \psi_4 \otimes \chi_4 \in V_\bullet$. Clearly $[-,-,-]^F$ satisfies \eqref{eq:FTSskewsym} since $\Omega \in \Lambda^4 V_\bullet^*$. Since $b$ is nondegenerate, \eqref{eq:SimpleFTS3bracketb} is equivalent to 
\begin{equation}\label{eq:SimpleFTS3bracket}
[ \psi_1 \otimes \chi_1 , \psi_2 \otimes \chi_2 , \psi_3 \otimes \chi_3 ]^F = 2 \varepsilon ( \psi_2 , \psi_3 ) \varepsilon ( \chi_1 , \chi_3 ) \psi_1 \otimes \chi_2 - 2 \varepsilon ( \psi_1 , \psi_3 ) \varepsilon ( \chi_2 , \chi_3 ) \psi_2 \otimes \chi_1~,
\end{equation}
for all $\psi_1 \otimes \chi_1 , \psi_2 \otimes \chi_2 , \psi_3 \otimes \chi_3 \in V_\bullet$. Since $\dim\, \Delta = 2$, it follows that 
\begin{equation}\label{eq:SimpleFTSDelta}
\varepsilon ( \psi_1 , \psi_2 ) \psi_3 + \varepsilon ( \psi_2 , \psi_3 ) \psi_1 + \varepsilon ( \psi_3 , \psi_1 ) \psi_2 =0~,
\end{equation}
for all $\psi_1 , \psi_2 , \psi_3 \in \Delta$. Substituting \eqref{eq:SimpleFTSDelta} into \eqref{eq:SimpleFTS3bracket} implies
\begin{align}\label{eq:SimpleFTS3bracket2}
&[ \psi_1 \otimes \chi_1 , \psi_2 \otimes \chi_2 , \psi_3 \otimes \chi_3 ]^F \nonumber \\
&= \varepsilon ( \chi_1 , \chi_2 )  ( \varepsilon ( \psi_1 , \psi_3 ) \psi_2 + \varepsilon ( \psi_2 , \psi_3 ) \psi_1 ) \otimes \chi_3 - \varepsilon ( \psi_1 , \psi_2 ) \psi_3 \otimes ( \varepsilon ( \chi_1 , \chi_3 ) \chi_2 + \varepsilon ( \chi_2 , \chi_3 ) \chi_1 )~,
\end{align}
for all $\psi_1 \otimes \chi_1 , \psi_2 \otimes \chi_2 , \psi_3 \otimes \chi_3 \in V_\bullet$. Therefore
\begin{equation}\label{eq:SimpleFTS3bracketdF}
d^F ( \psi_1 \otimes \chi_1 , \psi_2 \otimes \chi_2 ) = \varepsilon ( \chi_1 , \chi_2 )  ( \psi_1 \otimes \psi_2^\flat + \psi_2 \otimes \psi_1^\flat )_+ - \varepsilon ( \psi_1 , \psi_2 ) ( \chi_1 \otimes \chi_2^\flat + \chi_2 \otimes \chi_1^\flat )_-~,
\end{equation}
for all $\psi_1 \otimes \chi_1 , \psi_2 \otimes \chi_2 \in V_\bullet$, where $X_+ = ( X , 0 )$ and $X_- = ( 0 , X )$ as elements of $\fso ( V_\bullet )$, for any $X \in \fsp ( \Delta )$. It follows that $d^F ( V_\bullet , V_\bullet ) \subset \fso ( V_\bullet )$.

If $X \in \fsp ( \Delta )$ and $\psi \in \Delta$ then
\begin{equation}\label{eq:SimpleFTSFlatX}
\psi^\flat X = - ( X \psi )^\flat
\end{equation}
since $\psi^\flat ( X \chi ) = \varepsilon ( \psi , X \chi ) = - \varepsilon ( X \psi , \chi ) = - ( X \psi )^\flat ( \chi )$, for all $\chi \in \Delta$. Therefore
\begin{align}\label{eq:SimpleFTS3bracketXdF}
[ X_+ , d^F ( \psi_1 \otimes \chi_1 , \psi_2 \otimes \chi_2 ) ] &\underset{\eqref{eq:SimpleFTS3bracketdF}}{=} \varepsilon ( \chi_1 , \chi_2 )  ( X \psi_1 \otimes \psi_2^\flat + X \psi_2 \otimes \psi_1^\flat - \psi_1 \otimes \psi_2^\flat X - \psi_2 \otimes \psi_1^\flat X )_+  \nonumber \\
&\underset{\eqref{eq:SimpleFTSFlatX}}{=} \varepsilon ( \chi_1 , \chi_2 )  ( X \psi_1 \otimes \psi_2^\flat + X \psi_2 \otimes \psi_1^\flat + \psi_1 \otimes ( X \psi_2 )^\flat + \psi_2 \otimes ( X \psi_1 )^\flat )_+ \nonumber \\
&\underset{\eqref{eq:SimpleFTS3bracketdF}}{=} d^F ( X_+ ( \psi_1 \otimes \chi_1) , \psi_2 \otimes \chi_2 ) + d^F ( \psi_1 \otimes \chi_1 , X_+ ( \psi_2 \otimes \chi_2 ) )~, \nonumber \\
[ X_- , d^F ( \psi_1 \otimes \chi_1 , \psi_2 \otimes \chi_2 ) ] &\underset{\eqref{eq:SimpleFTS3bracketdF}}{=} - \varepsilon ( \psi_1 , \psi_2 ) ( X \chi_1 \otimes \chi_2^\flat + X \chi_2 \otimes \chi_1^\flat - \chi_1 \otimes \chi_2^\flat X - \chi_2 \otimes \chi_1^\flat X )_- \nonumber \\
&\underset{\eqref{eq:SimpleFTSFlatX}}{=} - \varepsilon ( \psi_1 , \psi_2 ) ( X \chi_1 \otimes \chi_2^\flat + X \chi_2 \otimes \chi_1^\flat + \chi_1 \otimes ( X \chi_2 )^\flat + \chi_2 \otimes ( X \chi_1 )^\flat )_- \nonumber \\
&\underset{\eqref{eq:SimpleFTS3bracketdF}}{=} d^F ( X_- ( \psi_1 \otimes \chi_1) , \psi_2 \otimes \chi_2 ) + d^F ( \psi_1 \otimes \chi_1 , X_- ( \psi_2 \otimes \chi_2 ) )~,
\end{align}
for all $X \in \fsp ( \Delta )$ and $\psi_1 , \chi_1 , \psi_2 , \chi_2 \in \Delta$. Since $d^F ( V_\bullet , V_\bullet ) \subset \fso ( V_\bullet )$, it follows that $[-,-,-]^F$ satisfies \eqref{eq:FTSend} so $V_\bullet$ is a Filippov triple system and $\fso ( V_\bullet ) \subset \Der^+ V_\bullet$ is a derivation algebra of $V_\bullet$. It was proven in \cite{Ling} that there is, up to isomorphism, a unique simple Filippov triple system and it is easily verified that this is isomorphic to $V_\bullet$. 

Following the construction in Theorem~\ref{thm:BalPolALTS-FTS} defines $V = V_\bullet \oplus V_\bullet$ as a balanced polarised anti-Lie triple system with derivation algebra $D(V) \cong \fso ( V_\bullet )$. Furthermore, \eqref{eq:FTS-ALTS3bracketSkew} and \eqref{eq:SimpleFTS3bracketdF} imply
\begin{equation}\label{eq:SimpleFTS-ALTSdL}
d^L ( ( \psi_1 \otimes \chi_1 )_+ , ( \psi_2  \otimes \chi_2 )_- )_\bullet = \varepsilon ( \chi_1 , \chi_2 )  ( \psi_1 \otimes \psi_2^\flat + \psi_2 \otimes \psi_1^\flat )_+ - \varepsilon ( \psi_1 , \psi_2 ) ( \chi_1 \otimes \chi_2^\flat + \chi_2 \otimes \chi_1^\flat )_-~,
\end{equation}
for all $\psi_1 \otimes \chi_1 , \psi_2 \otimes \chi_2 \in V_\bullet$. Proposition~\ref{prop:BalPolALTS-Bal3GradedLSA} then implies $( V , D(V) )$ defines a balanced 3-graded Lie superalgebra $\fG$ with $\fG_{\bar 0} = D(V)$ and $\fG_{\bar 1}^\bullet = V_\bullet$. If we identify $D(V) \cong \fso ( V_\bullet )$ then the non-zero brackets for $\fG$ are precisely \eqref{eq:A1100}, \eqref{eq:A1101+} , \eqref{eq:A1101-} and \eqref{eq:A1111}. Therefore $\fG \cong {\bf A} ( 1 , 1 )$ as $3$-admissible Lie superalgebras. 
\end{example}

A Filippov triple system $V$ that is equipped with a nondegenerate $d(V,V)$-invariant symmetric bilinear form was referred to as a {\emph{metric Lie 3-algebra}} in \cite{deMedeiros:2008bf}. Any such metric Lie 3-algebra corresponds to a $3$-admissible Filippov triple system $V$ with derivation algebra $d(V,V)$. A structure theorem in \cite{deMedeiros:2008bf} says that every metric Lie 3-algebra can be obtained by taking orthogonal direct sums and double extensions of simple and one-dimensional Filippov triple systems. Combining this result with Theorem~\ref{thm:BalPolALTS-FTS} and Proposition~\ref{prop:BalPolALTS-Bal3GradedLSA} therefore provides a way of constructing a large class of $3$-admissible Lie superalgebras.
 

\section*{Acknowledgements}

I would like to thank Andrew Beckett and Jos\'{e} Figueroa-O'Farrill for some entertaining discussions related to this work which took place during and after the workshop \lq Pseudo-Riemannian Geometry and Invariants in General Relativity' held in June 2022 at the University of Stavanger. Their visit was supported by the Research Council of Norway, Toppforsk grant no. 250367, held by Sigbj\o{}rn Hervik. I am especially grateful to Jos\'{e} for his continued encouragement to write this work up, and for presenting some of its early results on my behalf during the \lq Ternary Day 2' meeting held in Edinburgh on 2 December 2022.


\appendix


\section{$\fsl (W)$-invariant symmetric bilinear forms on $W \oplus W^*$ and $S^2 W \oplus \Lambda^2 W^*$}
\label{sec:InvariantBilinearFormsOnRepresentationsOFSLW}

Let $e_1 ,..., e_N$ be a basis for a vector space $W$ of dimension $N>2$. For any $i,j,k \in \{ 1 ,..., N \}$, let $E_{ij} \in \fgl (W)$ be defined such that
\begin{equation}\label{eq:Eijek}
E_{ij} e_k = \delta_{jk} e_i~.
\end{equation}
It follows from \eqref{eq:Eijek} that $E_{ii} e_j = \delta_{ij} e_j$. For any $i \neq j$, let 
\begin{equation}\label{eq:Xij}
H_{ij} = E_{ii} - E_{jj} 
\end{equation}
so that
\begin{equation}\label{eq:Xijek}
H_{ij} e_k = ( \delta_{ik} - \delta_{jk} ) e_k~.
\end{equation}
It follows that $H_{ij} \in \fsl (W)$ since $\tr H_{ij} = \sum_{k=1}^N ( \delta_{ik} - \delta_{jk} ) = 1-1 = 0$. 

Let $f_1 ,..., f_N$ be a basis for $W^*$ defined such that 
\begin{equation}\label{eq:fiej}
f_i ( e_j ) = \delta_{ij}~,
\end{equation}
for all $i,j \in \{ 1 ,..., N \}$. Then
\begin{equation}\label{eq:Eij*fk}
(E_{ij}^* f_k ) ( e_l ) \underset{\eqref{eq:EndStar}}{=} f_k ( E_{ij} e_l ) \underset{\eqref{eq:Eijek}}{=} \delta_{jl} f_k ( e_i ) \underset{\eqref{eq:fiej}}{=} \delta_{jl} \delta_{ki} \underset{\eqref{eq:fiej}}{=} \delta_{ik} f_j ( e_l )~,
\end{equation}
for all $i,j,k,l \in \{ 1 ,..., N \}$, so
\begin{equation}\label{eq:Eijfk}
E_{ij} \cdot f_k \underset{\eqref{eq:DualRepGL}}{=} - E_{ij}^* f_k \underset{\eqref{eq:Eij*fk}}{=} - \delta_{ik} f_j
\end{equation}
and 
\begin{equation}\label{eq:Xijfk}
H_{ij} \cdot f_k \underset{\eqref{eq:Xij}}{=} ( E_{ii} - E_{jj} )\cdot f_k \underset{\eqref{eq:Eijfk}}{=} - ( \delta_{ik} - \delta_{jk} ) f_k~.
\end{equation}


\subsection{$W \oplus W^*$}
\label{sec:WPlusW*}

Let $b$ be an $\fsl (W)$-invariant symmetric bilinear form on $W \oplus W^*$. For any $i,j,k \in \{ 1 ,..., N \}$ with $i \neq j$, invariance of $b$ under $H_{ij} \in \fsl (W)$ implies
\begin{align}
0 &= b ( H_{ij} e_i , e_k ) + b ( e_i , H_{ij} e_k ) \underset{\eqref{eq:Xijek}}{=} ( 1 + \delta_{ik} - \delta_{jk} ) b ( e_i , e_k )~,  \label{eq:WW*ee} \\
0 &= b ( H_{ij} e_i , f_k ) + b ( e_i , H_{ij} \cdot f_k ) \overset{\eqref{eq:Xijek}}{\underset{\eqref{eq:Xijfk}}{=}} ( 1 - \delta_{ik} + \delta_{jk} ) b ( e_i , f_k )~,  \label{eq:WW*ef} \\
0 &= b ( H_{ij} \cdot f_i , f_k ) + b ( f_i , H_{ij} \cdot f_k ) \underset{\eqref{eq:Xijfk}}{=} - ( 1 + \delta_{ik} - \delta_{jk} ) b ( f_i , f_k )~.  \label{eq:WW*ff}
\end{align}
It is always possible to choose $j \neq k$ in the expressions above. Doing this in \eqref{eq:WW*ee} and \eqref{eq:WW*ff} implies $b (W,W) =0$ and $b ( W^* , W^* ) = 0$. Doing this in \eqref{eq:WW*ef} implies $b ( e_i , f_k ) = 0$ unless $i=k$. Furthermore, for any $i \neq j$, invariance of $b$ under $E_{ij} \in \fsl (W)$ implies
\begin{equation}\label{eq:WW*efE}
0 = b ( E_{ij} e_j , f_i ) + b ( e_j , E_{ij} \cdot f_i ) \overset{\eqref{eq:Eijek}}{\underset{\eqref{eq:Eijfk}}{=}} b ( e_i , f_i ) - b ( e_j , f_j )~.
\end{equation}
Therefore 
\begin{equation}\label{eq:WW*bef}
b ( e_i , f_j ) = \lambda \delta_{ij} = \lambda f_j ( e_i )~,
\end{equation}
for some $\lambda \in \CC$. In other words, up to scaling, $b (W, W^* )$ is just the dual pairing between $W$ and $W^*$ which is, by construction, $\fsl (W)$-invariant.

Therefore any $\fsl (W)$-invariant symmetric bilinear form $b$ on $W \oplus W^*$ must be of the form
\begin{equation}\label{eq:WW*b}
b ( w_1 + \alpha_1 , w_2 + \alpha_2 ) = \lambda ( \alpha_2 ( w_1 ) + \alpha_1 ( w_2 ) )~,
\end{equation}
for all $w_1 , w_2 \in W$ and $\alpha_1 , \alpha_2 \in W^*$, in terms of some $\lambda \in \CC$.


\subsection{$S^2 W \oplus \Lambda^2 W^*$}
\label{sec:S2WPlusWedge2W*}

Let $e_{ij} = e_i \otimes e_j + e_j \otimes e_i$ be a basis for $S^2 W$ and let $f_{ij} = f_i \otimes f_j - f_j \otimes f_i$ be a basis for $\Lambda^2 W^*$. For any $i,j,k,l \in \{ 1 ,..., N \}$, it follows that
\begin{align}
E_{ij} \cdot e_{kl} &= E_{ij} e_k \otimes e_l + e_l \otimes E_{ij} e_k + e_k \otimes E_{ij} e_l + E_{ij} e_l \otimes e_k \underset{\eqref{eq:Eijek}}{=} \delta_{jk} e_{il} + \delta_{jl} e_{ki}~,  \label{eq:Eijekl} \\
E_{ij} \cdot f_{kl} &= E_{ij} \cdot f_k \otimes f_l - f_l \otimes E_{ij} \cdot f_k + f_k \otimes E_{ij} \cdot f_l - E_{ij} \cdot f_l \otimes f_k \underset{\eqref{eq:Eijfk}}{=} - \delta_{ik} f_{jl} - \delta_{il} f_{kj}~. \label{eq:Eijfkl}
\end{align}
Furthermore, if $i \neq j$ then \eqref{eq:Xij}, \eqref{eq:Eijekl} and \eqref{eq:Eijfkl} imply 
\begin{align}
H_{ij} \cdot e_{kl} &= ( \delta_{ik} + \delta_{il} - \delta_{jk} - \delta_{jl} ) e_{kl}~,  \label{eq:Xijekl} \\
H_{ij} \cdot f_{kl} &= - ( \delta_{ik} + \delta_{il} - \delta_{jk} - \delta_{jl} ) f_{kl}~. \label{eq:Xijfkl}
\end{align}

Now let $b$ be an $\fsl (W)$-invariant symmetric bilinear form on $S^2 W \oplus \Lambda^2 W^*$. For any $i,j,k,l,m \in \{ 1 ,..., N \}$ with $i \neq j$, invariance of $b$ under $H_{ij} \in \fsl (W)$ implies
\begin{align}
0 &= b ( H_{ij} \cdot e_{ik} , e_{lm} ) + b ( e_{ik} , H_{ij} \cdot e_{lm} ) \underset{\eqref{eq:Xijekl}}{=} ( 1 + \delta_{ik} - \delta_{jk} +  \delta_{il} + \delta_{im} - \delta_{jl} - \delta_{jm} ) b ( e_{ik} , e_{lm} )~,  \label{eq:SWLWee} \\
0 &= b ( H_{ij} \cdot e_{ik} , f_{lm} ) + b ( e_{ik} , H_{ij} \cdot f_{lm} ) 
\overset{\eqref{eq:Xijekl}}{\underset{\eqref{eq:Xijfkl}}{=}} ( 1 + \delta_{ik} - \delta_{jk} -  \delta_{il} - \delta_{im} + \delta_{jl} + \delta_{jm} ) b ( e_{ik} , f_{lm} )~,  \label{eq:SWLWef} \\
0 &= b ( H_{ij} \cdot f_{ik} , f_{lm} ) + b ( f_{ik} , H_{ij} \cdot f_{lm} ) \underset{\eqref{eq:Xijfkl}}{=} - ( 1 + \delta_{ik} - \delta_{jk} +  \delta_{il} + \delta_{im} - \delta_{jl} - \delta_{jm} ) b ( f_{ik} , f_{lm} )~.  \label{eq:SWLWff}
\end{align}
If $N>4$ then it is always possible to choose $j \neq k,l,m$ in the expressions above. Doing this in \eqref{eq:SWLWee} and \eqref{eq:SWLWff} implies $b ( S^2 W , S^2 W ) =0$ and $b ( \Lambda^2 W^* , \Lambda^2 W^* ) = 0$. Doing this in \eqref{eq:SWLWef} implies $b ( e_{ik} , f_{lm} ) = 0$ unless $i \neq k$ and $i=l$ or $i=m$ (but not both since $f_{lm} = 0$ if $l=m$). Furthermore, for any $i \neq j$, invariance of $b$ under $E_{ij} \in \fsl (W)$ implies
\begin{equation}\label{eq:SWLWefE}
0 = b ( E_{ij} \cdot e_{jk} , f_{ji} ) + b ( e_{jk} , E_{ij} \cdot f_{ji} ) \overset{\eqref{eq:Eijekl}}{\underset{\eqref{eq:Eijfkl}}{=}}  b ( e_{ki} , f_{ji} ) + \delta_{jk} b ( e_{ji} , f_{ji} )~.
\end{equation}
If $j=k$ then \eqref{eq:SWLWefE} implies $b ( e_{ji} , f_{ji} ) = 0$. Substituting this back into \eqref{eq:SWLWefE} implies $b ( e_{ki} , f_{ji} ) = 0$. Therefore $b ( S^2 W , \Lambda^2 W^* ) =0$.

If $N=4$ and $i,k,l,m$ are not all different then it is always possible to choose $j \neq k,l,m$ in \eqref{eq:SWLWee}, \eqref{eq:SWLWef} and \eqref{eq:SWLWff}. Doing this in \eqref{eq:SWLWee} and \eqref{eq:SWLWff} implies $b ( e_{ik} , e_{lm} ) = 0$ and $b ( f_{ik} , f_{lm} ) = 0$. Doing this in \eqref{eq:SWLWef} implies $b ( e_{ik} , f_{lm} ) = 0$ as above (using \eqref{eq:SWLWefE}). If $i,k,l,m$ are all different then \eqref{eq:SWLWee} and \eqref{eq:SWLWff} are trivial while \eqref{eq:SWLWef} implies $b ( e_{ik} , f_{lm} ) = 0$ by choosing $j = l$. Therefore $b ( S^2 W , \Lambda^2 W^* ) =0$. Furthermore, $b ( S^2 W , S^2 W ) =0$ since
\begin{equation}\label{eq:SWLWeeE}
0 = b ( E_{ik} \cdot e_{kk} , e_{lm} ) + b ( e_{kk} , E_{ik} \cdot e_{lm} ) \underset{\eqref{eq:Eijekl}}{=} 2 b ( e_{ik} , e_{lm} )~.
\end{equation}
On the other hand, 
\begin{equation}\label{eq:SWLWffE}
0 = b ( E_{li} \cdot f_{lk} , f_{lm} ) + b ( f_{lk} , E_{li} \cdot f_{lm} ) \underset{\eqref{eq:Eijfkl}}{=}  - b ( f_{ik} , f_{lm} ) - b ( f_{lk} , f_{im} )
\end{equation}
implies $b ( f_{ik} , f_{lm} )$ is skewsymmetric in $il$, and therefore totally skewsymmetric in $iklm$. This means $b ( \Lambda^2 W^* , \Lambda^2 W^* )$ is defined by an element in $\Lambda^4 W$ which is unique (up to scaling) since $N=4$. Since the Lie group $\SL (W)$ preserves orientations on $W$, it follows that $\Lambda^N W$ is $\fsl (W)$-invariant (here for $N=4$).

If $N=3$ then $i,k,l,m$ cannot all be different. If more than one index is repeated or if an index is repeated more than once then it is always possible to choose $j \neq k,l,m$ in \eqref{eq:SWLWee}, \eqref{eq:SWLWef} and \eqref{eq:SWLWff}. Doing this in \eqref{eq:SWLWee} and \eqref{eq:SWLWff} implies $b ( e_{ik} , e_{lm} ) = 0$ and $b ( f_{ik} , f_{lm} ) = 0$. Doing this in \eqref{eq:SWLWef} implies $b ( e_{ik} , f_{lm} ) = 0$ as above (using \eqref{eq:SWLWefE}). The same conclusions also follow if one index is repeated just once (using only \eqref{eq:SWLWee}, \eqref{eq:SWLWef} and \eqref{eq:SWLWff}). Therefore $b ( S^2 W , S^2 W ) =0$, $b ( S^2 W , \Lambda^2 W^* ) =0$ and $b ( \Lambda^2 W^* , \Lambda^2 W^* ) = 0$.

To summarise, any $\fsl (W)$-invariant symmetric bilinear form $b$ on $S^2 W \oplus \Lambda^2 W^*$ is zero unless $N=4$. If $N=4$ then only the $b ( \Lambda^2 W^* , \Lambda^2 W^* )$ component of $b$ need not be zero and is defined by a unique (up to scaling) element in $\Lambda^4 W$. Clearly any such $b$ is not nondegenerate since $b ( S^2 W , - ) =0$.


\bibliographystyle{utphys}
\bibliography{R3Alg}

\end{document}